%% file: main.tex
\documentclass[a4paper,UKenglish,cleveref,autoref,numberwithinsect]{lipics-v2021}
\nolinenumbers
\usepackage{mathtools,stmaryrd,nicefrac}
\usepackage{dsfont}
\allowdisplaybreaks[3] 
\usepackage{bm} 
\usepackage{multicol}
\usepackage{fancybox,framed} 
\usepackage{float}
\usepackage{proof}
\usepackage{xr}
\usepackage{tikz-cd}
\usepackage{subfiles}
\usepackage{adjustbox}
\usepackage{vwcol}
\usepackage{tikzit}

\usepackage{macros}
\providecommand{\Description}[1]{}

\input{figures/tikzit.tikzstyles}
\input{figures/tikzit.tikzdefs}

\title{A Complete Equational Presentation of Qudit Circuits via Polycontrolled PROPs}
\titlerunning{Complete Equational Presentation of Qudit Circuits}
\author{Colin Blake}
  {Universit\'e de Lorraine, CNRS, Inria, LORIA, Nancy, France}
  {colin.blake@inria.fr}
  {https://orcid.org/0009-0000-4045-8145}
  {}
\authorrunning{C. Blake}
\Copyright{Colin Blake}
\ccsdesc[500]{Theory of computation~Quantum computation theory}
\ccsdesc[300]{Theory of computation~Categorical semantics}
\ccsdesc[300]{Theory of computation~Equational logic and rewriting}
\keywords{Qudit circuits, Quantum circuits, Completeness, Control, Categorical quantum mechanics}
\funding{This work is supported by the Plan France 2030 through the PEPR integrated project EPiQ (ANR-22-PETQ-0007) and the HQI platform (ANR-22-PNCQ-0002); by the European Union through the MSCA Staff Exchanges project QCOMICAL (Grant Agreement ID: 101182520); and by the Maison du Quantique MaQuEst.}
\acknowledgements{I thank my PhD advisors, Simon Perdrix and Miriam Backens, for their helpful feedback and for reviewing earlier versions of this article.}

\begin{document}

\maketitle

\begin{abstract}
High-dimensional quantum computation needs a native circuit-level equational
theory for qudits. We give the first finite schematic equational theory that is
sound and complete for exact unitary qudit circuits in every finite dimension
at least two. Circuits are built from local gates, sequential and parallel
composition, and value-controls; equality is derivable exactly when the
standard unitary denotations agree. For each dimension, a finite list of local
bounded-arity axiom schemata presents the theory, and the diagrammatic shapes
do not depend on \(d\). Primitive value-control makes control on a chosen
basis value part of the language, so local rules generate the internal algebra
of controlled operations within the circuit PROP. This gives a finite,
dimension-uniform basis for exact equational reasoning about qudit circuits.
\end{abstract}
\clearpage

\subfile{introduction}
\subfile{polyprop_mfcs}
\subfile{set_of_rules} 
\subfile{embedding_mfcs}
\subfile{conclusion}

\clearpage
\bibliography{qudit-complete-theory}
\clearpage          
\appendix    
\subfile{appendix_proof_guide}
\subfile{appendix_derived_gates}
\subfile{appendix_anglesrelations}

\subfile{appendix_derivations}
\subfile{appendix_compatibility}
\subfile{appendix_commuting}
\subfile{appendix_exhaustivity}

\subfile{appendix_derivations_after_commuting}
\subfile{appendix_gray}
\subfile{appendix_encodingdecoding}
\subfile{appendix_mimicking}

\subfile{appendix_dec_swap}
\subfile{appendix_soundness}

\end{document}

%% file: figures/tikzit.tikzstyles
\tikzstyle{boitecircuit}=[fill=white, draw, shape=rectangle, minimum height=0.6cm, inner sep=0.1em, minimum width=0.5cm]
\tikzstyle{dot}=[fill=black, draw=none, shape=circle, inner sep=0pt, outer sep=0pt, scale=0.891]
\tikzstyle{operator}=[draw, fill=white, minimum size=1.5em]
\tikzstyle{control}=[fill=white, draw=black, inner sep=0.14em, shape=rounded rectangle, minimum height=0.32cm, minimum width=0.28cm]
\tikzstyle{vdotsnode}=[inner sep=0pt, outer sep=0pt]
\tikzstyle{phase}=[fill=white, draw=black, inner sep=0.15em, shape=rounded rectangle, minimum height=0.4cm]

\tikzstyle{new}=[-, tikzit draw=magenta]
\tikzstyle{tirets}=[-, draw=black, dashed]
\tikzstyle{boxed}=[-, draw=gray, thick, dashed]
\tikzstyle{productframe}=[-, draw=gray, densely dashed]
\tikzstyle{bangbox}=[-, draw=gray, densely dashed]

%% file: introduction.tex
\section{Introduction}

Qudit circuits make higher-dimensional quantum data visible at circuit level.
Basis values of a \(d\)-level carrier can be named, controlled, and rewritten
directly, which matters both for hardware where extra levels are native and for
compilers that exploit those levels as structure.
For \(d>2\), high-dimensional carriers are a native resource
\cite{qudit_high_dim}, and multilevel entangling operations have already been
demonstrated, for example, in trapped-ion processors
\cite{native_qudit_entanglement}. The
extra levels also matter when the algorithm is logically qubit-based. Qutrit
workspace can implement generalized Toffoli gates without external ancillas and
with lower depth, while ququart encodings can shorten three-qubit constructions
\cite{qutrit_asymptotic,ququart_waltz}; the small example in
Figure~\ref{fig:ququart-compression} shows the mechanism. Packing two qubits
into one four-level system turns two binary controls into one value-control and
turns a two-qubit target subspace into an adjacent two-level ququart subspace.

\begin{figure}[H]
\centering
\begingroup
\begin{adjustbox}{max width=.95\linewidth}
\begin{tabular}{c@{\qquad}c@{\qquad}c}
\input{./figures/ququart-compression-qubits.tikz}
&
\begin{tabular}{c}
\(\xrightarrow{\ \textsf{compress}\ }\)\\[-0.2ex]
\(\scriptstyle A=2a_1+a_0\)\\[-0.2ex]
\(\scriptstyle B=2b_1+b_0\)
\end{tabular}
&
\input{./figures/ququart-compression-ququarts.tikz}
\end{tabular}
\end{adjustbox}
\endgroup
\caption{Compressing four qubits into two ququarts via
\(\ket{a_1a_0}\ket{b_1b_0}\mapsto\ket{A}\ket{B}\). The branch
\(a_1=a_0=1\) becomes \(A=3\), while the target subspace
\(\mathrm{span}\{\ket{01},\ket{10}\}\) becomes
\(\mathrm{span}\{\ket1,\ket2\}\), giving \(\ctrl_3(H^{(1,2)})\).}
\label{fig:ququart-compression}
\end{figure}

Such compression is useful only if the packed circuits can still be
reasoned about locally. A compiler may introduce value-controls, adjacent
two-level gates, or swaps between encoded levels; a verifier then has to show
that these local changes preserve the intended unitary. This paper studies
circuit-level rewriting for qudit circuits.

Quantum circuits provide the standard low-level syntax for finite-dimensional
quantum computation \cite{quantum-circuits}: wires carry $d$-level quantum
data, boxes represent unitary gates, and an $n$-wire circuit denotes a unitary
operator on $(\mathbb{C}^d)^{\otimes n}$. A
rewrite system for such circuits should therefore be stable under the two
ways circuits are built, sequential composition and parallel composition. In
PROP language, where objects are wire counts, this asks for local equations
between diagrams, closed under those compositions and under the structural
symmetries. We ask whether equality of unitary qudit circuits can be
axiomatized by a \emph{finite schematic}, syntax-directed equational theory in
the circuit PROP.

For qubits ($d=2$), the answer is known to be yes
\cite{qubit-theory}, with later simplifications and minimal presentations
\cite{qubit-theory-simplified,qubit-theory-minimal}. The qubit proofs also
identify the pressure point. Control cannot be treated as an afterthought.
Without it in the syntax, completeness tends to reintroduce it through rules
with arbitrarily many explicit control wires. Controlled PROPs avoid
that blow-up by treating control functorially, which restores finite,
bounded-arity axiom sets \cite{delorme-perdrix-control} and connects with the
categorical account of control constructors in \cite{lemonier-control}.

Graphical calculi provide complementary complete languages for quantum
processes, including qudit and finite-dimensional variants of ZX/ZW-style
systems
\cite{zx-calculus,zx-complete,zh-calculus,qufinite-zx,qudit-zh,qufinite-zxw-complete,zx-fdhilb-complete,zw-fd-complete}.
These calculi usually describe a broad class of linear maps. When the endpoint
is an ordinary unitary circuit, a proof may finish with a separate
reconstruction of a circuit from the whole diagram, or from the matrix it
denotes, as in circuit extraction from ZX diagrams \cite{zx-to-circuits}.
Existing circuit-level qutrit and qudit results are more specialised:
phase-gadget methods treat qutrit diagonal gates
\cite{qutrit-diagonal-gates}, complete rule sets cover multi-qutrit Clifford
fragments \cite{qutrit-clifford}, and generalized-Clifford formalisms handle
multi-qudit computations
\cite{Lin2021algebraicframework,Lin2025graphicalcalculus}. The case left open
is the full unitary qudit circuit category itself, where wire arities are
objects, circuits are morphisms, and, for every \(d\ge2\), one wants a
bounded-arity list of local axiom schemata whose shape is uniform in \(d\).

Beyond qubits, the difficulty is to find a qudit-native circuit presentation on
which completeness can still be proved by finite local rules. Several binary
shortcuts in the qubit completeness proof of \cite{qubit-theory} become
case-sensitive for adjacent two-level gates. A local optical relation may decode
either to a single adjacent-level chain or to a mixed row/column configuration
guarded by basis controls. The proof therefore needs structured control
management while remaining inside a native qudit circuit language.

Our answer is to make each value-control primitive. For every basis value
$k\in[d]$, a control functor $\ctrl_k$ sends an endomorphism $f:n\to n$ to its
$k$-controlled extension $\ctrl_k(f):1+n\to1+n$. The axioms may mention
$\ctrl_k(-)$ directly, so the rule set does not need a variable number of
control wires.

\textbf{Main result.}
For each finite $d\ge 2$, we define a polycontrolled circuit PROP $\CQC$ and a
finite schematic equational theory $\mathrm{QC}_d$ generated by local axiom schemata
uniform in $d$ whose instances involve at most three wires; compatibility,
commutativity, and exhaustivity are derived as the global control algebra in
\(\QCeq\).  The primitive structure of \(\CQC\) contains
only the same-control functor laws.
The main completeness theorem states that $\QCeq$ is sound and complete
for the standard unitary semantics $\interp{-}:\CQC\to\Qudit$: for any
$n$-qudit circuits $C_1,C_2:n\to n$, $\interp{C_1}=\interp{C_2}$ if and only if
$\QCeq \vdash C_1=C_2$.

The completeness proof has three steps. Finite bounded-arity axiom schemata
supply the local rules; the control/support schemata derive the control-algebra laws
(Theorems~\ref{thm:compatibility}--\ref{thm:exhaustive}); Gray-ordered
linear-optical transfer proves Theorem~\ref{thm:main-completeness}.  The
transfer encodes \(n\)-qudit
circuits as \(d^n\)-mode single-photon \(\LOPP\) circuits, serialises tensor to
match the optical direct-sum tensor, decodes optical derivations back into
\(\CQC\), and proves an encoding/decoding retraction. When \(d=2\), this
specializes to the qubit transfer pattern of \cite{qubit-theory}; for larger
\(d\), reflected Gray order keeps optical neighbours adjacent in one qudit
coordinate.

%% file: polyprop_mfcs.tex
\section{Qudit circuits as a polycontrolled PROP}\label{sec:qudit-circuits}
Fix a finite dimension \(d\ge 2\) and write \([d]=\{0,\ldots,d-1\}\).
A PROP is a strict symmetric monoidal category whose objects are natural
numbers and whose tensor on objects is addition~\cite{Saunders}.  PROPs give
the ambient circuit calculus used throughout the paper: objects are wire counts
and morphisms are diagrams modulo strict symmetric-monoidal coherence. Since
all gates considered here are unitary, the only non-empty hom-sets are
endomorphism hom-sets.

The extra structure is value control. In ordinary circuit notation one freely
draws a control and reads it by its action on a branch; here this operation is
part of the categorical syntax. For a detailed analysis of controlled and polycontrolled
PROPs and their circuit applications, see \cite{delorme-perdrix-control}; here
we use only the control-functor axioms below.

\begin{definition}\label{def:polycontrolled-PROP}
For a PROP \(\cat{P}\), let \(\cat{P}_{\mathrm{endo}}\) be the wide subcategory
with the same objects and only endomorphism hom-sets. A \emph{control functor}
is a functor \(\ctrl:\cat{P}_{\mathrm{endo}}\to\cat{P}_{\mathrm{endo}}\) such
that \(\ctrl(n)=1+n\) on objects.  For every \(f,g:n\to n\), \(m\ge0\), and
permutation \(\pi:n\to n\), it satisfies
\[
  \begin{array}{r@{\;}c@{\;}l@{\quad}r@{\;}c@{\;}l@{\quad}r@{\;}c@{\;}l}
  \ctrl(\id_n)&=&\id_{1+n},&
  \ctrl(g\circ f)&=&\ctrl(g)\circ\ctrl(f),&
  \ctrl(f\otimes\id_m)&=&\ctrl(f)\otimes\id_m,\\[0.3ex]
  \multicolumn{9}{c}{
  \ctrl(\pi^{-1}\circ f\circ\pi)
  =(\id_1\otimes\pi^{-1})\circ\ctrl(f)\circ(\id_1\otimes\pi),
  }\\[0.3ex]
  \multicolumn{9}{c}{
  \ctrl(\ctrl(f))\circ(\sigma_{1,1}\otimes\id_n)
  =(\sigma_{1,1}\otimes\id_n)\circ\ctrl(\ctrl(f)).
  }
\end{array}
\]

A \(d\)-ary \emph{polycontrolled PROP} is a PROP equipped with control functors
\((\ctrl_k)_{k\in[d]}\).
\end{definition}

In this syntax, \(\ctrl_k(f):1+n\to 1+n\) is a circuit with one additional
distinguished control wire. The equations say that unused wires remain unused,
target-wire permutations commute with adding a control, and two nested controls
of the same value may be swapped with the two control wires.

For a word \(u=u_1\cdots u_m\in[d]^*\) we write
\(\ctrl_u(f):=\ctrl_{u_1}(\cdots\ctrl_{u_m}(f)\cdots)\), with
\(\ctrl_\epsilon(f)=f\). Beyond the single-value control functor axioms, the
later proofs use three global principles, compatibility between different
values, commutation of different branches, and exhaustivity over all values.
These principles are valid for the unitary interpretation of \(\CQC\) defined
below.  The structural congruence of circuits contains only the same-control
functor laws, and Section~\ref{subsec:derived-control-algebra} derives the
global principles from the finite axiom system.

\begin{definition}\label{def:control-algebra}
Let $(\ctrl_k)_{k\in[d]}$ be a family of control functors. For well-typed
endomorphisms \(f,g:n\to n\), we use the following three control-algebra laws.

\begin{enumerate}
\item The family is \emph{compatible} when, for any \(a,b\in[d]\), the two
ways of nesting \(a\)- and \(b\)-controls around the same circuit agree up to
the control-wire symmetry \(\sigma_{1,1}:2\to2\), i.e.\
\(\ctrl_a(\ctrl_b(f))\circ(\sigma_{1,1}\otimes\id_n)
=
(\sigma_{1,1}\otimes\id_n)\circ\ctrl_b(\ctrl_a(f))\).

\item The family is \emph{commutative} when, for distinct \(a,b\in[d]\),
operations guarded by different values of the same control wire commute,
i.e.\ \(\ctrl_a(f)\circ\ctrl_b(g)=\ctrl_b(g)\circ\ctrl_a(f)\).

\item The family is \emph{exhaustive} when all value-controlled branches of
the same circuit compose to the uncontrolled target operation,
i.e.\ \(\ctrl_0(f)\circ\ctrl_1(f)\circ\cdots\circ\ctrl_{d-1}(f)=\id_1\otimes f\).
The product in the exhaustivity law is composed in increasing order of the
control value; once commutativity is available, this order is immaterial.
\end{enumerate}
\end{definition}
In the matrix model, these branch laws say that on each computational-basis
value of the control wire exactly one guarded operation is active.

\begin{definition}\label{def:support-sensitive}
We define a \(d\)-ary polycontrolled PROP to be \emph{support-sensitive} when each
endomorphism \(f:n\to n\) is assigned a basis-support envelope
\(\mathrm{supp}(f)\subseteq[d]^n\) such that endomorphisms with disjoint
supports commute: if \(\mathrm{supp}(f)\cap\mathrm{supp}(g)=\varnothing\), then
\(f\circ g=g\circ f\).  Supports must also be stable under the circuit
constructors: for \(f,g:n\to n\) and \(h:m\to m\),
\(\mathrm{supp}(f\circ g)\subseteq \mathrm{supp}(f)\cup\mathrm{supp}(g)\),
\(\mathrm{supp}(f\otimes h)\subseteq
(\mathrm{supp}(f)\times[d]^m)\cup([d]^n\times\mathrm{supp}(h))\),
wire symmetries transport supports by the corresponding permutation of basis
words, and \(\mathrm{supp}(\ctrl_k(f))=\{k\}\times\mathrm{supp}(f)\).
\end{definition}
\begin{remark}
Support-sensitivity is stronger than commutativity:
if \(a\neq b\), then \(\mathrm{supp}(\ctrl_a(f))\) and
\(\mathrm{supp}(\ctrl_b(g))\) are disjoint, so the controlled operations
commute.
\end{remark}

The structural PROP supplies the empty diagram, identity wires, symmetries,
sequential composition, and tensor.
Set \(\mathsf G_d :=
\{\input{./figures/phase.tikz}:0\to0 \mid \theta\in\mathbb{R}\}\cup
\{\input{./figures/Hadiip.tikz}:1\to1 \mid 0\le r<d-1\}\). This non-structural gate
family consists of scalar phases indexed by \(\theta\) and adjacent two-level
Hadamards indexed by \(r\).
A phase on one basis level is expressed as \(\ctrl_i(\input{./figures/phase.tikz})\), which
semantically multiplies \(\ket i\) by \(e^{i\theta}\) and fixes the other basis
states.

\begin{remark}
The choice of adjacent two-level generators comes from the transfer. Under
reflected Gray order, an adjacent optical beam splitter connects two basis words
that differ in one digit by \(\pm1\) (Lemma~\ref{lem:gray-neighbours}). Its
qudit decoding acts on two neighbouring levels, possibly under controls, so the
transferred equations stay local and bounded in arity. The same choice is close
to common qudit gate descriptions, where single-qudit gates are often expressed
as rotations on selected pairs of levels
\cite{qudit_high_dim,qutrit_asymptotic,ququart_waltz}. Permutations of basis
values can still be built from adjacent swaps.
\end{remark}

\begin{definition}\label{def:raw-controlled-circuits}
Let \(\RawCQC_d\) be the raw endomorphism syntax generated by the structural
PROP operations, the gates in \(\mathsf G_d\), and the unary constructors
\(\ctrl_k(-)\) for \(k\in[d]\), with no non-endomorphism homs.
\end{definition}

\begin{definition}\label{def:structural-congruence-CQC_d}
Let \(\equiv\) be the least congruence, closed under \(\circ\), \(\otimes\),
and every \(\ctrl_k(-)\), containing the strict PROP coherence laws and the
control-functor equations for each \(\ctrl_k\).
\end{definition}

\begin{definition}\label{def:CQC_d}
Set \(\CQC(n,n):=\RawCQC_d(n,n)/{\equiv}\), with
\(\CQC(m,n):=\varnothing\) for \(m\neq n\).
\end{definition}
The quotient \(\equiv\) contains only structural equations. Compatibility,
commutativity, and exhaustivity must still be proved inside \(\QCeq\).

Let \(\Qudit\) be the semantic polycontrolled PROP with
\(\Qudit(n,n)=U(d^n)\) and no non-endomorphism homs. Symmetries are interpreted
by wire swaps, and the remaining clauses are
\[
\begin{array}{rcl@{\qquad}rcl@{\qquad}rcl}
\interp{C_2\circ C_1} &=& \interp{C_2}\interp{C_1},&
\interp{C_1\otimes C_2} &=& \interp{C_1}\otimes\interp{C_2},&
\interp{\input{./figures/phase.tikz}} &=& e^{i\theta},\\[0.4ex]
\interp{\input{./figures/Hadiip.tikz}}&=&\multicolumn{7}{l}{
\text{Hadamard on }\mathrm{span}\{\ket r,\ket{r+1}\}
\text{ and identity elsewhere,}}\\[0.4ex]
\interp{\ctrl_k(C)}&=&\multicolumn{7}{l}{
\ket{k}\bra{k}\otimes\interp{C}+
\sum_{\ell\in[d],\,\ell\neq k}\ket{\ell}\bra{\ell}\otimes I.}
\end{array}
\]
\begin{proposition}\label{PROP:interp-well-defined}
The interpretation respects the structural congruence and hence factors through
a strict symmetric monoidal functor \(\interp{-}:\CQC\to\Qudit\).
\end{proposition}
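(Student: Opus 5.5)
The plan is the standard free-construction argument. First define \(\interp{-}\) on the raw syntax \(\RawCQC_d\) by structural recursion, using the displayed clauses: generators go to their matrices, \(\circ\) and \(\otimes\) go to the matrix product and Kronecker product, symmetries go to wire swaps, and \(\ctrl_k\) goes to the block operator \(\ket{k}\bra{k}\otimes U+\sum_{\ell\neq k}\ket\ell\bra\ell\otimes I\). Each clause yields a unitary. For \(\ctrl_k\) this holds because the operator is block-diagonal with respect to the orthogonal decomposition \(\mathbb{C}^d\otimes(\mathbb{C}^d)^{\otimes n}=\bigoplus_\ell \ket\ell\otimes(\mathbb{C}^d)^{\otimes n}\), and every block is unitary. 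So the map lands in \(U(d^n)\).

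Since \(\equiv\) is the \emph{least} congruence closed under \(\circ\), \(\otimes\) and each \(\ctrl_k\) containing the listed laws, it suffices to show two things. First, the kernel relation \(C_1\sim C_2 \iff \interp{C_1}=\interp{C_2}\) is a congruence for the three constructors. This is immediate, because \(\interp{-}\) is defined compositionally, so \(\interp{C_1}=\interp{C_2}\) implies \(\interp{\ctrl_k(C_1)}=\interp{\ctrl_k(C_2)}\), and similarly for \(\circ\) and \(\otimes\). Second, every generating law holds semantically. Then \(\equiv\subseteq\sim\), and the factorisation through \(\CQC\) follows. It is strict symmetric monoidal by construction of the clauses.

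For the generating laws:

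\textbf{Strict PROP coherence.} These laws hold because \(\Qudit\), that is, matrices under \(\cdot\) and \(\otimes\) with swaps, is itself a strict symmetric monoidal category. This covers associativity, the unit laws, interchange, naturality of symmetries, the hexagon and involutivity. Here one uses the convention \((\mathbb{C}^d)^{\otimes n}\otimes(\mathbb{C}^d)^{\otimes m}=(\mathbb{C}^d)^{\otimes(n+m)}\) so that tensor is strictly associative.

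\textbf{Control-functor equations.} These are checked blockwise, on each summand \(\ket\ell\otimes(\mathbb{C}^d)^{\otimes n}\) of the control wire.
\begin{itemize}
\item \(\ctrl_k(\id)=\id\) and \(\ctrl_k(g\circ f)=\ctrl_k(g)\ctrl_k(f)\) hold because block-diagonal operators multiply blockwise.
\item \(\ctrl_k(f\otimes\id_m)=\ctrl_k(f)\otimes\id_m\) follows since \((\ket k\bra k\otimes F)\otimes I_m+\sum_{\ell\ne k}\ket\ell\bra\ell\otimes I\otimes I_m\) is exactly the right-hand side.
\item The permutation law holds because a permutation matrix \(P\) conjugates each block, and \(P^{-1}IP=I\).
\item The same-value double-control law holds because \(\interp{\ctrl_k\ctrl_k(f)}\) acts as \(F\) on \(\ket{k}\ket{k}\otimes-\) and as \(I\) elsewhere. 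Its support condition on the two control wires, "both equal \(k\)", is symmetric under \(\sigma_{1,1}\), so it commutes with that swap.
\end{itemize}

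The only point needing care is the well-definedness of the recursion on \(\RawCQC_d\) itself. If raw terms are taken modulo nothing, this is plain structural recursion. If \(\RawCQC_d\) is already a free PROP quotiented by coherence, one first invokes the universal property of the free PROP on the generators \(\mathsf G_d\) to get the PROP functor, and then extends it to the \(\ctrl_k\) terms. I expect the main obstacle to be purely bookkeeping. The real work is making sure that \(\ctrl_k\) applied to a term whose own interpretation is only defined up to coherence is still well defined, and that requires the congruence to be closed under \(\ctrl_k\) simultaneously with \(\circ\) and \(\otimes\). I would handle this by defining the semantics on fully raw terms, with free operations, and letting the congruence argument above absorb all coherence at once.
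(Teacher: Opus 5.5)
Your proposal is correct and takes the same route as the paper. You interpret raw terms compositionally, check every generating law of the structural congruence semantically (PROP coherence from the strict symmetric monoidal structure of matrices, and the five control-functor equations blockwise from the controlled block formula), and conclude that the interpretation descends to a strict symmetric monoidal functor. You also spell out two points the paper leaves implicit: the controlled block operator is unitary, and the kernel of a compositional interpretation is closed under \(\circ\), \(\otimes\) and every \(\ctrl_k\), which is what makes the least-congruence argument go through.
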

\begin{proof}
The clauses are the standard matrix interpretation of strict monoidal
circuits, so identities, composition, tensor, and symmetries respect the PROP
coherence laws. The displayed block formula for \(\interp{\ctrl_k(C)}\) is
functorial in \(C\). It is also stable under padding and target-wire
permutations, and it makes two equal-value controls symmetric after swapping
the control wires.
Thus every generator of the structural congruence has equal denotation, and the
interpretation descends to a strict symmetric monoidal functor on \(\CQC\).
\end{proof}

This small signature still reaches all finite-dimensional unitaries.
\begin{proposition}\label{PROP:CQC-universal}
For every $n\ge0$, \(\interp{-}\) is surjective on \(n\)-wire endomorphisms.
\end{proposition}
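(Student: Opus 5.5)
The plan is to show that every unitary in \(U(d^n)\) is the denotation of a circuit. I will reduce arbitrary unitaries to \emph{two-level unitaries on adjacent basis vectors}, and then build each such two-level unitary from the generators in \(\mathsf G_d\) using controls.

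\textbf{Reduction to adjacent two-level unitaries.} By the standard Givens/QR elimination, every \(U\in U(d^n)\) is a product of two-level unitaries together with a diagonal phase matrix. I will order the basis words \([d]^n\) by reflected Gray order, so that consecutive words differ in exactly one digit by \(\pm1\) (Lemma~\ref{lem:gray-neighbours}). Running the elimination column by column, and only ever rotating consecutive Gray-neighbours, writes \(U\) as a product of two-level unitaries acting on \(\mathrm{span}\{\ket{w},\ket{w'}\}\) with \(w,w'\) Gray-adjacent, times a diagonal unitary. The diagonal part is itself a product of single-level phases \(\ket w\mapsto e^{i\theta}\ket w\).

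\textbf{Realising the pieces.} A single-level phase on the word \(w\) is \(\ctrl_{w}(\text{phase}_\theta)\), i.e.\ nested controls \(\ctrl_{w_1}\cdots\ctrl_{w_n}\) applied to the scalar gate. Its denotation multiplies exactly \(\ket w\) by \(e^{i\theta}\), by the control clause of \(\interp{-}\).

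For a two-level unitary on Gray neighbours \(w,w'\) that differ at position \(j\) (with values \(r,r+1\) there), it suffices to realise an arbitrary \(V\in U(2)\) acting on \(\mathrm{span}\{\ket r,\ket{r+1}\}\) of one qudit. I then place this single-qudit gate on wire \(j\), using symmetries and the permutation law, and control it by the remaining digits of \(w\) on the other wires. The denotation is then the desired two-level unitary, with identity elsewhere.

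To realise the single-qudit \(V\), I use Euler decomposition: \(V=e^{i\alpha}R_z(\beta)HR_z(\gamma)HR_z(\delta)\) up to the usual \(H\)-conjugation form, since \(R_x=HR_zH\). Here:
\begin{itemize}
\item \(H\) restricted to the subspace is \(H^{(r,r+1)}\);
\item \(R_z\)-type relative phases are products of level phases \(\ctrl_r(\text{phase})\) and \(\ctrl_{r+1}(\text{phase})\);
\item the global phase on the subspace is likewise a product of these two level phases.
\end{itemize}
All of these act as the identity on the other levels.

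\textbf{Main obstacle.} The delicate step is the Gray-ordered elimination. I need to check that zeroing the entries of a column, using only rotations between consecutive words in the Gray order, never reintroduces nonzero entries in previously cleared columns. The approach is the standard one: process column \(k\) by rotating pairs \((k+1\text{-th from the bottom})\) upward along the chain of indices \(\ge k\) in the Gray order. Since the order is a Hamiltonian path on the index set and each rotation only mixes rows with index \(\ge k\), the earlier columns are untouched.

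The case \(n=0\) is trivial, since \(U(1)\) consists of the scalar phases.
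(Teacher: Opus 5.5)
Your argument is correct. For \(n>1\) it follows a different route from the paper's.

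\textbf{The paper's route.} The paper does the adjacent two-level elimination only inside a single qudit, then uses Euler decompositions into \(H^{(r,r+1)}\), level phases and a scalar phase. For \(n>1\) it appeals to an external exact-universality theorem: all one-qudit unitaries together with one entangling two-qudit gate, here \(\ctrl_0(\ctrl_0(\pi))\), generate \(U(d^n)\). Symmetries and identity padding place the gates on the chosen wires.

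\textbf{Your route.} You run the elimination directly on \(\mathbb C^{d^n}\) along the reflected Gray path. Each factor is then a two-level unitary on Gray neighbours \(u\,v\,w\) and \(u\,(v\pm1)\,w\), which you realise as \(\Lambda^u_w\) of an adjacent one-qudit two-level gate. This uses the fact that the syntax allows arbitrarily deep value-controls \(\ctrl_{uw}\). The diagonal remainder is a product of level phases \(\ctrl_w(\text{phase})\). Your column-by-column check is exactly what is needed: rotations on consecutive positions \(\ge k\) only mix rows \(\ge k\), while cleared columns are supported in rows \(<k\). So the step you flag as the main obstacle goes through. For \(n=1\) your argument coincides with the paper's one-qudit step.

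\textbf{What each route buys.} Yours is elementary and constructive and gives explicit circuits. It avoids both the imported Lie-theoretic universality result and the check that the chosen two-qudit gate is entangling. It is essentially the semantic counterpart of the paper's own transfer: decode a nearest-neighbour optical decomposition of the Gray-conjugated unitary via Definition~\ref{def:decoding}. The paper's route is shorter. It also shows that bounded-arity pieces already suffice semantically (one-qudit gates plus a single fixed two-qudit gate), whereas your circuits use control words of length \(n-1\).
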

\begin{proof}
For \(n=0\), scalar phases realise \(U(1)\). For one qudit, adjacent two-level
decompositions reduce an arbitrary element of \(U(d)\) to adjacent supports
\cite{nielsen-chuang-decomp,Clements,su3-decomp,su3-decomp2}, each realised by
Euler decompositions into the adjacent Hadamard, level phases, and a scalar
phase \cite{nielsen-chuang-decomp,qubit-universal-set}. For \(n>1\), these
one-qudit unitaries together with the controlled scalar
\(\interp{\ctrl_0(\ctrl_0(\scalebox{0.8}{\input{./figures/piphase.tikz}}))}\) give the
exact universal family of \cite{qudit-universal}: the controlled scalar is
entangling for \(d\ge2\), scalar phases supply global phases, and tensors with
identities and symmetries place the generators on any chosen wires.
\end{proof}

\begin{definition}\label{def:eq-theory-CQC}
An equational theory $\Gamma$ on $\CQC$ is a set of well-typed equations closed
under equivalence closure, sequential composition, tensor, and every control
constructor $\ctrl_k(-)$. We write $\Gamma\vdash f=g$ for the generated
congruence.
\end{definition}
\begin{definition}\label{def:QC_d}
For each $d\ge2$, $\QCeq$ is the equational theory on $\CQC$ generated by the
local axiom schemata displayed in Figures~\ref{fig:axioms_QC_part1}
and~\ref{fig:axioms_QC_part2}.
\end{definition}
Here \(\QCeq\) is the congruence generated by the displayed rules. The transfer
argument of Section~\ref{sec:embedding-lopp} proves that this congruence is
exact: Theorem~\ref{thm:main-completeness} identifies it with equality of
unitary denotations.

%% file: set_of_rules.tex
\section{A Finite Axiom System for Qudit Circuits}\label{sec:axioms}

The presentation below consists of the gate algebra of the qudit signature and
the finite branch-control rules used by the later normalisation arguments. All equations
in Figures~\ref{fig:axioms_QC_part1}--\ref{fig:axioms_QC_part2} are read in
\(\CQC\), where the strict PROP laws and same-control coherence equations have
already been quotiented out.

\subsection{Reading the axiom schemata}\label{subsec:derived-gates}

The schemata below use a slightly higher-level notation, but no extra
generators are being added. The non-structural generators are the gates
\(\mathsf G_d\) of Section~\ref{sec:qudit-circuits}; identities, symmetries,
and the control constructors \(\ctrl_k(-)\) come from the polycontrolled PROP
structure. Here \(X^{(i,j)}\), \(H^{(i,j)}\), and \(R_x^{(i,j)}(\theta)\) are
derived one-qudit circuits: semantically they are, respectively, a
transposition of levels \(i,j\), a Hadamard on
\(\mathrm{span}\{\ket i,\ket j\}\), and the corresponding two-level
beam-splitter rotation.  Appendix~\ref{app:derived-notation} gives their
recursive definitions.
For instance, when \(d=3\), the non-adjacent swap is the adjacent walk
\(X^{(0,2)}=X^{(1,2)}\circ X^{(0,1)}\circ X^{(1,2)}\); the same pattern gives
\(H^{(0,2)}\) and \(R_x^{(0,2)}(\theta)\).

The figures use \emph{indexed product frames} as meta-notation for finite
sequential products. A frame labelled \(\vec k:\varphi(\vec k)\) binds the
displayed index tuple and expands to the composite of the enclosed diagram over
all satisfying tuples, in lexicographic increasing order:
\begin{center}
\setlength{\tabcolsep}{0pt}
\begin{tabular}{@{}c@{}}
\adjustbox{max width=0.92\linewidth}{%
\(\input{./figures/bangbox-p.tikz}\qquad\Longrightarrow\qquad
\input{./figures/bangbox-unfold.tikz}\)}\\[0.45em]
\adjustbox{max width=0.86\linewidth}{%
\(\text{for }d=3:\quad
\input{./figures/bangbox-example.tikz}=
\input{./figures/bangbox-example-unfold.tikz}\)}
\end{tabular}
\end{center}
A tuple label such as \((a,b):0\le a<b<d\) is read lexicographically, with
\(a\) the outer index and \(b\) the inner index. Product frames expand on the
same wires, so they add repetitions without increasing arity.
Read the figures as local support schemata.  Increasing \(d\)
changes the admissible level labels and the finite products expanded by frames,
not the arity of the displayed diagrams.

\subsection{The axiom schemata}

\begin{theorem}\label{thm:bounded-arity-axioms}
For every \(d\ge2\), \(\QCeq\) is a finite schematic equational theory on
\(\CQC\).  The diagrammatic shape of the rules is independent of \(d\), and
every expanded instance acts on at most three wires.
\end{theorem}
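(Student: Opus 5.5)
The theorem is essentially a bookkeeping statement about the presentation in Figures~\ref{fig:axioms_QC_part1} and~\ref{fig:axioms_QC_part2}, so I will prove it by direct inspection of the displayed schemata, organised into three checks: finiteness, $d$-independence of shape, and the arity bound on expanded instances.

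\emph{Finiteness and schematic form.} I will enumerate the figures and record each rule as a pair of diagrams built from the generators of $\mathsf G_d$, identities, symmetries, and the constructors $\ctrl_k(-)$. The free parameters are of three kinds only: real angles $\theta$, basis labels in $[d]$ (level indices $r$, $i$, $j$, and control values $k$), and frame-bound index tuples ranging over finite sets such as $\{(a,b):0\le a<b<d\}$. Since the list of displayed diagrams is finite, $\QCeq$ is generated by finitely many schemata. Each schema has finitely many instances per fixed angle assignment, because the admissible label sets are finite for fixed $d$. I will also note that the derived gates $X^{(i,j)}$, $H^{(i,j)}$ and $R_x^{(i,j)}(\theta)$ unfold, via Appendix~\ref{app:derived-notation}, into finite one-wire words in adjacent Hadamards and level phases $\ctrl_i(\text{phase})$. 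Such a phase is a control on a $0\to0$ scalar, so it occupies a single wire. Consequently these abbreviations add neither generators nor wires.

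\emph{Independence of $d$.} The only places where $d$ enters are the ranges of level labels and the index sets of product frames. The underlying diagram, namely which wires exist, where controls and gates sit, and how the frame is placed, is drawn once and interpreted for every $d$. I will make this precise by viewing each schema as a diagram template with label variables and side conditions (for instance $0\le r<d-1$ or $a\neq b$), in which $d$ appears only inside the side conditions.

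\emph{Arity bound.} This is the main point, and I will treat it as an invariant. An expanded instance is obtained by (i) instantiating labels, (ii) unfolding derived one-qudit gates, and (iii) unfolding each product frame into a sequential composite over its index set. Step (ii) is wire-preserving by the observation above. Step (iii) composes copies of the enclosed diagram on the same wires, as stated in Subsection~\ref{subsec:derived-gates}. Since composition in a PROP does not change the object, the wire count of the expansion equals that of the displayed template. It therefore suffices to check that every template in the two figures has type $n\to n$ with $n\le3$. This is read off directly, the largest rules being doubly controlled gates $\ctrl_a(\ctrl_b(-))$ on one target wire.

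The expected obstacle is making sure that no rule hides extra wires. Two cases need care: nested frames whose body contains controls indexed by the bound variables, and derived gates such as non-adjacent $X^{(i,j)}$, whose recursive definition might conceivably be implemented with ancillas. For frames, I will argue by induction on the nesting depth of frames, using the same-wire expansion at each level. For derived gates, I will argue by induction on $|i-j|$ in the recursive definition, checking that every step, for example the adjacent walk $X^{(0,2)}=X^{(1,2)}\circ X^{(0,1)}\circ X^{(1,2)}$, stays on one wire.
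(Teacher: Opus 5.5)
Your proposal is correct and follows the paper's own argument. The only \(d\)-dependence is in level labels and product-frame ranges, frames expand to sequential composites on the same wires, the displayed templates use at most three wires, and finiteness means a finite list of rule shapes whose angles range over \(\mathbb R\). Your checks that the derived one-qudit gates unfold to single-wire words, and your induction on frame nesting, spell out what the paper leaves to direct inspection.
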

\begin{proof}
The only \(d\)-dependent data are level indices and product frames.  Product
frames expand to sequential products on the same wires, while the displayed
diagrams themselves contain at most three wires.  Each angle variable ranges
over \(\mathbb R\); finiteness refers to the finite list of rule shapes.
\end{proof}

The constructor \(\ctrl_k(-)\) is primitive in \(\CQC\), while the structural
congruence contains only the functor laws for a fixed control value.  The
equations below must therefore generate the algebra relating different control
values.  Figure~\ref{fig:axioms_QC_part1} gives the gate algebra.
Figure~\ref{fig:axioms_QC_part2} lists the bounded control/support schemata
used to derive the control laws of Definition~\ref{def:control-algebra}.

\begin{figure*}[!t]
\centering
\begingroup
\setlength{\fboxsep}{2pt}
\fbox{
\begin{minipage}{\dimexpr0.985\textwidth-2\fboxsep-2\fboxrule\relax}
\centering
\setlength{\abovedisplayskip}{2pt}
\setlength{\belowdisplayskip}{2pt}
\setlength{\abovedisplayshortskip}{2pt}
\setlength{\belowdisplayshortskip}{2pt}
\begin{minipage}[c]{0.34\linewidth}
\centering
\schemarule{Sum}{sum}{\input{./figures/two-phases.tikz} = \input{./figures/phase-sum.tikz}}
\end{minipage}
\hspace{0.05\linewidth}
\begin{minipage}[c]{0.24\linewidth}
\centering
\schemarule{$2\pi$}{2pi}{\input{./figures/phase2pi.tikz} = \input{./figures/empty_diagram.tikz}}
\end{minipage}\\[0.12em]
\begin{minipage}[c]{0.48\linewidth}
\centering
\schemarule{XH}{XH}{\input{./figures/XH.tikz} = \input{./figures/HX.tikz}}
\end{minipage}
\hspace{0.04\linewidth}
\begin{minipage}[c]{0.38\linewidth}
\centering
\schemarule{H$^{2}$}{Hiip-Hiip}{\input{./figures/Hiip-Hiip.tikz} = \input{./figures/line.tikz}}
\end{minipage}\\[0.12em]
\begin{minipage}[c]{0.88\linewidth}
\centering
\schemarule{EH}{eulerH}{\input{./figures/eulerHL.tikz} = \input{./figures/eulerHR.tikz}}
\end{minipage}\\[0.12em]
\begin{minipage}[c]{0.96\linewidth}
\centering
\schemarule{3Rx}{3Rx}{\input{./figures/3bsL.tikz} = \input{./figures/3bsR.tikz}}
\end{minipage}\\[0.12em]
\begin{minipage}[c]{0.985\linewidth}
\centering
\schemarule{CXC}{CX-XC-CX}{\input{./figures/CX-XC-CX.tikz} = \input{./figures/XC-CX-XC.tikz}}
\end{minipage}\\[0.12em]
\begin{minipage}[c]{0.64\linewidth}
\centering
\schemarule{S}{swap-decomp}{\input{./figures/CX-XC-CX-bangbox.tikz} = \input{./figures/swap.tikz}}
\end{minipage}
\end{minipage}}
\endgroup
\caption{
\textbf{Core local axiom schemata for \(\mathrm{QC}_{d}\).}
Each displayed equality is a schema: level labels range over \([d]\), and
product frames abbreviate finite products over the indices written in their
labels.
In \eqref{eulerH} and \eqref{3Rx} the right-hand-side angles are any
admissible choices for the indicated Euler decompositions; Appendix~\ref{appendix:relations_angles}
describes this convention.
}
\Description{Framed collection of circuit-diagram rewrite axioms defining the core rules of the equational
theory QC_d: phase arithmetic, basic two-level identities involving swaps and Hadamards,
a swap decomposition into controlled operations, and the local decomposition, rotation, and classical schemata
\eqref{eulerH}, \eqref{3Rx}, and \eqref{CX-XC-CX}. Each axiom is shown as an equality between a left-hand and a
right-hand circuit diagram.}
\label{fig:axioms_QC_part1}
\end{figure*}

\begin{figure}[!t]
\centering
\begingroup
\setlength{\fboxsep}{2pt}
\fbox{
\begin{minipage}{\dimexpr0.985\textwidth-2\fboxsep-2\fboxrule\relax}
\centering
\setlength{\abovedisplayskip}{2pt}
\setlength{\belowdisplayskip}{2pt}
\setlength{\abovedisplayshortskip}{2pt}
\setlength{\belowdisplayshortskip}{2pt}
\newcommand{\axiomsep}{\par\vspace{0.01em}\noindent\rule{0.96\linewidth}{0.25pt}\par\vspace{0.01em}}
\begin{minipage}[c]{0.43\linewidth}
\centering
\schemarule{ExP}{axiom-total-phase}{\input{./figures/axiom-total-phase-L.tikz} = \input{./figures/axiom-total-phase-R.tikz}}
\end{minipage}
\hspace{0.04\linewidth}
\begin{minipage}[c]{0.43\linewidth}
\centering
\schemarule{ExH}{axiom-total-hadamard}{\input{./figures/axiom-total-hadamard-L.tikz} = \input{./figures/axiom-total-hadamard-R.tikz}}
\end{minipage}\\[-0.02em]
\axiomsep
\begin{minipage}[c]{0.58\linewidth}
\centering
\schemarule{SHH}{HHcomm}{\input{./figures/HHcommA.tikz} = \input{./figures/HHcommB.tikz}}
\end{minipage}\\[0.03em]
\begin{minipage}[c]{0.43\linewidth}
\centering
\schemarule{SHP}{Hphasecomm}{\input{./figures/HphasecommA.tikz} = \input{./figures/HphasecommB.tikz}}
\end{minipage}
\hspace{0.04\linewidth}
\begin{minipage}[c]{0.43\linewidth}
\centering
\schemarule{SH$\pi$}{Hcnot}{\input{./figures/HcnotA.tikz} = \input{./figures/HcnotB.tikz}}
\end{minipage}\\[-0.02em]
\axiomsep
\begin{minipage}[c]{0.43\linewidth}
\centering
\schemarule{BPP}{cp-cp}{\input{./figures/comm-phasesL.tikz} = \input{./figures/comm-phasesR.tikz}}
\end{minipage}
\hspace{0.04\linewidth}
\begin{minipage}[c]{0.43\linewidth}
\centering
\schemarule{BP$\pi$}{cp-cnot}{\input{./figures/comm-phasesbasepiL.tikz} = \input{./figures/comm-phasesbasepiR.tikz}}
\end{minipage}\\[0.03em]
\begin{minipage}[c]{0.43\linewidth}
\centering
\schemarule{B$\pi\pi$d}{cnot-cnot-diff}{\input{./figures/comm-phasescpipi-diffL.tikz} = \input{./figures/comm-phasescpipi-diffR.tikz}}
\end{minipage}
\hspace{0.04\linewidth}
\begin{minipage}[c]{0.43\linewidth}
\centering
\schemarule{BH$\pi$}{ch-ccnot}{\input{./figures/Hphase-commpi-sameL.tikz} = \input{./figures/Hphase-commpi-sameR.tikz}}
\end{minipage}\\[0.03em]
\begin{minipage}[c]{0.46\linewidth}
\centering
\schemarule{BHH}{ch-ch}{\input{./figures/CHCH-comm-sameL.tikz} = \input{./figures/CHCH-comm-sameR.tikz}}
\end{minipage}
\hspace{0.04\linewidth}
\begin{minipage}[c]{0.43\linewidth}
\centering
\schemarule{BHP}{ch-ccp}{\input{./figures/HphasecommcA.tikz} = \input{./figures/HphasecommcB.tikz}}
\end{minipage}\\[0.03em]
\begin{minipage}[c]{0.43\linewidth}
\centering
\schemarule{B$\pi\pi$s}{cnot-cnot}{\input{./figures/comm-phasescpi-sameL.tikz} = \input{./figures/comm-phasescpi-sameR.tikz}}
\end{minipage}
\\[-0.02em]
\axiomsep
\begin{minipage}[c]{0.36\linewidth}
\centering
\schemarule{CoP}{axiom-compat}{\input{./figures/swap-cacbpi.tikz} = \input{./figures/cbcapi-swap.tikz}}
\end{minipage}
\hspace{0.04\linewidth}
\begin{minipage}[c]{0.36\linewidth}
\centering
\schemarule{Co$\pi$}{axiom-compat-pi}{\input{./figures/swap-cacbpi2.tikz} = \input{./figures/cbcapi-swap2.tikz}}
\end{minipage}
\end{minipage}}
\endgroup
\caption{
\textbf{Control-support axiom schemata for \(\mathrm{QC}_{d}\).}
The prefixes Ex, S, B, and Co mark finite schemata for totalisation,
support-locality, branch-commutation, and compatibility.
The \(S\)-rules use pairwise distinct level labels: \(i,j,k\), and
\(i,j,k,\ell\) in \eqref{HHcomm}.  The \(B\)-rules use distinct outer control
values \(k\neq\ell\).
}
\Description{Framed collection of control-support circuit-diagram rewrite axioms for QC_d. It includes the totalisation schemata
\eqref{axiom-total-phase} and \eqref{axiom-total-hadamard} used to derive exhaustivity of control, support-locality rules for Hadamards acting on disjoint basis levels, a finite factor-commutation basis for
operations controlled on different basis values (the cases with \(k\neq \ell\)), and the bounded compatibility
schemata \eqref{axiom-compat} and \eqref{axiom-compat-pi} for nested controls. Each schema is displayed
as an equality between two circuit diagrams.}
\label{fig:axioms_QC_part2}
\end{figure}

Every displayed angle parameter in Figure~\ref{fig:axioms_QC_part2} ranges
over \(\mathbb R\).  The inequalities in that figure restrict labels: the
\(S\)-rules require disjoint active levels, and the \(B\)-rules require
distinct outer control values \(k\neq\ell\).

Before deriving the control algebra, we compare the schemata in
Figure~\ref{fig:axioms_QC_part2} with the global laws they generate.  If the
corresponding control-algebra law, or semantic support-sensitivity, were
already present in the syntax, the matching schemata would follow as
contextual instances.  The next subsection proves the direction needed for
completeness: the bounded schemata generate the three control-algebra laws.

\begin{lemma}\label{lem:commutativity-rules-admissible}
If commutativity from Definition~\ref{def:control-algebra} were admitted as an
admissible rewrite in \(\CQC\), then the schemata
\eqref{cp-cp}, \eqref{cp-cnot}, \eqref{cnot-cnot-diff}, \eqref{ch-ccnot},
\eqref{ch-ch}, \eqref{ch-ccp}, and \eqref{cnot-cnot} would be derivable.
\end{lemma}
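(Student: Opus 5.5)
The plan is to read each of the seven B-schemata as a single instance of the commutativity law \(\ctrl_a(f)\circ\ctrl_b(g)=\ctrl_b(g)\circ\ctrl_a(f)\) with \(a\neq b\), possibly placed in a context. The caption of Figure~\ref{fig:axioms_QC_part2} says every B-rule has distinct outer control values \(k\neq\ell\) on a common control wire. So each left-hand side should have the shape \((\ctrl_k(F)\otimes\id)\circ(\ctrl_\ell(G)\otimes\id)\), up to wire symmetries, where \(F\) and \(G\) are the inner circuits on the remaining one or two wires. The right-hand side should be the same product in the opposite order. Concretely, \(F\) and \(G\) range over: the scalar phase; the controlled \(\pi\)-phase (the ``\(\pi\)'' or CNOT-like gate, itself \(\ctrl_m\) of a phase or of \(X^{(i,j)}\)); a derived \(H^{(i,j)}\) or adjacent Hadamard; and controlled versions of these. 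The letters in each name identify the pair: PP, P\(\pi\), \(\pi\pi\) (different or same inner controls), H\(\pi\), HH, HP.

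\textbf{Step 1.} For each schema I identify the common outer control wire and rewrite both sides, using only the structural congruence \(\equiv\), into the form \(\sigma\circ(\ctrl_k(F)\circ\ctrl_\ell(G))\circ\sigma^{-1}\) and \(\sigma\circ(\ctrl_\ell(G)\circ\ctrl_k(F))\circ\sigma^{-1}\) for a suitable wire permutation \(\sigma\).

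\textbf{Step 2.} This normalisation uses the following structural laws:
\begin{itemize}
\item the padding law \(\ctrl(f\otimes\id_m)=\ctrl(f)\otimes\id_m\), so that \(F\) and \(G\) can be taken to act on the same set of target wires;
\item the permutation law \(\ctrl(\pi^{-1}\circ f\circ\pi)=(\id_1\otimes\pi^{-1})\circ\ctrl(f)\circ(\id_1\otimes\pi)\), to move target wires into a common order;
\item PROP naturality of symmetries, to bring the shared control wire to the top.
\end{itemize}

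\textbf{Step 3.} I apply the admitted commutativity rewrite to the middle factor. Because equational theories are closed under \(\circ\), \(\otimes\) and every \(\ctrl_k(-)\) (Definition~\ref{def:eq-theory-CQC}), the context \(\sigma\circ-\circ\sigma^{-1}\) is harmless. The same closure handles schemata where the displayed pair sits under a further control, by applying \(\ctrl_m(-)\) to the derived equation.

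The main obstacle is Step 1 for the schemata whose inner pieces carry controls on other wires, namely \eqref{cnot-cnot-diff}, \eqref{cnot-cnot} and \eqref{ch-ccnot}. There one must check that the outer \(k\)- and \(\ell\)-controls really sit on the \emph{same} wire. If the diagram instead shows one gate with the \(k\)-control outermost and the other with it nested inside another control, bringing it outermost needs compatibility between different values, which is not available. My first move is to read the figures so that the distinct-value controls are the outermost constructor on both sides. Where a same-value reordering is needed, I would use the last functor law, \(\ctrl(\ctrl(f))\) symmetric under \(\sigma_{1,1}\), which is in \(\equiv\). If a schema genuinely requires reordering controls of different values, I would state that it needs compatibility as well. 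Failing that, I would derive it by conjugating with the compatibility axioms \eqref{axiom-compat}, \eqref{axiom-compat-pi} and then applying commutativity. The derived gates \(X^{(i,j)}\) and \(H^{(i,j)}\) require no unfolding, since commutativity applies to arbitrary \(f,g\).
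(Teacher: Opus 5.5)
Your proposal is correct and takes the same route as the paper, which proves this lemma in one line: each branch-commutation rule is an instance of commutativity, up to tensoring with identities and PROP coherence. Your contingency about needing compatibility does not arise, because in every B-rule the distinct values \(k\neq\ell\) label the outermost control on the shared wire, so padding, target permutations and symmetry naturality suffice.
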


\begin{lemma}\label{lem:support-sensitive-rules}
If \(\CQC\) were assumed to be support-sensitive, with the displayed generators
carrying their semantic supports, then the branch-commutation schemata listed
in Lemma~\ref{lem:commutativity-rules-admissible} would be derivable.  The
same assumption would also give \eqref{HHcomm}, \eqref{Hphasecomm}, and
\eqref{Hcnot}.
\end{lemma}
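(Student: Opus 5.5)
The argument is a direct instance-by-instance check: for each listed schema we assign supports to the displayed generators and observe that the two sides differ only by reordering factors whose supports are disjoint. Under the hypothesis, \(\mathrm{supp}(\ctrl_k(f))=\{k\}\times\mathrm{supp}(f)\) holds by definition. Supports of composites, tensors and symmetry-transported diagrams are bounded by the stability clauses of Definition~\ref{def:support-sensitive}. Disjointness of supports then yields commutation, so each schema reduces to a support computation plus at most a few uses of the structural congruence \(\equiv\).

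\textbf{Branch-commutation schemata.} Every \(B\)-rule states that some operation under an outer control \(\ctrl_k(-)\) commutes with some operation under \(\ctrl_\ell(-)\) on the same wire, with \(k\neq\ell\). The inner factors may be phases, \(\pi\)-phases (CNOT-like), Hadamards, or further controls. We first normalise each side using the functor laws and the permutation/padding laws in \(\equiv\), so that both factors are literally of the form \(\ctrl_k(f)\) and \(\ctrl_\ell(g)\) on a common control wire. The remark following Definition~\ref{def:support-sensitive} then applies verbatim: the supports \(\{k\}\times\mathrm{supp}(f)\) and \(\{\ell\}\times\mathrm{supp}(g)\) are disjoint. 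This handles \eqref{cp-cp}, \eqref{cp-cnot}, \eqref{cnot-cnot-diff}, \eqref{ch-ccnot}, \eqref{ch-ch}, \eqref{ch-ccp} and \eqref{cnot-cnot}. Alternatively, the same argument shows that support-sensitivity implies commutativity, and we may then cite Lemma~\ref{lem:commutativity-rules-admissible}.

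\textbf{Support-locality schemata.} For \eqref{HHcomm}, \eqref{Hphasecomm} and \eqref{Hcnot} we use the semantic supports of the derived one-qudit gates. By the convention in the hypothesis, \(H^{(i,j)}\) has support \(\{i,j\}\) and the level phase \(\ctrl_k(\text{phase})\) has support \(\{k\}\). The level transposition or \(\pi\)-gadget appearing in \eqref{Hcnot} has support equal to its active levels. Because the \(S\)-rules require pairwise distinct labels, the supports are disjoint, and commutation follows. Derived gates such as \(H^{(i,j)}\) for non-adjacent levels are built as walks of adjacent gates, so their envelope computed by the union rule might contain intermediate levels. We therefore read the hypothesis as assigning the semantic support to each displayed (possibly derived) gate, as the statement allows.

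\textbf{Main obstacle.} The delicate step is aligning each schema's diagram with the literal shape \(\ctrl_k(f)\circ\ctrl_\ell(g)\), or \(f\circ g\) on a common object. In several schemata a control is nested inside another control, or the two factors act on different wire subsets. For these we would pad with identities using \(\ctrl(f\otimes\id_m)=\ctrl(f)\otimes\id_m\), and transport supports through symmetries using the permutation law. We must then check that the tensor clause still yields disjoint envelopes, which holds because both envelopes carry the distinct values \(k,\ell\) in the shared outer control coordinate.
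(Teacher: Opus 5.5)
Your proposal is correct and follows the paper's own argument. For the \(B\)-rules, \(\mathrm{supp}(\ctrl_k(f))=\{k\}\times\mathrm{supp}(f)\) makes the two branches disjoint, so they commute (the remark after Definition~\ref{def:support-sensitive}, then Lemma~\ref{lem:commutativity-rules-admissible}); for the \(S\)-rules, the displayed semantic supports are disjoint. One point to sharpen: \eqref{Hcnot} is a two-wire rule, with supports \(\{i,j\}\times[d]\) for the padded Hadamard (tensor clause) and \(\{k\}\times\{m\}\) for the doubly controlled \(\pi\)-phase, and these are disjoint because \(k\notin\{i,j\}\).
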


\begin{lemma}\label{lem:compatibility-rules-admissible}
If mixed-control compatibility from Definition~\ref{def:control-algebra} were
admitted as an admissible rewrite in \(\CQC\), then the schemata
\eqref{axiom-compat} and \eqref{axiom-compat-pi} would be derivable.
\end{lemma}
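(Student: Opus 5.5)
The plan is to show that each of \eqref{axiom-compat} and \eqref{axiom-compat-pi} is, after unfolding the notation, a single instance of the compatibility law of Definition~\ref{def:control-algebra}. Reading the figures, both schemata have the same shape. On one side a symmetry \(\sigma_{1,1}\) on two control wires is followed (or preceded) by a doubly controlled gate \(\ctrl_a(\ctrl_b(g))\). On the other side the nesting is reversed to \(\ctrl_b(\ctrl_a(g))\) and the symmetry sits on the opposite side. The target \(g\) is a scalar: \(g=\theta\) (a phase \(0\to0\)) for \eqref{axiom-compat}, and \(g=\pi\), or the \(\pi\)-phase gadget realising the controlled-not pattern, for \eqref{axiom-compat-pi}.

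The first step is to write each side as a term of \(\RawCQC_d\) modulo \(\equiv\). Any derived notation (a level phase \(\ctrl_i(\theta)\), or an \(X^{(i,j)}\) target if one appears in \eqref{axiom-compat-pi}) will be expanded using Appendix~\ref{app:derived-notation}. The goal is to present each side as \((\sigma_{1,1}\otimes\id_n)\circ\ctrl_b(\ctrl_a(f))\) or \(\ctrl_a(\ctrl_b(f))\circ(\sigma_{1,1}\otimes\id_n)\) for one endomorphism \(f:n\to n\), with \(n\in\{0,1\}\).

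Next, if the figure puts the symmetry on the other side, I will move it using invertibility of \(\sigma_{1,1}\) (\(\sigma_{1,1}\circ\sigma_{1,1}=\id_2\), a PROP coherence law already in \(\equiv\)). Composing both sides of the assumed compatibility equation with \(\sigma_{1,1}\otimes\id_n\) gives the displayed equation. Since compatibility is assumed as an admissible rewrite, and an equational theory is closed under composition (Definition~\ref{def:eq-theory-CQC}), the schema instance follows for every choice of labels and angle.

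The main obstacle is the case where the inner target is not literally the same circuit on both sides. This happens if \eqref{axiom-compat-pi} involves a target placed on a third wire, or a target wrapped by a wire permutation. In that case I will first normalise with the control-functor equations in \(\equiv\): padding \(\ctrl(f\otimes\id_m)=\ctrl(f)\otimes\id_m\), and permutation transport \(\ctrl(\pi^{-1}\circ f\circ\pi)=(\id_1\otimes\pi^{-1})\circ\ctrl(f)\circ(\id_1\otimes\pi)\), applied under both nested controls. This brings both sides to a common \(f\) with the extra wire structure pulled outside. Only then will I apply compatibility, and finally push the permutations back in.
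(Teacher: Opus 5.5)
Your proposal is correct and is essentially the paper's argument: each schema is a single instance of the compatibility law, with \(f\) taken to be the scalar phase for \eqref{axiom-compat} and the once-controlled \(\pi\)-phase \(\ctrl_c(\pi)\) (so \(n=1\)) for \eqref{axiom-compat-pi}. That target already sits on one wire under both nested controls, so your fallback normalisation via padding and permutation transport is not actually needed.
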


\begin{lemma}\label{lem:exhaustivity-rules-admissible}
If exhaustivity from Definition~\ref{def:control-algebra} were admitted as an
admissible rewrite in \(\CQC\), then the totalisation schemata
\eqref{axiom-total-phase} and \eqref{axiom-total-hadamard} would be derivable.
\end{lemma}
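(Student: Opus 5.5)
We treat exhaustivity as an extra rewrite available in \(\CQC\) and show that each totalisation schema is one of its instances, after routine rewriting in the structural congruence. We do not know the exact pictures of \eqref{axiom-total-phase} and \eqref{axiom-total-hadamard}. From their names and the figure description we read them as follows.
\begin{itemize}
\item \eqref{axiom-total-phase}: the product over \(k\in[d]\), in an indexed product frame, of \(\ctrl_k\) applied to a scalar phase \(\theta\), or to a one-wire level phase \(\ctrl_i(\theta)\), equals the uncontrolled phase.
\item \eqref{axiom-total-hadamard}: the framed product \(\prod_{k}\ctrl_k(H^{(r,r+1)})\) equals the bare adjacent Hadamard tensored with an idle control wire.
\end{itemize}
Under this reading the plan has three steps.

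First, unfold the product frame. By the convention of Section~\ref{subsec:derived-gates}, the frame expands to the sequential composite \(\ctrl_0(f)\circ\ctrl_1(f)\circ\cdots\circ\ctrl_{d-1}(f)\), in increasing lexicographic order. Here \(f\) is the enclosed one-wire or zero-wire circuit: the scalar \(\theta\), the level phase \(\ctrl_i(\theta)\), or \(H^{(r,r+1)}\).

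Second, if the frame order in the figure differs from increasing control value, reorder. Commutativity is not part of the admitted rule, so instead we choose the exhaustivity instance with the inverse circuit and use functoriality of each \(\ctrl_k\). Alternatively, we note that the exhaustivity law as stated already fixes the increasing order, which should match the frame order.

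Third, apply the exhaustivity law \(\ctrl_0(f)\circ\cdots\circ\ctrl_{d-1}(f)=\id_1\otimes f\) to this composite. Then use the PROP coherence laws and the padding law \(\ctrl(f\otimes\id_m)=\ctrl(f)\otimes\id_m\) to match the right-hand side of each schema.
\begin{itemize}
\item For the phase case with \(f:0\to0\), \(\id_1\otimes f\) is the scalar drawn beside a wire, which is exactly the displayed right-hand side.
\item If the schema instead carries the controls on a wire other than the top one, conjugate by a symmetry using the permutation law of Definition~\ref{def:polycontrolled-PROP}.
\end{itemize}

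The main obstacle is matching the displayed right-hand sides, whose exact form we cannot see. If a schema places extra idle target wires, or places its controls inside another control \(\ctrl_j(-)\), we need the exhaustivity instance under a context. This is legitimate because an equational theory is closed under \(\ctrl_j(-)\), \(\otimes\) and \(\circ\) (Definition~\ref{def:eq-theory-CQC}). We therefore derive the schema from the bare instance, closing it under these constructors and then rewriting with the functor laws.
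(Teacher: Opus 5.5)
Your proposal is correct and matches the paper's argument, which instantiates exhaustivity with the scalar phase for \eqref{axiom-total-phase} and with the derived two-level Hadamard $H^{(i,j)}$ for \eqref{axiom-total-hadamard}; frame unfolding and PROP coherence do the rest. You read \eqref{axiom-total-hadamard} as only the adjacent case, which is narrower than the paper, but exhaustivity holds for every endomorphism, so the same instantiation covers the general $H^{(i,j)}$.
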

\begin{proof}[Proof of Lemmas~\ref{lem:commutativity-rules-admissible}--\ref{lem:exhaustivity-rules-admissible}]
The branch-commutation rules are instances of commutativity, up to tensoring
with identities and PROP coherence.  Support-sensitivity gives
the Hadamard commutations because the displayed supports are disjoint:
\(\{i,j\}\) from \(\{k,\ell\}\) or \(\{k\}\), and
\(\{i,j\}\times[d]\) from \(\{k\}\times\{m\}\).
Compatibility gives \eqref{axiom-compat} and \eqref{axiom-compat-pi} by
applying it to a scalar phase and to a once-controlled \(\pi\)-phase.
Finally, exhaustivity gives \eqref{axiom-total-phase} and
\eqref{axiom-total-hadamard} by taking, respectively, the scalar phase and the
derived two-level Hadamard as the controlled circuit.
\end{proof}

\begin{proposition}\label{prop:qudit-support-sensitive}
The semantic polycontrolled PROP \((\Qudit,(\Ctrl_k)_{k\in[d]})\) is
support-sensitive.
\end{proposition}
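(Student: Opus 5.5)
The plan is to choose the support envelope directly from the matrix. For a unitary \(U\in U(d^n)=\Qudit(n,n)\), set
\[
\mathrm{supp}(U):=\{x\in[d]^n \mid U\ket{x}\neq\ket{x}\}.
\]
This is the set of basis words on which \(U\) does not act as the identity; for a scalar \(u\in U(1)\) with \(n=0\) it is \(\{\epsilon\}\) iff \(u\neq1\).

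The first step is a structural lemma: \(U\) restricts to a unitary on \(\mathrm{span}\{\ket x: x\in\mathrm{supp}(U)\}\) and acts as the identity on the complementary span. If \(x\notin\mathrm{supp}(U)\), then \(U\ket x=\ket x\), hence \(U^\dagger\ket x=\ket x\). Therefore for \(y\in\mathrm{supp}(U)\) we get \(\langle x|U|y\rangle=\langle U^\dagger x|y\rangle=\langle x|y\rangle=0\). So \(U\) maps the support span into itself. With this in hand, the commutation clause follows. If the supports of \(U\) and \(V\) are disjoint, the space splits into three orthogonal coordinate subspaces: the span of \(\mathrm{supp}(U)\), the span of \(\mathrm{supp}(V)\), and the rest. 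Both operators are block diagonal for this splitting, each acting nontrivially on only one block, on different blocks. Hence \(UV=VU\).

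The second step checks the stability clauses one by one.
\begin{itemize}
\item \emph{Composition.} If \(x\notin\mathrm{supp}(U)\cup\mathrm{supp}(V)\), then \(UV\ket x=\ket x\).
\item \emph{Tensor.} For \(U\otimes W\), if \(x\notin\mathrm{supp}(U)\) and \(y\notin\mathrm{supp}(W)\), then \((U\otimes W)\ket{xy}=\ket{xy}\). This gives the stated inclusion.
\item \emph{Symmetries.} A wire permutation \(P\) satisfies \(PUP^{-1}\ket{\pi x}=P U\ket{x}\). So the support is transported exactly by the induced permutation of words.
\item \emph{Control.} From the block formula for \(\Ctrl_k(U)\): on \(\ket\ell\ket y\) with \(\ell\neq k\) it is the identity, and on \(\ket k\ket y\) it acts as \(\ket k\otimes U\ket y\). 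The latter differs from \(\ket k\ket y\) exactly when \(y\in\mathrm{supp}(U)\). So \(\mathrm{supp}(\Ctrl_k(U))=\{k\}\times\mathrm{supp}(U)\) with equality, as Definition~\ref{def:support-sensitive} requires.
\end{itemize}

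The main obstacle is ensuring the envelope is exact for controls rather than only an upper bound. A naive choice such as "all words whose matrix row or column is nontrivial" could behave differently. The fixed-point definition, combined with the invariance lemma from the first step, makes the control equation an equality, and it also handles the degenerate scalar case \(n=0\) uniformly.
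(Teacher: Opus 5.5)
Your proof is correct and follows the paper's route. It uses the same fixed-point support \(\mathrm{supp}(U)=\{x\mid U\ket{x}\neq\ket{x}\}\), the same three-summand decomposition for commutation, and direct matrix checks for the stability clauses; your invariance argument (\(U^\dagger\ket{x}=\ket{x}\) forces \(U\) to preserve the span of its support) usefully makes explicit the unitarity step the paper leaves implicit when it says the two operators act on different direct summands.
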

\begin{proof}
For \(U\in\Qudit(n,n)\), set
\(\mathrm{supp}(U):=\{x\in[d]^n\mid U\ket{x}\neq\ket{x}\}\). The support
inclusions for composition and tensor, and the formulas for symmetries and
controls, follow from their matrix semantics. If \(U\) and \(V\) have disjoint
supports \(S\) and \(T\), then they act on different direct summands in the
decomposition by \(S\), \(T\), and the remaining basis states, hence commute.
\end{proof}

\subsection{Derived control-algebra laws}\label{subsec:derived-control-algebra}

We now derive the three control-algebra laws
(Definition~\ref{def:control-algebra}) from the finite schemata in
Figures~\ref{fig:axioms_QC_part1}--\ref{fig:axioms_QC_part2}.

\begin{definition}\label{def:G-factors}
Let \(\mathcal G=\{\sigma_{1,1},\ \input{./figures/phase.tikz},\ \input{./figures/Hadiip.tikz},
\ \ctrl_k(\input{./figures/phase.tikz}),\ \ctrl_k(\ctrl_\ell(\input{./figures/piphase.tikz}))\}\),
with the adjacent Hadamard indexed by \(0\le r<d-1\) and the controls indexed
by \(k,\ell\in[d]\). A contextual \(\mathcal G\)-factor is a morphism
\(\id_p\otimes G\otimes\id_q\), where \(G\in\mathcal G\) and \(p,q\ge 0\).
\end{definition}

\begin{lemma}[Factor normalisation]\label{lemma-Greducible}
Every morphism of \(\CQC\) is provably equal in \(\QCeq\) to a possibly empty
finite sequential product of contextual \(\mathcal G\)-factors.
\end{lemma}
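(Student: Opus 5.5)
We argue by structural induction on raw terms of \(\RawCQC_d\), working modulo \(\equiv\). Let \(\mathcal F\) be the class of morphisms provably equal in \(\QCeq\) to a finite sequential product of contextual \(\mathcal G\)-factors. We show that \(\mathcal F\) contains the generators and is closed under the constructors.

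\textbf{Generators and structural operations.} Identities are the empty product. Symmetries \(\sigma_{m,n}\) are products of adjacent transpositions \(\id_p\otimes\sigma_{1,1}\otimes\id_q\), by PROP coherence. The gates of \(\mathsf G_d\) already lie in \(\mathcal G\). Closure under \(\circ\) is concatenation of products. For \(\otimes\), interchange gives \(f\otimes g=(f\otimes\id)\circ(\id\otimes g)\), and \(\id_a\otimes(\id_p\otimes G\otimes\id_q)\otimes\id_b\) is again contextual.

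\textbf{Closure under \(\ctrl_k\).} By functoriality, \(\ctrl_k\) of a product is the product of the \(\ctrl_k\) of its factors. So it suffices to show that \(\ctrl_k(\id_p\otimes G\otimes\id_q)\in\mathcal F\) for every \(G\in\mathcal G\). Padding gives \(\ctrl_k(f\otimes\id_q)=\ctrl_k(f)\otimes\id_q\). The permutation law with \(\pi\) moving the \(p\) idle wires past \(G\) gives \(\ctrl_k(\id_p\otimes f)=(\id_1\otimes\pi^{-1})\circ(\ctrl_k(f)\otimes\id_p)\circ(\id_1\otimes\pi)\), up to \(\equiv\). We are therefore reduced to \(\ctrl_k(G)\) for \(G\in\mathcal G\):

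\begin{itemize}
\item \(\ctrl_k(\sigma_{1,1})\): a controlled swap. Using \eqref{swap-decomp} we write \(\sigma_{1,1}\) as a product of controlled-\(X\) gadgets built from \(\ctrl_\ell(X^{(i,j)})\). Its \(k\)-control is then a product of doubly controlled gadgets, so it reduces to the next two cases.
\item \(\ctrl_k(\text{phase})\) is itself in \(\mathcal G\).
\item \(\ctrl_k(\ctrl_\ell(\text{phase}))\): apply \eqref{eulerH} or a similar phase-gadget identity to rewrite \(\ctrl_\ell(\theta)\) in terms of \(\ctrl_\ell(\ctrl_m(\pi))\), \(\ctrl_m(\theta)\) and adjacent Hadamards. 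Applying \(\ctrl_k\) reduces this to \(\ctrl_k(H^{(r,r+1)})\), \(\ctrl_k(\ctrl_m(\theta))\) and triply controlled \(\pi\)-phases.
\item \(\ctrl_k(H^{(r,r+1)})\) and \(\ctrl_k(\ctrl_\ell(\ctrl_m(\pi)))\): these are the genuine obstacle.
\end{itemize}

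To avoid circularity we strengthen the induction. We prove that for every word \(u\) and every gate \(G\in\mathsf G_d\cup\{\sigma_{1,1}\}\), the term \(\ctrl_u(G)\) lies in \(\mathcal F\), by induction on \(|u|\).

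\textbf{Main obstacle.} The key step is a multiply controlled gate construction inside \(\QCeq\). We aim for Barenco-style identities, using only doubly controlled \(\pi\)-phases, controlled phases and Hadamards as ancilla-free building blocks:
\begin{itemize}
\item a singly controlled Hadamard expressed as a product of \(\ctrl_a(\theta)\), \(\ctrl_a(\ctrl_b(\pi))\) and adjacent Hadamards (via \(\ctrl_\ell(X)\) obtained from \(\ctrl_\ell(\ctrl_m(\pi))\) conjugated by Hadamards);
\item \(\ctrl_{u}(\pi)\) with \(|u|\ge3\) decomposed through Toffoli-like ladders of \(\ctrl\ctrl(\pi)\)-gadgets and controlled \(\pi/2\) phases.
\end{itemize}
Each identity should be derivable from \eqref{eulerH}, \eqref{3Rx}, \eqref{CX-XC-CX} and the \(B\)- and \(Co\)-rules. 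Since \(\ctrl\) is a functor, it suffices to derive these identities once, for one extra control; applying further controls to a derived identity keeps it derivable, because the theory is closed under \(\ctrl_k\).

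I expect the hardest part to be justifying these base decompositions, in particular the controlled Hadamard, without yet having commutativity and exhaustivity. I would try to extract them directly from \eqref{axiom-total-hadamard} read right to left, which presumably expresses a controlled Hadamard through phase and Hadamard factors. Similarly, \eqref{axiom-compat-pi} would supply the step from triple controls down to the generator \(\ctrl_k(\ctrl_\ell(\pi))\).
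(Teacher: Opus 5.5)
Your skeleton matches the paper's \(\Separate{}\) recursion: you serialise tensor, push \(\ctrl_k\) through composites by functoriality, padding and target-permutation naturality, and treat \(\ctrl_k(\sigma_{1,1})\) through \eqref{swap-decomp}. The gap is the step you label the main obstacle, which is where the lemma's substance lies and which you leave as conjecture.

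\textbf{What is missing.} You need derivations in \(\QCeq\) that rewrite each of the following as a product of factors with fewer outer controls:
\begin{itemize}
\item \(\ctrl_k(H^{(r,r+1)})\);
\item \(\ctrl_k(\ctrl_\ell(\theta))\) for arbitrary \(\theta\);
\item \(\ctrl_k(\ctrl_\ell(\ctrl_m(\alpha)))\) for arbitrary \(\alpha\), not only \(\alpha=\pi\), since \(\ctrl_u\) of the scalar generator already produces these.
\end{itemize}
In the paper these are the derived rules \eqref{axiom-controlled-hadamard-decompose}, \eqref{axiom-ccp} and \eqref{axiom-cccp}.

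\textbf{Termination.} Plain induction on \(|u|\) over \(\mathsf G_d\cup\{\sigma_{1,1}\}\) is not well founded. Any reduction of \(\ctrl_k(H^{(r,r+1)})\) to \(\mathcal G\)-factors must contain an entangling factor \(\ctrl_a(\ctrl_b(\pi))\). Under the remaining \(|u|-1\) controls, that factor becomes a scalar phase carrying \(|u|+1\) controls, so you need a finer measure on which every reduction strictly decreases.

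\textbf{Why your suggested sources fail.}
\begin{itemize}
\item \eqref{axiom-total-hadamard} equates \(\id_1\otimes H^{(r,r+1)}\) with the product of all \(d\) branches \(\ctrl_k(H^{(r,r+1)})\). In neither direction does it isolate a single controlled Hadamard in terms of phases and uncontrolled Hadamards.
\item \eqref{axiom-compat-pi} only exchanges two outer controls around \(\ctrl_c(\pi)\) and keeps all three controls, so it never lowers control depth.
\item \eqref{eulerH} is a one-wire identity. It cannot produce the two-wire factor \(\ctrl_\ell(\ctrl_m(\pi))\) needed to break up the entangling phase \(\ctrl_k(\ctrl_\ell(\theta))\).
\end{itemize}

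\textbf{The idea you are missing} is how the paper uses totalisation.
\begin{itemize}
\item Expand an uncontrolled Hadamard or phase into its \(d\) branches by \eqref{axiom-total-hadamard} or \eqref{axiom-total-phase}.
\item Commute the \(d-1\) branches with value different from \(k\) across a \(k\)-controlled gate, using the distinct-value \(B\)-rules (\eqref{ch-ch}, \eqref{ch-ccp}, \eqref{ch-ccnot}, \eqref{cp-cp}).
\item Cancel those branches. An uncontrolled gate next to a \(k\)-controlled one is thereby traded for its \(k\)-branch.
\end{itemize}
Combined with the compatibility cells, this yields \eqref{hch}, \eqref{chc}, \eqref{php}, \eqref{hph} and \eqref{cxp}. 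Together with \eqref{hdec}, \eqref{XCtrl} and \eqref{cp-dec}, these give the three reduction rules that your induction requires.
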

\begin{proof}
Apply \(\Separate{}\) from Appendix~\ref{app:separate-normalisation}.  It
expands the derived one-qudit notation, serialises tensor, and pushes controls
through composites.  When a controlled generator is not already a contextual
\(\mathcal G\)-factor, derived equations reduce it to a finite product with
fewer outer controls on each factor.  Lemma~\ref{lem:separate-terminates}
gives the termination measure; the controlled-generator reductions are the
derived rules \eqref{axiom-controlled-hadamard-decompose}, \eqref{axiom-ccp},
and \eqref{axiom-cccp} of Appendix~\ref{app:derivations}.  An identity map is
sent to the empty product at the same ambient arity, i.e.\ the PROP unit.
\end{proof}

\begin{theorem}\label{thm:compatibility}
In the equational theory \(\QCeq\) on \(\CQC\), the family
\((\ctrl_k)_{k=0}^{d-1}\) satisfies compatibility.
\end{theorem}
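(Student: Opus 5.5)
The plan is to reduce the compatibility law to the bounded schemata \eqref{axiom-compat} and \eqref{axiom-compat-pi} by factor normalisation. Fix \(a,b\in[d]\) and \(f:n\to n\). If \(a=b\), the law is already one of the control-functor equations in the structural congruence \(\equiv\), so assume \(a\neq b\). By Lemma~\ref{lemma-Greducible}, \(\QCeq\vdash f=F_m\circ\cdots\circ F_1\), where each \(F_i=\id_p\otimes G\otimes\id_q\) is a contextual \(\mathcal G\)-factor. Since \(\QCeq\) is closed under every \(\ctrl_k(-)\), we may replace \(f\) by this product inside both \(\ctrl_a(\ctrl_b(-))\) and \(\ctrl_b(\ctrl_a(-))\). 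Functoriality, \(\ctrl(g\circ h)=\ctrl(g)\circ\ctrl(h)\), then splits each side into a product of doubly controlled factors. The conjugation by \(\sigma_{1,1}\otimes\id_n\) telescopes, because \((\sigma_{1,1}\otimes\id_n)^2=\id\). It therefore suffices to prove the law for a single contextual factor \(f=\id_p\otimes G\otimes\id_q\).

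Next I remove the padding. The right padding goes away by \(\ctrl(f\otimes\id_m)=\ctrl(f)\otimes\id_m\). For the left padding, write \(\id_p\otimes G\) as \(\pi^{-1}\circ(G\otimes\id_p)\circ\pi\) for a block permutation \(\pi\), and apply the permutation equation of the control functors twice. The resulting outer permutations \(\id_2\otimes\pi^{\pm1}\) commute with \(\sigma_{1,1}\otimes\id_n\) by PROP coherence. This leaves the law to check for each bare generator \(G\in\mathcal G\), and I would treat the generators case by case.

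\textbf{Scalar phase.} For \(G\) the scalar phase, the law is exactly \eqref{axiom-compat}.

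\textbf{Controlled \(\pi\)-phase.} For \(G=\ctrl_k(\ctrl_\ell(\pi))\), the cleanest route is to first derive compatibility for a once-controlled \(\pi\)-phase, which is \eqref{axiom-compat-pi}. Then I would move the two extra inner controls past each other using the lemma for \(\ctrl_k(\theta)\) applied inside another control; this is legitimate because \(\QCeq\) is closed under \(\ctrl\).

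\textbf{Controlled phase.} For \(G=\ctrl_k(\theta)\), I would first rewrite \(\theta\) by \eqref{sum} as needed. The plan is to use the derived decomposition of a controlled phase into \(\pi\)-phases and Hadamard-conjugated factors, as in \eqref{axiom-ccp} of Appendix~\ref{app:derivations}, and so reduce to the \(\pi\)-phase case together with the Hadamard and symmetry cases.

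\textbf{Symmetry \(\sigma_{1,1}\) and adjacent Hadamard.} Here no control algebra is available a priori. For \(\sigma_{1,1}\), I would rewrite it through \eqref{swap-decomp} as a product of controlled \(X\)-type gates, which are themselves Hadamard/phase composites, and so reduce to the remaining cases. For the adjacent Hadamard \(H^{(r,r+1)}\), I would use \eqref{eulerH} or the derived controlled-Hadamard decomposition \eqref{axiom-controlled-hadamard-decompose}. Under two different outer controls, this expresses \(\ctrl_a\ctrl_b(H)\) through singly controlled phases and \(\pi\)-phases conjugated by uncontrolled Hadamards. Conjugation by an uncontrolled factor is harmless, since a factor acting only on the target wire commutes with \(\sigma_{1,1}\) on the control wires.

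The main obstacle is the Hadamard case. There the doubly controlled Hadamard must be brought to a form in which both outer controls sit only on phase-type factors, without circularly invoking compatibility or commutativity. I would first try to derive from \eqref{eulerH} an identity of the shape \(\ctrl_c(H)=U\circ\ctrl_c(P)\circ V\), with \(U,V\) uncontrolled and \(P\) a product of level phases \(\ctrl_i(\theta)\). Applying the second control then stays inside the phase cases already handled. If this produces more nested controls, an induction on control depth is needed, and I would take the termination measure from Lemma~\ref{lem:separate-terminates}.
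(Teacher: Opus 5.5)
Your reduction to a single bare generator is sound. Functoriality, the telescoping of \(S=\sigma_{1,1}\otimes\id_n\), strength and the permutation law do reduce the statement to \(\ctrl_a\ctrl_b(G)\circ S=S\circ\ctrl_b\ctrl_a(G)\) for \(G\in\mathcal G\), which is essentially the nested-control check of Lemma~\ref{lem:finite-factor-laws}. The gap is that the generator cases, where all the content lies, are not established.

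\emph{The doubly controlled \(\pi\)-phase.} For \(G=\ctrl_k(\ctrl_\ell(\pi))\) you must exchange the two \emph{outermost} controls of the four-control term \(\ctrl_a\ctrl_b\ctrl_k\ctrl_\ell(\pi)\). Closure of \(\QCeq\) under \(\ctrl\) only adds controls outside an equation. Moreover \eqref{axiom-compat} and \eqref{axiom-compat-pi} have at most three wires. So every instance you can apply to this term sits under the outer \(\ctrl_a\): it permutes control wires \(2,3\) or \(3,4\), but never changes which constructor is outermost. No combination of such steps reaches \(S\circ\ctrl_b\ctrl_a(\cdots)\circ S\), and neither does the \(\ctrl_k(\theta)\) case used ``inside another control''. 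That appeal is also circular, since you reduce the \(\ctrl_k(\theta)\) case back to the \(\pi\)-phase case. The control depth has to be lowered first by a rule that changes the outer constructor, e.g.\ expanding the inner triply controlled phase with \eqref{axiom-cccp}. Your closing suggestion of an induction on control depth does not repair this as stated. Compatibility for a shallower body never yields it for a deeper one, because closure under \(\ctrl\) pushes the swap inward, not outward.

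\emph{The Hadamard and controlled-phase cases.} Once you do decompose, via \eqref{axiom-controlled-hadamard-decompose} for \(H\) and \eqref{axiom-ccp} for \(\ctrl_k(\theta)\), the claim that uncontrolled conjugators are harmless fails. Suppose \(\ctrl_b(H)=(\id_1\otimes U)\circ\ctrl_b(P)\circ(\id_1\otimes V)\) and you apply \(\ctrl_a\). Then \(U,V\) act on the target controlled by wire~\(1\) with value \(a\). On the other side the same construction gives \(U,V\) controlled by wire~\(1\) with value \(b\), which \(S\) moves to wire~\(2\). The two products no longer match factor by factor, and ``a factor on the target wire commutes with \(\sigma_{1,1}\)'' does not apply, because these factors are no longer uncontrolled. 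Even when the middle factors agree by \eqref{axiom-compat-pi}, you must still strip a single control from a conjugating pair. That is, you must show both sides equal \((\id_2\otimes U)\circ Y\circ(\id_2\otimes V)\) for the doubly controlled middle \(Y\). Semantically this is exhaustivity plus branch commutation. Syntactically you may not invoke Theorem~\ref{thm:exhaustive}, whose nested-control step uses compatibility. So this step must be derived locally, e.g.\ from \eqref{axiom-total-hadamard}, \eqref{axiom-total-phase}, the branch-commutation schemata and the compatibility cells.

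\emph{The paper's route.} The paper does not argue generator by generator in this way. It runs \Separate{} on \(\ctrl_a(\ctrl_b(f))\) and \(\ctrl_b(\ctrl_a(f))\) themselves. The only factor left carrying both distinguished controls is then \(\ctrl_a\ctrl_b(\pi)\), handled by \eqref{axiom-compat}. All remaining factors are transported across \(S\) by Lemma~\ref{lem:separate-outer-swap-equivariant}. That lemma relies on the derived expansions \eqref{axiom-controlled-hadamard-decompose}, \eqref{axiom-ccp}, \eqref{axiom-cccp}, which are proved from exactly those totalisation, branch-commutation and compatibility schemata.
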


\begin{theorem}\label{thm:commuting}
In the equational theory \(\QCeq\) on \(\CQC\), the family
\((\ctrl_k)_{k=0}^{d-1}\) is commutative.
\end{theorem}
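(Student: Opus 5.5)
The plan is to reduce commutativity to finitely many generator-level commutations using factor normalisation, and then discharge those with the B-schemata of Figure~\ref{fig:axioms_QC_part2}, plus compatibility (Theorem~\ref{thm:compatibility}) for reordering controls. Fix distinct \(a\neq b\) and \(f,g:n\to n\). By Lemma~\ref{lemma-Greducible}, write \(f=F_1\circ\cdots\circ F_p\) and \(g=G_1\circ\cdots\circ G_q\) as products of contextual \(\mathcal G\)-factors. By functoriality of \(\ctrl_a\) and \(\ctrl_b\) (part of \(\equiv\)), we get \(\ctrl_a(f)=\prod_i\ctrl_a(F_i)\) and \(\ctrl_b(g)=\prod_j\ctrl_b(G_j)\). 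It therefore suffices to show \(\ctrl_a(F)\circ\ctrl_b(G)=\ctrl_b(G)\circ\ctrl_a(F)\) for single contextual factors \(F,G\). Commuting these pairwise then gives the full statement by a bubble-sort argument.

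\textbf{Step 1: reduce to small arity.} Using the padding and permutation laws \(\ctrl(f\otimes\id_m)=\ctrl(f)\otimes\id_m\) and \(\ctrl(\pi^{-1}f\pi)=(\id_1\otimes\pi^{-1})\ctrl(f)(\id_1\otimes\pi)\), rewrite \(\ctrl_a(\id_p\otimes F_0\otimes\id_q)\) as a permutation conjugate of \(\ctrl_a(F_0)\otimes\id\), where \(F_0\in\mathcal G\). After conjugating by a common wire permutation, the two factors act on the shared control wire plus their own target wires.
\begin{itemize}
\item If their target wires are disjoint, the two factors share only the control wire. Each is diagonal on that wire with respect to the other's support, so they must still be commuted by an axiom.
\item If their target wires overlap, they share a target wire as well.
\end{itemize}
In all cases the relevant subdiagram involves at most the control wire and at most four target wires. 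It is cleaner to case-split on the generator type: \(\sigma_{1,1}\), scalar phase, adjacent Hadamard, \(\ctrl_k(\text{phase})\), and \(\ctrl_k\ctrl_\ell(\pi)\).

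\textbf{Step 2: generator cases.}
\begin{itemize}
\item \(\ctrl_a\) of a scalar phase is a level phase. Pairs of level phases, level phase with doubly controlled \(\pi\), and so on, should match \eqref{cp-cp}, \eqref{cp-cnot}, \eqref{cnot-cnot-diff} and \eqref{cnot-cnot} directly.
\item Pairs involving a controlled Hadamard should match \eqref{ch-ch}, \eqref{ch-ccp} and \eqref{ch-ccnot}.
\item Factors with disjoint target wires should follow from the same schemata with some targets being identities.
\item For the symmetry generator, \(\ctrl_a(\sigma_{1,1})\) is not itself in \(\mathcal G\). I would first expand it with \eqref{swap-decomp} into controlled factors, apply \(\ctrl_a\), and renormalise with Lemma~\ref{lemma-Greducible}. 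Since the controlled-generator reductions strictly lower the number of outer controls on factors, this does not loop.
\end{itemize}

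\textbf{Main obstacle.} The hard step is deeper controls. The factor \(\ctrl_a(\ctrl_k\ctrl_\ell(\pi))\) carries three controls, while the B-axioms display at most bounded nesting. I would handle this by induction on control depth.
\begin{enumerate}
\item Use Theorem~\ref{thm:compatibility} to move the extra inner controls outward.
\item Use the constructor closure of the equational theory (Definition~\ref{def:eq-theory-CQC}) to lift a lower-depth commutation through \(\ctrl_k(-)\) and \(\ctrl_\ell(-)\).
\end{enumerate}
Concretely, if \(\ctrl_a(X)\) and \(\ctrl_b(Y)\) commute, then applying \(\ctrl_c\) and reordering with compatibility shows that \(\ctrl_a(\ctrl_c X)\) and \(\ctrl_b(\ctrl_c Y)\) commute. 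The remaining worry is the mixed case where only one side gains an inner control. There I would fall back on expanding that side through the derived rules \eqref{axiom-ccp}, \eqref{axiom-cccp} and \eqref{axiom-controlled-hadamard-decompose}, and invoking the base schemata.
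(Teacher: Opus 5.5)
\paragraph{Where the proposal falls short.}
Your reduction matches the paper's. You normalise \(f,g\) by Lemma~\ref{lemma-Greducible}, distribute the controls by functoriality, close target gaps with the permutation law, and commute single controlled factors pairwise. Your compatibility lift is also correct and non-circular, since compatibility is derived before commutativity. Applying \(\ctrl_c\) to \(\ctrl_a(X)\circ\ctrl_b(Y)=\ctrl_b(Y)\circ\ctrl_a(X)\) and conjugating by \(\sigma_{1,1}\otimes\id\) gives the commutation of \(\ctrl_a(\ctrl_c(X))\) with \(\ctrl_b(\ctrl_c(Y))\). That settles every pair sharing an inner control of the same value on the same wire.

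The gap is in the remaining finite check, which is where the theorem's content lies. First, disjoint target windows do not follow from the same schemata ``with some targets being identities''. A schema instance cannot move a core onto another target wire. For example, \(\ctrl_a(H)\) against \(\ctrl_b(H)\) on different target wires is not a padded instance of \eqref{ch-ch}. The paper has to derive it as \eqref{ch-ch-diff}, via \eqref{axiom-controlled-hadamard-decompose} and \eqref{ccp-ccp-diff}, which in turn uses \eqref{cnot-cnot-diff}.

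More seriously, consider pairs whose inner controls carry different values, such as \(\ctrl_a(\ctrl_i(\theta))\) against \(\ctrl_b(\ctrl_k(\beta))\) on the same target wire with \(i\neq k\). The lift does not reach them, and neither does any primitive schema: the only primitive rules commuting two doubly controlled phases, \eqref{cnot-cnot} and \eqref{cnot-cnot-diff}, are for \(\pi\)-phases. Your fallback does not work here as stated.
\begin{itemize}
\item Expand \(\ctrl_a(\ctrl_i(\theta))\) down to \(\mathcal G\)-factors with \eqref{axiom-ccp} and \eqref{axiom-controlled-hadamard-decompose}. For \(\theta\notin\pi\mathbb Z\) the result must contain an uncontrolled Hadamard. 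Reason: a diagonal product of level phases, doubly controlled \(\pi\)-phases and wire swaps has a phase function whose mixed second difference lies in \(\pi\mathbb Z\), whereas here that difference is \(\theta\).
\item The expansion does not depend on \(b\) or \(k\). So you may choose \(b\neq a\) and \(k\neq i\) with \(b\) or \(k\) among the two levels of that Hadamard, on whichever wire it sits.
\item That Hadamard factor then fails to commute with \(\ctrl_b(\ctrl_k(\beta))\) even semantically. Hence commuting past the expanded factors one at a time is impossible.
\end{itemize}

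This is why the paper proves these entries by dedicated derivations: \eqref{ccp-ccp}, \eqref{ccp-cnot}, \eqref{ccp-ccnot}, \eqref{ccnot-ccnot} and \eqref{ccnot-ccnot-diff}. These derivations first regroup the expansions with the conjugation rules \eqref{hch} and \eqref{php}. Those rules rest on the totalisation schemata \eqref{axiom-total-hadamard} and \eqref{axiom-total-phase}, which your plan never invokes. The derivations then chain earlier commutations in a fixed order. That catalogue, organised by core pair and offset \(\Delta\in\{0,1,2\}\) in Appendix~\ref{app:commuting}, is what your proposal is missing.

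Control depth, by contrast, is not the obstacle. After normalisation every controlled factor has at most three nested controls, and primitive schemata such as \eqref{ch-ccnot} already have that depth.
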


\begin{theorem}\label{thm:exhaustive}
In the equational theory \(\QCeq\) on \(\CQC\), the family
\((\ctrl_k)_{k=0}^{d-1}\) satisfies exhaustivity.
\end{theorem}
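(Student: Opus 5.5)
We prove exhaustivity, \(\ctrl_0(f)\circ\cdots\circ\ctrl_{d-1}(f)=\id_1\otimes f\), by structural reduction on \(f\). We use Theorem~\ref{thm:commuting} freely, so the order of the branch product is immaterial. First we reduce to generators. Suppose exhaustivity holds for \(f\) and \(g\). Then for \(g\circ f\), functoriality rewrites each \(\ctrl_k(g\circ f)\) as \(\ctrl_k(g)\circ\ctrl_k(f)\). Commutativity lets us sort the resulting product into \(\prod_k\ctrl_k(g)\circ\prod_k\ctrl_k(f)\), since branches with distinct values commute. The inductive hypothesis then gives \((\id_1\otimes g)\circ(\id_1\otimes f)\). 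Padding \(f\otimes\id_m\) is handled by the functor law \(\ctrl(f\otimes\id_m)=\ctrl(f)\otimes\id_m\). Conjugation by permutations is handled by the permutation law, and the PROP unit is trivial. By Lemma~\ref{lemma-Greducible}, every \(f\) is provably a product of contextual \(\mathcal G\)-factors. Using the padding and permutation laws to move target-wire contexts out of the control, it therefore suffices to verify exhaustivity for each \(G\in\mathcal G\).

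Next we treat the generators. For the scalar phase, the claim is exactly \eqref{axiom-total-phase}. For the adjacent Hadamard it is \eqref{axiom-total-hadamard}, possibly after unfolding the derived two-level Hadamard into the adjacent one. For \(\sigma_{1,1}\) we would use \eqref{swap-decomp} to write the swap as a product of controlled operations; the case then reduces to the controlled cases below. Alternatively, we could write \(\sigma_{1,1}\) via the derived permutation circuits and apply the composition step.

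The controlled generators \(\ctrl_\ell(\theta)\) and \(\ctrl_k(\ctrl_\ell(\pi))\) are the crux. Write \(G=\ctrl_\ell(h)\). We want \(\prod_k\ctrl_k(\ctrl_\ell(h))=\id_1\otimes\ctrl_\ell(h)\). By compatibility (Theorem~\ref{thm:compatibility}), each factor \(\ctrl_k(\ctrl_\ell(h))\) equals \(\sigma\circ\ctrl_\ell(\ctrl_k(h))\circ\sigma\). The swaps cancel between consecutive factors, so the product becomes \(\sigma\circ\prod_k\ctrl_\ell(\ctrl_k(h))\circ\sigma\). Functoriality of \(\ctrl_\ell\) turns this into \(\sigma\circ\ctrl_\ell\bigl(\prod_k\ctrl_k(h)\bigr)\circ\sigma\). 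Exhaustivity for the smaller \(h\), applied under \(\ctrl_\ell\), gives \(\sigma\circ\ctrl_\ell(\id_1\otimes h)\circ\sigma\). The permutation and padding laws turn this into \(\id_1\otimes\ctrl_\ell(h)\).

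This is an induction on control depth. Its base cases are the scalar phase and the \(\pi\)-phase, and those are instances of \eqref{axiom-total-phase}. The main obstacle is checking that Theorems~\ref{thm:compatibility} and~\ref{thm:commuting} are proved independently of exhaustivity, so the argument is not circular. If they are not, the fallback is to restrict to the bounded instances needed here. Those are compatibility on scalar phases and once-controlled \(\pi\)-phases, given directly by \eqref{axiom-compat} and \eqref{axiom-compat-pi}, together with the \(B\)-schemata for commuting the branch products of \(\mathcal G\)-factors.
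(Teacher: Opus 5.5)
Your proposal is correct and follows essentially the paper's route. Both arguments use the totalisation axioms \eqref{axiom-total-phase} and \eqref{axiom-total-hadamard} for the generators, \eqref{swap-decomp} for swaps, commutativity for closure under composition, and compatibility plus functoriality of \(\ctrl_m\) for closure under an added control; the paper handles tensor by serialisation and commutativity rather than your padding and permutation laws, which is only a cosmetic difference. Your circularity worry does not arise: Theorems~\ref{thm:compatibility} and~\ref{thm:commuting} are derived from factor normalisation and the bounded schemata without using exhaustivity (only the totalisation axioms), so your fallback is unnecessary.
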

\begin{proof}[Proof of Theorems~\ref{thm:compatibility}--\ref{thm:exhaustive}]
Apply factor normalisation to the controlled composites that occur in the
three laws.  The resulting products consist of contextual \(\mathcal G\)-factors.
Plain symmetry factors are handled by structural naturality; the remaining
comparisons are the controlled-factor checks of Lemma~\ref{lem:finite-factor-laws}.

For compatibility, the two nested-control sides give matching separated
products after the outer control wires are swapped.  Factors meeting at most
one of those wires pass the swap by naturality; factors meeting both wires use
the nested-control check of Lemma~\ref{lem:finite-factor-laws}.

For commutativity, the separated products for \(\ctrl_k(f)\) and
\(\ctrl_\ell(g)\), with \(k\neq\ell\), are interchanged one neighbouring pair
at a time.  Each pair is a branch check from
Lemma~\ref{lem:finite-factor-laws}; induction on the two product lengths moves
all factors across.

For exhaustivity, the totalisation check gives the phase and
adjacent-Hadamard generators.  The identity case follows from the control
functor laws, and structural swaps use the fixed presentation
\eqref{swap-decomp}.  The compatibility and commutativity parts give closure
under composition, tensor product, and added controls.  Appendix~\ref{app:commuting}
contains the factor commutation table, and Appendix~\ref{app:exhaustivity}
contains the closure induction for exhaustivity.
\end{proof}

%% file: embedding_mfcs.tex
\section{Embedding Qudit Circuits into LOPP Circuits}\label{sec:embedding-lopp}
Single-photon linear optics gives a complete equational calculus for finite
unitary matrices generated by phases, beam splitters, and swaps
\cite{LOPP,LOPP-min}.  Completeness is proved by encoding qudit circuits into
that calculus and decoding the resulting optical derivations.  We place the
\(d^n\) computational basis states
of an \(n\)-qudit register in reflected Gray order, so adjacent optical
generators decode to basis-controlled phases or adjacent two-level qudit gates.
The tensor operations do not match. \(\CQC\) uses Hilbert-space tensor product,
while single-photon \(\LOPP\) uses block-diagonal direct sum.  The translations
are therefore contextual maps on a fixed \(d^n\)-mode space. They preserve
sequential composition, serialise tensor, and are tied together by an
encode-decode retraction in \(\QCeq\).
We prove the three required statements in that order. Encoding preserves
denotations after the Gray-basis change; decoded optical rewrites can be
mimicked in \(\QCeq\); decoding an encoding retracts to the original qudit
circuit.

\subsection{Linear optics and Gray-ordered semantics}\label{subsec:lopp-semantic-model}
The decoding map later inspects raw optical subterms, including the consecutive
mode interval a subcircuit occupies.  The quotient \(\LOPP\) is the language to
which the imported completeness theorem applies.
\begin{definition}\label{def:raw-lopp}
Let \(\RawLOPP\) be the free endomorphism PROP generated by the empty diagram,
identity wires, swap, one-mode phases \(\input{./figures/lov-phase.tikz}\), and two-mode
beam splitters \(\scalebox{0.7}{\input{./figures/beamsplitter.tikz}}\), with
\(\RawLOPP(m,n)=\varnothing\) for \(m\neq n\).  Let \(\LOPP\) be the quotient
of \(\RawLOPP\) by strict PROP coherence.
\end{definition}

\begin{definition}\label{def:lopp-semantics-sp}
The single-photon interpretation
\(\interp{-}_{\mathrm{sp}}:\RawLOPP(m,m)\to U(m)\) is defined by
\[
\begin{aligned}
\interp{C_2\circ C_1}_{\mathrm{sp}}
  &= \interp{C_2}_{\mathrm{sp}}\interp{C_1}_{\mathrm{sp}},&
\interp{C_1\otimes C_2}_{\mathrm{sp}}
  &= \interp{C_1}_{\mathrm{sp}}\oplus\interp{C_2}_{\mathrm{sp}},\\
\interp{\input{./figures/lov-phase.tikz}}_{\mathrm{sp}}
  &= (e^{i\theta}),&
\interp{\scalebox{0.7}{\input{./figures/beamsplitter.tikz}}}_{\mathrm{sp}}
  &=
  \begin{pmatrix}
  \cos\theta&i\sin\theta\\
  i\sin\theta&\cos\theta
  \end{pmatrix}.
\end{aligned}
\]
The structural generators are interpreted by
\(\interp{\gempty}_{\mathrm{sp}}=I_0\),
\(\interp{\gI}_{\mathrm{sp}}=I_1\), and
\(\interp{\gswap}_{\mathrm{sp}}=\begin{psmallmatrix}0&1\\1&0\end{psmallmatrix}\).
More generally, identities and swaps use the corresponding identity and
permutation matrices.  This interpretation respects strict PROP coherence and
therefore descends to \(\LOPP\).
\end{definition}

The non-structural equations of the calculus are listed in
Figure~\ref{fig:axiom_lopp} of Appendix~\ref{app:mimicking-rules}; we write
\(\mathrm{LOPP}\vdash C_1=C_2\) for the congruence generated by those equations
together with strict PROP coherence.
\begin{theorem}[\cite{LOPP,LOPP-min}]\label{thm:lopp-complete}
The equational theory \(\mathrm{LOPP}\) is sound and complete for the
single-photon semantics: for \(C_1,C_2:m\to m\),
\(\interp{C_1}_{\mathrm{sp}}=\interp{C_2}_{\mathrm{sp}}\) if and only if
\(\mathrm{LOPP}\vdash C_1=C_2\).
\end{theorem}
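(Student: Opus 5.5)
This theorem is imported from \cite{LOPP,LOPP-min}, so the plan is to reconstruct the two directions in the standard way rather than invent a new argument.

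\emph{Soundness.} We check every non-structural equation of Figure~\ref{fig:axiom_lopp} by direct matrix computation in \(U(2)\) or \(U(3)\). The reason is that each rule is a small phase/beam-splitter identity on at most three modes. Congruence closure under \(\circ\) and \(\otimes\) then follows because \(\interp{-}_{\mathrm{sp}}\) sends composition to matrix product and tensor to direct sum, both of which respect equality. Strict PROP coherence is already respected by Definition~\ref{def:lopp-semantics-sp}.

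\emph{Completeness.} We fix a normal form for \(m\)-mode circuits modelled on the Reck/Clements decomposition. Every \(U\in U(m)\) factors uniquely, with canonical angle ranges, as a triangular product of adjacent beam splitters interleaved with phases, followed by a diagonal layer of phases. Call this \(N(U)\). Completeness then reduces to one claim: \(\mathrm{LOPP}\vdash C=N(\interp{C}_{\mathrm{sp}})\) for every circuit \(C\), since two circuits with equal semantics share a normal form. We prove the claim by induction on the number of generators in \(C\). Structural swaps are first rewritten as a beam splitter at angle \(\pi/2\) with phase corrections, which is derivable from the swap axiom. It then suffices to show that \(N(U)\) composed with one extra generator (a phase on one mode, or an adjacent beam splitter) rewrites to \(N(U')\) with \(U'\) the resulting product.
\begin{enumerate}
\item For an extra phase, we push it through the diagonal layer using the phase-sum rule.
\item For an extra beam splitter, we sweep it up through the triangle. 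Each step uses the local three-mode rule (the Euler/Yang--Baxter-type equation re-expressing \(B_{12}B_{23}B_{12}\) as \(B_{23}B_{12}B_{23}\) with transformed angles), together with two-mode merging of consecutive beam splitters and phases into canonical form.
\item This pushing terminates, because the beam splitter moves one column at a time toward the boundary, where it is absorbed by a two-mode normalisation.
\end{enumerate}

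\emph{Main obstacle.} The hardest part is the three-mode rule and its angle bookkeeping. We must show that the merge operations preserve the canonical angle conventions, including the degenerate cases where some \(\sin\theta\) or \(\cos\theta\) vanishes. In those cases the decomposition is not unique, so the axioms must let us choose the canonical representative. We would treat these degenerate configurations as separate small lemmas on two and three modes. Each lemma is proved by exhibiting explicit derivations, and we would appeal to the minimality analysis of \cite{LOPP-min} for the exact form of the rule set.
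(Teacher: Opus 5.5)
The paper gives no proof of this theorem: it imports the result verbatim from \cite{LOPP,LOPP-min}, so there is no internal argument to compare yours against. Your outline is the strategy of those cited works, and is an appropriate reconstruction at this level of detail: soundness by checking each non-structural rule as a small unitary identity, and completeness by rewriting into a unique Reck-style triangular normal form, pushing one generator at a time through successive staircases with the two-mode Euler rule and the three-mode beam-splitter rule, with the degenerate-angle cases as the delicate point.
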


To compare \(d^n\)-mode optical circuits with \(n\)-qudit circuits, we identify
optical modes with computational basis states using a reflected \(d\)-ary Gray
code.
\begin{definition}\label{def:gray-code}
The reflected Gray order \(G_n^d:\{0,\ldots,d^n-1\}\to[d]^n\) is defined by
\(G_0^d(0)=\epsilon\) and, writing \(k=q d^{n-1}+r\), by
\(G_n^d(k)=q\cdot G_{n-1}^d(r)\) if \(q\) is even and
\(G_n^d(k)=q\cdot G_{n-1}^d(d^{n-1}-1-r)\) if \(q\) is odd.
\end{definition}
\begin{lemma}\label{lem:gray-neighbours}
Consecutive Gray words differ in exactly one digit, and that digit changes by
\(\pm1\). In particular, two adjacent optical modes always correspond to an
adjacent two-level qudit transition, possibly under controls on the remaining
digits.
\end{lemma}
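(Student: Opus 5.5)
We argue by induction on \(n\) along the recursive definition of \(G_n^d\), proving the slightly stronger statement that for every \(n\ge1\) and every \(0\le k<d^n-1\), the words \(G_n^d(k)\) and \(G_n^d(k+1)\) differ in exactly one position, where the digits differ by \(\pm1\). We also record two boundary facts: \(G_n^d(0)=0^n\), and \(G_n^d(d^n-1)\) is the constant word \((d-1)^n\) when \(d\) is odd but \((d-1)\,0^{n-1}\) when \(d\) is even. The boundary facts are not needed for the neighbour claim itself, since reflection handles the block seams. For \(n=0\) there are no consecutive pairs, and for \(n=1\) we have \(G_1^d(k)=k\), so consecutive words differ by exactly \(1\).

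For the inductive step, write \(k=qd^{n-1}+r\) and distinguish two cases.

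\emph{Same block.} If \(k\) and \(k+1\) lie in the same block, i.e.\ \(r<d^{n-1}-1\), both words share the leading digit \(q\). If \(q\) is even, the suffixes are \(G_{n-1}^d(r)\) and \(G_{n-1}^d(r+1)\). If \(q\) is odd, they are \(G_{n-1}^d(d^{n-1}-1-r)\) and \(G_{n-1}^d(d^{n-1}-2-r)\), which form a consecutive pair in reversed order. In both cases the induction hypothesis gives a single digit changing by \(\pm1\).

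\emph{Block seam.} If \(r=d^{n-1}-1\), then \(k+1=(q+1)d^{n-1}\) with \(q+1<d\). If \(q\) is even, the word at \(k\) has suffix \(G_{n-1}^d(d^{n-1}-1)\), and the word at \(k+1\) (with odd leading digit \(q+1\), remainder \(0\)) has suffix \(G_{n-1}^d(d^{n-1}-1-0)\); the suffixes coincide. If \(q\) is odd, the suffix at \(k\) is \(G_{n-1}^d(0)\) and the suffix at \(k+1\) (even leading digit, remainder \(0\)) is \(G_{n-1}^d(0)\); again they coincide. So across a seam only the leading digit changes, from \(q\) to \(q+1\).

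The main point to check carefully is the index arithmetic at the seam. It is routine once the parity cases are written out as above: the reflection exactly cancels the restart of the suffix.

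For the ``in particular'' clause, take adjacent modes \(k,k+1\). They correspond to basis words \(x\) and \(y\) agreeing on all positions except one position \(p\), where \(\{x_p,y_p\}=\{j,j+1\}\) for some \(j<d-1\). Hence the transition connects \(\ket{x}\) and \(\ket{y}\), which is an adjacent two-level transition on wire \(p\) between levels \(j\) and \(j+1\), conditioned on the remaining wires carrying the common values \(x_{p'}\) for \(p'\neq p\). In the syntax of \(\CQC\), this is the adjacent two-level gate placed under the value-controls \(\ctrl_{x_{p'}}\) on the other wires.
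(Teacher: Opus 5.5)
Your proof is correct and follows the paper's argument: induction on $n$, with the same-block case settled by the induction hypothesis (read in reverse order on odd blocks) and the block-seam case settled by noting that reflection makes the terminal suffix of block $q$ equal to the initial suffix of block $q+1$, so only the leading digit changes, from $q$ to $q+1$. Your explicit parity computation at the seam spells out what the paper states in one line. The endpoint facts you record are correct, but as you say they are not needed.
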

\begin{proof}
Induct on \(n\), the case \(n=0\) being empty. Inside a leading block
\(q d^{n-1},\ldots,(q+1)d^{n-1}-1\), the suffix is listed either in
\(G_{n-1}^d\)-order or in reverse order, so the induction hypothesis gives a
single adjacent change in the suffix. At the boundary between the \(q\)- and
\((q+1)\)-blocks, reflection makes the terminal suffix of one block equal to
the initial suffix of the next; hence only the leading digit changes, from
\(q\) to \(q+1\). Thus every consecutive pair differs in one digit, by
\(\pm1\).
\end{proof}
For \(d=3,n=2\), the order is
\(00,01,02,12,11,10,20,21,22\).  The edges \(02\to12\) and \(12\to11\)
show the two cases: a fixed suffix or prefix guards the move, and odd blocks
reverse the orientation, while the changed digit stays adjacent.
Let \(\mathfrak G_{n,d}:\mathbb{C}^{d^n}\to(\mathbb{C}^{d})^{\otimes n}\) be
the permutation unitary determined by
\(\mathfrak G_{n,d}(\ket t)=\ket{G_n^d(t)}\). The native single-photon
semantics \(\interp{-}_{\mathrm{sp}}\) is the one used by
Theorem~\ref{thm:lopp-complete}; for comparison with qudit circuits we
conjugate it by this fixed Gray-code basis change.
\begin{definition}\label{def:gray-permutation}\label{def:lopp-semantics}
For \(C:d^n\to d^n\) in \(\LOPP\), define its Gray-ordered qudit semantics by
\(\interp{C}_{\mathrm{LOPP}}:=\mathfrak G_{n,d}\circ
\interp{C}_{\mathrm{sp}}\circ\mathfrak G_{n,d}^{-1}\).
\end{definition}

\subsection{Encoding and decoding}\label{subsec:encoding}\label{subsec:decoding}
Encoding and decoding are defined before quotienting and carry context
parameters: encoding records where a qudit circuit sits inside a larger
register, and decoding records where a raw optical subcircuit sits in the mode
list.  These raw choices must survive quotienting;
Lemmas~\ref{lem:encoding-respects-equiv}
and~\ref{lem:decoding-respects-equiv-maintext} give the required invariance.

Three optical gadget families are used by the encoding. The block permutation
\(\sigma^d_{a,n,b}\in\RawLOPP(d^{a+n+b},d^{a+n+b})\) reorders Gray-coded modes
so that, after decoding, it becomes the qudit wire permutation
\(\id_a\otimes\sigma_{n,b}\).  The lift
\(\Lift_d^p(C)\) copies a \(d^n\)-mode circuit \(C\) through \(p\) additional
Gray digits, mirroring the copy on odd reflected blocks.  The selected lift
\(\Lift_{d,j}^p(C)\) copies \(C\) only on the branch where the added
distinguished digit has value \(j\), with identities on other branches. Their raw
constructions are in Appendices~\ref{app:encoding-details}
and~\ref{app:swap-encoding}; their decoding interface is
Lemma~\ref{lem:gadget-decoding-interface}.

\begin{definition}\label{def:encoding}
For a raw qudit circuit \(C:n\to n\), the contextual encoding
\(\mathrm E^a_b(C)\in\RawLOPP(d^{a+n+b},d^{a+n+b})\) represents \(C\) on the
middle \(n\) qudits of an \((a+n+b)\)-qudit register. The recursive clauses are
\[
\begin{array}{rcl}
\mathrm E^a_b(C_2\circ C_1)&:=&
\mathrm E^a_b(C_2)\circ\mathrm E^a_b(C_1),\\
\mathrm E^a_b(C_1\otimes C_2)&:=&
\mathrm E^{a+n_1}_b(C_2)\circ\mathrm E^a_{b+n_2}(C_1)
\quad(C_i:n_i\to n_i),\\
\mathrm E^a_b(\id_n)&:=&\id_{d^{a+n+b}},\\
\mathrm E^a_b(\input{./figures/phase.tikz})&:=&
\bigl(\input{./figures/lov-phase.tikz}\bigr)^{\otimes d^{a+b}},\\
\mathrm E^a_b(\input{./figures/Hadiip.tikz})&:=&
\sigma^d_{a,b,1}\circ\Lift_d^{a+b}\!\bigl(B_d^{(i,i+1)}\bigr)\circ
\sigma^d_{a,1,b},\\
\mathrm E^a_b(\gswap)&:=&
\sigma^d_{a,b,2}\circ\sigma^d_{a+b,1,1}\circ\sigma^d_{a,2,b},\\
\mathrm E^a_b(\ctrl_j(C))&:=&
\sigma^d_{a,b,m+1}\circ \Lift_{d,j}^{a+b}(\mathrm E^0_0(C))\circ
\sigma^d_{a,m+1,b}\quad(C:m\to m).
\end{array}
\]
Here \(B_d^{(i,i+1)}\) denotes the \(d\)-mode optical network implementing the
adjacent Hadamard, fixed in Definition~\ref{def:hadamard-bs-gadget} of
Appendix~\ref{app:encoding-details}.
We write \(\mathrm E(C):=\mathrm E^0_0(C)\).
\end{definition}
For example, if \(g,h:1\to1\), then
\(\mathrm E^0_0(g\otimes h)=\mathrm E^1_0(h)\circ \mathrm E^0_1(g)\) on the
same \(d^2\) modes: tensor is serialised over context branches.

\begin{lemma}\label{lem:encoding-respects-equiv}
For all \(a,b\ge0\), if raw qudit circuits \(C,C':n\to n\) satisfy
\(C\equiv C'\) in the structural congruence defining \(\CQC\), then
\(\mathrm{LOPP}\vdash \mathrm E^a_b(C)=\mathrm E^a_b(C')\).
\end{lemma}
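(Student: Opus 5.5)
We argue by induction on the derivation of \(C\equiv C'\). The structural congruence is the least congruence closed under \(\circ\), \(\otimes\) and each \(\ctrl_k(-)\) that contains the strict PROP coherence laws and the control-functor equations. It therefore suffices to prove two things, quantifying over all context parameters \(a,b\) simultaneously.

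First, \(\mathrm{LOPP}\vdash\mathrm E^a_b(-)\) must respect each generating equation. Second, the relation ``\(\mathrm{LOPP}\vdash\mathrm E^a_b(C)=\mathrm E^a_b(C')\) for all \(a,b\)'' must be closed under the congruence operations.

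Reflexivity, symmetry and transitivity are immediate. Closure under \(\circ\) follows directly from the clause \(\mathrm E^a_b(C_2\circ C_1)=\mathrm E^a_b(C_2)\circ\mathrm E^a_b(C_1)\). Closure under \(\otimes\) follows from the serialised clause, because the induction hypothesis is available at the shifted parameters \((a+n_1,b)\) and \((a,b+n_2)\). Closure under \(\ctrl_j\) uses the induction hypothesis at \((0,0)\), followed by the facts that \(\Lift_{d,j}^{a+b}\) and the block permutations are built by raw substitution, so they preserve LOPP-equality (\(\Lift\) is a congruence on raw optical terms).

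For the generating equations we would proceed as follows. Associativity and unit laws for \(\circ\) hold on the nose, as do the identity laws \(\id_n\mapsto\id_{d^{a+n+b}}\). Associativity of \(\otimes\), the unit \(\id_0\), and bifunctoriality (interchange) each reduce to reordering serialised factors of the form \(\mathrm E^{a'}_{b'}(-)\). Interchange, \((g\circ f)\otimes(k\circ h)=(g\otimes k)\circ(f\otimes h)\), needs the encodings of circuits on disjoint qudit blocks to commute in LOPP. The same commutation is needed for the symmetry laws: naturality of \(\sigma\), \(\sigma\circ\sigma=\id\), and the hexagon decomposition \(\sigma_{n+m,p}\) into elementary swaps.

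For the control-functor equations we would use the following interface facts.
\begin{itemize}
\item \(\ctrl(\id)\): \(\Lift_{d,j}(\id)=\id\).
\item \(\ctrl(g\circ f)\): \(\Lift_{d,j}\) is functorial and the inner block permutations cancel, since \(\sigma^d_{a,m+1,b}\circ\sigma^d_{a,b,m+1}=\id\).
\item \(\ctrl(f\otimes\id_m)\): the lift is compatible with padding by extra Gray digits.
\item \(\ctrl(\pi^{-1}f\pi)\): block permutations slide through selected lifts.
\item The same-value swap law: the two nested selected lifts \(\Lift_{d,j}\Lift_{d,j}\) commute with the permutation exchanging the two distinguished digits.
\end{itemize}

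The key tool for all of these is semantic. Each required identity is between raw optical terms whose single-photon denotations coincide, so Theorem~\ref{thm:lopp-complete} gives LOPP-provability. We therefore reduce every gadget fact to an equality of denotations, computed under the Gray-ordered semantics of Definition~\ref{def:lopp-semantics}.

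To carry this out, we first prove an auxiliary claim: \(\interp{\mathrm E^a_b(C)}_{\mathrm{LOPP}}=I_{d^a}\otimes\interp{C}\otimes I_{d^b}\) for every raw \(C\). This follows by induction on \(C\), using the denotations of \(\sigma^d\), \(\Lift\) and \(\Lift_{d,j}\) from the appendix. With this claim in hand, \(C\equiv C'\) gives \(\interp C=\interp{C'}\) by Proposition~\ref{PROP:interp-well-defined}, hence equal optical denotations, and then LOPP-equality by completeness. In fact this shortcut makes the whole congruence induction above unnecessary.

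The main obstacle is the auxiliary denotation claim, specifically the mirrored copying on odd reflected Gray blocks. We must show that the construction of \(\Lift^p_d\) together with the Gray reflection yields exactly \(I\otimes\interp{C}\otimes I\) after conjugating by \(\mathfrak G\), and that \(\sigma^d_{a,n,b}\) denotes the Gray-conjugated wire permutation. We would first check these two facts on the appendix definitions, by induction on \(p\) and \(n\), using the block recursion of Definition~\ref{def:gray-code}.
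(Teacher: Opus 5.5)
Your proposal is correct and, once you take the semantic shortcut, it is essentially the paper's proof: the appendix version (Lemma~\ref{lem:encoding-respects-equiv-app}) proves the same raw Gray-ordered invariant \(\interp{\mathrm E^a_b(C)}_{\mathrm{LOPP}}=I_{d^a}\otimes\interp{C}\otimes I_{d^b}\) (Lemma~\ref{lem:raw-encoding-semantics}), checks that the generators of \(\equiv\) are semantically sound, and then invokes the external LOPP completeness theorem. As you note, the preliminary clause-by-clause congruence induction is unnecessary, and stating the invariant for raw circuits, as you do, is what avoids circularity with Lemma~\ref{lem:encoding-correctness}.
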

\begin{proof}
The structural congruence is generated by strict PROP coherence and by the
value-control functor laws.  The recursive clauses for \(\mathrm E^a_b\) send
the unit, associativity, interchange, and symmetry equations of strict PROP
coherence to the corresponding raw \(\LOPP\) coherence equations, with the
qudit symmetry implemented by the block permutations \(\sigma^d_{a,n,b}\).
For the control-functor equations, selected lifts implement the same projector
algebra on the distinguished Gray digit.  The clause-by-clause calculation is
Lemma~\ref{lem:encoding-respects-equiv-app} in
Appendix~\ref{app:encoding-decoding}.
\end{proof}

\begin{lemma}\label{lem:encoding-correctness}
The contextual encoding preserves semantics: for \(C:n\to n\) in \(\CQC\),
\(\interp{\mathrm E^a_b(C)}_{\mathrm{LOPP}}=
I_{d^a}\otimes \interp{C}\otimes I_{d^b}\).  In particular,
\(\interp{\mathrm E(C)}_{\mathrm{LOPP}}=\interp{C}\).
\end{lemma}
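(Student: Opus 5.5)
We argue by structural induction on a raw representative of \(C\), proving the contextual statement \(\interp{\mathrm E^a_b(C)}_{\mathrm{LOPP}}=I_{d^a}\otimes\interp{C}\otimes I_{d^b}\) for all \(a,b\ge0\) simultaneously. Lemma~\ref{lem:encoding-respects-equiv} together with soundness of \(\mathrm{LOPP}\) (Theorem~\ref{thm:lopp-complete}) makes the left-hand side independent of the raw representative. Proposition~\ref{PROP:interp-well-defined} does the same for the right-hand side. So the statement is well posed on \(\CQC\), and the case \(a=b=0\) gives the particular claim.

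The structural clauses are formal. For composition, conjugation by \(\mathfrak G_{a+n+b,d}\) is multiplicative, so \(\interp{\mathrm E^a_b(C_2)\circ\mathrm E^a_b(C_1)}_{\mathrm{LOPP}}\) is the product of the two induction hypotheses. For tensor, the two hypotheses give \((I_{d^{a+n_1}}\otimes\interp{C_2}\otimes I_{d^b})(I_{d^a}\otimes\interp{C_1}\otimes I_{d^{b+n_2}})\), which equals \(I_{d^a}\otimes\interp{C_1}\otimes\interp{C_2}\otimes I_{d^b}\). The identity clause is immediate.

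For the scalar phase, the encoding is the phase applied on every one of the \(d^{a+b}\) modes. Its single-photon semantics is \(e^{i\theta}I\), which is invariant under the Gray basis change.

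The remaining clauses rest on a gadget semantics lemma, to be established first from the raw constructions in Appendices~\ref{app:encoding-details} and~\ref{app:swap-encoding}. It has three parts:
\begin{enumerate}
\item \(\interp{\sigma^d_{a,n,b}}_{\mathrm{LOPP}}=I_{d^a}\otimes\interp{\sigma_{n,b}}\).
\item \(\interp{\Lift_d^p(D)}_{\mathrm{LOPP}}=I_{d^p}\otimes\interp{D}_{\mathrm{LOPP}}\) for \(D\) on \(d^m\) modes, with the \(p\) new digits in front.
\item \(\interp{\Lift_{d,j}^p(D)}_{\mathrm{LOPP}}=I_{d^p}\otimes\bigl(\ket j\bra j\otimes\interp{D}_{\mathrm{LOPP}}+\sum_{\ell\neq j}\ket\ell\bra\ell\otimes I\bigr)\).
\end{enumerate}
Each part is proved by induction on the number of added digits, unfolding Definition~\ref{def:gray-code}.

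The point needing care is the reflection. On an odd block \(q\) the suffix order is reversed, so a naive copy of \(D\) would act as \(P\interp{D}P\), where \(P\) is the reversal. The lift compensates by mirroring the copy on odd blocks, so after undoing the reflection each block carries the same operator \(\interp{D}_{\mathrm{LOPP}}\). This reflection bookkeeping is the main obstacle. I would handle it with an auxiliary identity: the mirrored copy of a circuit, conjugated by the reversal permutation, has the original semantics. This is checked on generators, where a mirrored beam splitter on modes \((t,t+1)\) becomes one on reversed adjacent modes with the same matrix up to ordering, which is symmetric.

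Given the gadget lemma, the remaining cases follow. For the swap and the adjacent Hadamard, the outer block permutations move the active wires to the end and back. The lift pads with \(I_{d^{a+b}}\). The base check \(\interp{B_d^{(i,i+1)}}_{\mathrm{sp}}=\) Hadamard on \(\mathrm{span}\{\ket i,\ket{i+1}\}\) holds by the construction of Definition~\ref{def:hadamard-bs-gadget}; for \(n=1\) the Gray order is the identity. For \(\ctrl_j(C)\), the induction hypothesis gives \(\interp{\mathrm E^0_0(C)}_{\mathrm{LOPP}}=\interp C\). The selected-lift formula then yields exactly the block formula defining \(\interp{\ctrl_j(C)}\), placed in the middle by the conjugating permutations \(\sigma^d_{a,m+1,b}\) and \(\sigma^d_{a,b,m+1}\), which are inverse to one another.
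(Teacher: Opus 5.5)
Your proposal is correct and follows the paper's own route. The paper also argues by structural induction on a raw representative, uniformly in the context parameters \(a,b\). The identity, composition, tensor and phase clauses are formal, and the Hadamard, swap and control clauses reduce to the Gray-ordered semantics of the block permutations, ordinary lifts and selected lifts; this is exactly Lemma~\ref{lem:raw-encoding-semantics}. Your three-part gadget lemma just spells out what the paper delegates to Appendices~\ref{app:encoding-details} and~\ref{app:swap-encoding}.

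Two small simplifications. The mirror of Definition~\ref{def:mode-reversal} is literally conjugation by the reversal circuit \(\rho_r\), so the reflection identity you flag as the main obstacle holds by compositionality, with no generator check needed. The appeal to Lemma~\ref{lem:encoding-respects-equiv} for well-posedness is redundant (that lemma is itself derived from the raw form of this statement). Your raw induction already yields \(I_{d^a}\otimes\interp{C}\otimes I_{d^b}\) for every representative, and the right-hand side is class-invariant by Proposition~\ref{PROP:interp-well-defined}.
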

\begin{proof}
By Lemma~\ref{lem:encoding-respects-equiv}, choose any raw representative and
argue by induction on it.  The identity, composition, and phase clauses are
immediate from the definition of \(\mathrm E^a_b\).  The tensor clause
serialises \(C_1\otimes C_2\) as two actions on disjoint context branches, so
the Gray-ordered matrix is
\(I_{d^a}\otimes\interp{C_1}\otimes\interp{C_2}\otimes I_{d^b}\).
Ordinary lifts duplicate a local gate on every Gray branch; selected lifts keep
only the branch with the required control value; and block permutations move
the relevant Gray blocks to the qudit wire on which the generator acts.
Appendix~\ref{app:encoding-details} gives the lift constructions, and
Appendix~\ref{app:swap-encoding} gives the block-permutation construction.
\end{proof}

Decoding goes in the opposite direction, reading a raw \(\LOPP\) subcircuit in
a consecutive mode interval \(t,\ldots,t+\ell-1\) inside a \(d^n\)-mode system.
When \(G_n^d(t)=u\,v\,w\) and
\(G_n^d(t+1)=u\,(v+\varepsilon)\,w\), with
\(\varepsilon\in\{-1,+1\}\), write
\(\Lambda^u_w(g):=\sigma_{|u|,|w|,1}\circ\ctrl_{uw}(g)\circ
\sigma_{|u|,1,|w|}\)
for the one-qudit gate \(g\) placed on the changed digit and controlled by the
surrounding basis word.
\begin{definition}\label{def:decoding}
For \(0\le t\) and \(t+\ell\le d^n\), the contextual decoding
\(\mathrm D^t_n:\RawLOPP(\ell,\ell)\to\RawCQC_d(n,n)\) reads a raw optical
subcircuit as occupying global modes \(t,\ldots,t+\ell-1\). Its recursive
clauses are
\[
\begin{array}{rcl}
\mathrm D^t_n(C_2\circ C_1)&:=&
\mathrm D^t_n(C_2)\circ\mathrm D^t_n(C_1),\\
\mathrm D^t_n(C_1\otimes C_2)&:=&
\mathrm D^{t+\ell_1}_n(C_2)\circ \mathrm D^t_n(C_1)
\quad(C_1:\ell_1\to\ell_1),\\
\mathrm D^t_n(\gempty)&:=&\id_n,\qquad
\mathrm D^t_n(\gI):=\id_n,\\
\mathrm D^t_n(\input{./figures/lov-phase.tikz})&:=&
\ctrl_{G_n^d(t)}(\input{./figures/phase.tikz}),\\
\mathrm D^t_n(\gswap)&:=&
\Lambda^u_w(\input{./figures/Xvveps.tikz}),\\
\mathrm D^t_n(\scalebox{0.7}{\input{./figures/beamsplitter.tikz}})&:=&
\Lambda^u_w(\input{./figures/Rxvveps.tikz}).
\end{array}
\]
In the last two clauses, \(u,v,w,\varepsilon\) are determined by
Lemma~\ref{lem:gray-neighbours} for the consecutive modes \(t,t+1\).
The figures \(X^{(v,v+\varepsilon)}\) and
\(R_x^{(v,v+\varepsilon)}(\theta)\) use \(\varepsilon=+1\) for the adjacent
transition \(v\leftrightarrow v+1\) and \(\varepsilon=-1\) for
\(v\leftrightarrow v-1\); the displayed order records the direction of the
Gray edge.
We write \(\mathrm D(C):=\mathrm D^0_n(C)\) for a circuit on all \(d^n\) modes.
\end{definition}

For \(d=3,n=2\), mode \(4\) has Gray word \(11\), so its phase decodes to a
phase controlled on both qutrits being \(1\); beam splitters on modes \(2,3\)
and \(3,4\) decode to adjacent transitions along \(02\leftrightarrow12\) and
\(12\leftrightarrow11\), with the unchanged digit as control.

The lemmas below keep track of the interval \(t,\ldots,t+\ell-1\) and its Gray
orientation.  Tensor is read on consecutive intervals as
\(\mathrm D^{t+\ell_1}_n(C_2)\circ\mathrm D^t_n(C_1)\); odd Gray blocks reverse
the local order.  The remaining checks are local phases, swaps, and beam
splitters.

\begin{lemma}\label{lem:decoding-locality}
For every \(n,t\), the decoding of a one-mode phase is supported on the single
basis word \(G_n^d(t)\), and the decoding of a swap or beam splitter on
consecutive modes \(t,t+1\) is supported on the two Gray-neighbouring basis
words \(G_n^d(t)\) and \(G_n^d(t+1)\).
\end{lemma}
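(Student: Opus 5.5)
The plan is to compute supports semantically, using the support function of Proposition~\ref{prop:qudit-support-sensitive}, \(\mathrm{supp}(U)=\{x\in[d]^n\mid U\ket x\neq\ket x\}\). Together with the support rules of Definition~\ref{def:support-sensitive}, this reduces the statement to two short calculations on the decoded gates. Since \(\mathrm D^t_n\) of a generator is an explicit circuit, the argument treats only the three generator clauses. The structural clauses (empty diagram and identity wire) decode to \(\id_n\), which has empty support.

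\textbf{Phase.} The decoding is \(\ctrl_{G_n^d(t)}(\theta)\), a scalar phase under the word-control \(\ctrl_u\) with \(u=G_n^d(t)\). A scalar phase on zero wires has support \([d]^0=\{\epsilon\}\). Iterating \(\mathrm{supp}(\ctrl_k(f))=\{k\}\times\mathrm{supp}(f)\) once per letter of \(u\) gives support \(\{u\}\). Equivalently, by the semantic clause for \(\ctrl_k\), the matrix is \(e^{i\theta}\) on \(\ket u\) and the identity on every other basis word. When \(\theta\in2\pi\mathbb Z\) the support is empty, which is still contained in \(\{u\}\); I read "supported on" as inclusion of supports, as in an envelope.

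\textbf{Swap and beam splitter.} Lemma~\ref{lem:gray-neighbours} gives \(G_n^d(t)=u\,v\,w\) and \(G_n^d(t+1)=u\,(v+\varepsilon)\,w\). The decoding is \(\Lambda^u_w(g)\) with \(g=X^{(v,v+\varepsilon)}\) or \(g=R_x^{(v,v+\varepsilon)}(\theta)\). Semantically, \(g\) acts as the identity outside \(\mathrm{span}\{\ket v,\ket{v+\varepsilon}\}\), so \(\mathrm{supp}(g)\subseteq\{v,v+\varepsilon\}\). This holds for the derived circuits of Appendix~\ref{app:derived-notation}: their denotations are stated to be the two-level transposition and rotation, and I take that correctness as given. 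Applying \(\ctrl_{uw}\) yields support contained in \(\{uw\}\times\{v,v+\varepsilon\}\) on the wire order (controls, target). Conjugating by \(\sigma_{|u|,1,|w|}\) permutes basis words and moves the target digit back between \(u\) and \(w\). The result is support contained in \(\{u\,v\,w,\ u\,(v+\varepsilon)\,w\}=\{G_n^d(t),G_n^d(t+1)\}\).

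\textbf{Main obstacle.} The delicate step is the wire bookkeeping in the last conjugation. One must check that \(\sigma_{|u|,1,|w|}\) sends the word \(u\,x\,w\) to \(x\,u\,w\), that is, that it brings the changed digit to the control position expected by \(\ctrl_{uw}\), and that \(\sigma_{|u|,|w|,1}\) is its inverse. I would verify this directly on basis kets from the definition of the block symmetries. I would also check the reversed-orientation case \(\varepsilon=-1\) arising in odd Gray blocks, where the set \(\{v,v-1\}\) is still the pair of changed levels. With those checks done, the support claim follows from the matrix semantics alone.
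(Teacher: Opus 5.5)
Your proposal is correct and follows the paper's own argument in more detail. You read off the phase clause, and for swaps and beam splitters you combine Lemma~\ref{lem:gray-neighbours} with the placement \(\Lambda^u_w(g)\) of a two-level gate; your reading of ``supported on'' as inclusion is the right one, since the support is empty when \(\theta\in2\pi\mathbb Z\), which makes the paper's ``exactly'' slightly too strong. One slip in the bookkeeping step you flagged: \(\sigma_{|u|,1,|w|}\) sends \(u\,x\,w\) to \(u\,w\,x\), putting the changed digit on the target wire after the control word \(uw\) as in your own ``(controls, target)'' order, not to \(x\,u\,w\) on a control position; this does not affect the conclusion.
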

\begin{proof}
The one-mode phase clause of Definition~\ref{def:decoding} is
\(\ctrl_{G_n^d(t)}(\input{./figures/phase.tikz})\), so its support is exactly the branch word
\(G_n^d(t)\). For swaps and beam splitters, Lemma~\ref{lem:gray-neighbours}
writes the adjacent Gray words as \(u\,v\,w\) and
\(u\,(v+\varepsilon)\,w\). The decoding clause is then
\(\Lambda^u_w(g)=\sigma_{|u|,|w|,1}\circ\ctrl_{uw}(g)\circ
\sigma_{|u|,1,|w|}\), where \(g\) is the adjacent \(X\)- or \(R_x\)-gate on
the changed digit. Hence the only non-identity action is on those two basis
words.
\end{proof}
\begin{lemma}\label{lem:decoding-respects-equiv-maintext}
Let \(P[-]\) be a raw \(\LOPP\) context whose hole occupies a consecutive
mode interval \(t,\ldots,t+\ell-1\) inside a \(d^n\)-mode circuit. If
\(C,C':\ell\to\ell\) are equal by strict PROP coherence in \(\RawLOPP\), then
\(\QCeq\vdash \mathrm D^0_n(P[C])=\mathrm D^0_n(P[C'])\).
In particular, \(\QCeq\vdash\mathrm D(C)=\mathrm D(C')\) for coherent
\(d^n\)-mode circuits \(C,C'\).
\end{lemma}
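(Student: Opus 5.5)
We reduce the statement to a check on the generators of strict PROP coherence in \(\RawLOPP\), and treat each generator as a local equation between decodings. Since decoding is defined clause-by-clause, \(\mathrm D^0_n(P[C])\) unfolds to a sequential composite in which the hole contributes exactly \(\mathrm D^t_n(C)\), flanked by decodings of the parts of \(P\). The flanking decodings depend only on the raw context and on mode positions, and these positions are fixed once the interval \(t,\ldots,t+\ell-1\) is fixed. So it suffices to show \(\QCeq\vdash\mathrm D^t_n(C)=\mathrm D^t_n(C')\) whenever \(C\) and \(C'\) differ by one coherence step; both sides live on the same interval. Congruence of \(\QCeq\) under \(\circ\) then closes the argument, using induction on the length of the coherence derivation and on the structure of the context.

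We go through the coherence generators one at a time. The unit laws \(\id\circ C=C\), \(C\otimes\gempty=C\) and the associativity of \(\circ\) and \(\otimes\) are immediate. Decoding sends identities to \(\id_n\). It sends \(\circ\) to \(\circ\). It sends \(\otimes\) to an ordered composite with shifted offsets, and because offsets add, both bracketings give the same composite \(\mathrm D^{t+\ell_1+\ell_2}_n(C_3)\circ\mathrm D^{t+\ell_1}_n(C_2)\circ\mathrm D^t_n(C_1)\). Identities of larger arity are tensors of \(\gI\), so they decode to \(\id_n\).

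The interchange law \((C_2\circ C_1)\otimes(C_4\circ C_3)=(C_2\otimes C_4)\circ(C_1\otimes C_3)\) is the substantive case. Its decodings are \(\mathrm D^{t'}(C_4)\mathrm D^{t'}(C_3)\mathrm D^{t}(C_2)\mathrm D^{t}(C_1)\) versus \(\mathrm D^{t'}(C_4)\mathrm D^{t}(C_2)\mathrm D^{t'}(C_3)\mathrm D^{t}(C_1)\), with \(t'=t+\ell_1\). We must therefore commute the decodings of raw circuits supported on disjoint mode intervals. By induction on the raw term, each such decoding is a product of the elementary decodings of Lemma~\ref{lem:decoding-locality}. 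So it suffices to commute two elementary decoded gates whose Gray words are disjoint sets: a phase \(\ctrl_{x}(\cdot)\) on word \(x\), or \(\Lambda^u_w(g)\) on a Gray edge.

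For two such gates, either some control digit differs between them, or they act on the same digit at disjoint levels. In the first case, we first permute wires by naturality so that the differing digit becomes the outer control. Compatibility (Theorem~\ref{thm:compatibility}) lets us reorder the control word freely to achieve this, and commutativity (Theorem~\ref{thm:commuting}) then commutes \(\ctrl_a(f)\) with \(\ctrl_b(g)\) for \(a\neq b\). In the second case, both gates are controlled by the same word on the remaining digits, which the control functor laws let us factor out. What remains is a commutation of one-qudit gates with disjoint level supports: an \(X\) or \(R_x\) on levels \(\{v,v\pm1\}\) against another on a disjoint adjacent pair, or against a level phase. We expand these via the derived notation into Hadamards and phases and commute them with the support rules \eqref{HHcomm} and \eqref{Hphasecomm}, pushed under controls by the control functor.

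The swap coherence laws need a separate check: naturality of \(\gswap\) and \(\sigma\circ\sigma=\id\). Here decoding \(\gswap\) gives \(\Lambda^u_w(X^{(v,v+\varepsilon)})\), and \(X\circ X=\id\) is derived from \eqref{XH} and \eqref{Hiip-Hiip}. Naturality moves a one-mode phase at mode \(t\) to mode \(t+1\), i.e.\ it turns \(\ctrl_{uvw}(P)\) into \(\ctrl_{u(v+\varepsilon)w}(P)\) conjugated by the controlled \(X\). Under the control \(\ctrl_{uw}\), this reduces to the one-qudit identity that conjugating a level-\(v\) phase by \(X^{(v,v+\varepsilon)}\) gives the level-\((v+\varepsilon)\) phase. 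This identity follows from the definition of \(X\) together with \eqref{XH}.

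The main obstacle is the interchange case. Decoded gates carry full control words, so the commutation reduction in the previous paragraphs must produce derivations that are uniform in the Gray orientation, including the reversed odd blocks. This is what makes the reduction more than a direct appeal to Lemma~\ref{lem:decoding-locality}.
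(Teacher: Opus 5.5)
Your architecture matches the paper's: isolate the hole factor \(\mathrm D^t_n(C)\), check each coherence generator, handle interchange by commuting decoded generators on disjoint intervals, and handle the swap laws with controlled transpositions. However, the key commutation step rests on a case split that is not exhaustive for \(d\ge3\). You claim that two elementary decoded gates with disjoint Gray supports either share a control digit with different values, or act on the same digit at disjoint levels. They can instead act on different digits and agree on every digit they both control. Their supports are then disjoint only because the value one gate requires on the other's target digit lies outside that target's active pair.

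For \(d=3\), \(n=2\) (Gray order \(00,01,02,12,11,10,20,21,22\)), take the beam splitters on modes \(1,2\) and on modes \(5,6\). They decode to \(R_x^{(1,2)}\) on the second qudit controlled by the first being \(0\), and to \(R_x^{(1,2)}\) on the first qudit controlled by the second being \(0\). Their supports are \(\{01,02\}\) and \(\{10,20\}\), and both occur in an interchange instance whose blocks are modes \(0\)--\(3\) and \(4\)--\(8\). The two gates have no common control wire, so Theorem~\ref{thm:commuting} does not apply. Nothing reduces to a one-qudit commutation either, so the one-wire support rules \eqref{HHcomm} and \eqref{Hphasecomm} do not reach this case. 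What you need is a cross-wire support commutation: an operation whose active levels on a wire avoid \(c\) commutes with one controlled by \(c\) on that wire. The paper's toolkit supplies this through the axiom \eqref{Hcnot} and derived rules such as \eqref{h-ccpt}, \eqref{Ch-Hc}, \eqref{Cx-Hc}, \eqref{Cx-Xc}. For \(d=2\) this case cannot arise, which is why a qubit-style split misses it.

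Second, raw naturality \(\sigma_k\circ(C\otimes\id_1)\equiv(\id_1\otimes C)\circ\sigma_k\) holds for every \(C\). After inducting on \(C\) you must therefore check a beam splitter and a swap, not only a one-mode phase. Shifting a beam splitter moves it from the Gray edge \((t,t+1)\) to \((t+1,t+2)\). At a corner such as \(02,12,11\) these edges change different digits. The decoded equation then says that a controlled \(R_x\) on one wire, conjugated by controlled transpositions acting on both wires, becomes a controlled \(R_x\) on the other wire. That is a genuinely two-qudit identity, not the one-qudit level-phase conjugation you reduce to. It needs the transposition calculus \eqref{Xij-Xij}, \eqref{Xij-Xjkbis}, \eqref{Xij-Xjk} and its \(R_x\) analogue (cf.\ \eqref{RxSym}, \eqref{RxXX}).

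A smaller point: some of your citations are off. Adjacent \(X\circ X=\id\) uses \eqref{Hiip-Hiip} with \eqref{sum} and \eqref{2pi}, not \eqref{XH}. Moving a level phase through an adjacent \(X\) is the derived rule \eqref{XCtrl}, whose proof needs \eqref{eulerH}; the definition of \(X\) with \eqref{XH} is not enough.
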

\begin{proof}
The proof is an induction on the strict-PROP coherence step inside the fixed
raw context.  Units and associativity follow directly from the recursive
clauses for \(\mathrm D\).  For interchange, the two optical factors occupy
disjoint consecutive intervals; by Lemma~\ref{lem:decoding-locality}, their
decodings act on disjoint Gray supports, so Theorem~\ref{thm:commuting}
commutes the two decoded factors.  The symmetry equations reduce to the
decoding of adjacent optical swaps, which gives the controlled adjacent
transposition of the two corresponding Gray labels.  Lemma~\ref{decodingtoporules}
in Appendix~\ref{app:mimicking-rules} gives the induction over raw contexts.
\end{proof}

\begin{lemma}\label{lem:gadget-decoding-interface}
The optical gadgets used in the encoding decode as follows.  For every
\(n,p\ge0\), every raw \(C:d^n\to d^n\) in \(\LOPP\), and every \(j\in[d]\),
\(\QCeq\vdash\mathrm D^0_{n+p}(\Lift_d^p(C))=
\id_p\otimes\mathrm D^0_n(C)\) and
\(\QCeq\vdash\mathrm D^0_{n+1+p}(\Lift_{d,j}^p(C))=
\id_p\otimes\ctrl_j(\mathrm D^0_n(C))\).  For all \(a,n,b\ge0\),
\(\QCeq\vdash\mathrm D^0_{a+n+b}(\sigma^d_{a,n,b})=\id_a\otimes\sigma_{n,b}\).
\end{lemma}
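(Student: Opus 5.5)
We prove the three decoding identities by structural induction on the raw optical gadgets. Their raw constructions are in Appendices~\ref{app:encoding-details} and~\ref{app:swap-encoding}. Throughout we use only the recursive clauses of Definition~\ref{def:decoding}, the locality Lemma~\ref{lem:decoding-locality}, invariance under coherence (Lemma~\ref{lem:decoding-respects-equiv-maintext}), and the control algebra of Theorems~\ref{thm:compatibility}--\ref{thm:exhaustive}. The key bookkeeping fact is a \emph{Gray prefix lemma}, proved by unfolding Definition~\ref{def:gray-code} \(p\) times. For \(t=q\,d^{n}+r\) with \(0\le r<d^n\), write \(G_{p}^d(q)\) for the leading prefix. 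Then \(G^d_{n+p}(t)=G^d_p(q)\cdot G^d_n(r)\) when the prefix parity is even, and \(G^d_{n+p}(t)=G^d_p(q)\cdot G^d_n(d^n-1-r)\) when it is odd.

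\textbf{Plain lifts.} We first treat \(\Lift^p_d(C)\). By construction it is the sequence of \(d^p\) copies of \(C\), one per block, each tensored with identities. On odd blocks the copy is the mirrored circuit \(\overline C\), obtained by reversing mode order. Decoding a tensor serialises it, so \(\mathrm D^0_{n+p}(\Lift^p_d(C))\) is a product over blocks \(q\) of \(\mathrm D^{q d^n}_{n+p}(C\text{ or }\overline C)\). An inner induction on the raw term \(C\), using the prefix lemma, gives
\[
\mathrm D^{qd^n}_{n+p}(C)=\id_p\otimes\mathrm D^0_n(C)\text{ guarded by }\ctrl_{G^d_p(q)}
\]
on the first \(p\) wires. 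Concretely, the phase clause yields \(\ctrl_{G_p(q)G_n(r)}\) of a phase, and the beam-splitter and swap clauses yield \(\Lambda\) with the extra control word \(G_p(q)\). For mirrored blocks, reversing the mode order reverses the suffix Gray order. The orientation flag \(\varepsilon\) therefore flips consistently, so the same controlled gate appears, using the schema \eqref{XH}-type symmetry of \(X^{(v,v\pm1)}\) and \(R_x\). We then have a product of \(\ctrl_{w}(\mathrm D^0_n(C))\) over all prefix words \(w\in[d]^p\). This product must be collapsed to \(\id_p\otimes \mathrm D^0_n(C)\). The sequential order of \(C\)'s factors across different blocks needs rearranging, which we do by commutativity (Theorem~\ref{thm:commuting}), justified by the disjoint supports of Lemma~\ref{lem:decoding-locality}. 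Functoriality then regroups each block into \(\ctrl_w\) of the whole decoded circuit. Finally, exhaustivity (Theorem~\ref{thm:exhaustive}) is applied digit by digit, \(p\) times, with compatibility used to reorder nested controls.

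\textbf{Selected lifts and block permutations.} The selected lift is handled identically, except that copies are placed only on blocks whose distinguished digit equals \(j\). The exhaustivity step is therefore applied to all prefix digits except the distinguished one. This leaves \(\id_p\otimes\ctrl_j(\mathrm D^0_n(C))\); compatibility moves the \(j\)-control to the position required by the statement. For \(\sigma^d_{a,n,b}\), we first treat the generating case \(\sigma^d_{0,1,1}\) on \(d^2\) modes. It is a specific product of adjacent optical swaps, which decode to controlled adjacent transpositions. Schema \eqref{swap-decomp} identifies their composite with the qudit swap, after \eqref{CX-XC-CX} is used to match the decoded order. The general \(\sigma^d_{a,n,b}\) is built from these by plain lifts and block composition, so it decodes to \(\id_a\otimes\sigma_{n,b}\) by the lift case already proved and the PROP decomposition of \(\sigma_{n,b}\) into adjacent swaps.

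\textbf{Main obstacle.} The hardest step is the mirrored (odd-block) copies. There we must show that decoding the mode-reversed circuit at a reflected offset yields literally the same controlled qudit gates. This requires checking that reversing positions inside a block and reflecting the Gray suffix exactly cancel, including the \(\varepsilon\) sign, which in turn requires the derived identities \(X^{(v,v+1)}=X^{(v+1,v)}\) and the analogous relation for \(R_x\). My first attempt would be to prove a single auxiliary lemma by induction on \(n\):
\[
\mathrm D^{t'}_{m}(\overline C)=\mathrm D^{t}_{m}(C),
\]
whenever the interval of \(t'\) is the reflection of the interval of \(t\) under the Gray reversal. Once that lemma holds, the odd blocks reduce to the even ones.
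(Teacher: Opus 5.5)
\textbf{Lifts.} Your treatment of the two lifts is correct, but it is organised differently from the paper's.

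\emph{Your route.} You flatten all $p$ added digits at once: a multi-digit Gray prefix lemma, then $d^p$ controlled copies, then reordering and iterated exhaustivity (Corollary~\ref{cor:iterated-exhaustivity}). This forces you to check something extra: by strict coherence ($\Rev(A\otimes B)\equiv\Rev(B)\otimes\Rev(A)$ and $\Rev(\Rev(A))\equiv A$) and a parity count, exactly the odd-indexed blocks must carry a mirrored copy. That is true.

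\emph{The paper's route.} It inducts on $p$, peeling off the leading digit of $\Lift_d^{p+1}(C)=\bigotimes_q\Rev^q(\Lift_d^p(C))$. Each step then needs only the one-digit block rule (Corollary~\ref{lem:dec-ctrl-block}) applied to a whole sub-lift, plus one application of exhaustivity. Nested mirrors never have to be flattened.

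Your auxiliary mirror lemma is the paper's Lemma~\ref{lem:dec-ctrl-odd}. The orientation is repaired by $X^{(r,s)}=X^{(s,r)}$ and \eqref{RxSym}, not by \eqref{XH}. Two small corrections:
\begin{itemize}
\item Decoding serialises $\otimes$ block by block, so each decoded copy is already contiguous. Only whole blocks need reordering, from Gray to lexicographic order, by Theorem~\ref{thm:commuting} on distinct prefix words.
\item In $\Lift_{d,j}^p(C)$ the distinguished digit is the innermost added digit, adjacent to the target wires. No compatibility move is needed.
\end{itemize}

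\textbf{Block permutations: a genuine gap.} You assume $\sigma^d_{a,n,b}$ is obtained from $\sigma^d_{0,1,1}$ by plain lifts and composition. This cannot work:
\begin{itemize}
\item A plain lift only adds \emph{leading} Gray digits, so everything it produces decodes to $\id_p\otimes(-)$. Composites of lifts of $\sigma^d_{0,1,1}$ therefore only ever swap the last two qudits.
\item Decomposing $\sigma_{n,b}$ with $n+b\ge 3$ into adjacent swaps needs swaps $\id_i\otimes\sigma_{1,1}\otimes\id_k$ with $k>0$ wires on their right.
\item Trailing padding is exactly what lifts cannot supply. That is why the encoding clauses wrap lifts in block permutations in the first place.
\end{itemize}

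It is also not how the paper builds these gadgets. $\sigma^d_{a,\ell,1}$ is a product of saturated word-swaps $C^{\star}_{d,a+j,\ell-j-1,[x,y],[y,x]}$. Each of these ranges over all prefixes $\beta$ and all suffixes $\alpha$ of palindromic adjacent-swap paths $C_{d,\beta xy\alpha,\beta yx\alpha}$ between two generally distant Gray modes. For $c>1$, $\sigma^d_{a,b,c}=\sigma^d_{a,b+c-1,1}\circ\sigma^d_{a,b+1,c-1}$. Even your base case would have to reduce such decoded paths through intermediate Gray words; reordering with \eqref{CX-XC-CX} is not enough.

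The missing argument runs as follows:
\begin{enumerate}
\item Decode each word-swap to the controlled transposition of its two Gray words, written as three single-digit controlled transpositions (Lemmas~\ref{lemma-dec-C} and~\ref{lem:dec-C-two-digits}).
\item Group the corresponding factors across all $(\beta,\alpha)$ by commutativity, and strip both the prefix and the suffix controls by iterated exhaustivity (Lemma~\ref{lem:dec-Cstar-concrete}).
\item Close the product over $x<y$ with \eqref{swap-decomp} to get $\id_{a+j}\otimes\sigma_{1,1}\otimes\id_{\ell-j-1}$ (Lemma~\ref{lem:dec-swap-1}).
\item Induct on $c$ (Lemma~\ref{lem:dec-swap}).
\end{enumerate}
It is the saturation over suffixes $\alpha$, not lifting, that produces the identity wires on the right.
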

\begin{proof}
For \(\Lift_d^p(C)\), the \(d^p\) Gray blocks contain copies of \(C\), with
the local order reflected on odd blocks.  Decoding one full block gives
\(\mathrm D^0_n(C)\) guarded by the corresponding \(p\)-digit Gray word; the
reflected-block calculation gives the same decoded circuit.  Distinct-branch
commutation reorders these guarded copies, and exhaustivity removes the full
family of guards, giving \(\id_p\otimes\mathrm D^0_n(C)\).  The selected lift
keeps only the blocks with outer digit \(j\).  The detailed derivations are
Lemmas~\ref{lem:dec-lift} and~\ref{lem:dec-selected-lift} in
Appendix~\ref{app:encoding-decoding}; the block-permutation equation is
Lemma~\ref{lem:dec-swap} in Appendix~\ref{app:swap-encoding}.
\end{proof}

\begin{theorem}\label{thm:mimicking-rules}
If \(\mathrm{LOPP}\vdash C_1=C_2\) for \(C_1,C_2:d^n\to d^n\), then
\(\QCeq\vdash \mathrm D(C_1)=\mathrm D(C_2)\).
\end{theorem}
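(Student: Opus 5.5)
The argument runs by induction on the length of an \(\mathrm{LOPP}\) derivation of \(C_1=C_2\). The congruence \(\mathrm{LOPP}\vdash-\) is generated by three things: strict PROP coherence, the non-structural axioms of Figure~\ref{fig:axiom_lopp}, and closure under \(\circ\), \(\otimes\) and equivalence. So it suffices to prove the following. For every raw context \(P[-]\) whose hole occupies a consecutive mode interval \(t,\ldots,t+\ell-1\) of a \(d^n\)-mode circuit, and every single rewrite step \(L=R\), we have \(\QCeq\vdash\mathrm D^0_n(P[L])=\mathrm D^0_n(P[R])\).

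Reflexivity, symmetry and transitivity of the derivation are inherited from the fact that \(\QCeq\) is a congruence. Coherence steps are exactly Lemma~\ref{lem:decoding-respects-equiv-maintext}.

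The first step reduces from a global context to the local hole. Because the decoding clauses for \(\circ\) and \(\otimes\) in Definition~\ref{def:decoding} are compositional, \(\mathrm D^0_n(P[X])\) is a \(\QCeq\)-congruence context applied to \(\mathrm D^t_n(X)\). This uses only the sequential-serialisation clauses, so the tensor structure of the optical context becomes a composite of decoded pieces on the same \(n\) wires. Hence it is enough to show \(\QCeq\vdash\mathrm D^t_n(L)=\mathrm D^t_n(R)\) for each axiom instance \(L=R\) and each admissible offset \(t\).

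The second step decodes each axiom instance uniformly in \(t\). The LOPP axioms act on at most three consecutive modes, say \(t,t+1,t+2\). By Lemma~\ref{lem:gray-neighbours}, every consecutive pair is a one-digit \(\pm1\) Gray edge. For three modes there are only two configurations.
\begin{enumerate}
\item All edges change the same digit, so the decoding is \(\Lambda^u_w\) applied to a one-qudit circuit on three adjacent levels \(v,v\pm1,v\pm2\).
\item The two edges change different digits. This happens at a block boundary, and the decoding is a mixed row/column configuration of two differently controlled adjacent gates.
\end{enumerate}
Since \(\Lambda^u_w(g)=\sigma\circ\ctrl_{uw}(g)\circ\sigma\) and \(\QCeq\) is closed under every \(\ctrl_k\) and under conjugation by symmetries, in case 1 it suffices to derive the uncontrolled one-qudit equation on the relevant levels. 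Phase and beam-splitter axioms on one or two modes decode to controlled instances of \eqref{sum}, \eqref{2pi}, \eqref{XH}, \eqref{Hiip-Hiip} and \eqref{eulerH}, after unfolding \(R_x\) and \(X\) via Appendix~\ref{app:derived-notation}. The three-mode beam-splitter relation decodes to \eqref{3Rx} or its reflected variant. Orientation \(\varepsilon=-1\) is handled by conjugating with \(X^{(v,v-1)}\) and using \eqref{XH}.

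Case 2 is the main obstacle. Here the decoded circuit is a product of gates with different control words on different wires. I would first rewrite both sides as products of contextual \(\mathcal G\)-factors (Lemma~\ref{lemma-Greducible}). I would then use support-locality and commutativity (Theorem~\ref{thm:commuting}, Lemma~\ref{lem:decoding-locality}) to move factors with disjoint Gray supports apart. Next I would use \eqref{swap-decomp} and \eqref{CX-XC-CX} to conjugate the configuration into the same-digit form: a controlled transposition of the two Gray words moves the three basis states onto one digit. This reduces case 2 to case 1 at the cost of controlled \(X\)-conjugations, which cancel by \eqref{Hiip-Hiip}-type involutivity and by compatibility (Theorem~\ref{thm:compatibility}).

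The bookkeeping of which control words survive, including the reversed orientation on odd blocks, is where I expect most of the work. I would organise it as a finite table of edge-pair types, indexed by the changed digit positions and their signs. Finally, zero-mode and identity axioms decode to \(\id_n\) on both sides, and the swap axioms reduce to the symmetry case already covered by Lemma~\ref{decodingtoporules}.
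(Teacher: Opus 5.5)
\textbf{Verdict: there is a genuine gap, in the two-digit (corner) case of \eqref{lopp-eq:3bs}.} Your outer reduction matches the paper. That covers closure of the derivation relation (the paper's Lemma~\ref{lem:mimicking-criterion}), Lemma~\ref{decodingtoporules} for coherence steps, pulling the hole out of the context because decoding serialises \(\otimes\) into \(\circ\), and the same-digit/two-digit split. The paper does not argue the corner case in place. It cites the derived two-wire mixed rule \eqref{3CRx} from Appendix~\ref{app:derivations-p2}.

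Your substitute reduces the corner to case~1 by a controlled transposition that ``moves the three basis states onto one digit''. This cannot work for \(d=2\). In reflected binary Gray order two consecutive edges never change the same bit, so every instance of \eqref{lopp-eq:3bs} is a corner. A qubit digit has only two values, so no permutation places three basis words on one digit. Case~1 is empty, and there is nothing to reduce to.

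For \(d\ge3\) such a permutation exists, but the step you call bookkeeping is the actual content. The transposition of the two Gray words is a product of single-digit controlled \(X\)'s (as in Lemma~\ref{lem:dec-C-two-digits}). Since it moves a word in the support of the controlled \(R_x\) you want to relabel, some factor overlaps that support. Conjugating factor by factor must therefore at some stage produce a rotation between two words that differ in both digits. That is not a controlled one-qudit gate, so no local rule applies to it.

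Involutivity, compatibility, and \eqref{CX-XC-CX} only rearrange permutation gates and control order. None of them says how a controlled rotation transforms when pushed through an overlapping controlled \(X\). You need a rotation-transport identity such as the paper's \eqref{RxXX}. The paper obtains it in three steps:
\begin{enumerate}
\item factor a controlled-\(X\) pair, padded with a classical residual, through the structural wire swap (\eqref{eq:rxxx-factorisation}; this is where \eqref{swap-decomp} genuinely enters, so that ingredient of yours is the right one);
\item move the rotation across the swap by naturality;
\item slide it back through the residual (\eqref{eq:rxxx-residual-slide}).
\end{enumerate}
The auxiliary equalities \eqref{eq:3CRx-PA}, \eqref{eq:3CRx-BQ}, and \eqref{eq:3CRx-PB}, built from \eqref{RxXX}, then yield \eqref{3CRx}. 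Normalising both sides to \(\mathcal G\)-factors first, as you propose, works against this, because it dissolves the \(R_x\) blocks that \eqref{3Rx} is stated for.

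A smaller point: if ``the swap axioms'' in your last sentence includes \eqref{lopp-eq:swap}, that axiom is non-structural and is not covered by Lemma~\ref{decodingtoporules}. Its decoding must be checked as a controlled phase--\(R_x(\pi/2)\)--phase decomposition of \(X^{(v,v+\varepsilon)}\), as the paper does with \eqref{sum}, \eqref{2pi}, \eqref{Htotal}, and \eqref{Hij-Hij}.
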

\begin{proof}
Induct on the \(\mathrm{LOPP}\)-derivation. Strict PROP steps use
Lemma~\ref{lem:decoding-respects-equiv-maintext}.  The non-structural axioms
decode to the matching controlled phase, swap, and Euler rules of \(\QCeq\).
For the three-beam-splitter axiom, same-digit Gray moves give \eqref{3Rx},
while two-coordinate corners use its mixed controlled form \eqref{3CRx}.
Appendix~\ref{app:mimicking-rules} gives the axiom-by-axiom derivations.
\end{proof}

\subsection{Completeness via transfer}
The transfer uses the raw context kept by \(E^a_b\) and \(D^t_n\).  The
encoding parameters \(a,b\) place the qudit circuit inside a larger Gray
register; the decoding parameter \(t\) records the optical interval being read
inside \(d^n\) modes.  These data are needed before quotienting, since optical
tensor is read as consecutive mode intervals.

\begin{theorem}\label{thm:encoding-decoding}
For all \(n,a,b\ge0\) and all \(C:n\to n\) in \(\CQC\),
\(\QCeq\vdash
\mathrm D^0_{a+n+b}(\mathrm E^a_b(C))=\id_a\otimes C\otimes\id_b\).
\end{theorem}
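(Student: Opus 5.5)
We prove the statement by structural induction on a raw representative of \(C\), uniformly in the context parameters \(a,b\). The statement is independent of the representative chosen. Indeed, if \(C\equiv C'\), Lemma~\ref{lem:encoding-respects-equiv} gives \(\mathrm{LOPP}\vdash\mathrm E^a_b(C)=\mathrm E^a_b(C')\), and Theorem~\ref{thm:mimicking-rules} then gives \(\QCeq\vdash\mathrm D(\mathrm E^a_b(C))=\mathrm D(\mathrm E^a_b(C'))\). The right-hand side \(\id_a\otimes C\otimes\id_b\) is invariant in \(\CQC\). So it suffices to treat each recursive clause of Definition~\ref{def:encoding}. The induction hypothesis must be stated for all \(a,b\) simultaneously, because the tensor and control clauses shift the context.

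\textbf{Structural clauses.} For identities, \(\mathrm D^0(\id_{d^{a+n+b}})\) is a composite of \(\mathrm D(\gI)=\id\), hence the identity. For composition, \(\mathrm D^0_N\) is a homomorphism for \(\circ\), so the hypothesis gives \((\id_a\otimes C_2\otimes\id_b)\circ(\id_a\otimes C_1\otimes\id_b)\). For tensor, the clause \(\mathrm E^{a+n_1}_b(C_2)\circ\mathrm E^a_{b+n_2}(C_1)\) decodes by induction to \((\id_{a+n_1}\otimes C_2\otimes\id_b)\circ(\id_a\otimes C_1\otimes\id_{n_2+b})\), which is \(\id_a\otimes(C_1\otimes C_2)\otimes\id_b\) by interchange.

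\textbf{Generator clauses.} For the phase, \(\mathrm E^a_b\) is a tensor of \(d^{a+b}\) optical phases. Its decoding is a product of \(\ctrl_w(\theta)\) over all words \(w\in[d]^{a+b}\). Exhaustivity (Theorem~\ref{thm:exhaustive}), applied digit by digit together with commutativity (Theorem~\ref{thm:commuting}), collapses this product to the bare scalar \(\theta\), which equals \(\id_a\otimes\theta\otimes\id_b\) up to coherence. For the swap, the three block permutations decode by Lemma~\ref{lem:gadget-decoding-interface} to wire symmetries, and their composite is \(\id_a\otimes\sigma_{1,1}\otimes\id_b\) by PROP coherence. For the adjacent Hadamard, the block permutations and \(\Lift_d^{a+b}\) decode by the same lemma to \(\sigma\circ(\id_{a+b}\otimes\mathrm D^0_1(B_d^{(r,r+1)}))\circ\sigma\). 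It remains to show \(\QCeq\vdash\mathrm D^0_1(B_d^{(r,r+1)})=H^{(r,r+1)}\). This is a one-qudit computation: the gadget of Definition~\ref{def:hadamard-bs-gadget} decodes to level phases and an \(R_x^{(r,r+1)}\), and the Euler rule \eqref{eulerH}, with \eqref{sum} and \eqref{2pi}, rewrites this to the adjacent Hadamard. Naturality of symmetries then yields \(\id_a\otimes H\otimes\id_b\). For the control clause, the block permutations are handled as above. Lemma~\ref{lem:gadget-decoding-interface} gives \(\mathrm D(\Lift^{a+b}_{d,j}(\mathrm E^0_0(C)))=\id_{a+b}\otimes\ctrl_j(\mathrm D^0_m(\mathrm E^0_0(C)))\). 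The induction hypothesis with \(a=b=0\) replaces the inner decoding by \(C\), since \(\ctrl_j\) is a congruence in \(\QCeq\). The conjugating permutations then move the control wire into place.

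\textbf{Main obstacle.} The main difficulty is the base computations hidden behind the gadget lemma. These are the Hadamard gadget decoding to \(H^{(r,r+1)}\), and the reflected-order bookkeeping inside lifts, where odd Gray blocks decode to reversed orientations \(X^{(v,v-1)}\) that must be identified with \(X^{(v-1,v)}\). I would isolate the one-qudit identity \(\mathrm D^0_1(B_d^{(r,r+1)})=H^{(r,r+1)}\) as a separate lemma, proved from \eqref{eulerH} and the angle conventions of Appendix~\ref{appendix:relations_angles}. Everything else then reduces to Lemma~\ref{lem:gadget-decoding-interface} and the control-algebra laws.
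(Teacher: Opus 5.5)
Your proof is correct and follows the paper's argument: structural induction on raw syntax, uniform in the context parameters \(a,b\). Phases collapse by commutativity plus iterated exhaustivity; swaps and Hadamards go through the block-permutation and lift decodings together with \(\mathrm D^0_1(B_d^{(r,r+1)})=H^{(r,r+1)}\) (the paper's Lemma~\ref{lem:dec-Hd-gadget}, obtained via \eqref{hdec}); and the control clause uses the selected-lift decoding with the induction hypothesis at \(a=b=0\). Your extra representative-independence step through Lemma~\ref{lem:encoding-respects-equiv} and Theorem~\ref{thm:mimicking-rules} is sound but unnecessary, since an induction over all raw terms already covers every representative.
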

This internal retraction is the reason the non-monoidal translation is usable:
in a fixed surrounding register, encoding and then decoding returns the original
circuit with only identity padding.
\begin{proof}
Choose a raw representative by Lemma~\ref{lem:encoding-respects-equiv} and
argue by structural induction.  Appendix~\ref{app:encoding-decoding} gives the
case analysis.  Empty diagrams and identities are immediate, scalar phases
collapse by exhaustivity, adjacent Hadamards use Lemma~\ref{lem:gadget-decoding-interface},
and wire symmetries use the block-permutation equation.  Composition and tensor
follow from the recursive clauses. For \(\ctrl_j(C')\), decoding the selected
lift gives \(\id_a\otimes\ctrl_j(\mathrm D(\mathrm E(C')))\otimes\id_b\), which
the induction hypothesis reduces to \(\id_a\otimes\ctrl_j(C')\otimes\id_b\).
\end{proof}
\begin{corollary}\label{PROP:encoding-decoding}
For every \(C:n\to n\) in \(\CQC\), \(\QCeq\vdash \mathrm D(\mathrm E(C))=C\).
\end{corollary}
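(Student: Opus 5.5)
This is the special case \(a=b=0\) of Theorem~\ref{thm:encoding-decoding}, so no new argument is needed. By definition, \(\mathrm E(C)=\mathrm E^0_0(C)\) and \(\mathrm D(-)=\mathrm D^0_n(-)\) on \(d^n\)-mode circuits. Instantiating the theorem with \(a=b=0\) gives \(\QCeq\vdash\mathrm D^0_{n}(\mathrm E^0_0(C))=\id_0\otimes C\otimes\id_0\). In the strict PROP \(\CQC\), \(\id_0\) is the monoidal unit, so the right-hand side is equal to \(C\) by strict coherence alone.

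There is one bookkeeping point to check. \(C\) is an element of the quotient \(\CQC\), while \(\mathrm E\) and \(\mathrm D\) are defined on raw syntax. Lemma~\ref{lem:encoding-respects-equiv} shows that any two raw representatives of \(C\) have \(\mathrm{LOPP}\)-equal encodings. Theorem~\ref{thm:mimicking-rules} then shows that their decodings are \(\QCeq\)-equal. Together these make \(\mathrm D(\mathrm E(C))\) well defined up to \(\QCeq\), and this is the same convention Theorem~\ref{thm:encoding-decoding} already fixes.

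The real difficulty sits inside Theorem~\ref{thm:encoding-decoding}, which we are assuming: the selected-lift and block-permutation decodings of Lemma~\ref{lem:gadget-decoding-interface} rely on exhaustivity and commutativity. The corollary itself only requires the instantiation and the unit-law check described above.
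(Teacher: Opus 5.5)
Your proposal is correct and matches the paper's proof. You instantiate Theorem~\ref{thm:encoding-decoding} at \(a=b=0\) and discard the \(\id_0\) padding by the strict unit law. Your extra well-definedness remark, via Lemma~\ref{lem:encoding-respects-equiv} and Theorem~\ref{thm:mimicking-rules}, is sound and introduces no circularity, but the theorem's own choice of raw representative already covers it.
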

\begin{proof}
Take \(a=b=0\) in Theorem~\ref{thm:encoding-decoding}; the identity padding is
then the monoidal unit on both sides.
\end{proof}

Completeness is now the encode-mimic-decode round trip.

\begin{theorem}\label{thm:main-completeness}
For every \(d\ge2\), \(\QCeq\) is sound and complete for exact unitary qudit
semantics: for all \(C_1,C_2:n\to n\), \(\interp{C_1}=\interp{C_2}\) if and
only if \(\QCeq\vdash C_1=C_2\).
\end{theorem}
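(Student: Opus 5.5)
The proof splits into soundness and completeness, and completeness is the encode–mimic–decode round trip already prepared in Section~\ref{sec:embedding-lopp}.

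\textbf{Soundness.} By Proposition~\ref{PROP:interp-well-defined}, \(\interp{-}\) is a strict symmetric monoidal functor on \(\CQC\). The relation \(\interp{C_1}=\interp{C_2}\) is an equivalence relation. It is closed under \(\circ\) and \(\otimes\) by functoriality. It is closed under each \(\ctrl_k(-)\) because \(\interp{\ctrl_k(C)}\) is determined by \(\interp{C}\) through the block formula. It therefore suffices to check that every expanded instance of the schemata in Figures~\ref{fig:axioms_QC_part1}--\ref{fig:axioms_QC_part2} holds in \(\Qudit\).
\begin{itemize}
\item By Theorem~\ref{thm:bounded-arity-axioms}, each instance acts on at most three qudits.
\item Most instances are therefore a direct matrix computation.
\item The support and branch schemata follow from Proposition~\ref{prop:qudit-support-sensitive}, as in the proof of Lemmas~\ref{lem:commutativity-rules-admissible}--\ref{lem:exhaustivity-rules-admissible}, read semantically.
\item For \eqref{eulerH} and \eqref{3Rx}, soundness holds by the convention that the right-hand angles are admissible Euler decompositions, so both sides denote the same two-level unitary.
\end{itemize}
I would defer the explicit verification to the soundness appendix.

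\textbf{Completeness.} Suppose \(\interp{C_1}=\interp{C_2}\) for \(C_1,C_2:n\to n\).
\begin{enumerate}
\item By Lemma~\ref{lem:encoding-correctness}, \(\interp{\mathrm E(C_1)}_{\mathrm{LOPP}}=\interp{C_1}=\interp{C_2}=\interp{\mathrm E(C_2)}_{\mathrm{LOPP}}\).
\item Conjugating by the fixed permutation unitary \(\mathfrak G_{n,d}\) (Definition~\ref{def:lopp-semantics}) gives \(\interp{\mathrm E(C_1)}_{\mathrm{sp}}=\interp{\mathrm E(C_2)}_{\mathrm{sp}}\) on \(d^n\) modes.
\item Theorem~\ref{thm:lopp-complete} then yields \(\mathrm{LOPP}\vdash\mathrm E(C_1)=\mathrm E(C_2)\).
\item Theorem~\ref{thm:mimicking-rules} transports this to \(\QCeq\vdash\mathrm D(\mathrm E(C_1))=\mathrm D(\mathrm E(C_2))\).
\item Corollary~\ref{PROP:encoding-decoding} rewrites each side back, giving \(\QCeq\vdash C_1=C_2\).
\end{enumerate}

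One bookkeeping point needs care. \(\mathrm E\) is defined on raw circuits, but \(C_i\) are \(\equiv\)-classes. I would fix raw representatives and use Lemma~\ref{lem:encoding-respects-equiv} to see that the optical class of \(\mathrm E(C_i)\) is independent of that choice. Likewise, Theorem~\ref{thm:mimicking-rules} decodes raw representatives of \(\mathrm{LOPP}\)-classes. Lemma~\ref{lem:decoding-respects-equiv-maintext} ensures the decoded circuit is well defined up to \(\QCeq\), so the chain of equalities is legitimate.

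The main obstacle is not in this assembly. It is concentrated in Theorem~\ref{thm:mimicking-rules}, in particular the mixed-corner case of the three-beam-splitter axiom, and in the retraction of Theorem~\ref{thm:encoding-decoding}, which relies on the derived exhaustivity and commutativity laws (Theorems~\ref{thm:commuting}--\ref{thm:exhaustive}). At this level I simply invoke them. The remaining item to check is that the \(n=0\) case fits the same pattern: there are \(d^0=1\) modes, and scalar phases encode to one-mode phases.
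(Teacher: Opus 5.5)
Your proposal is correct and follows the paper's proof essentially verbatim: soundness by checking each expanded schema in \(\Qudit\) (as in Appendix~\ref{app:axiom-soundness}), with closure of semantic equality under the congruence rules, and completeness by the chain Lemma~\ref{lem:encoding-correctness}, Theorem~\ref{thm:lopp-complete}, Theorem~\ref{thm:mimicking-rules}, Corollary~\ref{PROP:encoding-decoding}; your explicit bookkeeping about raw representatives via Lemma~\ref{lem:encoding-respects-equiv} is a harmless addition. One small slip is that \eqref{3Rx} is a three-level identity, the ZXZ/XZX Euler relation of Lemma~\ref{lem:3Rx-angle-soundness}, so ``both sides denote the same two-level unitary'' applies only to \eqref{eulerH}.
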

\begin{proof}
Soundness is checked axiom by axiom against the displayed unitary semantics
in Appendix~\ref{app:axiom-soundness}:
phase arithmetic, \(2\pi\)-periodicity, and the Euler/rotation rules are the
corresponding one- or two-level matrix identities.  For the support and
branch-commutation cells, the controlled summands are selected by disjoint
basis projectors; compatibility permutes the projectors on the two control
wires; totalisation sums the \(d\) branch projectors to the identity.  If
\(\interp{C_1}=\interp{C_2}\), then
Lemma~\ref{lem:encoding-correctness} gives equal single-photon semantics for the
encodings after Gray conjugation, so
\(\mathrm{LOPP}\vdash \mathrm E(C_1)=\mathrm E(C_2)\).  Theorem~\ref{thm:mimicking-rules}
and Corollary~\ref{PROP:encoding-decoding} yield
\(\QCeq\vdash C_1=\mathrm D(\mathrm E(C_1))=\mathrm D(\mathrm E(C_2))=C_2\).
\end{proof}

%% file: conclusion.tex
\section{Conclusion}

We have given a finite schematic, bounded-arity equational theory for exact unitary qudit
circuits, uniform in the dimension and including global phases.  Primitive
value-control keeps the schemata within three wires, while compatibility,
commutativity, and exhaustivity are derived internally.  The
Gray-ordered \(\LOPP\) transfer preserves circuits through mimicking and
encode-decode retraction.  Reflected Gray order, selected lifts, and derived
control algebra keep optical locality compatible with native tensor structure.

The displayed diagrams have dimension-independent arity.  The dependence on
\(d\) is carried by level labels, product-frame ranges, and the reflected Gray
order used by the transfer.  For higher-dimensional compilation, the calculus
gives direct equations for value controls and adjacent two-level operations.
Natural next steps include orienting rule fragments as rewrite systems and
specialising the presentation to fault-tolerant or hardware-native qudit gate
sets.

%% file: appendix_proof_guide.tex
\section{Proof dependency guide}\label{app:proof-dependency-guide}

The proof dependencies are acyclic in the following order.  This appendix is
only a reference point for checking where each later argument gets its inputs.

The raw circuit syntax and the structural control-functor laws are fixed in
Section~\ref{sec:qudit-circuits}.  No compatibility, commutativity, or
exhaustivity principle for distinct controls is assumed there.

The local schemata of
Figures~\ref{fig:axioms_QC_part1}--\ref{fig:axioms_QC_part2} generate
\(\QCeq\).  Appendix~\ref{app:derived-notation} expands the derived one-qudit
notation, Appendix~\ref{appendix:relations_angles} fixes the angle conventions
for the rotation schemata, and Appendix~\ref{app:axiom-soundness} records the
semantic soundness checks for the displayed axiom schemata.

Appendix~\ref{app:derivations} proves the first support catalogue, including
the factor-expansion and generator-commutation rules used by normalisation and
commutativity.  Appendix~\ref{app:separate-normalisation} then proves the
factor normalisation used in Lemma~\ref{lemma-Greducible} and
Corollary~\ref{corollary-Greducible}, reducing arbitrary controlled circuits to
the finite factor family \(\mathcal G\) named in
Section~\ref{subsec:derived-control-algebra}.

The control algebra is derived from those normalised factors:
Appendix~\ref{app:compatibility} proves compatibility,
Appendix~\ref{app:commuting} proves commutativity of differently controlled
circuits, and Appendix~\ref{app:exhaustivity} proves exhaustivity using those
two principles together with the totalisation support rules.

Once the control algebra is available, Appendix~\ref{app:derivations-p2} proves
the post-control-algebra identities, including the mixed controlled
three-rotation rule used in the LOPP mimicking step.

Appendices~\ref{app:encoding-details}, \ref{app:swap-encoding},
and~\ref{app:encoding-decoding} establish the Gray-code interface for encoding
and decoding.  Appendix~\ref{app:swap-encoding} proves the block-permutation
decoding interface used both by the retraction proof and by structural
congruence in the mimicking proof.  At this point the derived control algebra
is available, so products over Gray branches can be reordered and collapsed by
commutativity and exhaustivity.

Appendix~\ref{app:mimicking-rules} checks that every LOPP axiom is preserved by
decoding.  The mixed three-mode optical case uses the mixed controlled
three-rotation rule derived in Appendix~\ref{app:derivations-p2}.

Section~\ref{sec:embedding-lopp} combines the pieces: encoding preserves
semantics, LOPP completeness gives an optical derivation,
Theorem~\ref{thm:mimicking-rules} decodes it into \(\QCeq\), and
Theorem~\ref{thm:encoding-decoding} removes the encode-decode round trip.

%% file: appendix_derived_gates.tex
\section{Derived notation for the axiom schemata}\label{app:derived-notation}

The axiom figures use several derived one-qudit circuits as notation.  The
definitions below pin down those abbreviations inside \(\CQC\); no generator is
added.

\begin{definition}\label{def:derived-adjacent-swap}
Write \(H^{(r,r+1)}\) for the adjacent Hadamard generator
\(\input{./figures/Hadiip.tikz}\).  For each adjacent pair \((r,r+1)\), the adjacent swap is
\(X^{(r,r+1)}
:= H^{(r,r+1)} \circ \ctrl_{r+1}(\input{./figures/piphase.tikz}) \circ
H^{(r,r+1)}\).
\end{definition}

\begin{definition}\label{def:derived-transpositions}
Set \(X^{(i,i)}:=\id_1\).  For the adjacent-walk conjugator, set
\(S_{r,r}:=\id_1\) and, if \(i<j\),
\(S_{i,j}:=\prod_{k=0}^{j-i-1} X^{(i+k,i+k+1)}\), where
\(\prod_{k=0}^{m} f_k := f_m\circ\cdots\circ f_0\), and define
\(X^{(i,j)}:= S_{i+1,j}\circ X^{(i,i+1)}\circ S_{i+1,j}^{-1}\).
For \(i>j\), set \(X^{(i,j)}:=X^{(j,i)}\).
\end{definition}

\begin{definition}\label{def:derived-two-level-gates}
The derived Hadamards are obtained by sliding the relevant levels into an
adjacent position.  Set \(H^{(i,i)}:=\id_1\); if \(i<j\), set
\(H^{(i,j)}:=X^{(j,i+1)} \circ H^{(i,i+1)} \circ X^{(j,i+1)}\); and if
\(i>j\), set \(H^{(i,j)}:=X^{(j,i)} \circ H^{(j,i)} \circ X^{(j,i)}\).
For distinct \(i,j\in[d]\), define
\(R_x^{(i,j)}(\theta):=H^{(i,j)}\circ \ctrl_i(\input{./figures/phase.tikz})\circ
\ctrl_j(\input{./figures/negphase.tikz})\circ H^{(i,j)}\).
\end{definition}

Under \(\interp{-}\), \(X^{(i,j)}\) swaps \(\ket{i}\) and \(\ket{j}\),
\(H^{(i,j)}\) acts as the usual Hadamard on
\(\mathrm{span}\{\ket{i},\ket{j}\}\), and \(R_x^{(i,j)}(\theta)\) acts on
that same two-dimensional subspace as
\(\left(\begin{smallmatrix}\cos\theta&i\sin\theta\\
i\sin\theta&\cos\theta\end{smallmatrix}\right)\), while all three act as the
identity on the orthogonal complement.

\begin{definition}\label{def:indexed-boxes}
The product frames used in the axiom figures are also meta-notation.  A frame
labelled \(\vec k:\varphi(\vec k)\), where
\(\vec k=(k_1,\ldots,k_r)\), denotes the sequential composite of the enclosed
circuit over all tuples \(\vec k\in[d]^r\) satisfying \(\varphi\), in
lexicographic increasing order.  Equivalently, \(k_1\) is the outer loop
variable and \(k_r\) is the inner loop variable.  For a single variable this is
just the increasing order of that index; if the predicate is omitted, the
default range is \(0\le k<d\).
\end{definition}

%% file: appendix_anglesrelations.tex
\section{Angle relations for rotation axioms}\label{appendix:relations_angles}

The angle-bearing schemata \eqref{eulerH}, \eqref{3Rx}, and the derived mixed
rule \eqref{3CRx} need
explicit right-hand-side parameter relations.  This appendix fixes those
\emph{meta-level parameter relations}: the left-hand side carries freely chosen
real parameters, while the right-hand side is chosen from the stated admissible
solutions.  Away from the usual Euler degeneracies the solution is deterministic;
at degenerate points we record a canonical choice, but the axiom schema may be
instantiated with any admissible degenerate representative.

In Delorme's one-qubit Euler parametrisations, the decomposition of a
two-level unitary naturally comes with a block scalar factor
(\emph{global phase} in \(U(2)\)).
In the present paper we apply such decompositions \emph{inside a fixed
two-dimensional subspace} \(\mathrm{span}\{\ket{i},\ket{j}\}\) of a \(d\)-level system.
Viewed in the full \(d\)-dimensional space, a phase factor that is ``global''
for the \(2\times 2\) block multiplies both \(\ket{i}\) and \(\ket{j}\) by the
same phase while leaving the other basis levels unchanged.
In our syntax this is realised by applying the same level-phase on both levels,
i.e.\ by \(\ctrl_i(\alpha)\circ \ctrl_j(\alpha)\).
On the \((i,j)\)-subspace this acts as \(e^{\mathrm{i}\alpha}I_2\), hence it
\emph{commutes} with every gate supported on \((i,j)\) (in particular with
\(H^{(i,j)}\) and with all level phases).
For this reason, whenever the qubit formulas isolate a scalar phase, we are free
to slide the corresponding two-level phase and absorb it into neighbouring level
phases. This accounts for the difference between our explicit angle formulas and
the qubit ones; semantically they implement the same
underlying \(U(2)\) parametrisation on \(\mathrm{span}\{\ket{i},\ket{j}\}\).

Throughout, equalities between angles are understood modulo \(2\pi\).
We write \(\arg(z)\) for the principal argument of a non-zero complex number
\(z\in\mathbb{C}\), taking values in \((-\pi,\pi]\).
When we restrict an angle to a range (e.g.\ \(\beta\in[0,2\pi)\)), we always mean
its canonical representative in that interval.
We also use the standard two-argument arctangent \(\operatorname{atan2}(y,x)\)
with values in \((-\pi,\pi]\).

\subsection*{Euler-type relation \eqref{eulerH}}

The axiom \eqref{eulerH} is a two-level Euler-type identity (applied in the paper on a
subspace \(\mathrm{span}\{\ket{i},\ket{j}\}\) of a qudit).
Syntactically, the left-hand side is parameterised by two real angles
\(\alpha_0,\alpha_2\), while the right-hand side uses four angles
\(\beta_0,\beta_1,\beta_2,\beta_3\).  For every choice of
\((\alpha_0,\alpha_2)\), an admissible choice of
\((\beta_0,\beta_1,\beta_2,\beta_3)\) in the prescribed ranges makes \eqref{eulerH}
sound on the relevant two-dimensional subspace, and hence as a qudit equation
acting as the identity outside that subspace.  The formulas below choose a
canonical deterministic representative; in degenerate cases the admissible
family has the usual Euler one-parameter freedom.

We follow Delorme's explicit Euler parametrisation (stated for qubits) and use
the same extraction function, with the ``two-level global phase'' convention
explained above.

Given \(\alpha_0,\alpha_2\in\mathbb{R}\), define
\(z=-\sin\bigl((\alpha_0+\alpha_2)/2\bigr)
  +i\,\cos\bigl((\alpha_0-\alpha_2)/2\bigr)\) and
\(z'=\cos\bigl((\alpha_0+\alpha_2)/2\bigr)
  -i\,\sin\bigl((\alpha_0-\alpha_2)/2\bigr)\).

Define \(\beta_0,\beta_1,\beta_2,\beta_3\in[0,2\pi)\) by cases.
The degenerate cases \(z=0\) or \(z'=0\) are treated separately to avoid dividing
by~\(0\); the generic case uses the ratio \(z/z'\).

\begin{itemize}
  \item If \(z' = 0\), set
  \(\beta_0=2\,\arg(z)\),
  \(\beta_1=(\pi+\alpha_0+\alpha_2)/2-\arg(z)\),
  \(\beta_2=(\pi+\alpha_0+\alpha_2)/2-\arg(z)\), and
  \(\beta_3=0\).

  \item If \(z = 0\), set
  \(\beta_0=2\,\arg(z')\),
  \(\beta_1=(\alpha_0+\alpha_2)/2-\arg(z')\),
  \(\beta_2=\pi+(\alpha_0+\alpha_2)/2-\arg(z')\), and
  \(\beta_3=0\).

  \item Otherwise (when \(z \neq 0\) and \(z' \neq 0\)), set
  \(\beta_0=\arg(z)+\arg(z')\),
  \(\beta_1=-\arg\bigl(i+\lvert z/z'\rvert\bigr)
    +(\pi+\alpha_0+\alpha_2)/2-\arg(z)\),
  \(\beta_2=\arg\bigl(i+\lvert z/z'\rvert\bigr)
    +(\pi+\alpha_0+\alpha_2)/2-\arg(z)\), and
  \(\beta_3=\arg(z)-\arg(z')\).
\end{itemize}

Soundness is the property used later; in the degenerate cases one endpoint
phase remains free.
\begin{lemma}
\label{lem:EH-angle-soundness}
The canonical choices above make Equation~\eqref{eulerH} sound on the relevant
two-dimensional subspace.  If \(z=0\) or \(z'=0\), then
\(\beta_3\) may be prescribed arbitrarily and the other three angles may be
chosen so that the same two-level matrix is obtained.
\end{lemma}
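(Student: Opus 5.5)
The argument is a direct matrix computation, and everything can be done on the two-dimensional block. Both sides of \eqref{eulerH} are built from the derived Hadamard \(H^{(i,j)}\), level phases \(\ctrl_i(-)\), \(\ctrl_j(-)\), and possibly a two-level phase \(\ctrl_i(\alpha)\circ\ctrl_j(\alpha)\). By the semantics of \(\ctrl_k\) and Definition~\ref{def:derived-two-level-gates}, each of these gates acts as the identity on \(\mathrm{span}\{\ket{\ell}\mid \ell\neq i,j\}\) and preserves \(\mathrm{span}\{\ket i,\ket j\}\). So it suffices to compare the two \(2\times2\) blocks in the basis \((\ket i,\ket j)\). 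On that block \(H^{(i,j)}\) is the ordinary Hadamard \(H\), \(\ctrl_i(\theta)\) is \(\mathrm{diag}(e^{i\theta},1)\), and \(\ctrl_j(\theta)\) is \(\mathrm{diag}(1,e^{i\theta})\). The problem is therefore a qubit Euler identity with the "two-level global phase" convention explained at the start of this appendix.

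I would carry out the comparison in three steps.

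\textbf{Left-hand side.} Multiply out the block matrix \(M(\alpha_0,\alpha_2)\). Factor out the common phase \(e^{i(\alpha_0+\alpha_2)/2}\) using half-angles. The diagonal entries should then come out proportional to \(z'\) and the off-diagonal entries proportional to \(z\), up to fixed constants and conjugations. In particular \(|z|^2+|z'|^2\) should equal the constant forced by unitarity; I would check this identity first, since it confirms that the definitions of \(z,z'\) have been matched to the right entries.

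\textbf{Right-hand side.} Multiply out \(N(\beta_0,\ldots,\beta_3)\) in the same way. Its diagonal entries have modulus \(|\cos|\) and its off-diagonal entries modulus \(|\sin|\) of a half-angle determined by \(\beta_1,\beta_2\); their phases are combinations \(\beta_0\pm\beta_3\) and \(\pm(\beta_1-\beta_2)/2\), plus offsets.

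\textbf{Generic case \(z,z'\neq0\).} Match the entries of \(M\) and \(N\).
\begin{itemize}
\item Moduli: the ratio \(|z/z'|\) fixes the interior half-angle through \(\arg(i+|z/z'|)\). This is why that term appears with opposite signs in \(\beta_1\) and \(\beta_2\).
\item Phases of the outer entries: these fix \(\beta_0=\arg z+\arg z'\) and \(\beta_3=\arg z-\arg z'\).
\item Remaining phase: the offset \((\pi+\alpha_0+\alpha_2)/2-\arg z\) absorbs the extracted global factor. It may be redistributed freely, because \(\ctrl_i(\alpha)\circ\ctrl_j(\alpha)\) is scalar on the block and commutes with everything supported there.
\end{itemize}
Substituting the formulas and simplifying modulo \(2\pi\) gives \(M=N\) entrywise.

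\textbf{Degenerate cases.} If \(z'=0\), then \(M\) is antidiagonal; if \(z=0\), it is diagonal. In either case the interior angles must take the values that make \(N\) antidiagonal, respectively diagonal. Such an \(N\) depends on its outer phases only through one combination. Hence the phase \(\beta_3\) can be shifted arbitrarily, compensated by a corresponding shift of \(\beta_0\) and of \(\beta_1,\beta_2\). The stated choice \(\beta_3=0\) is one instance, and I would verify it by direct substitution. For the second claim of the lemma, I would exhibit the compensating shift explicitly: replace \((\beta_0,\beta_3)\) by \((\beta_0+\delta,\beta_3+\delta)\), or with the appropriate sign, and adjust \(\beta_1\) or \(\beta_2\) by \(\mp\delta\). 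Then check that the diagonal, respectively antidiagonal, matrix is unchanged.

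The main obstacle is bookkeeping rather than conceptual. The principal argument \(\arg\) is only defined modulo \(2\pi\) and takes values in \((-\pi,\pi]\), so the half-angle expressions \((\alpha_0+\alpha_2)/2\) can introduce sign flips of \(\pm1\), that is, an extra \(\pi\) in some entry phase. I would control this by working throughout with the unhalved phases \(e^{i\beta_k}\). I would also check that every appearance of a half-angle occurs in a pair whose sign ambiguities cancel, the way the \(\pi\) offsets in \(\beta_1,\beta_2\) are placed. If a residual \(-1\) survives, it is a global scalar on the block and would be absorbed into the two-level phase, as justified above.
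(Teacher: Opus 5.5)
Your route is the same as the paper's: reduce to the \(2\times2\) block on \(\mathrm{span}\{\ket i,\ket j\}\), multiply out both sides, match entries in the generic case, and treat \(z=0\) and \(z'=0\) as diagonal/antidiagonal cases with compensating phase shifts. Two parts of the plan are wrong as written, though, and would make the corresponding steps fail if you followed them literally.

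\textbf{The roles of \(z\) and \(z'\) are swapped.} Write \(P_j(\theta)=\mathrm{diag}(1,e^{i\theta})\) and \(S=(\alpha_0+\alpha_2)/2\). The left-hand block \(HP_j(\alpha_2)HP_j(\alpha_0)H\) equals \(\frac{e^{iS}}{\sqrt2}\begin{psmallmatrix} i\overline z & z'\\ \overline{z'} & iz\end{psmallmatrix}\). So the diagonal is proportional to \(z\) and the off-diagonal to \(z'\).

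\begin{itemize}
\item It follows that \(z'=0\) is the \emph{diagonal} case and \(z=0\) the \emph{antidiagonal} one. Your check \(|z|^2+|z'|^2=2\) is symmetric in \(z,z'\), so it cannot detect the swap.
\item On the right-hand side, write \(A=(e^{i\beta_1}+e^{i\beta_2})/2\) and \(B=(e^{i\beta_1}-e^{i\beta_2})/2\). The difference \((\beta_1-\beta_2)/2\) controls only \(|A|\) and \(|B|\); the common block phase sits in \((\beta_1+\beta_2)/2\).
\item The ``one combination'' claim holds only in the diagonal case. There the matrix depends on the outer phases through \(\beta_0+\beta_3\) alone, so you may take \(\beta_3=\tau\), \(\beta_0=2\arg z-\tau\), and leave \(\beta_1=\beta_2\) unchanged.
\item In the antidiagonal case the entries \(Be^{i\beta_0}\) and \(Be^{i\beta_3}\) fix \(\beta_0\) and \(\beta_3\) separately relative to \(\arg B\). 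Shifting \(\beta_3\) by \(\delta\) therefore forces \(\beta_0\) up by \(\delta\) and \emph{both} \(\beta_1\) and \(\beta_2\) down by \(\delta\), keeping \(\beta_2-\beta_1=\pi\).
\end{itemize}

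\textbf{Drop the fallback of absorbing a residual \(-1\) into a two-level phase.} The same applies to ``redistributing the offset freely''.

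\begin{itemize}
\item The calculus is exact in phases. The right-hand side of \eqref{eulerH} has no spare block-phase gate.
\item For \(d\ge3\), a scalar on \(\mathrm{span}\{\ket i,\ket j\}\) is a relative phase against the other levels. Even for \(d=2\), global phases are part of \(\interp{-}\).
\item The lemma asserts that the \emph{specific} canonical formulas work. A surviving sign would refute them, not be absorbed.
\end{itemize}

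In fact no sign appears, so the bookkeeping worry is unfounded. The same real number \(S\) occurs in \(z\), \(z'\) and in the offset \((\pi+\alpha_0+\alpha_2)/2-\arg z\), and \(\arg\) only ever enters through \(e^{\pm i\arg(\cdot)}\). Put \(t=\arg(i+|z/z'|)\). Then \(\cos t=|z|/\sqrt2\) and \(\sin t=|z'|/\sqrt2\), using \(|z|^2+|z'|^2=2\). This gives exactly \(A=ie^{iS}\overline z/\sqrt2\) and \(B=e^{iS}|z'|e^{-i\arg z}/\sqrt2\). Multiplying by \(e^{i\beta_0}\) and \(e^{i\beta_3}\), with \(\beta_0=\arg z+\arg z'\) and \(\beta_3=\arg z-\arg z'\), reproduces all four entries. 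This is the paper's generic-case computation.
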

\begin{proof}
Write
\[
H=\frac{1}{\sqrt2}\begin{pmatrix}1&1\\1&-1\end{pmatrix},
\qquad
P_j(\theta)=\begin{pmatrix}1&0\\0&e^{i\theta}\end{pmatrix},
\qquad
P_i(\theta)=\begin{pmatrix}e^{i\theta}&0\\0&1\end{pmatrix}.
\]
With the left-to-right convention of the diagrams, the left-hand side of
\eqref{eulerH} is
\[
H P_j(\alpha_2) H P_j(\alpha_0) H
=
\frac{e^{iS}}{\sqrt2}
\begin{pmatrix}
i\overline z & z'\\
\overline {z'} & iz
\end{pmatrix},
\qquad S=\frac{\alpha_0+\alpha_2}{2}.
\]
The right-hand side is
\[
P_j(\beta_3)H P_j(\beta_2)P_i(\beta_1)H P_j(\beta_0)
=
\begin{pmatrix}
A & B e^{i\beta_0}\\
B e^{i\beta_3} & A e^{i(\beta_0+\beta_3)}
\end{pmatrix},
\]
where \(A=(e^{i\beta_1}+e^{i\beta_2})/2\) and
\(B=(e^{i\beta_1}-e^{i\beta_2})/2\).

Assume first that \(z\ne0\) and \(z'\ne0\).  Put
\(\rho=|z/z'|\) and \(t=\arg(i+\rho)\).  Since
\(|z|^2+|z'|^2=2\), the generic formula gives
\(A=\frac{i e^{iS}\overline z}{\sqrt2}\) and
\(B=\frac{e^{iS}|z'|e^{-i\arg z}}{\sqrt2}\).
Together with \(\beta_0=\arg z+\arg z'\) and
\(\beta_3=\arg z-\arg z'\), the four entries of the right-hand side become
\[
\frac{e^{iS}}{\sqrt2}
\begin{pmatrix}
i\overline z & z'\\
\overline {z'} & iz
\end{pmatrix},
\]
which is the left-hand side.

If \(z'=0\), then \(|z|=\sqrt2\).  The canonical choice has \(B=0\),
\(A=e^{i((\pi+\alpha_0+\alpha_2)/2-\arg z)}
  =\frac{i e^{iS}\overline z}{\sqrt2}\), and
\(e^{i\beta_0}=e^{2i\arg z}\),
so the same diagonal matrix is obtained.  If \(z=0\), then \(|z'|=\sqrt2\).
The canonical choice has \(A=0\),
\(B=e^{i(S-\arg z')}\), and \(e^{i\beta_0}=e^{2i\arg z'}\),
and again the two off-diagonal entries agree with the displayed left-hand-side
matrix.

Finally let \(\tau\) be any prescribed value for \(\beta_3\), understood modulo
\(2\pi\).  When \(z'=0\), keep
\(\beta_1=\beta_2=(\pi+\alpha_0+\alpha_2)/2-\arg z\) and set
\(\beta_0=2\arg z-\tau\).  When \(z=0\), set
\(\beta_1=S-\arg z'-\tau\),
\(\beta_2=\beta_1+\pi\), and \(\beta_0=2\arg z'+\tau\).
After taking representatives in \([0,2\pi)\), these choices give the same
matrix as the canonical representative and have \(\beta_3=\tau\).
\end{proof}

Since every occurrence of \eqref{eulerH} in the main development is applied to a two-level
gate supported on \(\mathrm{span}\{\ket{i},\ket{j}\}\) and acts as the identity on
the other computational levels, Lemma~\ref{lem:EH-angle-soundness} applies
systematically for all \(d\ge 2\).  The canonical formulas above are kept as a
deterministic rewrite convention, while derivations may use any admissible
degenerate representative.

\subsection*{Three-rotation identities \eqref{3Rx} and derived \eqref{3CRx}}

The axiom \eqref{3Rx} and the derived mixed rule \eqref{3CRx} express the standard fact that
the same real \(3\times 3\) rotation (hence an element of \(\mathrm{SO}(3)\))
admits both a \emph{ZXZ} and an \emph{XZX} Euler decomposition.
They are the qudit/circuit-side reflections of the ``(E3)'' principle used in
the linear-optical setting: \eqref{3Rx} is the uncontrolled identity, and \eqref{3CRx} is
its mixed controlled consequence. The parameter relations are identical in both
cases.

Let \(R_x(\theta)\) and \(R_z(\theta)\) be the standard real rotation
matrices about the \(x\)- and \(z\)-axes.  The left-hand side angles
\(\gamma_1,\gamma_2,\gamma_3\) are the freely chosen parameters of~\eqref{3Rx}.
Write \(c_r:=\cos(\gamma_r)\) and \(s_r:=\sin(\gamma_r)\).  Define
\[
  R_{E3}:=R_z(\gamma_1)\,R_x(\gamma_2)\,R_z(\gamma_3)
  =
  \begin{pmatrix}
    c_1c_3-s_1c_2s_3 & -c_1s_3-s_1c_2c_3 & s_1s_2\\
    s_1c_3+c_1c_2s_3 & -s_1s_3+c_1c_2c_3 & -c_1s_2\\
    s_2s_3            & s_2c_3             & c_2
  \end{pmatrix},
\]
and write \(R_{E3}=(r_{\mu,\nu})_{\mu,\nu\in\{1,2,3\}}\).
The right-hand side angles \(\delta_1,\delta_2,\delta_3\) are the deterministic
XZX Euler angles extracted from this same matrix, so
\(R_{E3}=R_x(\delta_1)\,R_z(\delta_2)\,R_x(\delta_3)\).

The extraction formulas below are exactly those used in the linear-optical
literature, following standard Euler-angle extraction conventions
\cite{EulerAngles}; they are stated with explicit ``generic'' and
``gimbal-lock'' cases, since Euler angles are not unique in degenerate
configurations.

Assume first that we are not in a gimbal-lock configuration for XZX, i.e.\
\(-1 < r_{1,1} < 1\). Then set
\(\delta_2=\arccos(r_{1,1})\),
\(\delta_1=\operatorname{atan2}(r_{3,1},\,r_{2,1})\), and
\(\delta_3=\operatorname{atan2}(r_{1,3},\,-r_{1,2})\).
Equivalently, in terms of the freely chosen \(\gamma\)-angles:
\[
\begin{aligned}
  \delta_2
    &= \arccos(c_1c_3-s_1c_2s_3),\\
  \delta_1
    &= \operatorname{atan2}(s_2s_3,\,
        s_1c_3+c_1c_2s_3),\\
  \delta_3
    &= \operatorname{atan2}(s_1s_2,\,
        c_1s_3+s_1c_2c_3).
\end{aligned}
\]

The remaining cases correspond to gimbal lock for XZX, i.e.\
\(\delta_2\in\{0,\pi\}\). To keep the schema functional, we fix the canonical
representative with last \(X\)-rotation equal to \(0\):

\begin{itemize}
  \item If \(r_{1,1}=1\), set
  \(\delta_2=0\), \(\delta_3=0\), and
  \(\delta_1=\operatorname{atan2}(r_{3,2},\,r_{3,3})\).

  \item If \(r_{1,1}=-1\), set
  \(\delta_2=\pi\), \(\delta_3=0\), and
  \(\delta_1=\operatorname{atan2}(-r_{3,2},\,r_{3,3})\).
\end{itemize}

These choices are one representative among the usual one-parameter family of
gimbal-lock solutions, but they give a deterministic right-hand side for the
schematic rule.

The extraction is sound in the following sense.
\begin{lemma}
\label{lem:3Rx-angle-soundness}
With the choices of \(\delta_1,\delta_2,\delta_3\) above, the identity
\[
R_z(\gamma_1)R_x(\gamma_2)R_z(\gamma_3)
=R_x(\delta_1)R_z(\delta_2)R_x(\delta_3)
\]
holds.
Consequently the same parameter relation is sound for the derived mixed rule
\eqref{3CRx}, whose proof reduces to the uncontrolled three-rotation pattern.
\end{lemma}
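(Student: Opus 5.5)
The plan is to verify the identity \(R_z(\gamma_1)R_x(\gamma_2)R_z(\gamma_3)=R_x(\delta_1)R_z(\delta_2)R_x(\delta_3)\) directly at the level of \(3\times3\) real matrices. The strategy is to compare first rows and first columns, and then use that both sides lie in \(\mathrm{SO}(3)\) to pin down the remaining entries. I will use the standard forms \(R_x(a)=\left(\begin{smallmatrix}1&0&0\\0&\cos a&-\sin a\\0&\sin a&\cos a\end{smallmatrix}\right)\) and \(R_z(b)=\left(\begin{smallmatrix}\cos b&-\sin b&0\\\sin b&\cos b&0\\0&0&1\end{smallmatrix}\right)\), consistent with the displayed matrix \(R_{E3}\).

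\textbf{Step 1 (first row and column of the XZX product).} Multiplying out, \(N:=R_x(\delta_1)R_z(\delta_2)R_x(\delta_3)\) has first row \((\cos\delta_2,\,-\sin\delta_2\cos\delta_3,\,\sin\delta_2\sin\delta_3)\). Its first column is \((\cos\delta_2,\,\cos\delta_1\sin\delta_2,\,\sin\delta_1\sin\delta_2)\).

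\textbf{Step 2 (generic case \(-1<r_{1,1}<1\)).} Here \(\delta_2=\arccos r_{1,1}\in(0,\pi)\), so \(\sin\delta_2=\sqrt{1-r_{1,1}^2}>0\). Because \(R_{E3}\) is orthogonal, its first column and first row are unit vectors. Hence \((r_{2,1},r_{3,1})\) and \((-r_{1,2},r_{1,3})\) both have norm \(\sin\delta_2\). The \(\operatorname{atan2}\) choices then give \(\cos\delta_1\sin\delta_2=r_{2,1}\), \(\sin\delta_1\sin\delta_2=r_{3,1}\), and similarly for \(\delta_3\). So \(N\) and \(R_{E3}\) share their first row and first column.

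To conclude, set \(M:=R_x(-\delta_1)R_{E3}R_x(-\delta_3)\in\mathrm{SO}(3)\). Since \(R_x\) fixes the first coordinate, \(M\) has first column \((\cos\delta_2,\sin\delta_2,0)^T\) and first row \((\cos\delta_2,-\sin\delta_2,0)\).
\begin{itemize}
\item The third column of \(M\) has first entry \(0\) and is orthogonal to the first column, so its second entry vanishes because \(\sin\delta_2\neq0\). It is therefore \(\pm e_3\).
\item The second column is then the unit vector orthogonal to both, namely \(\pm(-\sin\delta_2,\cos\delta_2,0)^T\). Its prescribed first entry \(-\sin\delta_2\) fixes the sign to \(+\).
\item \(\det M=1\) then forces the third column to be \(+e_3\).
\end{itemize}
Thus \(M=R_z(\delta_2)\), i.e.\ \(R_{E3}=N\).

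\textbf{Step 3 (gimbal lock \(r_{1,1}=\pm1\)).} Orthogonality gives first row and first column equal to \(\pm e_1\). So \(R_{E3}=\operatorname{diag}(\pm1,Q)\) with \(Q\in O(2)\) and \(\det Q=\pm1\). When \(r_{1,1}=1\), \(Q\) is a planar rotation determined by its last row \((r_{3,2},r_{3,3})\), which the choice \(\delta_1=\operatorname{atan2}(r_{3,2},r_{3,3})\) matches with \(R_x(\delta_1)\), \(\delta_2=\delta_3=0\). When \(r_{1,1}=-1\), \(R_z(\pi)=\operatorname{diag}(-1,-1,1)\). So \(N=R_x(\delta_1)\operatorname{diag}(-1,-1,1)\), whose lower block has last row \((\sin\delta_1\cdot(-1),\cos\delta_1)\). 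This matches \((r_{3,2},r_{3,3})\) by the choice \(\delta_1=\operatorname{atan2}(-r_{3,2},r_{3,3})\), and a reflection in \(O(2)\) with det \(-1\) is determined by one row.

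\textbf{Step 4 (controlled consequence).} For \eqref{3CRx}, the derivation in Appendix~\ref{app:derivations-p2} reduces it to the uncontrolled three-rotation pattern on the same angles. Soundness then follows from the above together with the fact that \(\interp{\ctrl_k(-)}\) is a block map applying the same matrix on the active branch.

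The main obstacle is the sign bookkeeping in Steps 2–3. This means checking that the \(\operatorname{atan2}\) argument orders match the entries of the chosen rotation convention, and in the \(r_{1,1}=-1\) case that the reflected lower block is matched by the sign in \(-r_{3,2}\). I would double-check these by the explicit entry formulas of \(N\) computed in Step 1.
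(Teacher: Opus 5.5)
Your proposal is correct and follows the paper's route. Like the paper, you compute the first row and column of \(R_x(\delta_1)R_z(\delta_2)R_x(\delta_3)\), read off \(\delta_2,\delta_1,\delta_3\) by \(\arccos\)/\(\operatorname{atan2}\) when \(-1<r_{1,1}<1\), treat \(r_{1,1}=\pm1\) separately, and get \eqref{3CRx} from its derivation through \eqref{3Rx}. The difference is that the paper only asserts that substituting the extracted angles recovers all entries of \(R_{E3}\), which tacitly needs the \(\mathrm{SO}(3)\) relations, whereas your conjugation \(M=R_x(-\delta_1)R_{E3}R_x(-\delta_3)\) with orthonormality and \(\det M=1\) forces \(M=R_z(\delta_2)\) explicitly; likewise, your remark that a determinant \(-1\) element of \(\mathrm{O}(2)\) is fixed by one row makes the paper's \(r_{1,1}=-1\) sketch precise.
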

\begin{proof}
For arbitrary \(a,b,c\), direct multiplication gives
\[
{\setlength{\arraycolsep}{0pt}
R_x(a)R_z(b)R_x(c)=
\begin{pmatrix}
\cos b & -\sin b\cos c & \sin b\sin c\\
\sin b\cos a & -\sin a\sin c+\cos a\cos b\cos c
  & -\sin a\cos c-\sin c\cos a\cos b\\
\sin a\sin b & \sin a\cos b\cos c+\sin c\cos a
  & -\sin a\sin c\cos b+\cos a\cos c
\end{pmatrix}.}
\]
In the generic case \(-1<r_{1,1}<1\), we have
\(\sin(\delta_2)>0\).  The displayed matrix then shows that
\(\cos(\delta_2)=r_{1,1}\),
\((\cos\delta_1,\sin\delta_1)
  =\frac{(r_{2,1},r_{3,1})}{\sin\delta_2}\), and
\((\cos\delta_3,\sin\delta_3)
  =\frac{(-r_{1,2},r_{1,3})}{\sin\delta_2}\),
which is exactly the stated \(\arccos/\operatorname{atan2}\) extraction.
Substituting these values into the matrix for
\(R_x(\delta_1)R_z(\delta_2)R_x(\delta_3)\) recovers all entries of
\(R_{E3}\).

If \(r_{1,1}=1\), then the first row and column force the matrix to be a pure
\(x\)-rotation on the last two coordinates, and
\(\operatorname{atan2}(r_{3,2},r_{3,3})\) is its angle.  Thus the canonical
choice \(\delta_2=\delta_3=0\) is sound.  If \(r_{1,1}=-1\), then the same
displayed matrix with \(b=\pi\) shows that the free invariant is
\(\delta_1-\delta_3\), and the canonical choice
\(\delta_2=\pi,\delta_3=0,\delta_1=\operatorname{atan2}(-r_{3,2},r_{3,3})\)
is sound.

The rule \eqref{3CRx} uses the same real rotation identity after the surrounding
controlled \(X\)-moves expose the uncontrolled three-rotation subdiagram in
Appendix~\ref{app:derivations-p2}; therefore no additional angle relation is
needed for the controlled version.
\end{proof}

%% file: appendix_derivations.tex
\section{Support diagrammatic derivations}\label{app:derivations}

All derivations in this appendix take place in the equational theory
\(\mathrm{QC}_{d}\).  We write \(\mathrm{QC}_{d} \vdash C_1 = C_2\) when the
circuit identity \(C_1 = C_2\) is derivable from the axioms of
Figures~\ref{fig:axioms_QC_part1}--\ref{fig:axioms_QC_part2}.
Unlabelled equalities in the displayed diagram chains are definitional
unfoldings of the derived notation or strict PROP coherence; labelled
equalities cite the axiom or derived rule doing the substantive rewrite.

\subsection{Catalogue of derived rules}

The catalogue is split into three batches. Figure~\ref{fig:derived_QC_part1}
gathers the elementary support identities used for scalar phases,
involutions, totalisations, controlled transpositions, and the basic
\(X\)-interaction rules. Figure~\ref{fig:derived_QC_part2} adds the local
Hadamard--phase conjugation rules together with the decomposition identities
for controlled \(H\), controlled phase, and higher-controlled phase gates.
Figure~\ref{fig:derived_QC_part3} records the remaining commutation laws
between differently value-controlled phases, controlled Hadamards, and
controlled transpositions. Subsection~\ref{subsec:derived-rules-proofs}
proves these rules in the same order.

\begin{figure*}[!ht]
\centering
\fbox{
\begin{minipage}{0.975\textwidth}\centering
\begin{minipage}[t]{0.4\textwidth}\centering
\schemarule{\textsc{P0}}{0phase}{
\input{./figures/0phase.tikz} = \input{./figures/empty_diagram.tikz}
}
\schemarule{\textsc{\ensuremath{\pi}S}}{piphase}{
\input{./figures/negpiphase.tikz} = \input{./figures/piphase.tikz}
}
\schemarule{\textsc{XAdj}}{Xiplusun-Xiplusun}{
\input{./figures/Xiiplusun-Xiiplusun.tikz} = \input{./figures/line.tikz}
}
\schemarule{\textsc{XInv}}{Xij-Xij}{
\input{./figures/Xij-Xij.tikz} = \input{./figures/line.tikz}
}
\schemarule{\textsc{HInv}}{Hij-Hij}{
\input{./figures/Hij-Hij.tikz} = \input{./figures/line.tikz}
}
\end{minipage}\hfill
\begin{minipage}[t]{0.48\textwidth}\centering
\schemarule{\textsc{HTot}}{Htotal}{
\input{./figures/Htotal.tikz} = \input{./figures/totalH.tikz}
}
\schemarule{\textsc{RTot}}{Rxtotal}{
\input{./figures/Rxtotal.tikz} = \input{./figures/totalRx.tikz}
}
\end{minipage}
\begin{minipage}[t]{0.48\textwidth}\centering
\schemarule{\textsc{HDec}}{hdec}{
\input{./figures/Hadij.tikz} = \input{./figures/hadamard-decomposition.tikz}
}
\end{minipage}
\begin{minipage}[t]{0.48\textwidth}\centering
\schemarule{\textsc{XCtl}}{XCtrl}{
\input{./figures/CtrlX.tikz} = \input{./figures/XCtrl.tikz}
}
\end{minipage}
\begin{minipage}[t]{0.4\textwidth}\centering
\schemarule{\textsc{XCS}}{XCtrlbis}{
\input{./figures/XCtrlbis.tikz} = \input{./figures/CtrlXbis.tikz}
}
\end{minipage}
\begin{minipage}[t]{0.4\textwidth}\centering
\schemarule{\textsc{RSym}}{RxSym}{
\input{./figures/Rxij.tikz} = \input{./figures/Rxji.tikz}
}
\end{minipage}
\begin{minipage}[t]{0.65\textwidth}\centering
\schemarule{\textsc{XX1}}{Xij-Xjkbis}{
  \input{./figures/Xij-Xjk.tikz} = \input{./figures/Xik-Xij.tikz}
}
\schemarule{\textsc{XX2}}{Xij-Xjk}{
  \input{./figures/Xij-Xjk.tikz} = \input{./figures/Xjk-Xik.tikz}
}
\end{minipage}
\end{minipage}}
\caption{Derived rules in \(\QCeq\) (batch~1). The rules \eqref{Xij-Xjkbis} and \eqref{Xij-Xjk} assume that \(i,j,k\) are pairwise distinct.}
\Description{Framed panel collecting a first batch of derived circuit equalities in the theory \(\QCeq\). The figure contains small one- and two-wire diagrammatic equations involving scalar phases, Hadamard-type gates, controlled operations, and two-level swaps. Each item is displayed as an equality between a left-hand and right-hand circuit diagram with a short printed tag. The last two rules are understood under the side condition that \(i,j,k\) are pairwise distinct.}
\label{fig:derived_QC_part1}
\end{figure*}

\begin{figure*}[!ht]
\centering
\fbox{
\begin{minipage}{0.975\textwidth}\centering
\begin{minipage}[t]{0.48\textwidth}\centering
\schemarule{\textsc{HCH}}{hch}{
\input{./figures/hchL.tikz} = \input{./figures/hchR.tikz}
}
\schemarule{\textsc{CHC}}{chc}{
\input{./figures/chcL.tikz} = \input{./figures/chcR.tikz}
}
\end{minipage}\hfill
\begin{minipage}[t]{0.48\textwidth}\centering
\schemarule{\textsc{PHP}}{php}{
\input{./figures/phpL.tikz} = \input{./figures/phpR.tikz}
}
\schemarule{\textsc{HPH}}{hph}{
\input{./figures/hphL.tikz} = \input{./figures/hphR.tikz}
}
\end{minipage}
\begin{minipage}[t]{0.48\textwidth}\centering
\schemarule{\textsc{CXP}}{cxp}{
\input{./figures/cxcR.tikz} = \input{./figures/cxcL.tikz}
}
\end{minipage}
\begin{minipage}[t]{0.7\textwidth}\centering
\schemarule{\textsc{CHD}}{axiom-controlled-hadamard-decompose}{
\input{./figures/controlled-Hadij.tikz}
  =
  \input{./figures/controlled-hadamard-decomposition.tikz}
}
\schemarule{EP}{cp-dec}{
\input{./figures/cp.tikz} = \input{./figures/cp-decomposition.tikz}
}
\schemarule{\textsc{P2D}}{axiom-ccp}{
\input{./figures/ccp.tikz} = \input{./figures/ccp-decomposition.tikz}
}
\schemarule{\textsc{P3D}}{axiom-cccp}{
\input{./figures/cccp.tikz} = \input{./figures/cccp-decomposition.tikz}
}
\end{minipage}
\begin{minipage}[t]{0.4\textwidth}\centering
\schemarule{\textsc{P2P}}{ccp-cp}{
  \input{./figures/comm-phasespiL.tikz} = \input{./figures/comm-phasespiR.tikz}
}
\schemarule{\textsc{N2P}}{ccnot-cp}{
  \input{./figures/comm-phasesbasecpiL.tikz} = \input{./figures/comm-phasesbasecpiR.tikz}
}
\end{minipage}
\begin{minipage}[t]{0.48\textwidth}\centering
\schemarule{\textsc{H2P}}{ch-cp}{
  \input{./figures/ch-cpL.tikz} = \input{./figures/ch-cpR.tikz}
}
\schemarule{\textsc{P2Pd}}{ccp-ccp-diff}{
  \input{./figures/comm-phasespipi-diffL.tikz} = \input{./figures/comm-phasespipi-diffR.tikz}
}
\end{minipage}
\end{minipage}}
\caption{Derived rules in \(\QCeq\) (batch~2). The rules \eqref{ccp-cp}, \eqref{ccnot-cp}, \eqref{ch-cp}, and \eqref{ccp-ccp-diff} assume \(k\neq \ell\).}
\Description{Framed panel presenting a second batch of derived circuit equalities in \(\QCeq\). It groups totalisation identities, interaction rules for two-level swaps with control, braid-like relations between distinct two-level swaps, symmetry properties for derived two-level gates, control-removal decompositions, and commutation schemata for operations controlled on distinct basis values. Each rule is shown as an equality between two circuit diagrams with a short printed tag. The last four rules are understood under the side condition \(k\neq \ell\).}
\label{fig:derived_QC_part2}
\end{figure*}

\begin{figure*}[!ht]
\centering
\fbox{
\begin{minipage}{0.975\textwidth}\centering
\begin{minipage}[t]{0.48\textwidth}\centering
\schemarule{\textsc{P2Ps}}{ccp-ccp}{
  \input{./figures/comm-phasespipipi-sameL.tikz} = \input{./figures/comm-phasespipipi-sameR.tikz}
}
\schemarule{\textsc{P2Hd}}{ccp-ch-diff}{
  \input{./figures/Hphasecommc2A.tikz} = \input{./figures/Hphasecommc2B.tikz}
}
\schemarule{\textsc{N2Pd}}{ccnot-cp-diff}{
  \input{./figures/comm-phases-copiL.tikz} = \input{./figures/comm-phases-copiR.tikz}
}
\schemarule{\textsc{N2Hd}}{ccnot-ch-diff}{
  \input{./figures/Hphase-comm3pi-sameL.tikz} = \input{./figures/Hphase-comm3pi-sameR.tikz}
}
\schemarule{\textsc{H2Hd}}{ch-ch-diff}{
  \input{./figures/CHCH-comm-diffL.tikz} = \input{./figures/CHCH-comm-diffR.tikz}
}
\schemarule{\textsc{NCHd}}{cnot-ch-diff}{
  \input{./figures/Hphase-commpi-diffL.tikz} = \input{./figures/Hphase-commpi-diffR.tikz}
}
\end{minipage}\hfill
\begin{minipage}[t]{0.48\textwidth}\centering
\schemarule{\textsc{P2N}}{ccp-cnot}{
  \input{./figures/comm-phasespipi-sameL.tikz} = \input{./figures/comm-phasespipi-sameR.tikz}
}
\schemarule{\textsc{N2H2}}{ccnot-ch-diff-2}{
  \input{./figures/Hphase-comm2pi-sameL.tikz} = \input{./figures/Hphase-comm2pi-sameR.tikz}
}
\schemarule{\textsc{N2N4}}{ccnot-ccnot-diff-both}{
  \input{./figures/comm-phasesc4pipi-sameL.tikz} = \input{./figures/comm-phasesc4pipi-sameR.tikz}
}
\schemarule{\textsc{N2N3}}{ccnot-ccnot-diff}{
  \input{./figures/comm-phasesc3pipi-sameL.tikz} = \input{./figures/comm-phasesc3pipi-sameR.tikz}
}
\schemarule{\textsc{P2N2}}{ccp-ccnot}{
  \input{./figures/comm-phases-cmpiR.tikz} = \input{./figures/comm-phases-cmpiL.tikz}
}
\schemarule{\textsc{N2N2}}{ccnot-ccnot}{
  \input{./figures/comm-phasescpipi-sameL.tikz} = \input{./figures/comm-phasescpipi-sameR.tikz}
}
\end{minipage}
\end{minipage}}
\caption{Derived rules in \(\QCeq\) (batch~3). All rules in this batch assume \(k\neq \ell\).}
\Description{Framed panel collecting additional derived commutation and conjugation equalities in \(\QCeq\) for gates controlled on distinct basis values. The displayed equations swap the order of pairs of differently-controlled components, covering controlled phases, controlled Hadamards, and CNOT-like controlled operations, together with special cases involving \(2\pi\) phases and higher-controlled variants. Each item is presented as a left-hand and right-hand circuit diagram connected by an equality sign, under the common side condition \(k\neq \ell\).}
\label{fig:derived_QC_part3}
\end{figure*}


\subsection{Proofs of the derived rules}\label{subsec:derived-rules-proofs}


\begin{proof}[Proof of \eqref{0phase}]
\[
  \input{./figures/0phase.tikz}
  \overset{\eqref{2pi}}{=}
  \input{./figures/0phase-2pi.tikz}
  \overset{\eqref{sum}}{=}
  \input{./figures/phase2pi.tikz}
  \overset{\eqref{2pi}}{=}
  \input{./figures/empty_diagram.tikz}
\]
\end{proof}


\begin{proof}[Proof of \eqref{piphase}]
\[
  \input{./figures/negpiphase.tikz}
  \overset{\eqref{2pi}}{=}
  \input{./figures/negpiphase-2pi.tikz}
  \overset{\eqref{sum}}{=}
  \input{./figures/piphase.tikz}
\]
\end{proof}

\begin{proof}[Proof of \eqref{Xiplusun-Xiplusun}]
\begin{gather*}
  \input{./figures/Xiiplusun-Xiiplusun.tikz}\\
  = \input{./figures/Xiiplusun-Xiiplusun-01.tikz}\\
  \overset{\eqref{Hiip-Hiip}}{=}
  \input{./figures/Xiiplusun-Xiiplusun-02.tikz}
  \overset{\eqref{sum}\,\eqref{2pi}}{=}
  \input{./figures/Hiip-Hiip.tikz}\\
  \overset{\eqref{Hiip-Hiip}}{=}
  \input{./figures/line.tikz}
\end{gather*}
\end{proof}

\begin{proof}[Proof of \eqref{Xij-Xij}]
If \(i = j\), this is trivial.

If \(i < j\)
\begin{gather*}
  \scalebox{0.75}{\input{./figures/Xij-Xij.tikz}}\\
  = \scalebox{0.75}{\input{./figures/Xij-Xij-low1.tikz}}\\
  \overset{\eqref{Xiplusun-Xiplusun}}{=}
  \scalebox{0.75}{\input{./figures/Xij-Xij-low2.tikz}}\\
  \overset{\eqref{Xiplusun-Xiplusun}}{=}
  \scalebox{0.75}{\input{./figures/Xij-Xij-low3.tikz}}
  \overset{\eqref{Xiplusun-Xiplusun}}{=}
  \scalebox{0.75}{\input{./figures/line.tikz}}
\end{gather*}

If \(i > j\), by definition \(X^{i,j} = X^{j,i}\) so this is covered by the previous case.
\end{proof}

\begin{proof}[Proof of \eqref{Hij-Hij}]
If \(i = j\), this is trivial.

If \(i < j\)
\begin{gather*}
  \input{./figures/Hij-Hij.tikz}\\
  = \input{./figures/Hij-Hij-01.tikz}\\
  \overset{\eqref{Xij-Xij}}{=}
  \input{./figures/Hij-Hij-02.tikz}\\
  \overset{\eqref{Hiip-Hiip}}{=}
  \input{./figures/Hij-Hij-03.tikz}\\
  \overset{\eqref{Xij-Xij}}{=}
  \input{./figures/line.tikz}
\end{gather*}

If \(i > j\),
\begin{gather*}
  \input{./figures/Hij-Hij.tikz}
  = \input{./figures/Hij-Hij-01b.tikz}\\
  \overset{\eqref{Xij-Xij}}{=}
  \input{./figures/Hij-Hij-02b.tikz}
  \overset{\text{Previous case}}{=}
  \input{./figures/Hij-Hij-03b.tikz}
  \overset{\eqref{Xij-Xij}}{=}
  \input{./figures/line.tikz}
\end{gather*}
\end{proof}


\begin{proof}[Proof of \eqref{hdec}]
\begin{gather*}
  \input{./figures/Hadij.tikz}
  \overset{\eqref{Hij-Hij}}{=}
  \input{./figures/Hdec-01.tikz}\\
  \overset{\eqref{0phase}}{=}
  \input{./figures/Hdec-02.tikz}\\
  \overset{\eqref{eulerH}}{=}
  \input{./figures/hadamard-decomposition.tikz}
\end{gather*}
\end{proof}


\begin{proof}[Proof of \eqref{Htotal}]
\begin{gather*}
  \input{./figures/Htotal.tikz}
    \overset{\eqref{0phase},\eqref{sum}}{=}
    \input{./figures/Htotal-01.tikz}\\
    \overset{\eqref{axiom-total-phase}}{=}
    \input{./figures/Htotal-02.tikz}\\
    \overset{\eqref{0phase},\eqref{sum}, \eqref{cp-cp}}{=}
    \input{./figures/Htotal-03.tikz}\\
    \overset{\eqref{Hphasecomm}}{=}
    \input{./figures/Htotal-04.tikz}\\
  \overset{\eqref{0phase},\eqref{sum}, \eqref{cp-cp}}{=}
  \input{./figures/Htotal-05.tikz}\\
  \overset{\eqref{axiom-total-phase}}{=}
    \input{./figures/Htotal-06.tikz}\\
    \overset{\eqref{0phase},\eqref{sum}}{=}
    \input{./figures/totalH.tikz}
\end{gather*}
\end{proof}

\begin{proof}[Proof of \eqref{Rxtotal}]
\begin{gather*}
  \input{./figures/Rxtotal.tikz}
    = \input{./figures/Rxtotal-01.tikz}\\
    \overset{\eqref{Htotal}}{=}
    \input{./figures/Rxtotal-02.tikz}\\
  \overset{\eqref{cp-cp}}{=}
    \input{./figures/Rxtotal-03.tikz}\\
    \overset{\eqref{Htotal}}{=}
    \input{./figures/Rxtotal-04.tikz}
    = \input{./figures/totalRx.tikz}
\end{gather*}
\end{proof}

\begin{proof}[Proof of \eqref{XCtrl}]
\begin{gather*}
  \input{./figures/CtrlX.tikz}
  = \input{./figures/XCtrl-01.tikz}\\
  \overset{\eqref{Hij-Hij}}{=}
  \input{./figures/XCtrl-02.tikz}\\
  \overset{\eqref{eulerH}}{=}
  \input{./figures/XCtrl-03.tikz}\\
  \overset{\eqref{cp-cp}, \eqref{Htotal}, \eqref{sum}}{=}
  \scalebox{0.90}{\input{./figures/XCtrl-04.tikz}}\\
  \overset{\eqref{sum}, \eqref{piphase}}{=}
  \input{./figures/XCtrl-05.tikz}\\
  \overset{\eqref{hdec}}{=}
  \input{./figures/XCtrl.tikz}
\end{gather*}
\end{proof}


\begin{proof}[Proof of \eqref{Xij-Xjkbis}]
\begin{gather*}
  \input{./figures/Xij-Xjk.tikz}\\
    = \input{./figures/Xij-Xjk-01.tikz}\\
    \overset{\eqref{XH}}{=}
    \input{./figures/Xij-Xjk-02.tikz}\\
    \overset{\eqref{Hphasecomm}}{=}
    \input{./figures/Xij-Xjk-03.tikz}\\
    \overset{\eqref{Hij-Hij}}{=}
    \input{./figures/Xij-Xjk-04.tikz}\\
    = \input{./figures/Xik-Xij.tikz}
\end{gather*}
\end{proof}

\begin{proof}[Proof of \eqref{Xij-Xjk}]
\begin{gather*}
  \input{./figures/Xij-Xjk.tikz}
    \overset{\eqref{Xij-Xij}}{=}
    \input{./figures/Xjk-Xik-01.tikz}\\
    \overset{\eqref{Xij-Xjkbis}}{=}
    \input{./figures/Xjk-Xik-02.tikz}
    \overset{\eqref{Xij-Xij}}{=}
    \input{./figures/Xjk-Xik.tikz}
\end{gather*}
\end{proof}


\begin{proof}[Proof of \eqref{XCtrlbis}]
\begin{gather*}
  \input{./figures/XCtrlbis.tikz}
    \overset{\eqref{Xij-Xij}}{=}
    \input{./figures/XCtrlbis-01.tikz}\\
    \overset{\eqref{XCtrl}}{=}
    \input{./figures/XCtrlbis-02.tikz}
    \overset{\eqref{Xij-Xij}}{=}
    \input{./figures/CtrlXbis.tikz}
\end{gather*}
\end{proof}

\begin{proof}[Proof of \eqref{RxSym}]
\begin{gather*}
  \scalebox{0.67}{\input{./figures/Rxij.tikz}}
    = \scalebox{0.67}{\input{./figures/Rxij-01.tikz}}\\
  \overset{\eqref{sum}\,\eqref{2pi}}{=}
    \scalebox{0.67}{\input{./figures/Rxij-02.tikz}}\\
    \overset{\eqref{cp-cp}}{=}
    \scalebox{0.67}{\input{./figures/Rxij-02b.tikz}}\\
  \overset{\eqref{Hij-Hij}}{=}
    \scalebox{0.67}{\input{./figures/Rxij-03.tikz}}\\
    = \scalebox{0.67}{\input{./figures/Rxij-04.tikz}}\\
    \overset{\eqref{XH}}{=}
    \scalebox{0.67}{\input{./figures/Rxij-05.tikz}}\\
    \overset{\eqref{XCtrl},\eqref{XCtrlbis}}{=}
    \scalebox{0.67}{\input{./figures/Rxij-06.tikz}}\\
    \overset{\eqref{XH}}{=}
    \scalebox{0.67}{\input{./figures/Rxij-07.tikz}}\\
    = \scalebox{0.67}{\input{./figures/Rxij-08.tikz}}\\
    \overset{\eqref{Xij-Xij}}{=}
    \scalebox{0.67}{\input{./figures/Rxji.tikz}}
\end{gather*}
\end{proof}

\begin{proof}[Proof of \eqref{hch}]
\begin{gather*}
  \input{./figures/hchL.tikz}
  \overset{\eqref{axiom-total-hadamard}}{=}
  \input{./figures/hch-01.tikz}\\
  \overset{\eqref{ch-ch}}{=}
  \input{./figures/hch-02.tikz}\\
  \overset{\eqref{ch-ccp}}{=}
  \input{./figures/hch-03.tikz}\\
  \overset{\eqref{ch-ch}}{=}
  \input{./figures/hch-04.tikz}
  \overset{\eqref{Hij-Hij}}{=}
  \input{./figures/hchR.tikz}
\end{gather*}
\end{proof}

\begin{proof}[Proof of \eqref{chc}]
\begin{gather*}
  \input{./figures/chcL.tikz}
  = \input{./figures/chc-00.tikz}\\
  \overset{\eqref{axiom-total-phase}}{=}
  \input{./figures/chc-00b.tikz}\\
  \overset{\eqref{cp-cp}}{=}
  \input{./figures/chc-01.tikz}\\
  \overset{\eqref{axiom-compat}}{=}
  \input{./figures/chc-02.tikz}\\
  \overset{\eqref{ch-ccp}}{=}
  \input{./figures/chc-03.tikz}\\
  \overset{\eqref{axiom-compat}}{=}
  \input{./figures/chc-04.tikz}\\
  \overset{\eqref{cp-cp}}{=}
  \input{./figures/chc-05.tikz}\\
  \overset{\eqref{sum}, \eqref{0phase}}{=}
  \input{./figures/chc-06.tikz}\\
  \overset{\eqref{axiom-compat}}{=}
  \input{./figures/chcR.tikz}
\end{gather*}
\end{proof}

\begin{proof}[Proof of \eqref{php}]
\begin{gather*}
  \input{./figures/phpL.tikz}
  = 
  \input{./figures/php-00.tikz}\\
  \overset{\eqref{axiom-total-phase}}{=}
  \input{./figures/php-00b.tikz}\\
   \overset{\eqref{cp-cp}}{=}
   \input{./figures/php-01.tikz}\\
  \overset{\eqref{axiom-compat-pi}, \eqref{axiom-compat}}{=}
  \input{./figures/php-02.tikz}\\
  \overset{\eqref{ch-ccnot}}{=}
  \input{./figures/php-03.tikz}\\
  \overset{\eqref{axiom-compat-pi}, \eqref{axiom-compat}}{=}
  \input{./figures/php-04.tikz}\\
  \overset{\eqref{cp-cp}}{=}
  \input{./figures/php-05.tikz}\\
  \overset{\eqref{sum}, \eqref{0phase}}{=}
  \input{./figures/php-06.tikz}\\
  \overset{\eqref{axiom-compat-pi}, \eqref{axiom-compat}}{=}
  \input{./figures/phpR.tikz}
\end{gather*}
\end{proof}

\begin{proof}[Proof of \eqref{hph}]
\begin{gather*}
  \input{./figures/hphL.tikz}
  \overset{\eqref{axiom-total-hadamard}}{=}
  \input{./figures/hph-01.tikz}\\
  \overset{\eqref{ch-ch}}{=}
  \input{./figures/hph-02.tikz}\\
  \overset{\eqref{ch-ccnot}}{=}
  \input{./figures/hph-03.tikz}\\
  \overset{\eqref{ch-ch}, \eqref{Hij-Hij}}{=}
  \input{./figures/hphR.tikz}
\end{gather*}
\end{proof}

\begin{proof}[Proof of \eqref{cxp}]
\begin{gather*}
  \input{./figures/cxcR.tikz}
  = \input{./figures/cxc-01.tikz}\\
  \overset{\eqref{sum}\,\eqref{0phase}}{=}
  \input{./figures/cxc-02.tikz}\\
  \overset{\eqref{cp-cp}}{=}
  \input{./figures/cxc-03.tikz}\\
  \overset{\eqref{chc}}{=}
  \input{./figures/cxc-04.tikz}\\
  \overset{\eqref{cp-cnot}\,\eqref{sum}\,\eqref{0phase}}{=}
  \input{./figures/cxc-05.tikz}
  = \input{./figures/cxcL.tikz}
\end{gather*}
\end{proof}


\begin{proof}[Proof of \eqref{axiom-controlled-hadamard-decompose}]
\begin{gather*}
  \input{./figures/controlled-Hadij.tikz}
  \overset{\eqref{hdec}}{=}
  \input{./figures/ch-01.tikz}\\
  = \input{./figures/ch-02.tikz}\\
  \overset{\eqref{sum}}{=}
  \input{./figures/ch-04.tikz}\\
  \overset{\eqref{Hij-Hij}}{=}
  \input{./figures/ch-05.tikz}\\
  = \input{./figures/ch-06.tikz}\\
  \overset{\eqref{cp-cp}}{=}
  \input{./figures/ch-07.tikz}\\
  \overset{\eqref{XCtrl}}{=}
  \input{./figures/ch-07b.tikz}\\
  \overset{\eqref{sum}}{=}
  \input{./figures/ch-08.tikz}\\
  \overset{\eqref{chc}}{=}
  \input{./figures/ch-09.tikz}\\
  \overset{\eqref{cxp}}{=}
  \input{./figures/ch-10.tikz}\\
  \overset{\eqref{sum}}{=}
  \input{./figures/ch-11.tikz}\\
  \overset{\eqref{cxp}}{=}
  \input{./figures/ch-12b.tikz}\\
  = \input{./figures/ch-12.tikz}\\
  \overset{\eqref{hch}}{=}
  \input{./figures/ch-13.tikz}\\
  \overset{\eqref{hch}}{=}
  \input{./figures/ch-14.tikz}\\
  \overset{\eqref{sum}\,\eqref{0phase}}{=}
  \scalebox{0.88}{\input{./figures/ch-15.tikz}}\\
  \overset{\eqref{chc}}{=}
  \scalebox{0.88}{\input{./figures/ch-16.tikz}}\\
  \overset{\eqref{sum}\,\eqref{0phase}}{=}
  \input{./figures/ch-17.tikz}\\
  \overset{\eqref{hch}}{=}
  \input{./figures/controlled-hadamard-decomposition.tikz}
\end{gather*}
\end{proof}

\begin{proof}[Proof of \eqref{cp-dec}]
\begin{gather*}
  \input{./figures/cp-decomposition.tikz}\\
  = \input{./figures/cp-fold.tikz}\\
  \overset{\eqref{XCtrl}}{=} \input{./figures/cp-fold-01.tikz}\\
  \overset{\eqref{Xij-Xij}}{=} \input{./figures/cp-fold-02.tikz}\\
  \overset{\eqref{cp-cp}, \eqref{sum}}{=} \input{./figures/cp-fold-03.tikz}\\
  \overset{\eqref{axiom-total-phase}}{=} \input{./figures/cp-fold-04.tikz}\\
  \overset{\eqref{sum}, \eqref{0phase}}{=} \input{./figures/cp.tikz}
\end{gather*}
\end{proof}

\begin{proof}[Proof of \eqref{axiom-ccp}]
\begin{gather*}
  \input{./figures/ccp.tikz}
  \overset{\eqref{cp-dec}}{=}
  \input{./figures/ccp-01.tikz}\\
  \overset{\eqref{sum}\,\eqref{0phase}}{=}
  \scalebox{0.93}{\input{./figures/ccp-02.tikz}}\\
  \overset{\eqref{chc}}{=}
  \scalebox{0.93}{\input{./figures/ccp-03.tikz}}\\
  \overset{\eqref{hch}}{=}
  \input{./figures/ccp-04.tikz}\\
  \overset{\eqref{sum}\,\eqref{0phase}}{=}
  \input{./figures/ccp-decomposition.tikz}
\end{gather*}
\end{proof}

\begin{proof}[Proof of \eqref{axiom-cccp}]
\begin{gather*}
  \input{./figures/cccp.tikz}
  \overset{\eqref{axiom-ccp}}{=}
  \input{./figures/cccp-02.tikz}\\
  \overset{\eqref{hph}}{=}
  \input{./figures/cccp-03.tikz}\\
  \overset{\eqref{hch}}{=}
  \input{./figures/cccp-03d.tikz}\\
  \overset{\eqref{sum}\,\eqref{2pi}}{=}
  \scalebox{0.86}{\input{./figures/cccp-04.tikz}}\\
  \overset{\eqref{php}}{=}
  \scalebox{0.86}{\input{./figures/cccp-05.tikz}}\\
  \overset{\eqref{sum}\,\eqref{2pi}}{=}
  \scalebox{0.90}{\input{./figures/cccp-06.tikz}}\\
  \overset{\eqref{sum}\,\eqref{2pi}}{=}
  \input{./figures/cccp-07.tikz}\\
  \overset{\eqref{hch}}{=}
  \input{./figures/cccp-decomposition.tikz}
\end{gather*}
\end{proof}


\begin{proof}[Proof of \eqref{ccp-cp}]
\begin{gather*}
  \input{./figures/comm-phasespiL.tikz}
  \overset{\eqref{axiom-ccp}}{=}
  \scalebox{0.90}{\input{./figures/comm-phasespi-01.tikz}}\\
  \overset{\eqref{cp-cnot}}{=}
  \scalebox{0.90}{\input{./figures/comm-phasespi-02.tikz}}\\
  \overset{\eqref{cp-cp}}{=}
  \scalebox{0.90}{\input{./figures/comm-phasespi-03.tikz}}\\
  \overset{\eqref{axiom-ccp}}{=}
  \input{./figures/comm-phasespiR.tikz}
\end{gather*}
\end{proof}

\begin{proof}[Proof of \eqref{ccnot-cp}]
\begin{gather*}
  \input{./figures/comm-phasesbasecpiL.tikz}
  \overset{\eqref{axiom-cccp}}{=}
  \scalebox{0.90}{\input{./figures/comm-phasescpi-01.tikz}}\\
  \overset{\eqref{ccp-cp}}{=}
  \scalebox{0.90}{\input{./figures/comm-phasescpi-02.tikz}}\\
  \overset{\eqref{axiom-cccp}}{=}
  \input{./figures/comm-phasesbasecpiR.tikz}
\end{gather*}
\end{proof}

\begin{proof}[Proof of \eqref{ch-cp}]
\begin{gather*}
  \input{./figures/ch-cpL.tikz}
  \overset{\eqref{axiom-controlled-hadamard-decompose}}{=}
  \scalebox{0.82}{\input{./figures/ch-cp-01.tikz}}\\
  \overset{\eqref{cp-cnot}}{=}
  \scalebox{0.82}{\input{./figures/ch-cp-02.tikz}}\\
  \overset{\eqref{axiom-controlled-hadamard-decompose}}{=}
  \input{./figures/ch-cpR.tikz}
\end{gather*}
\end{proof}

\begin{proof}[Proof of \eqref{ccp-ccp-diff}]
\begin{gather*}
  \scalebox{0.68}{\input{./figures/comm-phasespipi-diffL.tikz}}\\
  \overset{\eqref{axiom-ccp}}{=}
  \scalebox{0.56}{\input{./figures/comm-phasespipi-diff-01.tikz}}\\
  \overset{\eqref{cp-cnot}}{=}
  \scalebox{0.56}{\input{./figures/comm-phasespipi-diff-02.tikz}}\\
  \overset{\eqref{cp-cp}}{=}
  \scalebox{0.56}{\input{./figures/comm-phasespipi-diff-03.tikz}}\\
  \overset{\eqref{cnot-cnot-diff}}{=}
  \scalebox{0.56}{\input{./figures/comm-phasespipi-diff-04.tikz}}\\
  \overset{\eqref{cp-cnot}}{=}
  \scalebox{0.56}{\input{./figures/comm-phasespipi-diff-05.tikz}}\\
  \overset{\eqref{axiom-ccp}}{=}
  \scalebox{0.68}{\input{./figures/comm-phasespipi-diffR.tikz}}
\end{gather*}
\end{proof}

\begin{proof}[Proof of \eqref{ccp-cnot}]
\begin{gather*}
  \input{./figures/comm-phasespipi-sameL.tikz}
  \overset{\eqref{axiom-ccp}}{=}
  \scalebox{0.90}{\input{./figures/comm-phasespipi-same-01.tikz}}\\
  \overset{\eqref{hch}}{=}
  \scalebox{0.90}{\input{./figures/comm-phasespipi-same-02.tikz}}\\
  \overset{\eqref{ch-ccp}\,\eqref{cp-cnot}\,\eqref{cnot-cnot}}{=}
  \scalebox{0.86}{\input{./figures/comm-phasespipi-same-03.tikz}}\\
  \overset{\eqref{cp-cnot}}{=}
  \scalebox{0.90}{\input{./figures/comm-phasespipi-same-04.tikz}}\\
  \overset{\eqref{hch}}{=}
  \scalebox{0.90}{\input{./figures/comm-phasespipi-same-05.tikz}}\\
  \overset{\eqref{axiom-ccp}}{=}
  \input{./figures/comm-phasespipi-sameR.tikz}
\end{gather*}
\end{proof}

\begin{proof}[Proof of \eqref{ccp-ccp}]
\begin{gather*}
  \input{./figures/comm-phasespipipi-sameL.tikz}
  \overset{\eqref{axiom-ccp}}{=}
  \scalebox{0.9}{\input{./figures/comm-phasespipipi-same-01.tikz}}\\
  \overset{\eqref{hch}}{=}
  \scalebox{0.9}{\input{./figures/comm-phasespipipi-same-02.tikz}}\\
  \overset{\eqref{ccp-cp}}{=}
  \scalebox{0.9}{\input{./figures/comm-phasespipipi-same-03.tikz}}\\
  \overset{\eqref{ccp-cnot}\,\eqref{ch-ccp}\,\eqref{cp-cnot}}{=}
  \scalebox{0.9}{\input{./figures/comm-phasespipipi-same-04.tikz}}\\
  \overset{\eqref{hch}}{=}
  \scalebox{0.9}{\input{./figures/comm-phasespipipi-same-05.tikz}}\\
  \overset{\eqref{axiom-ccp}}{=}
  \scalebox{0.9}{\input{./figures/comm-phasespipipi-sameR.tikz}}
\end{gather*}
\end{proof}

\begin{proof}[Proof of \eqref{ccp-ch-diff}]
\begin{gather*}
  \input{./figures/Hphasecommc2A.tikz}
  \overset{\eqref{axiom-controlled-hadamard-decompose}}{=}
  \scalebox{0.86}{\input{./figures/Hphasecommc2-01.tikz}}\\
  = \input{./figures/Hphasecommc2-02.tikz}\\
  \overset{\eqref{ccp-ccp-diff}}{=}
  \scalebox{0.88}{\input{./figures/Hphasecommc2-03.tikz}}\\
  = \input{./figures/Hphasecommc2-04.tikz}\\
  \overset{\eqref{axiom-controlled-hadamard-decompose}}{=}
  \input{./figures/Hphasecommc2B.tikz}
\end{gather*}
\end{proof}

\begin{proof}[Proof of \eqref{ccnot-cp-diff}]
\begin{gather*}
  \input{./figures/comm-phases-copiL.tikz}
  \overset{\eqref{axiom-cccp}}{=}
  \scalebox{0.90}{\input{./figures/comm-phasesocpi-01.tikz}}\\
  \overset{\eqref{ccp-ccp-diff}}{=}
  \scalebox{0.90}{\input{./figures/comm-phasesocpi-02.tikz}}\\
  \overset{\eqref{axiom-cccp}}{=}
  \input{./figures/comm-phases-copiR.tikz}
\end{gather*}
\end{proof}

\begin{proof}[Proof of \eqref{ccnot-ch-diff}]
\begin{gather*}
  \input{./figures/Hphase-comm3pi-sameL.tikz}
  \overset{\eqref{axiom-controlled-hadamard-decompose}}{=}
  \scalebox{0.86}{\input{./figures/Hphase-comm3pi-same-01.tikz}}\\
  = \input{./figures/Hphase-comm3pi-same-02.tikz}\\
  \overset{\eqref{ccnot-cp-diff}}{=}
  \scalebox{0.88}{\input{./figures/Hphase-comm3pi-same-03.tikz}}\\
  = \input{./figures/Hphase-comm3pi-same-04.tikz}\\
  \overset{\eqref{axiom-controlled-hadamard-decompose}}{=}
  \input{./figures/Hphase-comm3pi-sameR.tikz}
\end{gather*}
\end{proof}

\begin{proof}[Proof of \eqref{ch-ch-diff}]
\begin{gather*}
  \scalebox{0.68}{\input{./figures/CHCH-comm-diffL.tikz}}\\
  \overset{\eqref{axiom-controlled-hadamard-decompose}}{=}
  \scalebox{0.58}{\input{./figures/CHCH-comm-diff-01.tikz}}\\
  = \scalebox{0.60}{\input{./figures/CHCH-comm-diff-02.tikz}}\\
  \overset{\eqref{ccp-ccp-diff}}{=}
   \scalebox{0.60}{\input{./figures/CHCH-comm-diff-03.tikz}}\\
  = \scalebox{0.58}{\input{./figures/CHCH-comm-diff-04.tikz}}\\
  \overset{\eqref{axiom-controlled-hadamard-decompose}}{=}
  \scalebox{0.68}{\input{./figures/CHCH-comm-diffR.tikz}}
\end{gather*}
\end{proof}

\begin{proof}[Proof of \eqref{cnot-ch-diff}]
\begin{gather*}
  \input{./figures/Hphase-commpi-diffL.tikz}
  \overset{\eqref{axiom-controlled-hadamard-decompose}}{=}
  \scalebox{0.86}{\input{./figures/Hphase-commpi-diff-01.tikz}}\\
  = \scalebox{0.90}{\input{./figures/Hphase-commpi-diff-02.tikz}}\\
  \overset{\eqref{ccp-ccp-diff}}{=}
  \input{./figures/Hphase-commpi-diff-03.tikz}\\
  = \input{./figures/Hphase-commpi-diff-04.tikz}\\
  \overset{\eqref{axiom-controlled-hadamard-decompose}}{=}
  \input{./figures/Hphase-commpi-diffR.tikz}
\end{gather*}
\end{proof}

\begin{proof}[Proof of \eqref{ccnot-ch-diff-2}]
\begin{gather*}
  \input{./figures/Hphase-comm2pi-sameL.tikz}
  \overset{\eqref{axiom-compat}}{=}
  \input{./figures/Hphase-comm2pi-same-01.tikz}\\
  \overset{\eqref{ch-ccnot}}{=}
  \input{./figures/Hphase-comm2pi-same-02.tikz}\\
  \overset{\eqref{axiom-compat}}{=}
  \input{./figures/Hphase-comm2pi-sameR.tikz}
\end{gather*}
\end{proof}

\begin{proof}[Proof of \eqref{ccnot-ccnot-diff-both}]
\begin{gather*}
  \input{./figures/comm-phasesc4pipi-sameL.tikz}
  \overset{\eqref{axiom-cccp}}{=}
  \scalebox{0.90}{\input{./figures/comm-phasesc4pipi-same-01.tikz}}\\
  \overset{\eqref{ccnot-cp-diff}}{=}
  \scalebox{0.95}{\input{./figures/comm-phasesc4pipi-same-02.tikz}}\\
  \overset{\eqref{axiom-cccp}}{=}
  \input{./figures/comm-phasesc4pipi-sameR.tikz}
\end{gather*}
\end{proof}

\begin{proof}[Proof of \eqref{ccnot-ccnot-diff}]
\begin{gather*}
  \scalebox{0.62}{\input{./figures/comm-phasesc3pipi-sameL.tikz}}
  \overset{\eqref{axiom-cccp}}{=}
  \scalebox{0.52}{\input{./figures/comm-phasesc3pipi-same-01.tikz}}\\
  \overset{\eqref{ccp-ccp-diff}}{=}
  \scalebox{0.52}{\input{./figures/comm-phasesc3pipi-same-02.tikz}}\\
  \overset{\eqref{ccp-ccp}}{=}
  \scalebox{0.52}{\input{./figures/comm-phasesc3pipi-same-03.tikz}}\\
  \overset{\eqref{ccp-ccp-diff}}{=}
  \scalebox{0.52}{\input{./figures/comm-phasesc3pipi-same-04.tikz}}\\
  \overset{\eqref{ccp-ccp-diff}}{=}
  \scalebox{0.52}{\input{./figures/comm-phasesc3pipi-same-05.tikz}}\\
  \overset{\eqref{axiom-cccp}}{=}
  \scalebox{0.62}{\input{./figures/comm-phasesc3pipi-sameR.tikz}}
\end{gather*}
\end{proof}

\begin{proof}[Proof of \eqref{ccp-ccnot}]
\begin{gather*}
  \input{./figures/comm-phases-cmpiR.tikz}
  \overset{\eqref{axiom-cccp}}{=}
  \scalebox{0.92}{\input{./figures/comm-phases-cmpi-01.tikz}}\\
  \overset{\eqref{ccp-ccp}}{=}
  \scalebox{0.92}{\input{./figures/comm-phases-cmpi-02.tikz}}\\
  \overset{\eqref{ccp-ccp-diff}}{=}
  \scalebox{0.92}{\input{./figures/comm-phases-cmpi-03.tikz}}\\
  \overset{\eqref{axiom-cccp}}{=}
  \input{./figures/comm-phases-cmpiL.tikz}
\end{gather*}
\end{proof}

\begin{proof}[Proof of \eqref{ccnot-ccnot}]
\begin{gather*}
  \scalebox{0.80}{\input{./figures/comm-phasescpipi-sameL.tikz}}
  \overset{\eqref{axiom-ccp}}{=}
  \scalebox{0.80}{\input{./figures/comm-phasesc1pipi-same-01.tikz}}\\
  \overset{\eqref{sum}\,\eqref{2pi}}{=}
  \scalebox{0.80}{\input{./figures/comm-phasesc1pipi-same-02.tikz}}\\
  \overset{\eqref{ccp-ccnot}}{=}
  \scalebox{0.80}{\input{./figures/comm-phasesc1pipi-same-03.tikz}}\\
  \overset{\eqref{php}}{=}
  \scalebox{0.80}{\input{./figures/comm-phasesc1pipi-same-04.tikz}}
\end{gather*}
For any \(i\) we have
\begin{gather}
  \scalebox{0.50}{\input{./figures/comm-phasesc1pipi-same-05-01.tikz}}
  \overset{\eqref{ccnot-ch-diff-2}}{=}
  \scalebox{0.50}{\input{./figures/comm-phasesc1pipi-same-05-02.tikz}}\\
  \overset{\eqref{ccp-ccnot}}{=}
  \scalebox{0.50}{\input{./figures/comm-phasesc1pipi-same-05-03.tikz}}\\
  \overset{\eqref{ccnot-ch-diff-2}}{=}
  \scalebox{0.50}{\input{./figures/comm-phasesc1pipi-same-05-04.tikz}}\\
  \overset{\eqref{ccp-ccnot}}{=}
  \scalebox{0.50}{\input{./figures/comm-phasesc1pipi-same-05-05.tikz}}\\
  \overset{\eqref{ccp-ccnot}\,\eqref{ccnot-ch-diff-2}}{=}
  \scalebox{0.50}{\input{./figures/comm-phasesc1pipi-same-05-06.tikz}}
\end{gather}
and hence, applying this for all \(i\),
\begin{gather*}
  \scalebox{0.60}{\input{./figures/comm-phasesc1pipi-same-06.tikz}}\\
  \overset{\eqref{ccp-ccnot}}{=}
  \scalebox{0.60}{\input{./figures/comm-phasesc1pipi-same-07.tikz}}\\
  \overset{\eqref{php}}{=}
  \scalebox{0.60}{\input{./figures/comm-phasesc1pipi-same-08.tikz}}\\
  \overset{\eqref{ccp-ccnot}\,\eqref{sum}\,\eqref{2pi}}{=}
  \scalebox{0.60}{\input{./figures/comm-phasesc1pipi-same-09.tikz}}\\
  \overset{\eqref{axiom-ccp}}{=}
  \scalebox{0.60}{\input{./figures/comm-phasescpipi-sameR.tikz}}
\end{gather*}
\end{proof}

%% file: appendix_compatibility.tex

\section{Compatibility}\label{app:compatibility}

We prove Theorem~\ref{thm:compatibility}: mixed-control compatibility, in the
sense of Definition~\ref{def:control-algebra}, is derivable in \(\QCeq\).  The
argument is purely syntactic and uses the same normalisation step as
Appendix~\ref{app:commuting}.  Throughout we work modulo:
\begin{itemize}
\item the strict symmetric monoidal coherence in \(\cat{CQC}_d\)
      (associativity, unit, symmetry, interchange), so wire permutations can be
      freely inserted and removed; and
\item the coherence laws for each control functor \(\ctrl_k\)
      (functoriality, strength, and naturality), so controls can be pushed
      through structural contexts.
\end{itemize}

The reduction is a factor normalisation.  Once the two nested-control circuits
are in that form, compatibility becomes a local swap-through calculation:
every factor except the doubly controlled \(\pi\)-phase case passes through the
outer swap by structural naturality.

\subsection{Factor normalisation}\label{app:separate-normalisation}

This subsection proves the factor normalisation used in
Lemma~\ref{lemma-Greducible}.  We use the set \(\mathcal G\) of
Definition~\ref{def:G-factors}.  A contextual \(\mathcal G\)-factor is a copy
of one element of \(\mathcal G\), tensored with identities on the left and on
the right.  Structural symmetries are written as products of adjacent swaps,
which are themselves \(\mathcal G\)-factors.  Let \(\varepsilon_n\) denote the
empty sequential product at arity \(n\).

Before applying the clauses, expand \(X^{(i,j)}\), \(H^{(i,j)}\), and
\(R_x^{(i,j)}(\theta)\) into adjacent gates by
Appendix~\ref{app:derived-notation}.  The first clauses are
\[
\begin{array}{rcl}
\Separate(\id_n)&=&\varepsilon_n,\\
\Separate(T_1\circ T_2)&=&\Separate(T_1)\circ\Separate(T_2),\\
\Separate(T_1\otimes T_2)&=&
(\Separate(T_1)\otimes\id_{\arity{T_2}})\circ
(\id_{\arity{T_1}}\otimes\Separate(T_2)).
\end{array}
\]
When a control reaches a composite, tensor, adjacent target swap, or generator,
use
\[
\begin{array}{rcl}
\Separate(\ctrl_k(T_1\circ T_2))&=&
  \Separate(\ctrl_k(T_1))\circ\Separate(\ctrl_k(T_2)),\\
\Separate(\ctrl_k(T_1\otimes T_2))&=&
(\Separate(\ctrl_k(T_1))\otimes\id_{\arity{T_2}})\circ
(\id_1\otimes\sigma_{\arity{T_1},\arity{T_2}})\circ{}\\
&&(\Separate(\ctrl_k(T_2))\otimes\id_{\arity{T_1}})\circ
(\id_1\otimes\sigma_{\arity{T_2},\arity{T_1}}),\\
\Separate(\ctrl_k(\sigma_{1,1}))&=&
\Separate\Bigl(
\prod_{0\le a<b<d}
\bigl(\ctrl_k(\ctrl_a(X^{(a,b)}))\circ
(\id_1\otimes\sigma_{1,1})\circ{}\\
&&\qquad
\ctrl_k(\ctrl_a(X^{(a,b)}))\circ(\id_1\otimes\sigma_{1,1})\circ
\ctrl_k(\ctrl_a(X^{(a,b)}))\bigr)\Bigr),\\
\Separate(\ctrl_k(\id_n))&=&\varepsilon_{n+1},\\
\Separate(\ctrl_k(\input{./figures/phasealpha.tikz}))&=&\ctrl_k(\input{./figures/phasealpha.tikz}),\\
\Separate(\ctrl_k(H^{(r,r+1)}))&=&\Separate(\mathrm{CHD}_{k,r}),\\
\Separate(\ctrl_{k\ell}(\input{./figures/piphase.tikz}))&=&
  \ctrl_{k\ell}(\input{./figures/piphase.tikz}),\\
\Separate(\ctrl_{k\ell}(\input{./figures/phasebeta.tikz}))&=&
  \Separate(\mathrm{P2D}_{k,\ell,\beta}),\\
\Separate(\ctrl_{k\ell m}(\input{./figures/phasealpha.tikz}))&=&
  \Separate(\mathrm{P3D}_{k,\ell,m,\alpha}),\\
\Separate(\ctrl_k(T))&=&\Separate(\ctrl_k(\Separate(T)))
\quad\text{otherwise.}
\end{array}
\]
For a controlled block permutation, first write the block permutation as a
product of adjacent target swaps.  The adjacent-swap clause is
\eqref{swap-decomp} with the outer \(k\)-control placed on the three
non-structural factors in each \((a,b)\)-term; the two occurrences of
\(\id_1\otimes\sigma_{1,1}\) are the target crossings displayed in the
left-hand side of \eqref{swap-decomp}.  If \(F\) is already a contextual
\(\mathcal G\)-factor and none of the controlled clauses applies, set
\(\Separate(F)=F\).
Subscripts such as \(k\ell m\) abbreviate nested controls in the input of a
recursive clause.  The terms
\(\mathrm{CHD}_{k,r}\), \(\mathrm{P2D}_{k,\ell,\beta}\), and
\(\mathrm{P3D}_{k,\ell,m,\alpha}\) denote the right-hand sides of the derived
rules \eqref{axiom-controlled-hadamard-decompose}, \eqref{axiom-ccp}, and
\eqref{axiom-cccp}.

\begin{lemma}\label{lem:separate-terminates}
\Separate{} terminates on all inputs \(T\) and returns a possibly empty
sequential product of contextual \(\mathcal G\)-factors, with
\(\QCeq\vdash T=\Separate(T)\).
\end{lemma}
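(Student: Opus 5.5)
We argue by well-founded induction on a lexicographic measure. Soundness of each clause (\(\QCeq\vdash T=\Separate(T)\)) and correctness of the output shape then follow clause by clause along the same induction. For a raw term \(T\), define for every occurrence of a generator (phase, adjacent Hadamard, structural swap) its \emph{control depth}, the number of \(\ctrl\) constructors above it. The measure is then a multiset of weights, ordered by the Dershowitz--Manna multiset extension. Each generator occurrence at control depth \(c\) gets weight \(w(g,c)\), where \(w\) is strictly increasing in \(c\). It is also arranged so that \(\ctrl_k(\sigma_{1,1})\) outweighs every gate of \(\ctrl_k(\ctrl_a(X^{(a,b)}))\) after expansion into adjacent gates. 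To this multiset we add the syntactic size of \(T\) as a secondary component.

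First we treat the structural clauses. The composite and tensor clauses, both uncontrolled and controlled, leave the multiset of generator weights unchanged, since controls are distributed, not duplicated. Their only effect is to move a \(\ctrl_k\) strictly inward or to remove a constructor node, so the secondary size measure decreases. The inserted \(\sigma\)'s are uncontrolled and of weight zero, or we simply let uncontrolled swaps be atoms. Semantically, the controlled tensor clause is justified by the control-functor laws \(\ctrl(f\otimes\id_m)=\ctrl(f)\otimes\id_m\) and permutation naturality, both of which hold in the structural congruence \(\equiv\). The controlled composite clause is functoriality. The clause \(\ctrl_k(\id_n)\mapsto\varepsilon_{n+1}\) is the identity law.

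Next come the generator clauses. Here \(\ctrl_k(\phi)\) and \(\ctrl_{k\ell}(\pi)\) are already in \(\mathcal G\). For \(\ctrl_k(H^{(r,r+1)})\), \(\ctrl_{k\ell}(\beta)\) and \(\ctrl_{k\ell m}(\alpha)\) we invoke the derived rules \eqref{axiom-controlled-hadamard-decompose}, \eqref{axiom-ccp} and \eqref{axiom-cccp}, proved in Appendix~\ref{app:derivations}. This is the step I expect to be the main obstacle, because the decrease must be read off from the displayed right-hand sides. We check that every gate in \(\mathrm{CHD}_{k,r}\) carries at most one control (singly controlled phases, doubly controlled \(\pi\)-phases and plain Hadamards), and likewise that \(\mathrm{P2D}\) and \(\mathrm{P3D}\) contain only factors with strictly fewer controls on non-\(\pi\) phases, together with doubly controlled \(\pi\)-phases that are already terminal. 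If some right-hand side contains a derived \(X^{(i,j)}\) or \(H^{(i,j)}\) under a control, we expand it into adjacent gates at the same depth, and the weight bound \(w(g,c)\) must dominate the finitely many resulting gates at depth \(c-1\) or terminal forms. Since the right-hand sides are fixed finite schemata, we can choose \(w(\cdot,c)\) large enough, for instance \(w(g,c)=N^c\) with \(N\) exceeding the length of every schema right-hand side. For the swap clause, \eqref{swap-decomp} replaces \(\ctrl_k(\sigma_{1,1})\) by finitely many \(\ctrl_k\ctrl_a\) adjacent gates and uncontrolled target swaps. Controlled Hadamards there eventually reduce through the previous cases, and the weight assignment for swaps is chosen above these.

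Finally we handle the fallback clause \(\Separate(\ctrl_k(T))=\Separate(\ctrl_k(\Separate(T)))\). It applies only when \(T\) is a more deeply nested control, and the inner call is on a strictly smaller subterm. The outer call is then on \(\ctrl_k\) of a product of \(\mathcal G\)-factors, whose gates sit at control depth at most \(3\) or are terminal. We argue that the composite and tensor clauses reduce this to controlled \(\mathcal G\)-factors of the listed shapes. The one case needing care is \(\ctrl_k\ctrl_\ell\ctrl_m(\pi)\), which is handled by \eqref{axiom-cccp} with \(\alpha=\pi\). A triply controlled \(\ctrl_{k\ell}\)-\(\pi\) phase has depth \(3\) and is again reduced by \(\mathrm{P3D}\), so every nested control is eventually covered by a finite clause. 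Correctness of the output follows because every terminal clause emits only contextual \(\mathcal G\)-factors or \(\varepsilon_n\). Provable equality follows because each clause is an instance of \(\equiv\), of a derived rule, or of the definitional expansion of Appendix~\ref{app:derived-notation}, all of which are closed under contexts and \(\ctrl_k(-)\) by Definition~\ref{def:eq-theory-CQC}.
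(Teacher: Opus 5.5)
\textbf{There is a genuine gap in termination.} Your clause-by-clause justification of \(\QCeq\vdash T=\Separate(T)\) is fine and matches the paper. The termination argument fails, however, because the measure you define does not decrease along the recursion.

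First, the weights cannot depend on control depth alone. By your own list, \(\mathrm{CHD}_{k,r}\) replaces a Hadamard at depth \(1\) with factors that include doubly controlled \(\pi\)-phases \(\ctrl_k(\ctrl_j(\pi))\), which are generators at depth \(2\). This also contradicts your claim that every gate of \(\mathrm{CHD}_{k,r}\) carries at most one control. Similarly, the fallback on \(\ctrl_k(\ctrl_\ell(H^{(r,r+1)}))\) produces \(\ctrl_k(\ctrl_\ell(\ctrl_j(\pi)))\) at depth \(3\) in place of a Hadamard at depth \(2\). With \(w(g,c)=N^c\), both steps increase the multiset. Taking \(N\) larger than the right-hand-side lengths does not help: a Dershowitz--Manna step only needs each new element to be smaller than a removed one, and it places no bound on how many new elements appear. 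You would need a type-sensitive weight, for instance one ranking a Hadamard at depth \(c\) above every phase at depth \(c+1\), checked against the actual right-hand sides.

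Second, and independently of how the weights are chosen, you never show that the outer call \(\Separate(\ctrl_k(\Separate(T)))\) of the fallback clause is smaller than \(\ctrl_k(T)\), and in general it is not.
\begin{itemize}
\item Take \(T=\ctrl_\ell(A\otimes B)\) with \(A=B=\ctrl_0(\pi)\). Then \(\ctrl_k(T)\) contains exactly two \(\pi\)-phases at depth \(3\). The inner call uses the controlled-tensor clause, which inserts two target swaps \(\id_1\otimes\sigma_{1,1}\). In \(\ctrl_k(\Separate(T))\) these lie under \(\ctrl_k\), so they are controlled swaps, and your swap clause forces them to carry a large positive weight. The new multiset therefore strictly contains the old one.
\item When no swaps are created, for example \(T=\ctrl_\ell(\pi\circ\pi)\), the multiset is unchanged but your secondary size grows, because distributing \(\ctrl_\ell\) over \(\circ\) duplicates control nodes.
\end{itemize}

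The missing ingredient is the closure statement your last paragraph only gestures at. For every product \(P\) of contextual \(\mathcal G\)-factors, \(\Separate(\ctrl_k(P))\) reaches only the finitely many cores \(\ctrl_k(\sigma_{1,1})\), \(\ctrl_k(H^{(r,r+1)})\), \(\ctrl_k(\ctrl_\ell(\beta))\) and \(\ctrl_k(\ctrl_\ell(\ctrl_m(\alpha)))\). Inspecting the right-hand sides of \eqref{swap-decomp}, \eqref{axiom-controlled-hadamard-decompose}, \eqref{axiom-ccp} and \eqref{axiom-cccp} shows that these expansions bottom out without cycling. Termination then follows by structural induction on \(T\), proving the statements for \(T\) and for \(\ctrl_k(T)\) simultaneously: the induction hypothesis covers the inner call, and the closure lemma covers the outer one.

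This is how the paper argues the fallback (``each new recursive call is then either one of the generator cases above or has smaller \(b\)''). Note that the paper's stated pair \((s(T),b(T))\) also increases in its first component at the expansion clauses. So in either version, it is this closure argument that has to carry the proof.
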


\begin{proof}
After the derived one-qudit notation has been expanded, order recursive calls
lexicographically by \(\mu(T)=(s(T),b(T))\).  Here \(s(T)\) is the number of
constructor nodes \(\circ,\otimes,\ctrl_k\) in the raw syntax of \(T\), and
\(b(T)\) is the number of subterms \(\ctrl_u(S)\) whose controlled body is not
a contextual \(\mathcal G\)-factor.  The usual lexicographic order on
\(\mathbb N^2\) is well founded.

The clauses for composition, tensor, controlled composition, and controlled
tensor call \(\Separate{}\) only on proper subterms, possibly with the same
outer control added.  In each case the constructor under inspection has been
removed, so \(s\) strictly decreases.  The identity clauses return the empty
product.  A plain contextual \(\mathcal G\)-factor returns itself.

The remaining cases occur when a control reaches a generator.  Controlled
phases and doubly controlled \(\pi\)-phases are already contextual
\(\mathcal G\)-factors.  For controlled adjacent swaps, the clause uses the
fixed swap presentation \eqref{swap-decomp}; the recursive calls then act on
controlled \(X^{(a,b)}\), which are expanded into adjacent gates by
Appendix~\ref{app:derived-notation}.  For controlled adjacent Hadamards,
twice-controlled phases, and three-times-controlled phases, the clauses use
the derived rules \eqref{axiom-controlled-hadamard-decompose},
\eqref{axiom-ccp}, and \eqref{axiom-cccp}.  Inspecting the right-hand sides of
these rules gives finite products in which the corresponding controlled body
has fewer outer controls before it reaches a \(\mathcal G\)-factor; this
strictly decreases \(b\).

In the final otherwise clause, the body \(T\) is first separated by induction.
Functoriality of \(\ctrl_k\) turns \(\ctrl_k(\Separate(T))\) into a product of
controls applied to already separated factors.  Each new recursive call is
then either one of the generator cases above or has smaller \(b\).  Hence the
procedure terminates.

Every step is justified by structural coherence, by a control-functor law, by
the fixed swap presentation \eqref{swap-decomp}, by a derived one-qudit
definition from Appendix~\ref{app:derived-notation}, or by one of the three
derived rules \eqref{axiom-controlled-hadamard-decompose},
\eqref{axiom-ccp}, and \eqref{axiom-cccp}.  Therefore the output is provably
equal to the input in \(\QCeq\).
\end{proof}

\begin{corollary}\label{corollary-Greducible}
Every morphism of \(\CQC\) is provably equal in \(\QCeq\) to a possibly empty
finite sequential product of contextual \(\mathcal{G}\)-factors.
\end{corollary}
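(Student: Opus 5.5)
The corollary follows directly from Lemma~\ref{lem:separate-terminates}, together with a check that the procedure is well defined on morphisms of \(\CQC\) and not only on raw terms. The plan has three steps.

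First, fix a morphism \(f:n\to n\) of \(\CQC\). Since \(\CQC(n,n)=\RawCQC_d(n,n)/{\equiv}\) (Definition~\ref{def:CQC_d}), choose a raw representative \(T\in\RawCQC_d(n,n)\). Expand the derived one-qudit notation of Appendix~\ref{app:derived-notation}; this step is definitional. The result is a raw term built only from identities, symmetries, \(\circ\), \(\otimes\), \(\ctrl_k(-)\), scalar phases, and adjacent Hadamards.

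Second, apply \(\Separate\) to \(T\). By Lemma~\ref{lem:separate-terminates}, the computation terminates. Its output is a possibly empty sequential product \(F_m\circ\cdots\circ F_1\) of contextual \(\mathcal G\)-factors, and \(\QCeq\vdash T=\Separate(T)\). The empty product \(\varepsilon_n\) is read as \(\id_n\), so the identity case is covered.

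Third, pass to the quotient. An equation derivable in \(\QCeq\) between raw terms holds between their classes, because \(\QCeq\) is an equational theory on \(\CQC\) (Definition~\ref{def:eq-theory-CQC}) and so contains the structural congruence \(\equiv\). Hence \(f=[T]\) and \(\QCeq\vdash f=[F_m]\circ\cdots\circ[F_1]\), with each \([F_i]\) the class of a contextual \(\mathcal G\)-factor. The result does not depend on the chosen representative, since any two representatives are \(\equiv\)-equal and therefore \(\QCeq\)-equal.

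The only point needing care is this independence of the raw representative. It matters because \(\Separate\) is defined by recursion on raw syntax, so two \(\equiv\)-equivalent inputs may give syntactically different products. The statement only asks for existence of some such product, so no invariance of \(\Separate\) itself is needed; the observation that \(\equiv\subseteq\QCeq\) is enough. This is exactly the content already recorded as Lemma~\ref{lemma-Greducible}, and the corollary restates it as a consequence of the termination lemma.
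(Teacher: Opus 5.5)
Your proposal is correct and follows the paper's route: the corollary is the immediate consequence of Lemma~\ref{lem:separate-terminates} (termination of \Separate{} with \(\mathrm{QC}_d\vdash T=\Separate(T)\)), applied to a raw representative exactly as in Lemma~\ref{lemma-Greducible}. Your remark that independence of the representative needs only \(\equiv\subseteq\mathrm{QC}_d\), and no invariance of \Separate{} itself, is a correct sanity check that the paper leaves implicit.
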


\begin{lemma}[Controlled factor checks]\label{lem:finite-factor-laws}
Let \(\mathcal G^\circ\) be \(\mathcal G\) without the adjacent-swap core, and
let \(L,M:n\to n\) be contextual \(\mathcal G^\circ\)-factors.  In \(\QCeq\),
\(\ctrl_s(L)\circ\ctrl_t(M)=\ctrl_t(M)\circ\ctrl_s(L)\) for \(s\neq t\), and
\(\ctrl_s(\ctrl_t(L))\circ(\sigma_{1,1}\otimes\id_n)
= (\sigma_{1,1}\otimes\id_n)\circ\ctrl_t(\ctrl_s(L))\).
If \(P:n\to n\) is a contextual factor whose core is either a phase
or an adjacent Hadamard, then
\(\ctrl_0(P)\circ\cdots\circ\ctrl_{d-1}(P)=\id_1\otimes P\).
\end{lemma}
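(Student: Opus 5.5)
The plan is a finite case analysis over the cores of \(\mathcal G^\circ\): the scalar phase, the adjacent Hadamard, the controlled phase \(\ctrl_k(\alpha)\), and the doubly controlled \(\pi\)-phase \(\ctrl_k(\ctrl_\ell(\pi))\). Each contextual factor \(\id_p\otimes G\otimes\id_q\) is first reduced to its core. The control-functor laws \(\ctrl(f\otimes\id_m)=\ctrl(f)\otimes\id_m\) and conjugation by target permutations move the control wire next to the core, so a factor's control acts only on that core. For two factors on overlapping or disjoint wires, the check becomes an equation between controlled cores on at most three or four wires, up to structural naturality. When the cores of \(L\) and \(M\) act on disjoint wires, the controlled factors \(\ctrl_s(L)\) and \(\ctrl_t(M)\) still share the control wire. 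This case is handled in the same way as the overlapping case, with the controlled core of \(L\) and the controlled core of \(M\) compared as above.

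\textbf{Commutation (\(s\neq t\)).} I would list the unordered pairs of cores and, for each, cite the matching branch rule.
\begin{itemize}
\item Phase/phase is \eqref{cp-cp}.
\item Phase/controlled phase is \eqref{ccp-cp}.
\item Phase/doubly controlled \(\pi\) is \eqref{ccnot-cp} or \eqref{cp-cnot}.
\item Hadamard/phase is \eqref{ch-cp}.
\item Hadamard/Hadamard is \eqref{ch-ch} or \eqref{ch-ch-diff}, according to whether the targets coincide.
\item Hadamard/controlled phase is \eqref{ch-ccp} or \eqref{ccp-ch-diff}.
\item Hadamard/doubly controlled \(\pi\) is \eqref{ch-ccnot}, \eqref{ccnot-ch-diff} or \eqref{ccnot-ch-diff-2}.
\item Controlled phase/controlled phase is \eqref{ccp-ccp} or \eqref{ccp-ccp-diff}.
\item Controlled phase/doubly controlled \(\pi\) is \eqref{ccp-cnot} or \eqref{ccp-ccnot}.
\item Doubly controlled \(\pi\)/doubly controlled \(\pi\) is \eqref{cnot-cnot}, \eqref{cnot-cnot-diff}, \eqref{ccnot-ccnot}, \eqref{ccnot-ccnot-diff} or \eqref{ccnot-ccnot-diff-both}.
\end{itemize}
Where the shared wires are permuted relative to the displayed rule, I would insert symmetries and use naturality of \(\ctrl\) under target permutations. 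Where the inner control values are equal rather than distinct, I would use the "same" variants of the rules.

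\textbf{Compatibility.} For the scalar-phase core, the equation is \eqref{axiom-compat}. For the Hadamard core, I would expand \(\ctrl_s\ctrl_t(H)\) by \eqref{axiom-controlled-hadamard-decompose} into phases controlled up to three times. I would then decompose those by \eqref{axiom-ccp} and \eqref{axiom-cccp} down to scalar phases and doubly controlled \(\pi\)-phases, and push the swap through each factor using \eqref{axiom-compat} and \eqref{axiom-compat-pi}. The remaining two cores are handled the same way. First \eqref{axiom-ccp} and \eqref{axiom-cccp} reduce \(\ctrl_s\ctrl_t\ctrl_k(\alpha)\) and \(\ctrl_s\ctrl_t\ctrl_k\ctrl_\ell(\pi)\) to products of factors. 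In those products, the pair of outer wires carries either a phase or a \(\pi\)-phase with at most one extra control, and these are exactly the shapes handled by \eqref{axiom-compat} and \eqref{axiom-compat-pi}.

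\textbf{Exhaustivity.} The two admissible cores are exactly the totalisation schemata: \eqref{axiom-total-phase} for the phase core and \eqref{axiom-total-hadamard} for the Hadamard core. The contextual padding is removed by the functor law for \(\id_m\).

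I expect the \textbf{main obstacle} to be compatibility for the cores with inner controls. After decomposition, the swapped outer wires may meet a phase controlled three or four times, and \eqref{axiom-compat} and \eqref{axiom-compat-pi} only cover depth one. I would first try an induction on control depth. The decomposition rules \eqref{axiom-ccp} and \eqref{axiom-cccp} express a depth-\((m+1)\) controlled phase as products of depth-\(\le m\) pieces that are conjugated by controlled Hadamards on the inner wires. Those conjugating pieces touch at most one of the two swapped wires, so they pass the swap by naturality, and the induction hypothesis handles the rest.
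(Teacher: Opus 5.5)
Your branch-commutation and totalisation parts follow the paper's proof. You close target gaps by target permutations, look each adjacent-window pair up in the offset table of Appendix~\ref{app:commuting}, and read exhaustivity off \eqref{axiom-total-phase} and \eqref{axiom-total-hadamard} tensored with identities. There are two small slips. The ``same''/``diff'' variants are indexed by the target offset \(\Delta\), not by equality of inner control values. The offset-one controlled-phase versus doubly-controlled-\(\pi\) case is \eqref{ccnot-cp-diff}.

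The gap is in compatibility. Pushing \(S=\sigma_{1,1}\otimes\id_n\) through an expansion \(\Pi\) of \(\ctrl_s(\ctrl_t(L))\) factor by factor only gives \(\ctrl_s(\ctrl_t(L))\circ S=S\circ\Pi^{S}\), with \(\Pi^{S}\) the \(S\)-conjugate of \(\Pi\). You never show \(\Pi^{S}=\ctrl_t(\ctrl_s(L))\), and that is the real content of the step. It is not automatic:
\begin{itemize}
\item A product of phases is diagonal, so \eqref{axiom-controlled-hadamard-decompose} cannot turn \(\ctrl_s(\ctrl_t(H))\) into controlled phases only, as you write. Hadamards stay on the target.
\item You expand the inner control first and then apply the outer one. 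The target Hadamards therefore become pieces such as \(\ctrl_s(H)\) that meet only the outer control wire.
\item Naturality carries these pieces onto the inner wire, still with value \(s\). The expansion of \(\ctrl_t(\ctrl_s(H))\) instead has \(\ctrl_t(H)\) on the outer wire at that point.
\end{itemize}
So the two expansions do not correspond factor by factor. The same can happen for the diagonal cores, since \eqref{axiom-ccp} and \eqref{axiom-cccp} also need Hadamards: level phases and doubly controlled \(\pi\)-phases alone only give doubly controlled phases with angle in \(\pi\mathbb Z\).

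The obstacle you single out is therefore not the real one. After full separation, the only non-structural factors meeting both control wires are two-wire doubly controlled \(\pi\)-phases, which \eqref{axiom-compat} already swaps. Your induction on control depth hits exactly the problem above, because the conjugating pieces that ``pass the swap by naturality'' are the ones that land on the wrong control wire.

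What is missing is the equivariance the paper's argument relies on (Lemma~\ref{lem:separate-outer-swap-equivariant}). The expansion used for \(\ctrl_t(\ctrl_s(L))\) must be, factor for factor, the swapped expansion of \(\ctrl_s(\ctrl_t(L))\). You have to build this in. One way is to choose expansions in which the target-only conjugating gates are taken outside both controls, using \eqref{hch}-type moves, which rely only on totalisation and single-factor branch commutation. Aim for a form where every remaining factor that meets a control wire meets both and is an instance of \eqref{axiom-compat} or \eqref{axiom-compat-pi}. The same expansion with \(s,t\) exchanged is then \(\ctrl_t(\ctrl_s(L))\).
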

\begin{proof}
For branch commutation, target-wire symmetries reduce the placement of \(L\)
and \(M\) to the finite adjacent-window offsets displayed in
Appendix~\ref{app:commuting}.  The table there cites the primitive \(B\)-rules
of Figure~\ref{fig:axioms_QC_part2} and the derived commutations of
Appendix~\ref{app:derivations}.  The \(S\)-rules give the
disjoint-active-level cases.

For nested controls, factors touching at most one of the two distinguished
control wires pass the outer swap by structural naturality.  The remaining
two-wire cases are the primitive compatibility cells \eqref{axiom-compat} and
\eqref{axiom-compat-pi}, together with the separated expansions
\eqref{axiom-controlled-hadamard-decompose}, \eqref{axiom-ccp}, and
\eqref{axiom-cccp}.  This gives the one-factor case of
Lemma~\ref{lem:separate-outer-swap-equivariant}.  The totalisation statement
is the phase rule \eqref{axiom-total-phase} and the adjacent-Hadamard rule
\eqref{axiom-total-hadamard}, tensored with identities.  PROP coherence gives
the stated contextual instances.
\end{proof}

\begin{lemma}\label{lem:separate-outer-swap-equivariant}
Let \(a\ne b\), let \(f:n\to n\), and put
\(F_{ab}:=\ctrl_a(\ctrl_b(f))\),
\(F_{ba}:=\ctrl_b(\ctrl_a(f))\), and
\(S:=\sigma_{1,1}\otimes\id_n\).  Then
\(\QCeq\vdash \Separate(F_{ab})\circ S =
S\circ \Separate(F_{ba})\).
\end{lemma}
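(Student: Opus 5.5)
We argue by induction on the termination measure \(\mu\) of Lemma~\ref{lem:separate-terminates}, applied to the body \(f\), following the clauses of \(\Separate\) in parallel for \(F_{ab}\) and \(F_{ba}\). The two computations unfold the same way: every clause of \(\Separate\) on \(\ctrl_a(\ctrl_b(T))\) is dictated by the shape of \(T\), and the shape of \(T\) does not depend on the order of the two outer control values. So \(\Separate(F_{ab})\) and \(\Separate(F_{ba})\) are products of the same length, whose \(m\)-th factors differ only by exchanging the two outermost control values. This holds except for the clauses that replace a generator by a fixed derived expansion (\eqref{axiom-controlled-hadamard-decompose}, \eqref{axiom-ccp}, \eqref{axiom-cccp}, \eqref{swap-decomp}). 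Even there the expansion is chosen by the innermost controls and the generator, and the outer controls are carried along unchanged. The plan is therefore to reduce the statement to a factorwise one. Writing \(\Separate(F_{ab})=G_1\circ\cdots\circ G_N\) and \(\Separate(F_{ba})=G'_1\circ\cdots\circ G'_N\), it suffices to show \(G_m\circ S=S\circ G'_m\) for each \(m\), and then insert \(S\circ S=\id\) between factors.

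The induction cases run as follows. For composition, functoriality of \(\ctrl_a\ctrl_b\) and the composition clause give the result directly from the induction hypothesis on the two subterms. For tensor, the controlled-tensor clause inserts the target symmetries \(\id_1\otimes\sigma\) at each control level. At the doubly controlled level these are \(\id_2\otimes\sigma\), which commute with \(S=\sigma_{1,1}\otimes\id_n\) by the interchange law, so the induction hypothesis applies to each tensor component. The identity clauses return empty products, and \(S\circ S=\id\) settles them. For a controlled adjacent target swap, the clause unfolds \eqref{swap-decomp} under both outer controls into controlled \(X^{(a',b')}\) factors and target crossings \(\id_2\otimes\sigma_{1,1}\). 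The crossings commute with \(S\) as before, and the remaining factors have smaller measure.

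The base case is where the control value of the outermost wire actually matters: a generator reached under the controls \(a,b\) (and possibly further inner controls). We split by the core. If the core is a scalar phase under exactly two controls, or under two controls and a \(\pi\)-phase core, the statement is precisely \eqref{axiom-compat} or \eqref{axiom-compat-pi}, up to placement. With more controls, or with a Hadamard core, the derived expansions \eqref{axiom-ccp}, \eqref{axiom-cccp}, and \eqref{axiom-controlled-hadamard-decompose} rewrite the factor into pieces in \(\mathcal G\). Each such piece either touches at most one of the two distinguished wires, and then passes \(S\) by naturality of the symmetry (relabelling its control wire), or touches both, and then is again a \eqref{axiom-compat}/\eqref{axiom-compat-pi} cell. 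This is exactly the content of the nested-control part of Lemma~\ref{lem:finite-factor-laws}, which we invoke per factor.

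The expected obstacle is the assertion that \(\Separate\) treats \(F_{ab}\) and \(F_{ba}\) in lockstep when a derived expansion is triggered by the outer controls themselves. This happens in the \(\mathrm{P2D}_{a,b,\beta}\) case, where \(f\) is a bare phase and the expansion's internal structure references which wire carries \(a\) and which carries \(b\). There the two expansions are not literally relabellings of each other. We would first try to apply \eqref{axiom-compat} before expanding, so that \(\ctrl_a\ctrl_b(\theta)\circ S=S\circ\ctrl_b\ctrl_a(\theta)\) holds directly, and then compare the two expansions through \(\QCeq\vdash T=\Separate(T)\) (Lemma~\ref{lem:separate-terminates}) on each side. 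The \(\mathrm{P3D}\) case is handled the same way, by conjugating the outer pair first with \eqref{axiom-compat}.
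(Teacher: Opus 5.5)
You follow the paper's own route: a lockstep induction along the recursion of \(\Separate\), then pushing \(S\) through factor by factor. Your doubt about the lockstep step is justified; the paper asserts it for the expansion clauses too. But the failure is not confined to \(\mathrm{P2D}_{a,b,\beta}\), and your repair does not close it.

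The otherwise-clause computes \(\Separate(\ctrl_b(\cdot))\) before the outer \(\ctrl_a\) is applied. So derived expansions are fired by a distinguished control in almost every block, and ``the outer controls are carried along unchanged'' fails whenever the innermost control is \(a\) or \(b\).

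\textbf{Hadamard case.} For \(f=H^{(r,r+1)}\), one side is \(\Separate(\ctrl_a(\Separate(\mathrm{CHD}_{b,r})))\) and the other is \(\Separate(\ctrl_b(\Separate(\mathrm{CHD}_{a,r})))\). The Hadamards in \(\mathrm{CHD}_{b,r}\) are uncontrolled, since those are the only Hadamards in \(\mathcal G\). They become \(a\)-controlled on the first side but \(b\)-controlled on the second. Even after relabelling by \(S\), you are comparing \(\mathrm{CHD}_{b,r}\) with its Hadamards re-expanded by \(\mathrm{CHD}_{a,r}\) against the opposite nesting. The two products do not match factor by factor.

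\textbf{Tensor case.} The crossings \(\id_1\otimes\sigma\) produced inside \(\Separate(\ctrl_b(g\otimes h))\) become genuine \(a\)-controlled target swaps under \(\ctrl_a\), which the swap clause then expands. They are not \(\id_2\otimes\sigma\).

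\textbf{\(\mathrm{P3D}\) patch.} The block is \(\ctrl_a(\ctrl_b(\ctrl_m(\alpha)))\), whose body \(\ctrl_m(\alpha)\) is not a scalar. So \eqref{axiom-compat} does not apply, and \eqref{axiom-compat-pi} covers only \(\alpha=\pi\). Closing \eqref{axiom-compat} under further controls only exchanges an inner pair of controls, never the outer pair.

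\textbf{Repair.} Your \(\mathrm{P2D}\) repair is sound: apply \eqref{axiom-compat} to the unexpanded block, then use \(\QCeq\vdash T=\Separate(T)\). It works in general if you apply it to blocks rather than factors.
\begin{enumerate}
\item Separate the common body \(f\) once, so that \(\QCeq\vdash F_{ab}=\ctrl_a(\ctrl_b(\Separate(f)))\), and likewise for \(F_{ba}\).
\item Push both controls through the product at once by functoriality, strength and target naturality. Now the crossings really are \(\id_2\otimes\sigma\) and commute with \(S\).
\item Settle each block \(\ctrl_a(\ctrl_b(L))\circ S=S\circ\ctrl_b(\ctrl_a(L))\) with the nested-control part of Lemma~\ref{lem:finite-factor-laws}. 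This is what covers the \(\mathrm{P3D}\) and Hadamard blocks. A plain symmetry factor of \(\Separate(f)\) must first be traded for its \eqref{swap-decomp} presentation, whose crossings come in conjugate pairs, and then re-separated.
\item This yields \(F_{ab}\circ S=S\circ F_{ba}\). Lemma~\ref{lem:separate-terminates} then transports it to \(\Separate(F_{ab})\) and \(\Separate(F_{ba})\).
\end{enumerate}
This argument makes no claim about how \(\Separate\) unfolds on the two sides.
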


\begin{proof}
We inspect the recursion defining \Separate{} and argue by induction on the
termination measure used in Lemma~\ref{lem:separate-terminates}.  The induction
invariant is slightly more general: if a recursive call is made under the two
distinguished outer controls \(a,b\), then exchanging those two controls before
running the same recursive clause exchanges the two controls in every produced
factor and preserves the sequential order of the produced factors.

The clauses for composition and tensor are functorial.  For
\(\ctrl_m(g\circ h)\), \Separate{} produces the separated \(g\)-part followed
by the separated \(h\)-part on both sides, so the induction hypotheses for the
two subcalls compose in the same order.  For \(\ctrl_m(g\otimes h)\), the
procedure serialises tensor by inserting only target-wire symmetries and
identity padding.  These structural maps do not involve the two distinguished
outer control values; sliding \(S\) past them is strict symmetric-monoidal
naturality, and the two recursive subcalls are handled by induction.

The identity and phase-generator clauses follow from the definitions.  The
clauses that expand controlled adjacent Hadamards, controlled phases, and
triply controlled phases
use the derived decompositions
\(\eqref{axiom-controlled-hadamard-decompose}\),
\(\eqref{axiom-ccp}\), and \(\eqref{axiom-cccp}\).  These decompositions are
closed under adding surrounding controls and under target-wire permutation, so
the expansion obtained after exchanging the two outer controls is the
exchanged expansion of the original call; the induction hypothesis applies to
the recursive calls created by these expansions.

It remains to consider the one branch where \Separate{} leaves a
nested control undecomposed: a doubly controlled scalar \(\pi\)-phase.  If the
factor involves at most one of the distinguished outer controls, \(S\) passes it
by structural naturality.  If the factor is supported on both distinguished
controls, it has the form
\(\ctrl_a(\ctrl_b(\input{./figures/piphase.tikz}))\) in structural context on one side and
\(\ctrl_b(\ctrl_a(\input{./figures/piphase.tikz}))\) in the corresponding context on the
other.  The required swap-through step is the compatibility axiom
\(\eqref{axiom-compat}\), tensored with identities and conjugated by the
surrounding target-wire symmetry.

Every factor produced by \Separate{} falls under one of the cases above.  Pushing
\(S\) through the finite separated product from right to left therefore
transforms the factors of \(\Separate(F_{ab})\) into the corresponding factors of
\(\Separate(F_{ba})\), in the same sequential order, giving the stated
derivation.
\end{proof}

\subsection{Compatibility of mixed controls}\label{app:compatibility-proof}

Mixed-control compatibility follows from the factor normalisation and the
swap-through lemma proved above.

\begin{proof}[Proof of Theorem~\ref{thm:compatibility}]
Write \(S:=\sigma_{1,1}\otimes \id_{n}:2+n\to 2+n\).
Consider the two \((2+n)\)-wire circuits
\(F_{ab}:=\ctrl_{a}(\ctrl_{b}(f))\) and
\(F_{ba}:=\ctrl_{b}(\ctrl_{a}(f))\).
If \(a=b\), the desired equality \(F_{aa}\circ S=S\circ F_{aa}\) is the
same-control nested-swap coherence included in the structural congruence of
\(\CQC\).  Hence assume \(a\ne b\) for the rest of the proof.

Apply the normalisation procedure \Separate{} from
Appendix~\ref{app:separate-normalisation} to \(F_{ab}\) and \(F_{ba}\).
By construction, each rewrite step performed by \Separate{} is justified
either by the definitional expansions of Appendix~\ref{app:derived-notation},
by strict PROP coherence, by control-functoriality/strength/naturality, or by
one of the bounded-arity axiom schemata of
Figures~\ref{fig:axioms_QC_part1}--\ref{fig:axioms_QC_part2} or the derived
factor-expansion rules cited above.
Therefore \(\QCeq\vdash F_{ab}=\Separate(F_{ab})\) and
\(\QCeq\vdash F_{ba}=\Separate(F_{ba})\).

By Lemma~\ref{lem:separate-outer-swap-equivariant},
\(\QCeq\vdash \Separate(F_{ab})\circ S = S\circ \Separate(F_{ba})\).
Replacing \(\Separate(F_{ab})\) by \(F_{ab}\) and \(\Separate(F_{ba})\) by
\(F_{ba}\) using the equalities above yields
\(F_{ab}\circ S = S\circ F_{ba}\), as required.
\end{proof}

%% file: appendix_commuting.tex

\section{Commutativity}\label{app:commuting}

Fix \(d\ge 2\) and the polycontrolled PROP \(\cat{CQC}_d\) of
\S\ref{sec:qudit-circuits}.  Theorem~\ref{thm:commuting} says that circuits
controlled on \emph{distinct} computational-basis values commute.

As in Appendix~\ref{app:compatibility}, we work modulo strict symmetric
monoidal coherences and the coherence laws of each control functor.  We use the
factor normalisation of Lemma~\ref{lem:separate-terminates} to reduce the
commutativity statement to a finite list of generator-level commutations.

\begin{proof}[Proof of Theorem~\ref{thm:commuting}]
By Lemma~\ref{lem:separate-terminates}, the proof reduces to commuting two
placed controlled contextual \(\mathcal G\)-factors.  Plain adjacent-swap
factors are structural and are handled by naturality, so the table below
concerns the cores in
\(\mathcal G^\circ:=\mathcal G\setminus\{\sigma_{1,1}\}\).  Repeating the pairwise swaps
commutes the finite products returned by \Separate{}, by induction on the two
product lengths.  By SMC coherence we may therefore assume that
\(f = id_p \otimes f' \otimes id_{n-n_f-p}\) and
\(g = id_q \otimes g' \otimes id_{n-n_g-q}\), with
\(f' : n_f \to n_f\) and \(g' : n_g \to n_g\) belonging to
\(\mathcal{G}^{\circ}\), and where \(0\le p \le q\).
Identity factors have already been erased as empty products in
Lemma~\ref{lem:separate-terminates}, so neither local core is an identity.

If the target windows of \(f'\) and \(g'\) are disjoint, plain tensor
interchange does not apply, since the two factors still read the same outer
control wire with distinct branch values.  We close any irrelevant gap between
the two target windows by target-wire symmetries.  Control naturality for target
permutations moves those symmetries through the outer controls, so a
commutation with separated target windows is conjugate to the corresponding
commutation with adjacent target windows.  Scalar cores have \(n_f=0\) (or
\(n_g=0\)) and are placed at the edge of the same local window.  Hence it is
enough to check the adjacent-window offsets recorded in the table below:
\(\Delta=0\) for coincident left edge or scalar-core cases, \(\Delta=1\) for
the first shifted one-wire cases, and \(\Delta=2\) for the first shifted
two-wire cases.  Larger target gaps are obtained from these by tensoring
identities and undoing the target permutation.

The remaining possibilities are summarised in the following table.  Each
non-trivial entry cites either a generator commutation from
Section~\ref{subsec:derived-control-algebra} or a derived commutation proved in
Appendix~\ref{app:derivations}.
The label \textit{sym.} means that the
transposed generator pair is handled by the same cited equality, used in the
opposite direction after exchanging the names of the two factors and tensoring
with identities on untouched wires.  This does not conflict with the convention
\(p\le q\): that convention only fixes the left edge of the local window, while
the cited diagrammatic equations are equalities in the symmetric monoidal
theory and may be reflected by coherence.  The symbols \(i,j,k,\ell\) inside
the table are local level/control indices of the displayed cores, while
\(\Delta\) is the relative offset of the two placed cores.  A dash means that
after the gap-closing reduction and, if necessary, exchanging the names of the
two factors, the case is one of the displayed non-dash entries.

\begin{center}
\renewcommand{\arraystretch}{1.2}
\begin{adjustbox}{max width=\textwidth}
\begin{tabular}{c|c|c|c|c}
 & \(f' = \input{./figures/phase.tikz}\) &
 \(f' = \ctrl_{i}(\input{./figures/phase.tikz})\) &
 \(f' = \ctrl_{i}(\ctrl_{j}(\input{./figures/piphase.tikz}))\) &
 \(f' = \input{./figures/Hadiip.tikz}\)\\ \hline
\(g' = \input{./figures/phaseb.tikz}\) &
\(\Delta=0\): \eqref{cp-cp} &
 \(\Delta=0\): \eqref{ccp-cp} &
 \(\Delta=0\): \eqref{ccnot-cp} &
 \(\Delta=0\): \eqref{ch-cp} \\ \hline
\(g' = \ctrl_{k}(\input{./figures/phaseb.tikz})\) &
\(\Delta=0\): \eqref{ccp-cp} \textit{ sym.} &
 \(\Delta=0\): \eqref{ccp-ccp} &
 \(\Delta=0\): \eqref{ccp-ccnot} &
 \(\Delta=0\): \eqref{ch-ccp} \\
 & -- & \(\Delta=1\): \eqref{ccp-ccp-diff} &
 \(\Delta=1\): \eqref{ccnot-cp-diff} &
 \(\Delta=1\): \eqref{ccp-ch-diff} \\ \hline
\(g' = \ctrl_{k}(\ctrl_{\ell}(\input{./figures/piphase.tikz}))\) &
\(\Delta=0\): \eqref{ccnot-cp} \textit{ sym.} &
 \(\Delta=0\): \eqref{ccp-ccnot} \textit{ sym.} &
 \(\Delta=0\): \eqref{ccnot-ccnot} &
 \(\Delta=0\): \eqref{ch-ccnot} \\
 & -- & \(\Delta=1\): \eqref{ccnot-cp-diff} \textit{ sym.} &
 \(\Delta=1\): \eqref{ccnot-ccnot-diff} &
 \(\Delta=1\): \eqref{ccnot-ch-diff-2} \\
 & -- & -- &
 \(\Delta=2\): \eqref{ccnot-ccnot-diff-both} &
 \(\Delta=2\): \eqref{ccnot-ch-diff} \\ \hline
\(g' = \input{./figures/Hadiip.tikz}\) &
\(\Delta=0\): \eqref{ch-cp} \textit{ sym.} &
 \(\Delta=0\): \eqref{ch-ccp} \textit{ sym.} &
 \(\Delta=0\): \eqref{ch-ccnot} \textit{ sym.} &
 \(\Delta=0\): \eqref{ch-ch} \\
 & -- & \(\Delta=1\): \eqref{ccp-ch-diff} \textit{ sym.} &
 \(\Delta=1\): \eqref{ccnot-ch-diff-2} \textit{ sym.} &
\(\Delta=1\): \eqref{ch-ch-diff}
\end{tabular}
\end{adjustbox}
\end{center}

The Hadamard row and column use adjacent instances
\(H^{(r,r+1)}\), as required by \(\mathcal G\).  Each cited lemma states that
the corresponding pair of controlled generators commutes for the
indicated offset \(\Delta\).  Tensoring with identities on the untouched wires and
inserting coherence isomorphisms as needed yields
\(\ctrl_{k}(f)\circ\ctrl_{\ell}(g)=\ctrl_{\ell}(g)\circ\ctrl_{k}(f)\)
for each of the finitely many generator pairs.  Repeating these swaps over the
normal forms of the two circuits gives the general case.
\end{proof}

%% file: appendix_exhaustivity.tex
\section{Exhaustivity}\label{app:exhaustivity}

The exhaustivity component of Theorem~\ref{thm:exhaustive} uses the two
control-algebra principles already established in
Appendices~\ref{app:compatibility} and~\ref{app:commuting}.  Semantically, the
equation decomposes the control wire into its \(d\) basis projectors.  The
syntactic proof proves the generator cases and then closes them under
composition, tensor product, and added controls.

We use the convention \(\prod_{k=0}^{m} f_k=f_m\circ\cdots\circ f_0\).
For a circuit \(f\), write \(\mathsf{E}(f)\) for the equation
\(\prod_{k=0}^{d-1}\ctrl_{k}(f)=\gI\otimes f\).  The totalisation support
rules give the generator cases; commutativity and compatibility propagate the
equation through the circuit constructors.

\begin{lemma}\label{lem:exhaustivity-generators}
The property \(\mathsf{E}(f)\) holds when \(f\) is the monoidal unit, a single
wire, a scalar phase, or an adjacent two-level Hadamard.
\end{lemma}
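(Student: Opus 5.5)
The plan is to treat the four generator cases separately, since each reduces to a control-functor law or to one of the totalisation schemata of Figure~\ref{fig:axioms_QC_part2}.

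\emph{Monoidal unit and single wire.} Here \(f=\id_n\) with \(n\in\{0,1\}\). The control-functor law \(\ctrl_k(\id_n)=\id_{1+n}\), which already holds in \(\CQC\), sends every factor of \(\prod_{k=0}^{d-1}\ctrl_k(\id_n)\) to \(\id_{1+n}\). The product is therefore \(\id_{1+n}\), which is \(\gI\otimes\id_n\) by strict PROP coherence. No axiom of \(\QCeq\) is needed.

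\emph{Scalar phase.} Take \(f\) to be the scalar phase \(\theta\). Then \(\mathsf E(f)\) says that the product of \(\ctrl_k(\theta)\) over all \(k\), in increasing order, equals \(\gI\) tensored with the scalar \(\theta\). This should be, up to coherence, exactly the content of the totalisation schema \eqref{axiom-total-phase}. I will read that schema with its product frame unfolded as in Definition~\ref{def:indexed-boxes}, in lexicographic increasing order. The convention here, \(\prod_{k=0}^{d-1}f_k=f_{d-1}\circ\cdots\circ f_0\), lists the factors in the opposite order from the one in Definition~\ref{def:control-algebra}. So I must check which order the frame produces.

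If the orders disagree, I reorder the factors using the already derived commutation \eqref{cp-cp}. This is legitimate because \(\ctrl_k(\theta)\) and \(\ctrl_\ell(\theta)\) commute for \(k\neq\ell\). Alternatively, I can invoke Theorem~\ref{thm:commuting}, which the exhaustivity appendix is allowed to use. The scalar \(\theta\) is then moved next to the wire using PROP coherence, since scalars are \(0\to0\) maps that float freely.

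\emph{Adjacent two-level Hadamard.} Take \(f=H^{(r,r+1)}\). The argument is identical, but uses \eqref{axiom-total-hadamard}: after unfolding its frame, the left side is \(\prod_k\ctrl_k(H^{(r,r+1)})\) and the right side is \(\gI\otimes H^{(r,r+1)}\). Any discrepancy in the order of the product is again repaired by pairwise commutation of differently controlled Hadamards. This is the \eqref{ch-ch} entry of the table in Appendix~\ref{app:commuting}, or more simply Theorem~\ref{thm:commuting}.

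\emph{Main obstacle.} The only delicate point is bookkeeping: matching the displayed shape of the totalisation schemata, including the frame order and where the target wire sits relative to the control wire, with the equation \(\mathsf E(f)\) as stated. If a schema is displayed with its target and control wires swapped, I will conjugate by \(\sigma_{1,1}\) and use naturality of symmetries. Beyond that bookkeeping, nothing more than functor laws, the two totalisation axioms and commutativity is required.
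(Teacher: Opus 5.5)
Your proposal is correct and matches the paper's proof. The unit and wire cases come from the control-functor law \(\ctrl_k(\id_n)=\id_{1+n}\), and the phase and adjacent-Hadamard cases are exactly \eqref{axiom-total-phase} and the adjacent instance of \eqref{axiom-total-hadamard}. Your extra bookkeeping on product order is harmless and slightly more careful than the paper's; note only that \eqref{cp-cp} and \eqref{ch-ch} are primitive B-axioms, not derived rules.
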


\begin{proof}
For the unit and one-wire identity, the control functor laws give
\(\prod_{k=0}^{d-1}\ctrl_{k}(\gempty)=\gI=\gI\otimes\gempty\) and
\(\prod_{k=0}^{d-1}\ctrl_{k}(\gI)=\id_{2}=\gI\otimes\gI\).  The general
identity \(\id_n\) then follows by repeated tensoring, using
Lemma~\ref{lem:exhaustivity-closure}.

The remaining primitive cases are the two totalisation support rules.
For a scalar phase, \eqref{axiom-total-phase} gives
\(\prod_{k=0}^{d-1}\ctrl_{k}(\input{./figures/phase.tikz})=\gI\otimes\input{./figures/phase.tikz}\).
For an adjacent two-level Hadamard, the adjacent instance of
\eqref{axiom-total-hadamard} gives
\(\prod_{k=0}^{d-1}\ctrl_{k}(\input{./figures/Hadiip.tikz})
=\gI\otimes\input{./figures/Hadiip.tikz}\).
\end{proof}

\begin{lemma}\label{lem:exhaustivity-closure}
The property \(\mathsf{E}\) is closed under sequential composition, tensor
product, and adding one more control.
\end{lemma}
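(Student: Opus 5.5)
We treat the three closure properties separately. Each one uses only the control functor laws, Theorem~\ref{thm:compatibility} and Theorem~\ref{thm:commuting}, all of which are available at this stage. We write \(\prod_{k}\) for \(\prod_{k=0}^{d-1}\), composed in the convention fixed above.

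\emph{Sequential composition.} Assume \(\mathsf E(f)\) and \(\mathsf E(g)\) with \(f,g:n\to n\). Functoriality gives \(\ctrl_k(g\circ f)=\ctrl_k(g)\circ\ctrl_k(f)\), so \(\prod_k\ctrl_k(g\circ f)\) is an interleaved word in the factors \(\ctrl_k(g)\) and \(\ctrl_k(f)\). For \(k\neq\ell\), Theorem~\ref{thm:commuting} lets \(\ctrl_k(f)\) pass \(\ctrl_\ell(g)\). By induction on the number of inversions, we reorder the word into \(\bigl(\prod_k\ctrl_k(g)\bigr)\circ\bigl(\prod_k\ctrl_k(f)\bigr)\). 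Applying the two hypotheses turns this into \((\gI\otimes g)\circ(\gI\otimes f)=\gI\otimes(g\circ f)\). Inside each product the order of the \(\ctrl_k(g)\) is preserved, and in any case it is immaterial by commutativity.

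\emph{Tensor product.} Write \(f\otimes h=(f\otimes\id_m)\circ(\id_n\otimes h)\). By the composition case, it suffices to show \(\mathsf E(f\otimes\id_m)\) and \(\mathsf E(\id_n\otimes h)\). The first follows from the padding law \(\ctrl_k(f\otimes\id_m)=\ctrl_k(f)\otimes\id_m\). For the second, write \(\id_n\otimes h=\pi^{-1}\circ(h\otimes\id_n)\circ\pi\) with \(\pi\) a target-wire symmetry. The permutation law gives \(\ctrl_k(\id_n\otimes h)=(\id_1\otimes\pi^{-1})\circ\ctrl_k(h\otimes\id_n)\circ(\id_1\otimes\pi)\). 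In the product over \(k\), the inner conjugators cancel, so \(\mathsf E\) transports through the conjugation. This case also supplies the general identity \(\id_n\) from the one-wire case, as Lemma~\ref{lem:exhaustivity-generators} requires. Symmetries \(\sigma_{1,1}\) need \(\mathsf E\) as well. For these we would either unfold \(\ctrl_k(\sigma_{1,1})\) via \eqref{swap-decomp} into controlled adjacent \(X\)-gates built from Hadamards and phases, or, more simply, use the same permutation law with \(f=\id\), which already gives \(\ctrl_k(\sigma)=\id_1\otimes\sigma\). We expect the second route to be the one to use.

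\emph{Adding a control.} Assume \(\mathsf E(f)\) and fix \(j\); we want \(\prod_k\ctrl_k(\ctrl_j(f))=\gI\otimes\ctrl_j(f)\). This is the step we expect to be the main obstacle, because the new outer control sits on the wrong side of \(\ctrl_j\). By Theorem~\ref{thm:compatibility}, each factor satisfies \(\ctrl_k(\ctrl_j(f))=S\circ\ctrl_j(\ctrl_k(f))\circ S\) with \(S=\sigma_{1,1}\otimes\id_n\). In the product the inner \(S\circ S\) cancel, leaving \(S\circ\prod_k\ctrl_j(\ctrl_k(f))\circ S\). Functoriality of \(\ctrl_j\) rewrites the middle as \(S\circ\ctrl_j\bigl(\prod_k\ctrl_k(f)\bigr)\circ S\). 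The hypothesis \(\mathsf E(f)\) turns this into \(S\circ\ctrl_j(\gI\otimes f)\circ S\). The permutation and padding laws then give \(\ctrl_j(\gI\otimes f)=\sigma\)-conjugate of \(\ctrl_j(f)\otimes\gI\), which is a nontrivial bookkeeping step on the target wires. Finally, structural naturality of \(S\) identifies the result with \(\gI\otimes\ctrl_j(f)\).

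Combined with Lemma~\ref{lem:exhaustivity-generators} and the factor normalisation, these three closure properties give \(\mathsf E(f)\) for every \(f\) by induction on raw syntax.
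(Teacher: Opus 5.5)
Your composition and added-control cases follow the paper's argument. For composition, commutativity of distinct branches separates the two blocks. For an added control, compatibility conjugates by \(S\), functoriality of \(\ctrl_j\) pulls the product inside, and padding plus symmetry naturality finish. Both are correct.

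The tensor case takes a slightly different route. The paper controls first: it serialises \(\ctrl_k(g\otimes h)\) as \((\ctrl_k(g)\otimes\id_b)\circ(\gI\otimes\sigma_{a,b})\circ(\ctrl_k(h)\otimes\id_a)\circ(\gI\otimes\sigma_{b,a})\). It then runs commutativity again to gather the \(g\)-parts and \(h\)-parts of all branches across the interleaved target swaps. You factor \(f\otimes h=(f\otimes\id_m)\circ(\id_n\otimes h)\) before controlling, so all branch reordering is handled by the composition case. The two one-sided cases then need only the padding law and the permutation law, with the conjugators \(\id_1\otimes\pi^{\pm1}\) cancelling between consecutive factors. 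This is a bit more economical, since commutativity is used in one place only. The paper's version instead writes out the serialised controlled-tensor form that its normalisation procedure produces.

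One remark in your tensor paragraph is false, although this lemma does not depend on it. The permutation law with \(f=\id\) gives only \(\ctrl_k(\pi^{-1}\circ\pi)=\ctrl_k(\id)=\id\). It does not give \(\ctrl_k(\sigma_{1,1})=\id_1\otimes\sigma_{1,1}\), and that equation is unsound. Indeed \(\interp{\ctrl_k(\sigma_{1,1})}=\ket{k}\bra{k}\otimes\mathrm{SWAP}+\sum_{\ell\neq k}\ket{\ell}\bra{\ell}\otimes I\) is a controlled swap, not an unconditional one.

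The swap base case needed for Theorem~\ref{thm:exhaustive} must therefore use your first route: rewrite \(\sigma_{1,1}\) with \eqref{swap-decomp}, as the paper does. The reversed controlled-\(X\) in that decomposition is a conjugate \(\sigma_{1,1}\circ\ctrl_a(X^{(a,b)})\circ\sigma_{1,1}\). Your own conjugation argument from the tensor case shows that \(\mathsf E\) passes through such a conjugate, so this route is not circular.
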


\begin{proof}
We check the three constructors separately.

\emph{Sequential composition.}
Assume \(\mathsf{E}(g)\) and \(\mathsf{E}(h)\).  Functoriality gives
\(\ctrl_{k}(g\circ h)=\ctrl_{k}(g)\circ\ctrl_{k}(h)\).  In the product over
\(k\), Appendix~\ref{app:commuting} allows the differently controlled factors
to be collected into the two blocks belonging to \(g\) and \(h\):
\begin{align*}
  \prod_{k=0}^{d-1}\ctrl_{k}(g\circ h)
  &= \prod_{k=0}^{d-1}\bigl(\ctrl_{k}(g)\circ\ctrl_{k}(h)\bigr)\\
  &= \left(\prod_{k=0}^{d-1}\ctrl_{k}(g)\right)
     \circ
     \left(\prod_{k=0}^{d-1}\ctrl_{k}(h)\right)\\
  &= (\gI\otimes g)\circ(\gI\otimes h)
   = \gI\otimes(g\circ h).
\end{align*}
Therefore \(\mathsf{E}(g\circ h)\) holds.

\emph{Tensor product.}
Assume \(\mathsf{E}(g)\) and \(\mathsf{E}(h)\), and put
\(a:=\arity{g}\), \(b:=\arity{h}\).  Strength for unused wires and symmetry
naturality express a controlled tensor as the serialised composite
\begin{equation*}
  \ctrl_{k}(g\otimes h)=
  (\ctrl_{k}(g)\otimes \id_b)
  \circ(\gI\otimes\swap{a}{b})
  \circ(\ctrl_{k}(h)\otimes \id_a)
  \circ(\gI\otimes\swap{b}{a}).
\end{equation*}
Collecting the controlled \(g\)-parts and controlled \(h\)-parts, again using
Appendix~\ref{app:commuting}, gives
\begin{align*}
  \prod_{k=0}^{d-1}\ctrl_{k}(g\otimes h)
  &=
    \left(\prod_{k=0}^{d-1}(\ctrl_{k}(g)\otimes \id_b)\right)
    \circ(\gI\otimes\swap{a}{b})\\
  &\qquad{}\circ
    \left(\prod_{k=0}^{d-1}(\ctrl_{k}(h)\otimes \id_a)\right)
    \circ(\gI\otimes\swap{b}{a})\\
  &=
    (\gI\otimes g\otimes \id_b)
    \circ(\gI\otimes\swap{a}{b})
    \circ(\gI\otimes h\otimes \id_a)\\
  &\qquad{}\circ(\gI\otimes\swap{b}{a})
   = \gI\otimes(g\otimes h).
\end{align*}
Therefore \(\mathsf{E}(g\otimes h)\) holds.

\emph{Nested control.}
Assume \(\mathsf{E}(g)\), fix \(m\in\{0,\dots,d-1\}\), and write
\(S:=\sigma_{1,1}\otimes\id_{\arity{g}}\).  Compatibility from
Appendix~\ref{app:compatibility} gives
\(\ctrl_{k}(\ctrl_{m}(g))=
S\circ\ctrl_{m}(\ctrl_{k}(g))\circ S\).  Since \(S^2=\id\), the outer swaps
remain at the two ends of the product:
\begin{align*}
  \prod_{k=0}^{d-1}\ctrl_{k}(\ctrl_{m}(g))
  &=
    S\circ
    \left(\prod_{k=0}^{d-1}\ctrl_{m}(\ctrl_{k}(g))\right)
    \circ S\\
  &=
    S\circ
    \ctrl_{m}\left(\prod_{k=0}^{d-1}\ctrl_{k}(g)\right)
    \circ S\\
  &=
    S\circ\ctrl_{m}(\gI\otimes g)\circ S
   = \gI\otimes\ctrl_{m}(g).
\end{align*}
The second equality uses functoriality of \(\ctrl_m\) together with the
product convention \(\prod_{k=0}^{d-1} f_k=f_{d-1}\circ\cdots\circ f_0\), so
the order of the \(k\)-indexed factors is preserved inside \(\ctrl_m\).
The last equality is just the strength and coherence law for \(\ctrl_m\), with
the two control wires exchanged back by \(S\).  Therefore
\(\mathsf{E}(\ctrl_m(g))\) holds.
\end{proof}

\begin{proof}[Proof of Theorem~\ref{thm:exhaustive}]
By the swap decomposition \eqref{swap-decomp} and strict symmetric-monoidal
coherence, it suffices to treat presentations built from the non-structural
generators without taking swaps as primitive constructors.  Lemma
\ref{lem:exhaustivity-generators} gives the base cases, and
Lemma~\ref{lem:exhaustivity-closure} gives the induction steps for
\(\circ\), \(\otimes\), and \(\ctrl_m\).  Structural induction therefore proves
\(\mathsf{E}(f)\) for every such presentation.

The same argument applies to the fixed decomposition of each swap
\(\swap{m}{n}\), so structural symmetries are covered as well.  Therefore
\(\prod_{k=0}^{d-1}\ctrl_k(f)=\gI\otimes f\) for every circuit
\(f\in\CQC\).
\end{proof}

\begin{corollary}\label{cor:iterated-exhaustivity}
For every \(m\ge0\) and every circuit \(f\), the product over all control
words \(u\in[d]^m\), taken in lexicographic order, satisfies
\(\prod_{u\in[d]^m}\ctrl_u(f)=\id_m\otimes f\).
\end{corollary}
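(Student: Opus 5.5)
The plan is induction on \(m\), using Theorem~\ref{thm:exhaustive} at each step together with functoriality of the control constructors and the commutativity of differently controlled branches (Theorem~\ref{thm:commuting}).

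\textbf{Base cases.} For \(m=0\) the only word is \(\epsilon\), and \(\ctrl_\epsilon(f)=f=\id_0\otimes f\). For \(m=1\) the statement is exactly Theorem~\ref{thm:exhaustive}, since lexicographic order on one-letter words is increasing order of the control value.

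\textbf{Inductive step.} For \(m+1\), write each word as \(u=a\,v\) with \(a\in[d]\) and \(v\in[d]^m\). Then \(\ctrl_u(f)=\ctrl_a(\ctrl_v(f))\). Lexicographic order groups the product into \(d\) consecutive blocks indexed by the first letter \(a\), and inside each block the words \(v\) appear in lexicographic order. Functoriality of \(\ctrl_a\) (the law \(\ctrl(g\circ h)=\ctrl(g)\circ\ctrl(h)\)) lets us pull \(\ctrl_a\) out of each block:
\[
\prod_{v\in[d]^m}\ctrl_a(\ctrl_v(f))=\ctrl_a\Bigl(\prod_{v\in[d]^m}\ctrl_v(f)\Bigr).
\]
The induction hypothesis rewrites the inner product as \(\id_m\otimes f\). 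The full product thus becomes \(\prod_{a\in[d]}\ctrl_a(\id_m\otimes f)\), taken in increasing order of \(a\). Theorem~\ref{thm:exhaustive}, applied to the circuit \(\id_m\otimes f\), turns this into \(\id_1\otimes\id_m\otimes f=\id_{m+1}\otimes f\), as required.

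\textbf{Where care is needed.} The subtle step is matching the composition order convention. The lexicographic product must be read with the same orientation as \(\prod_{k}f_k=f_{d-1}\circ\cdots\circ f_0\) used in the proof of Theorem~\ref{thm:exhaustive}. Only then do the words sharing a first letter form a contiguous block on which functoriality can be applied directly.

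If the orientation were mismatched, I would fall back on Theorem~\ref{thm:commuting}. Two distinct words \(u\neq u'\) of the same length differ at some position, and the factors \(\ctrl_u(f)\) and \(\ctrl_{u'}(f)\) can be commuted past each other by the following route:
\begin{itemize}
\item use compatibility (Theorem~\ref{thm:compatibility}) to bring the differing letter to the outermost control position, conjugating by control-wire symmetries;
\item apply the commutativity law for distinct outer control values;
\item undo the symmetries.
\end{itemize}
This shows the product over \([d]^{m}\) is independent of the chosen order, so any regrouping into contiguous first-letter blocks is legitimate.
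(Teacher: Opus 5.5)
Your proof is correct and follows essentially the same route as the paper. You induct on \(m\), split each word as a first letter followed by a suffix, and pull \(\ctrl_a\) out of each contiguous first-letter block by functoriality (using closure of \(\QCeq\) under \(\ctrl_a\)) so the induction hypothesis applies. You then conclude with Theorem~\ref{thm:exhaustive} applied to \(\id_m\otimes f\).

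The orientation fallback is unnecessary, though harmless. First-letter blocks are contiguous under either composition orientation, and the two orientations of the exhaustivity product agree by Theorem~\ref{thm:commuting}. Its compatibility detour could also be replaced by functoriality on the common prefix up to the first differing letter.
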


\begin{proof}
The case \(m=0\) is the empty control word.  For the induction step, write
each word as \(ku\) with \(k\in[d]\) and \(u\in[d]^m\).  Functoriality of
\(\ctrl_k\) and the induction hypothesis give
\(\prod_{u\in[d]^m}\ctrl_k(\ctrl_u(f))
= \ctrl_k\left(\prod_{u\in[d]^m}\ctrl_u(f)\right)
= \ctrl_k(\id_m\otimes f)\).
Taking the product over \(k\) and applying Theorem~\ref{thm:exhaustive} once
more yields \(\id_1\otimes(\id_m\otimes f)=\id_{m+1}\otimes f\).
\end{proof}

\begin{remark}
Semantically, this says that the projectors
\((\ket{k}\bra{k})_{k=0}^{d-1}\) on the control wire form a partition of the
identity.  The syntactic form above is the one used later when decoding
Gray-code gadgets produces products over all control branches.
\end{remark}

%% file: appendix_derivations_after_commuting.tex
\section{Further derived identities}\label{app:derivations-p2}

The identities in this appendix are derived after the control algebra of
Section~\ref{subsec:derived-control-algebra} is available.  We write
\(\mathrm{QC}_{d} \vdash C_1 = C_2\) when the circuit identity \(C_1 = C_2\) follows
from the axioms of
Figures~\ref{fig:axioms_QC_part1}--\ref{fig:axioms_QC_part2}.

\subsection{Catalogue of derived rules}

\begin{figure*}[!ht]
\centering
\fbox{
\begin{minipage}{0.975\textwidth}\centering
\begin{minipage}[t]{0.6\textwidth}\centering
\schemarule{\textsc{HP2}}{h-ccpt}{
  \input{./figures/hthroughPL.tikz} = \input{./figures/hthroughPR.tikz}
}
\end{minipage}
\begin{minipage}[t]{0.6\textwidth}\centering
\schemarule{\textsc{CHHC}}{Ch-Hc}{
\input{./figures/Ch-HcL.tikz} = \input{./figures/Ch-HcR.tikz}
}
\end{minipage}
\begin{minipage}[t]{0.5\textwidth}\centering
\schemarule{\textsc{CXHC}}{Cx-Hc}{
\input{./figures/Cx-HcL.tikz} = \input{./figures/Cx-HcR.tikz}
}
\end{minipage}
\begin{minipage}[t]{0.45\textwidth}\centering
\schemarule{\textsc{CXXC}}{Cx-Xc}{
\input{./figures/Cx-XcL.tikz} = \input{./figures/Cx-XcR.tikz}
}
\end{minipage}
\end{minipage}}
\caption{Derived rules in \(\QCeq\) (batch~4). The rule \eqref{h-ccpt} assumes that \(m,\ell,b\) are pairwise distinct. The rules \eqref{Ch-Hc}, \eqref{Cx-Hc}, and \eqref{Cx-Xc} hold when either \(i,j,k\) are pairwise distinct or \(\ell,m,n\) are pairwise distinct.}
\Description{Framed panel collecting derived commutation rules between controlled Hadamards, controlled swaps, and (multi-)controlled phases. Each equation is displayed as a circuit equality, with the relevant index side conditions stated once in the caption.}
\label{fig:derived_QC_part4}
\end{figure*}

\begin{figure*}[!ht]
\centering
\fbox{
\begin{minipage}{0.975\textwidth}\centering
\begin{minipage}[t]{0.75\textwidth}\centering
\schemarule{RxXX}{RxXX}{
\input{./figures/RxXX.tikz} = \input{./figures/XXRx.tikz}
}
\schemarule{\textsc{3CRx}}{3CRx}{
\input{./figures/3bsaltL.tikz} = \input{./figures/3bsaltR.tikz}
}
\schemarule{\textsc{EulB}}{eulerb}{
\input{./figures/eulerbL.tikz} = \input{./figures/eulerR.tikz}
}
\schemarule{\textsc{Eulr}}{euler}{
\input{./figures/eulerL.tikz} = \input{./figures/eulerR.tikz}
}
\end{minipage}
\end{minipage}}
\caption{Derived rules in \(\QCeq\) (batch~5).}
\Description{Framed panel collecting the controlled rotation-through-controlled-X rule, the mixed controlled three-rotation rule, and two Euler-style derived circuit equalities.}
\label{fig:derived_QC_part5}
\end{figure*}

\subsection{Proofs of the derived rules}

\begin{proof}[Proof of \eqref{h-ccpt}]
\begin{gather*}
  \input{./figures/hthroughPL.tikz}\\
    \overset{\eqref{axiom-ccp}}{=}
    \input{./figures/hthroughP-01.tikz}\\
    \overset{\eqref{Hphasecomm}}{=}
    \input{./figures/hthroughP-02.tikz}\\
    \overset{\eqref{Hcnot}}{=}
    \input{./figures/hthroughP-03.tikz}\\
    \overset{\eqref{axiom-ccp}}{=}
    \input{./figures/hthroughPR.tikz}
\end{gather*}
\end{proof}

\begin{proof}[Proof of \eqref{Ch-Hc}]
We treat the case where \(i,j,k\) are pairwise distinct; the case with \(\ell,m,n\) is analogous.
\begin{gather*}
  \input{./figures/Ch-HcL.tikz}\\
    = \input{./figures/Ch-Hc-01.tikz} \\
    \overset{\eqref{Hij-Hij}}{=}
    \scalebox{0.78}{\input{./figures/Ch-Hc-02.tikz}}\\
    \overset{\eqref{hch}}{=}
    \scalebox{0.78}{\input{./figures/Ch-Hc-03.tikz}} \\
    \overset{\eqref{Hij-Hij}}{=}
    \input{./figures/Ch-Hc-04.tikz}\\
    \overset{\eqref{h-ccpt}}{=}
    \input{./figures/Ch-Hc-05.tikz}\\
    \overset{\eqref{hch},\eqref{Hij-Hij}}{=}
    \input{./figures/Ch-Hc-05b.tikz}\\
    \overset{\text{Commutativity}}{=}
    \input{./figures/Ch-Hc-06.tikz}\\
    \overset{\eqref{hch},\eqref{Hij-Hij}}{=}
    \input{./figures/Ch-Hc-06b.tikz}\\
    \overset{\eqref{h-ccpt}}{=}
    \input{./figures/Ch-Hc-07.tikz}\\
    = \input{./figures/Ch-HcR.tikz}
\end{gather*}
\end{proof}

\begin{proof}[Proof of \eqref{Cx-Hc}]
Again consider \(i,j,k\) pairwise distinct; the case with \(\ell,m,n\) is analogous.
\begin{gather*}
  \input{./figures/Cx-HcL.tikz}\\
    = \input{./figures/Cx-Hc-01.tikz}\\
    \overset{\eqref{Hij-Hij}}{=}
    \input{./figures/Cx-Hc-02.tikz}\\
    \overset{\eqref{sum},\eqref{0phase}}{=}
    \input{./figures/Cx-Hc-03.tikz}\\
    \overset{\eqref{Htotal}}{=}
    \input{./figures/Cx-Hc-04.tikz}\\
    = \input{./figures/Cx-Hc-05.tikz}\\
    \overset{\eqref{Ch-Hc}}{=}
    \input{./figures/Cx-Hc-06.tikz}\\
    = \input{./figures/Cx-Hc-07.tikz}\\
    \overset{\eqref{hch},\eqref{Hij-Hij}}{=}
    \input{./figures/Cx-Hc-08.tikz}\\
    \overset{\eqref{h-ccpt},\text{Commutativity}}{=}
    \input{./figures/Cx-Hc-09.tikz}\\
    = \input{./figures/Cx-HcR.tikz}
\end{gather*}
\end{proof}

\begin{proof}[Proof of \eqref{Cx-Xc}]
For \(i,j,k\) pairwise distinct (the case with \(\ell,m,n\) is similar),
\begin{gather*}
  \input{./figures/Cx-XcL.tikz}\\
    = \input{./figures/Cx-Xc-01.tikz}\\
    \overset{\eqref{Hij-Hij}}{=}
    \input{./figures/Cx-Xc-02.tikz}\\
    \overset{\eqref{sum},\eqref{0phase}}{=}
    \input{./figures/Cx-Xc-03.tikz}\\
    \overset{\eqref{Htotal}}{=}
    \input{./figures/Cx-Xc-04.tikz}\\
    = \input{./figures/Cx-Xc-05.tikz}\\
    \overset{\eqref{Cx-Hc}}{=}
    \input{./figures/Cx-Xc-06.tikz}\\
    = \input{./figures/Cx-Xc-07.tikz}\\
    \overset{\eqref{hch},\eqref{Hij-Hij}}{=}
    \input{./figures/Cx-Xc-08.tikz}\\
    \overset{\eqref{h-ccpt},\text{Commutativity}}{=}
    \input{./figures/Cx-Xc-09.tikz}\\
    = \input{./figures/Cx-XcR.tikz}
\end{gather*}
\end{proof}

\begin{lemma}\label{lem:rxxx-classical-factor}
On two wires we use the following three-gate \(T\)-block:
\begin{center}
\begin{tabular}{c}
 \(T_{ce}\), for \(2\le c<e<d\)\\[0.35em]
\input{./figures/rxxx-T.tikz}
\end{tabular}
\end{center}
In this lemma, every single \(c\)-frame has \(2\le c<d\), and every \(T\)-frame has
\(2\le c<e<d\).
Then the following two identities are derivable:
\refstepcounter{equation}
\begin{equation*}
\begin{gathered}
  \input{./figures/rxxx-factor-00.tikz}
  =
  \scalebox{0.55}{\input{./figures/rxxx-factor-04.tikz}}
\end{gathered}
\tag{G1a}\label{eq:rxxx-factorisation}
\end{equation*}
\refstepcounter{equation}
\begin{equation*}
\begin{aligned}
&\scalebox{0.8}{\input{./figures/rxxx-residual-slide-L.tikz}}\\[-0.2em]
={}&\scalebox{0.8}{\input{./figures/rxxx-residual-slide-R.tikz}} .
\end{aligned}
\tag{G1b}\label{eq:rxxx-residual-slide}
\end{equation*}
\end{lemma}

\begin{proof}
First we record the local \(c\)-block calculation in circuits.  The condition
\(2\le c<d\) is used throughout, so \(0,1,c\) are pairwise distinct.
\refstepcounter{equation}
\begin{align*}
\scalebox{0.60}{\input{./figures/rxxx-cb-00.tikz}}\notag
\overset{\eqref{CX-XC-CX}}{=}\;&
  \scalebox{0.60}{\input{./figures/rxxx-cb-01.tikz}}\notag\\
\overset{\eqref{Cx-Xc}}{=}\;&
  \scalebox{0.60}{\input{./figures/rxxx-cb-02.tikz}}\notag\\[-0.1em]
\overset{\ref{thm:commuting}}{=}\;&
  \scalebox{0.60}{\input{./figures/rxxx-cb-03.tikz}}\notag\\[-0.1em]
\overset{\eqref{Cx-Xc}}{=}\;&
  \scalebox{0.60}{\input{./figures/rxxx-cb-04.tikz}}\notag\\[-0.1em]
\overset{\eqref{Cx-Xc}}{=}\;&
  \scalebox{0.60}{\input{./figures/rxxx-cb-05.tikz}}\notag\\[-0.1em]
\overset{\ref{thm:commuting}}{=}\;&
  \scalebox{0.60}{\input{./figures/rxxx-cb-06.tikz}}\notag\\[-0.1em]
\overset{\eqref{Cx-Xc}}{=}\;&
  \scalebox{0.60}{\input{./figures/rxxx-cb-07.tikz}}\notag\\[-0.1em]
\overset{\ref{thm:commuting}}{=}\;&
  \scalebox{0.60}{\input{./figures/rxxx-cb-08.tikz}}\notag\\[-0.1em]
\overset{\eqref{CX-XC-CX}}{=}\;&
  \scalebox{0.60}{\input{./figures/rxxx-cb-09.tikz}}\notag\\[-0.1em]
\overset{\ctrl_c(\eqref{Xij-Xjk})}{=}\;&
  \scalebox{0.60}{\input{./figures/rxxx-cb-10.tikz}}\notag\\[-0.1em]
\overset{\eqref{CX-XC-CX}}{=}\;&
  \scalebox{0.60}{\input{./figures/rxxx-cb-11.tikz}}\notag\\[-0.1em]
\overset{\ctrl_c(\eqref{Xij-Xjk})}{=}\;&
  \scalebox{0.60}{\input{./figures/rxxx-cb-12.tikz}}\notag\\[-0.1em]
\overset{\ref{thm:commuting}}{=}\;&
  \scalebox{0.60}{\input{./figures/rxxx-cb-13.tikz}}\notag\\[-0.1em]
\overset{\eqref{Cx-Xc}}{=}\;&
  \scalebox{0.60}{\input{./figures/rxxx-cb-14.tikz}}\notag\\[-0.1em]
\overset{\ref{thm:commuting}}{=}\;&
  \scalebox{0.60}{\input{./figures/rxxx-cb-15.tikz}}\notag\\[-0.1em]
\overset{\eqref{CX-XC-CX}}{=}\;&
  \scalebox{0.60}{\input{./figures/rxxx-cb-16.tikz}}\notag\\[-0.1em]
\overset{\eqref{CX-XC-CX}}{=}\;&
  \scalebox{0.60}{\input{./figures/rxxx-cb-17.tikz}}\notag\\[-0.1em]
\overset{\ref{thm:commuting}}{=}\;&
  \scalebox{0.60}{\input{./figures/rxxx-cb-18.tikz}}\notag\\[-0.1em]
\overset{\eqref{Cx-Xc}}{=}\;&
  \scalebox{0.60}{\input{./figures/rxxx-cb-19.tikz}}\notag\\[-0.1em]
\overset{\ref{thm:commuting}}{=}\;&
  \scalebox{0.60}{\input{./figures/rxxx-cb-20.tikz}}\notag\\[-0.1em]
\overset{\eqref{Cx-Xc}}{=}\;&
  \scalebox{0.60}{\input{./figures/rxxx-cb-21.tikz}}\notag\\[-0.1em]
\overset{\eqref{CX-XC-CX}}{=}\;&
  \scalebox{0.60}{\input{./figures/rxxx-cb-22.tikz}}\notag\\[-0.1em]
\overset{\ctrl_c(\eqref{Xij-Xjk})}{=}\;&
  \scalebox{0.60}{\input{./figures/rxxx-cb-23.tikz}}\notag\\[-0.1em]
\overset{\ctrl_c(\eqref{Xij-Xij})}{=}\;&
  \scalebox{0.60}{\input{./figures/rxxx-cb-24.tikz}}.
\tag{CB}\label{eq:rxxx-c-block}
\end{align*}

For \eqref{eq:rxxx-factorisation}, use naturality of the structural symmetry
before expanding anything.  This avoids regrouping by interpretation:
\(\sigma_{1,1}\) carries the final \(X^{01}\) on the first wire to the same
\(X^{01}\) on the second wire.

\begin{align*}
  &\input{./figures/rxxx-factor-00.tikz}
  \overset{\text{Def.~\ref{def:structural-congruence-CQC_d}}}{=}
  \input{./figures/rxxx-factor-nat-00.tikz}
  \overset{\text{Thm.~\ref{thm:exhaustive}}}{=}
  \scalebox{0.86}{\input{./figures/rxxx-factor-nat-01.tikz}}\\
  &\overset{\eqref{swap-decomp}}{=}
  \scalebox{0.88}{\input{./figures/rxxx-factor-nat-02.tikz}}\\
  &\overset{\scriptscriptstyle\eqref{Cx-Xc},\,\ref{thm:commuting}}{=}
  \scalebox{0.66}{\input{./figures/rxxx-factor-nat-03.tikz}}\\
  &\overset{\scriptscriptstyle\eqref{Xij-Xij}}{=}
  \scalebox{0.63}{\input{./figures/rxxx-factor-nat-04.tikz}}\\
  &\overset{\eqref{eq:rxxx-c-block}}{=}
  \scalebox{0.68}{\input{./figures/rxxx-factor-04.tikz}} .
\end{align*}
This proves \eqref{eq:rxxx-factorisation}.

\begin{equation*}
\begin{aligned}
&\scalebox{0.88}{\input{./figures/rxxx-slide-c-0.tikz}}\\
\overset{\ref{thm:commuting}}{=}&
  \scalebox{0.88}{\input{./figures/rxxx-slide-c-1.tikz}}\\
\overset{\eqref{Cx-Hc}}{=}&
  \scalebox{0.88}{\input{./figures/rxxx-slide-c-2.tikz}}\\
\overset{\ref{thm:commuting}}{=}&
  \scalebox{0.88}{\input{./figures/rxxx-slide-c-3.tikz}}\\
\overset{\eqref{Cx-Hc}}{=}&
  \scalebox{0.88}{\input{./figures/rxxx-slide-c-4.tikz}}\\
\overset{\ref{thm:commuting}}{=}&
  \scalebox{0.88}{\input{./figures/rxxx-slide-c-5.tikz}} .
\end{aligned}
\end{equation*}
When \(2\le c<e<d\), every level touched by \(T_{ce}\) is outside
\(\{0,1\}\), so the three local moves are ordinary commutations.  The rotation
is moved past \(C^c(X^{ce})\), then \(C_c(X^{ce})\), then \(C^c(X^{ce})\);
the control labels \(c,e\) are never \(0\) or \(1\):
\begin{equation*}
\begin{aligned}
&\input{./figures/rxxx-slide-t-0.tikz}\\
\overset{\ref{thm:commuting}}{=}&
  \input{./figures/rxxx-slide-t-1.tikz}\\
\overset{\ref{thm:commuting}}{=}&
  \input{./figures/rxxx-slide-t-2.tikz}\\
\overset{\ref{thm:commuting}}{=}&
  \input{./figures/rxxx-slide-t-3.tikz}.
\end{aligned}
\end{equation*}
Applying these local slides inside the product frames, in their increasing
lexicographic order, gives \eqref{eq:rxxx-residual-slide}.
The two steps labelled by \eqref{Cx-Hc} are its vertical reflections: a
first-wire controlled \(R_X^{01}(\theta)\) commutes past a second-wire
controlled \(X\), with the support side condition supplied by the distinct
levels \(0,1,c\).
\end{proof}

\begin{proof}[Proof of \eqref{RxXX}]
Since the rule is schematic in the two levels, and using \eqref{RxSym} to fix
the orientation, it suffices to prove the instance with levels \(0,1\).  Let
\(\mathcal R\) be the residual circuit
\[
  C^0(X^{01});\prod_c
  \bigl(C^c(X^{0c});C_c(X^{1c});C^c(X^{1c});C_c(X^{0c});C^c(X^{0c})\bigr);
  \prod_{c<e}T_{ce},
\]
namely the part of the right-hand side of \eqref{eq:rxxx-factorisation} after
\(C^1(X^{01});C_0(X^{01})\).  This residual is reversible and equal to its own
inverse read backwards.  The derivation below starts with
\(C_0(R_X^{01}(\theta));C^1(X^{01});C_0(X^{01})\), right-composes with
\(\mathcal R\), moves through the factorisation and naturality steps, and then
cancels the same residual at the end.  The fourth equality uses
\eqref{eq:rxxx-residual-slide} to move the rotation left through exactly
\(\mathcal R\).  The first and last arrows below are not standalone rewrites:
they indicate post-composition with \(\mathcal R\) and, after the displayed
equalities, post-composition with \(\mathcal R^{-1}\):
\begin{gather*}
  \input{./figures/rxxx-rx-left.tikz}\\
  \overset{\text{right-compose with }\mathcal R}{\Longrightarrow}
  \scalebox{0.72}{\input{./figures/rxxx-rx-left-appended.tikz}}\\
  \overset{\eqref{eq:rxxx-factorisation}}{=}
  \input{./figures/rxxx-rx-AF.tikz}\\
  \overset{\textsc{Nat}_{\sigma},\,\textsc{CtrlNat}}{=}
  \input{./figures/rxxx-rx-FB.tikz}\\
  \overset{\eqref{eq:rxxx-factorisation}}{=}
  \scalebox{0.72}{\input{./figures/rxxx-rx-right-appended.tikz}}\\
  \overset{\eqref{eq:rxxx-residual-slide}}{=}
  \scalebox{0.72}{\input{./figures/rxxx-rx-right-slid.tikz}}\\
  \overset{\text{cancel }\mathcal R}{\Longrightarrow}
  \input{./figures/rxxx-rx-right.tikz}.
\end{gather*}
The cancellation step is composing by \(\mathcal R^{-1}\) on the right and
using \eqref{Xij-Xij} factor by factor, in reverse order because circuits are
read left to right.  This is the \(0,1\) instance of \eqref{RxXX};
relabelling the two levels gives
the stated rule.
\end{proof}

\begin{proof}[Proof of \eqref{3CRx}]
\smallskip
\noindent\emph{Auxiliary equality \eqref{eq:3CRx-PA}.}
\refstepcounter{equation}
\begin{equation*}
  \scalebox{0.9}{\input{./figures/3CRx-PA.tikz}}
  =
  \scalebox{0.9}{\input{./figures/3CRx-AP.tikz}}
  \tag{PA}\label{eq:3CRx-PA}
\end{equation*}
In this derivation, the two \(\textsc{Nat}_{\sigma}\) steps are conjugations
through a wire swap: the controlled rotation changes orientation when it passes
through the swap, becoming a lower-controlled, upper-target gate, and changes
back at the second swap.  The two intermediate diagrams display this flipped
orientation explicitly; the middle commutativity step then has disjoint
support.
\par\smallskip
\noindent\emph{Derivation.}
\begin{gather*}
  \input{./figures/3CRx-PA.tikz}\\
    \overset{\eqref{RxXX},\,\eqref{RxSym}}{=}
  \input{./figures/3CRx-PA-01.tikz}\\
    \overset{\textsc{Nat}_{\sigma}}{=}
  \input{./figures/3CRx-PA-01a.tikz}\\
    \overset{\text{Commutativity}}{=}
  \input{./figures/3CRx-PA-01b.tikz}\\
    \overset{\textsc{Nat}_{\sigma}}{=}
  \input{./figures/3CRx-PA-02.tikz}\\
    \overset{\eqref{RxXX},\,\eqref{RxSym}}{=}
  \input{./figures/3CRx-AP.tikz}.
\end{gather*}

\smallskip
\noindent\emph{Auxiliary equality \eqref{eq:3CRx-BQ}.}
\refstepcounter{equation}
\begin{equation*}
  \input{./figures/3CRx-BQ.tikz}
  =
  \input{./figures/3CRx-QB0.tikz}
  \tag{BQ}\label{eq:3CRx-BQ}
\end{equation*}
\noindent\emph{Derivation.}
\begin{gather*}
  \input{./figures/3CRx-BQ.tikz}\\
    \overset{\textsc{Nat}_{\sigma}}{=}
  \input{./figures/3CRx-BQ-01.tikz}\\
    \overset{\eqref{RxXX},\,\eqref{RxSym}}{=}
  \input{./figures/3CRx-QB0-01.tikz}\\
    \overset{\textsc{Nat}_{\sigma}}{=}
  \input{./figures/3CRx-QB0.tikz}.
\end{gather*}

\smallskip
\noindent\emph{Auxiliary equality \eqref{eq:3CRx-PB}.}
\refstepcounter{equation}
\begin{equation*}
  \scalebox{0.84}{\input{./figures/3CRx-PB0P.tikz}}
  =
  \scalebox{0.84}{\input{./figures/3CRx-Btheta.tikz}}
  \tag{PB}\label{eq:3CRx-PB}
\end{equation*}
\noindent\emph{Derivation.}
\begin{gather*}
  \input{./figures/3CRx-PB0P.tikz}\\
    \overset{\substack{\eqref{XCtrl},\,\eqref{Rxtotal}\\
      \eqref{Hij-Hij},\,\text{Commutativity}}}{=}
  \input{./figures/3CRx-QB0Q.tikz}\\
    \overset{\eqref{eq:3CRx-BQ}}{=}
  \input{./figures/3CRx-BQQ.tikz}\\
    \overset{\sigma^2=\id}{=}
  \input{./figures/3CRx-BQQL.tikz}\\
    \overset{\eqref{Xij-Xij}}{=}
  \input{./figures/3CRx-BSSG.tikz}\\
    \overset{\sigma^2=\id}{=}
  \input{./figures/3CRx-BGG.tikz}\\
    \overset{\eqref{Xij-Xij}}{=}
  \input{./figures/3CRx-Btheta.tikz}.
\end{gather*}

Apply \eqref{eq:3CRx-PA} and \eqref{eq:3CRx-PB} to expose the uncontrolled
three-rotation pattern, use \eqref{3Rx} together with the symmetry of \(R_X\),
and then reverse the same auxiliary equalities.
\begin{gather*}
  \input{./figures/3bsaltL.tikz}\\
    \overset{\eqref{eq:3CRx-PA},\,\eqref{eq:3CRx-PB}}{=}
  \scalebox{0.64}{\input{./figures/3CRx-01.tikz}}\\
    \overset{\eqref{3Rx},\,\eqref{RxSym}}{=}
  \scalebox{0.64}{\input{./figures/3CRx-02.tikz}}\\
    \overset{\eqref{eq:3CRx-PA},\,\eqref{eq:3CRx-PB}}{=}
  \input{./figures/3bsaltR.tikz}.
\end{gather*}
\end{proof}

The following \(\pi\)-normalisation lemma is used in the proof of
\eqref{eulerb}.
\label{app:pi-normalisation}

We introduce an auxiliary real parameter \(x\)
(by splitting a middle rotation/phase into two pieces \(x\) and \(\alpha_2-x\)).
This produces a \emph{family} of circuit identities indexed by \(x\), in which
the angles \(\beta_i(x)\) and \(\gamma_i(x)\) produced by the Euler-extraction
depend on \(x\) (while the input angles \(\alpha_1,\alpha_2,\alpha_3\) are fixed).

To continue the purely diagrammatic part of the derivation, we need one
additional property: we can choose a value \(x_0\) and admissible representatives
for the two \eqref{eulerH} decompositions such that the \emph{bridge phase}
\(\beta_3(x_0)+\gamma_0(x_0)\) is a multiple of \(\pi\).
Since angles are always understood modulo \(2\pi\),
\(\beta_3(x_0)+\gamma_0(x_0)\) is either \(0\) or \(\pi\) modulo \(2\pi\). These
are the two cases treated in the proof of \eqref{eulerb}.

The following analytic lemma is the normalisation argument needed for the Euler splitter.
\begin{lemma}
\label{lem:pi-normalised-split}
Fix \(\alpha_1,\alpha_2,\alpha_3\in\mathbb{R}\) and consider the continuous function
\(N:[-\frac{\pi}{2},\frac{\pi}{2}]\to\mathbb{R}\) defined by
\(N(x):=\sin(\alpha_1)\cos(x)\cos(\alpha_3)
+\cos(\alpha_1)\sin(\alpha_3)\cos(\alpha_2-x)\).
Then there exists \(x_0\in[-\frac{\pi}{2},\frac{\pi}{2}]\) such that \(N(x_0)=0\).

In the Euler-extraction step used in the proof of \eqref{eulerb}, the two \eqref{eulerH}
instances can be chosen from their admissible parameter families so that the
corresponding bridge phase satisfies
\(\beta_3(x_0)+\gamma_0(x_0)\in\pi\mathbb{Z}\); hence (modulo \(2\pi\)) we necessarily have
\(\beta_3(x_0)+\gamma_0(x_0)\equiv 0\) or \(\equiv \pi\).
\end{lemma}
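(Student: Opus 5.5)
The first claim follows from the intermediate value theorem applied at the two endpoints. Since \(\cos(\pm\frac{\pi}{2})=0\), the first summand of \(N\) vanishes at both ends. Using \(\cos(\alpha_2\mp\frac{\pi}{2})=\pm\sin\alpha_2\), this gives
\[
N\!\left(\tfrac{\pi}{2}\right)=\cos\alpha_1\sin\alpha_3\sin\alpha_2,
\qquad
N\!\left(-\tfrac{\pi}{2}\right)=-\cos\alpha_1\sin\alpha_3\sin\alpha_2 .
\]
So \(N(\frac{\pi}{2})\,N(-\frac{\pi}{2})\le 0\). \(N\) is continuous, being a sum of products of cosines in \(x\). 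If the product is zero, an endpoint is already a root. Otherwise the intermediate value theorem yields \(x_0\in(-\frac{\pi}{2},\frac{\pi}{2})\) with \(N(x_0)=0\).

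For the second claim I would make the Euler extraction of \eqref{eulerb} explicit as matrices. The split of the middle angle into \(x\) and \(\alpha_2-x\) produces two \eqref{eulerH} instances. The first outputs the right-end phase \(\beta_3(x)\), and the second has the left-end phase \(\gamma_0(x)\). Both are adjacent on the same level, so by \eqref{sum} they merge into a single level phase \(\beta_3(x)+\gamma_0(x)\). From the matrix form in the proof of Lemma~\ref{lem:EH-angle-soundness}, in the generic case \(\beta_3=\arg z-\arg z'\) for the first block and \(\gamma_0=\arg w+\arg w'\) for the second block. Here \(z,z'\) and \(w,w'\) are the auxiliary complex numbers built from \((\alpha_1,x)\) and \((\alpha_2-x,\alpha_3)\) respectively.

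Hence \(\beta_3+\gamma_0=\arg\bigl(z\,\overline{z'}\,w\,w'\bigr)\) modulo \(2\pi\). The key computation is to expand \(\operatorname{Im}\bigl(z\,\overline{z'}\,w\,w'\bigr)\) with product-to-sum identities and identify it with a nonzero constant multiple of \(N(x)\). That constant should come from \(|z|^2+|z'|^2=2\) and the analogous normalisation for \(w,w'\). Granting this, at \(x_0\) the product is real. Provided it is nonzero, its argument lies in \(\{0,\pi\}\), so \(\beta_3(x_0)+\gamma_0(x_0)\in\pi\mathbb{Z}\).

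The degenerate cases \(z=0\), \(z'=0\), \(w=0\) or \(w'=0\) are exactly where the product vanishes. These I would handle with the second part of Lemma~\ref{lem:EH-angle-soundness}: there \(\beta_3\) may be prescribed arbitrarily, and symmetrically the corresponding endpoint phase \(\gamma_0\) has the same one-parameter freedom. I would prescribe \(\beta_3:=-\gamma_0\), or the mirror choice, so that the sum is \(0\).

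The main obstacle I expect is the trigonometric identification of that imaginary part with a multiple of \(N\). This depends on the exact orientation conventions (left-to-right reading, which level carries \(P_j\)) fixed in Appendix~\ref{appendix:relations_angles}. A sign or a \(\pi/2\) shift may turn the relevant quantity into \(\operatorname{Re}\) rather than \(\operatorname{Im}\). I would first verify the identification numerically on the four entries, and only then write the symbolic expansion.
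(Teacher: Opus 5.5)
Your proposal follows the paper's proof essentially step for step: the same endpoint evaluation plus the intermediate value theorem, the same identification of the generic bridge phase with \(\arg\bigl(z\,\overline{z'}\,w\,w'\bigr)\) whose imaginary part is a multiple of \(N(x)\), and the same use of the free endpoint phase in the degenerate cases where that product vanishes. The computation you defer is immediate if done factor by factor, with no normalisation step: for the paper's inputs \((-\alpha_1,x)\) and \((\alpha_2-x,-\alpha_3)\) one gets \(z\,\overline{z'}=\sin\alpha_1\cos x+i\cos\alpha_1\) and \(w\,w'=\sin\alpha_3\cos(\alpha_2-x)+i\cos\alpha_3\), so the imaginary part of the product is exactly \(N(x)\), the relation \(|z|^2+|z'|^2=2\) plays no role, and with your sign choices \((\alpha_1,x)\), \((\alpha_2-x,\alpha_3)\) it is \(-N(x)\), which serves equally well.
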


\begin{proof}
First note that
\(N(-\pi/2)=
\cos(\alpha_1)\sin(\alpha_3)\cos(\alpha_2+\pi/2)
=-\cos(\alpha_1)\sin(\alpha_3)\sin(\alpha_2)\), while
\(N(\pi/2)=
\cos(\alpha_1)\sin(\alpha_3)\cos(\alpha_2-\pi/2)
=\cos(\alpha_1)\sin(\alpha_3)\sin(\alpha_2)\).
Hence \(N(-\frac{\pi}{2})=-N(\frac{\pi}{2})\), so
\(N(-\frac{\pi}{2})\cdot N(\frac{\pi}{2})\le 0\).
By continuity of \(N\), the intermediate value theorem yields
\(x_0\in[-\frac{\pi}{2},\frac{\pi}{2}]\) with \(N(x_0)=0\).

For the second part, use the notation of Appendix~\ref{appendix:relations_angles}
for the two Euler extractions.  Let \(z_\beta,z'_\beta\) be the quantities
\(z,z'\) for the first extraction, with input angles \((-\alpha_1,x)\), and
let \(z_\gamma,z'_\gamma\) be the corresponding quantities for the second
extraction, with input angles \((\alpha_2-x,-\alpha_3)\).  Put
\(Z(x):=
\bigl(\sin(\alpha_1)\cos(x)+i\cos(\alpha_1)\bigr)
\bigl(\sin(\alpha_3)\cos(\alpha_2-x)+i\cos(\alpha_3)\bigr)\).
In the generic case where none of
\(z_\beta,z'_\beta,z_\gamma,z'_\gamma\) vanishes, the formulas for \eqref{eulerH} give
\begin{equation*}
  e^{i(\beta_3(x)+\gamma_0(x))}
  =
  \frac{
    z_\beta\,\overline{z'_\beta}\,z_\gamma\,z'_\gamma
  }{
    \left|z_\beta\,\overline{z'_\beta}\,z_\gamma\,z'_\gamma\right|
  }.
\end{equation*}
The two products in the numerator reduce to
\begin{equation*}
\begin{aligned}
  z_\beta\,\overline{z'_\beta}
    &= \sin(\alpha_1)\cos(x)+i\cos(\alpha_1),\\
  z_\gamma\,z'_\gamma
    &= \sin(\alpha_3)\cos(\alpha_2-x)+i\cos(\alpha_3).
\end{aligned}
\end{equation*}
Thus the numerator is \(Z(x)\), and
\begin{equation*}
  \tan(\beta_3(x)+\gamma_0(x))
  =
  \frac{
    \sin(\alpha_1)\cos(x)\cos(\alpha_3)
    +\cos(\alpha_1)\sin(\alpha_3)\cos(\alpha_2-x)
  }{
    \sin(\alpha_1)\sin(\alpha_3)\cos(x)\cos(\alpha_2-x)
    -\cos(\alpha_1)\cos(\alpha_3)
  }.
\end{equation*}
The numerator is \(N(x)\).  If \(Z(x_0)\neq0\), then \(N(x_0)=0\)
implies that \(Z(x_0)\) is a non-zero real number.  Hence
\(Z(x_0)/|Z(x_0)|\in\{1,-1\}\), so the bridge phase is a multiple of \(\pi\).

It remains to consider the case \(Z(x_0)=0\).  Then at least one of
\(z_\beta,z'_\beta,z_\gamma,z'_\gamma\) is zero, so at least one of the two
\eqref{eulerH} extractions is degenerate.  By the admissible degenerate convention of
Appendix~\ref{appendix:relations_angles}, the endpoint phase \(\beta_3\) in a
degenerate first extraction, or the endpoint phase \(\gamma_0\) in a degenerate
second extraction, may be prescribed arbitrarily while staying in the same
sound \eqref{eulerH} parameter family.  Choose that free endpoint phase so that
\(\beta_3(x_0)+\gamma_0(x_0)\in\pi\mathbb Z\).  Reducing modulo \(2\pi\) gives
the claimed dichotomy \(\equiv 0\) or \(\equiv\pi\).
\end{proof}

\begin{remark}
The lemma gives \emph{existence} (not uniqueness) of a suitable
\(x_0\), and any such choice is valid for the subsequent diagrammatic
derivation.  Once \(x_0\) is fixed, the remaining steps use the equational
rules of \(\QCeq\).
\end{remark}

\begin{proof}[Proof of \eqref{eulerb}]
\begin{gather*}
  \scalebox{0.65}{\input{./figures/eulerbL.tikz}}\\
    \overset{\eqref{sum}}{=}
    \scalebox{0.65}{\input{./figures/eulerbF-01.tikz}}\\
    = \scalebox{0.65}{\input{./figures/eulerbF-02.tikz}}\\
    \overset{\text{Commutativity},\eqref{sum},\eqref{Htotal}}{=}
    \scalebox{0.65}{\input{./figures/eulerbF-03.tikz}}\\
    \overset{\eqref{Hij-Hij}}{=}
    \scalebox{0.65}{\input{./figures/eulerbF-04.tikz}}\\
    \overset{\eqref{eulerH}}{=}
    \scalebox{0.65}{\input{./figures/eulerbF-05.tikz}}\\
    \overset{\eqref{sum}}{=}
    \scalebox{0.65}{\input{./figures/eulerbF-06.tikz}}\\
    \overset{\text{Commutativity},\eqref{sum},\eqref{Htotal}}{=}
    \scalebox{0.65}{\input{./figures/eulerbF-07.tikz}}
\end{gather*}

Choose \(x_0\) and the admissible \eqref{eulerH} representatives as in
Lemma~\ref{lem:pi-normalised-split}, so that
\(\beta_3(x_0)+\gamma_0(x_0)\in\pi\mathbb{Z}\).
Working modulo \(2\pi\), there are two cases:
(i) \(\beta_3(x_0)+\gamma_0(x_0)\equiv 0\), and
(ii) \(\beta_3(x_0)+\gamma_0(x_0)\equiv \pi\).
We treat these cases separately below.

\noindent\emph{Case \(\beta_3+\gamma_0=0\).}
\begin{gather*}
  \input{./figures/eulerbF-08-01.tikz}\\
  \overset{\eqref{0phase}}{=}
  \input{./figures/eulerbF-08-02.tikz}\\
  \overset{\eqref{Hij-Hij}}{=}
  \input{./figures/eulerbF-08-03.tikz}\\
  \overset{\text{Commutativity},\eqref{sum}}{=}
  \input{./figures/eulerbF-08-04.tikz}\\
  = \input{./figures/eulerbF-08-05.tikz}
\end{gather*}

\noindent\emph{Case \(\beta_3+\gamma_0=\pi\).}
\begin{gather*}
  \input{./figures/eulerbF-09-01.tikz}\\
  = \input{./figures/eulerbF-09-02.tikz}\\
  \overset{\eqref{XCtrl}}{=}
  \input{./figures/eulerbF-09-03.tikz}\\
  \overset{\text{Commutativity},\eqref{sum}}{=}
  \input{./figures/eulerbF-09-05.tikz}\\
  \overset{\eqref{Hij-Hij}}{=}
  \input{./figures/eulerbF-09-06.tikz}\\
  \overset{\text{Commutativity},\eqref{sum}}{=}
  \input{./figures/eulerbF-09-07.tikz}\\
  = \input{./figures/eulerbF-09-08.tikz}
\end{gather*}
\end{proof}

\begin{proof}[Proof of \eqref{euler}]
\begin{gather*}
  \input{./figures/eulerL.tikz}\\
    \overset{\eqref{Rxtotal}}{=}
    \input{./figures/eulerF-01.tikz}\\
    \overset{\eqref{eulerb}}{=}
    \input{./figures/eulerF-02.tikz}\\
    \overset{\text{Commutativity},\eqref{sum}}{=}
    \input{./figures/eulerR.tikz}
\end{gather*}
\end{proof}

%% file: appendix_gray.tex
\section{Details of the Gray-code lifts}\label{app:encoding-details}

The encoding of \cref{subsec:encoding} uses two raw-\(\LOPP\) constructions:
\begin{enumerate}
  \item a \(d\)-mode gadget \(B_d^{(i,i+1)}\) whose single-photon semantics is the adjacent two-level Hadamard \(H_d^{(i,i+1)}\);
\item the lift operators \(\Lift_d^p(-)\) and \(\Lift_{d,j}^p(-)\), which replicate a local subcircuit across Gray blocks and mirror it on the reversed branches.
\end{enumerate}
This appendix fixes explicit choices and the block combinatorics used later by
decoding.

\subsection{\texorpdfstring{A single-qudit network for \(H_d^{(i,i+1)}\)}{A single-qudit network for Hd(i,i+1)}}

Fix \(d \ge 2\) and \(i \in \{0,\dots,d-2\}\).  We need a \(d\)-mode
\(\LOPP\) circuit whose single-photon semantics mixes levels \(i\) and \(i+1\)
and fixes the others.

Let \(H := \frac{1}{\sqrt{2}}\begin{psmallmatrix} 1 & 1\\ 1 & -1\end{psmallmatrix}\in U(2)\)
be the usual \(2\times 2\) Hadamard, and set
\(H_d^{(i,i+1)} := I_i \oplus H \oplus I_{d-i-2}\in U(d)\).

\begin{definition}\label{def:hadamard-bs-gadget}
Write \(P(\phi):1\to 1\) for the \(\LOPP\) phase-shifter generator of angle
\(\phi\), and \(B(\theta):2\to 2\) for the \(\LOPP\) beam-splitter generator of
angle \(\theta\) (as in \cref{def:lopp-semantics-sp}).  Set
\(H_{\mathrm{BS}}:=(\id_1\otimes P(-\pi/2))\circ B(\pi/4)\circ
(\id_1\otimes P(-\pi/2)):2\to 2\).  The \(d\)-mode gadget placing this
interferometer on modes \(i,i+1\) is
\(B_d^{(i,i+1)}:=\id_i\otimes H_{\mathrm{BS}}\otimes \id_{d-i-2}:d\to d\).
\end{definition}

Here \(B(\pi/4)\) is the standard \(50/50\) beam splitter in our phase
convention, and the surrounding phase shifters correct the relative phases so
that the induced \(2\times2\) matrix is \(H\).

\begin{lemma}\label{lem:Hd-singlequdit-network}
For each \(d\ge 2\) and \(0\le i\le d-2\), the circuit \(B_d^{(i,i+1)}\) is a raw \(\LOPP\) term built only from phase shifters, beam splitters, and identities, and its single-photon semantics satisfies
\(\interp{B_d^{(i,i+1)}}_{\mathrm{sp}} = H_d^{(i,i+1)}\in U(d)\).
\end{lemma}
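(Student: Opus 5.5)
The proof is a direct matrix computation in the single-photon semantics of Definition~\ref{def:lopp-semantics-sp}. The first claim, that \(B_d^{(i,i+1)}\) is a raw \(\LOPP\) term built only from phase shifters, beam splitters, and identities, is immediate by inspection of Definition~\ref{def:hadamard-bs-gadget}. The term is a tensor of identity wires with \(H_{\mathrm{BS}}\), and \(H_{\mathrm{BS}}\) is a sequential composite of \(\id_1\otimes P(-\pi/2)\), \(B(\pi/4)\), and \(\id_1\otimes P(-\pi/2)\). No swaps occur, so no structural coherence issues arise.

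For the semantic claim, I first compute the \(2\times2\) matrix of \(H_{\mathrm{BS}}\). Since \(\interp{-}_{\mathrm{sp}}\) sends \(\otimes\) to \(\oplus\) and \(\circ\) to matrix product, \(\interp{\id_1\otimes P(-\pi/2)}_{\mathrm{sp}}=D:=\operatorname{diag}(1,-i)\). The beam splitter gives \(\interp{B(\pi/4)}_{\mathrm{sp}}=\frac1{\sqrt2}\begin{psmallmatrix}1&i\\ i&1\end{psmallmatrix}\). Hence \(\interp{H_{\mathrm{BS}}}_{\mathrm{sp}}=D\,\interp{B(\pi/4)}_{\mathrm{sp}}\,D\). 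Because \(D\) is diagonal, entry \((r,s)\) is \(D_{rr}B_{rs}D_{ss}\), which gives
\[
\tfrac1{\sqrt2}\bigl(1,\; i\cdot(-i),\; (-i)\cdot i,\; (-i)\cdot 1\cdot(-i)\bigr)=\tfrac1{\sqrt2}(1,1,1,-1),
\]
listing the entries row by row. This is exactly \(H\). Since \(D\) appears on both sides, the left-to-right versus right-to-left reading convention does not affect the result.

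Finally, I apply the tensor clause twice. This gives \(\interp{B_d^{(i,i+1)}}_{\mathrm{sp}}=I_i\oplus\interp{H_{\mathrm{BS}}}_{\mathrm{sp}}\oplus I_{d-i-2}=I_i\oplus H\oplus I_{d-i-2}=H_d^{(i,i+1)}\). The block \(H\) sits on modes \(i,i+1\) because the first tensor factor occupies exactly \(i\) modes.

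There is no real obstacle. The only point requiring care is the beam-splitter phase convention \(\begin{psmallmatrix}\cos\theta&i\sin\theta\\ i\sin\theta&\cos\theta\end{psmallmatrix}\), which is what forces the two \(-\pi/2\) phase shifters on the second mode.
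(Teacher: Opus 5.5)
Your proposal is correct and follows the same route as the paper. You check by inspection that the gadget contains only phase shifters, a beam splitter, and identities. You then compute \(\mathrm{diag}(1,-i)\) times the \(\pi/4\) beam-splitter matrix times \(\mathrm{diag}(1,-i)\) entrywise to obtain \(H\), and conclude from the direct-sum interpretation of tensor. Your remark that the symmetric placement of the two phase shifters makes the composition-order convention irrelevant is a small but useful addition that the paper leaves implicit.
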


\begin{proof}
Under \(\interp{-}_{\mathrm{sp}}\), the phase shifter \(P(\phi)\) acts as \([e^{\mathrm{i}\phi}]\) on the corresponding mode, so
\(\interp{\id_1\otimes P(\phi)}_{\mathrm{sp}}=\mathrm{diag}(1,e^{\mathrm{i}\phi})\) on \(2\) modes.
The beam splitter has matrix \(\interp{B(\theta)}_{\mathrm{sp}}=\begin{psmallmatrix}\cos(\theta) & \mathrm{i}\sin(\theta)\\ \mathrm{i}\sin(\theta) & \cos(\theta)\end{psmallmatrix}\).
For \(\theta=\pi/4\) and \(\phi=-\pi/2\) we have \(\cos(\pi/4)=\sin(\pi/4)=1/\sqrt{2}\) and \(e^{\mathrm{i}\phi}=e^{-\mathrm{i}\pi/2}=-\mathrm{i}\), so
\(\interp{H_{\mathrm{BS}}}_{\mathrm{sp}}=\mathrm{diag}(1,-\mathrm{i})\cdot \frac{1}{\sqrt{2}}\begin{psmallmatrix}1 & \mathrm{i}\\ \mathrm{i} & 1\end{psmallmatrix}\cdot \mathrm{diag}(1,-\mathrm{i})=\frac{1}{\sqrt{2}}\begin{psmallmatrix}1 & 1\\ 1 & -1\end{psmallmatrix}=H\).
Since parallel composition in \(\LOPP\) is interpreted as block-diagonal direct sum on disjoint mode blocks, we have
\(\interp{\id_i\otimes H_{\mathrm{BS}}\otimes \id_{d-i-2}}_{\mathrm{sp}}=I_i\oplus H \oplus I_{d-i-2}=H_d^{(i,i+1)}\).
\end{proof}

\subsection{Alternating mirror symmetry on Gray-coded modes}

The lift operators of \cref{subsec:encoding} replicate a local \(\LOPP\)
circuit across Gray-code blocks.  The needed combinatorial fact is that, for a
fixed prefix \(u\), the words \(u0,\dots,u(d-1)\) form one contiguous Gray
block, read either forward or backward according to the reflected recursion.
The lifts account for this by mirroring on the reversed branches.

\begin{definition}\label{def:mode-reversal}
For each \(r\ge 0\), let \(\rho_r : 1^{\otimes d^r} \longrightarrow 1^{\otimes d^r}\)
be the \(\LOPP\) permutation circuit that reverses the order of the \(d^r\) modes:
on the single-photon basis \(\{\ket{0},\dots,\ket{d^r-1}\}\) it acts by
\(\rho_r\ket{k}=\ket{d^r-1-k}\).

For any raw \(\LOPP\) circuit \(C:1^{\otimes d^r}\to 1^{\otimes d^r}\), define
its mirror by \(\Rev(C) := \rho_r \circ C \circ \rho_r^{-1}\).  Since
\(\rho_r\) is involutive, \(\rho_r^{-1}=\rho_r\).
\end{definition}

For \(q\in[d]\), set \(\Rev^{q}(C):=C\) if \(q\) is even, and \(\Rev^{q}(C):=\Rev(C)\) if \(q\) is odd.
This parity test matches the reflected recursion in the Gray code: odd leading digits reverse the traversal of the suffix.

\begin{definition}\label{def:gray-lifts}
Fix \(d\ge 2\), let \(m\ge 0\), and let
\(C:1^{\otimes d^m}\to 1^{\otimes d^m}\) be a raw \(\LOPP\) circuit.
\begin{itemize}
\item The ordinary lift is defined by
\(\Lift_d^{0}(C):=C\) and
\(\Lift_d^{p+1}(C):=\bigotimes_{q=0}^{d-1}\Rev^{q}(\Lift_d^{p}(C))\).
It has arity \(d^{m+p}\).
\item For \(j\in[d]\), set \(C^{(q)}:=C\) if \(q=j\) and
\(C^{(q)}:=\id_{d^m}\) otherwise.  The selected lift is defined by
\(\Lift_{d,j}^{0}(C):=\bigotimes_{q=0}^{d-1}\Rev^{q}(C^{(q)})\) and
\(\Lift_{d,j}^{p+1}(C):=\bigotimes_{q=0}^{d-1}
\Rev^{q}(\Lift_{d,j}^{p}(C))\).  It has arity \(d^{m+1+p}\).
\end{itemize}
\end{definition}

At level \(p+1\), the modes split into \(d\) consecutive blocks of size
\(d^{m+p}\); the previous lift is placed on each block and mirrored when the
new leading digit is odd.  The selected lift does the same, but
starts from one chosen copy of \(C\) and identities elsewhere.

The next lemma isolates the combinatorial feature that makes the lifts correct: fixing a prefix produces a contiguous block of indices, and within that block the last digit runs either forward or backward.

\begin{lemma}\label{lem:gray-blocks}
Fix \(d\ge 2\) and \(n\ge 0\).
For each word \(u\in[d]^n\), there exists an index \(s(u)\) with \(0\le s(u)\le d^{n+1}-d\) and a sign \(\varepsilon(u)\in\{+1,-1\}\) such that
\(\{\,t\in\{0,\dots,d^{n+1}-1\}\mid G_{n+1}^d(t)\text{ has prefix }u\,\}=\{s(u),s(u)+1,\dots,s(u)+d-1\}\),
and for every \(r\in\{0,\dots,d-1\}\), \(G_{n+1}^d\bigl(s(u)+r\bigr)=u\,r\) if \(\varepsilon(u)=+1\), while \(G_{n+1}^d\bigl(s(u)+r\bigr)=u\,(d-1-r)\) if \(\varepsilon(u)=-1\),
where \(u\,r\) denotes concatenation of the word \(u\) with the digit \(r\).

One may take \(\varepsilon\) to be the unique function \(\varepsilon:[d]^*\to\{\pm1\}\) defined by \(\varepsilon(\epsilon)=+1\) and
\(\varepsilon(qu)=\varepsilon(u)\) if \(q\) is even and \(\varepsilon(qu)=-\varepsilon(u)\) if \(q\) is odd.
In particular, \(\varepsilon(u)\) flips sign exactly once for each odd digit encountered when reading \(u\) from left to right.
\end{lemma}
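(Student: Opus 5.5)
We argue by induction on \(n\), unfolding the recursive definition of \(G_{n+1}^d\) one leading digit at a time. Everything follows from Definition~\ref{def:gray-code}, with no appeal to Lemma~\ref{lem:gray-neighbours}. It is convenient to prove a slightly stronger statement: for every \(u\in[d]^n\), the block of prefix \(u\) occupies the index interval \([s(u),s(u)+d-1]\), and its last digit is read forward when \(\varepsilon(u)=+1\) and backward when \(\varepsilon(u)=-1\), with \(\varepsilon\) the parity function in the statement.

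\textbf{Base case \(n=0\).} Here \(u=\epsilon\), and we take \(s(\epsilon)=0\) and \(\varepsilon(\epsilon)=+1\). Since \(d^{0}=1\), the definition gives \(G_1^d(k)=k\cdot G_0^d(0)=k\) for every \(k\) in either parity branch. So the whole range \(\{0,\dots,d-1\}\) is the prefix-\(\epsilon\) block, read forward.

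\textbf{Inductive step.} Write a word of length \(n+1\) as \(qu\) with \(q\in[d]\) and \(u\in[d]^n\). By Definition~\ref{def:gray-code}, with \(k=q d^{n+1}+r\) and \(0\le r<d^{n+1}\), we have \(G_{n+2}^d(k)=q\cdot G_{n+1}^d(r')\). Here \(r'=r\) if \(q\) is even and \(r'=d^{n+1}-1-r\) if \(q\) is odd. This \(r\mapsto r'\) is a bijection of \(\{0,\dots,d^{n+1}-1\}\), so the indices whose word has prefix \(qu\) are exactly \(q d^{n+1}+r\) with \(G_{n+1}^d(r')\) having prefix \(u\). By the induction hypothesis these \(r'\) form the interval \(\{s(u),\dots,s(u)+d-1\}\).

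\emph{Even \(q\).} Set \(s(qu)=q d^{n+1}+s(u)\), so the last-digit orientation is unchanged and \(\varepsilon(qu)=\varepsilon(u)\).

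\emph{Odd \(q\).} The reflection \(r=d^{n+1}-1-r'\) sends this interval to a consecutive interval starting at \(d^{n+1}-d-s(u)\), traversed in reverse. So set \(s(qu)=q d^{n+1}+d^{n+1}-d-s(u)\). Now \(G_{n+2}^d(s(qu)+r)=qu\,(\cdot)\), with the last digit taken from position \(d-1-r\) of the old block. This flips the orientation, so \(\varepsilon(qu)=-\varepsilon(u)\). Both cases match the recursive clauses for \(\varepsilon\).

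\textbf{Bounds.} We need \(0\le s(qu)\le d^{n+2}-d\). This follows from \(0\le s(u)\le d^{n+1}-d\) and \(q\le d-1\). In the even case, \(s(qu)\le (d-1)d^{n+1}+d^{n+1}-d\). In the odd case the offset \(d^{n+1}-d-s(u)\) is again in \([0,d^{n+1}-d]\).

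\textbf{Main obstacle.} The only delicate point is the index arithmetic in the odd case. One has to check that reversing the suffix interval gives exactly the start \(d^{n+1}-d-s(u)\), and that the digit at offset \(r\) becomes \(u\,(d-1-r)\) or \(u\,r\) with the orientation correctly flipped. I would verify this by writing \(\varepsilon(u)=\pm1\) explicitly and composing the two affine reversals.

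Finally, uniqueness of \(\varepsilon\) is immediate, since the recursion on the first letter determines \(\varepsilon\) on every word. The remark about counting odd digits is the unfolded form of this recursion.
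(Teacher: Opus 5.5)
Your proof is correct and is essentially the paper's argument: induction on the word length, splitting off the leading digit \(q\), keeping the block unchanged for even \(q\), and for odd \(q\) using the reflection \(r=d^{n+1}-1-r'\), which gives start offset \(d^{n+1}-d-s(u)\) and reversed orientation. You also check the bound \(0\le s(u)\le d^{n+1}-d\) and the offset-to-digit arithmetic explicitly, which the paper leaves implicit.
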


\begin{proof}
We prove the block statement and the recursion for \(\varepsilon\) simultaneously by induction on \(n\).

Base case \(n=0\):
we have \(u=\epsilon\) and \(G_1^d(t)=(t)\) for \(0\le t<d\), so we may take \(s(\epsilon)=0\) and \(\varepsilon(\epsilon)=+1\).

Inductive step:
let \(n\ge 1\) and write \(u=qv\) with \(q\in[d]\) and \(v\in[d]^{n-1}\).
Any \(t\in\{0,\dots,d^{n+1}-1\}\) can be written uniquely as \(t=q\cdot d^{\,n}+r\) with \(0\le r<d^n\).

If \(q\) is even, then by \cref{def:gray-code} we have
\(G_{n+1}^d(qd^n+r)=q\cdot G_n^d(r)\).
Thus \(G_{n+1}^d(qd^n+r)\) has prefix \(qv\) if and only if \(G_n^d(r)\) has prefix \(v\).
By the induction hypothesis, the set of such \(r\) is an interval of length \(d\), hence so is the set of such \(t\), with the same last-digit direction.
This corresponds to setting \(\varepsilon(qv)=\varepsilon(v)\).

If \(q\) is odd, then \(G_{n+1}^d(qd^n+r)=q\cdot G_n^d(d^n-1-r)\).
Put \(r':=d^n-1-r\).
By the induction hypothesis, the set of \(r'\) for which \(G_n^d(r')\) has prefix \(v\) is an interval \([a,a+d-1]\).
Its image under \(r=d^n-1-r'\) is the interval \([d^n-d-a,\,d^n-1-a]\), again of length \(d\).
Since the map \(r'\mapsto d^n-1-r'\) reverses order, the last digit is traversed in the opposite direction on this block; this corresponds to setting \(\varepsilon(qv)=-\varepsilon(v)\).
\end{proof}

\Cref{lem:gray-blocks} supplies the fact used by the lift recursion: each fixed
prefix determines one contiguous mode block, and the last digit runs through
that block either forward or backward.  Mirroring on the odd branches restores
the same logical orientation in every block.

%% file: appendix_encodingdecoding.tex
\section{Encoding and decoding}\label{app:encoding-decoding}

Theorem~\ref{thm:encoding-decoding} states that for every \(C:n\to n\) in
\(\CQC\) and every \(k,\ell\ge 0\),
\[
  \QCeq\vdash \mathrm{D}^{0}_{k+n+\ell}(E^{k}_{\ell}(C))
  =
  \id_k\otimes C\otimes \id_\ell.
\]
Here \(E^k_\ell\) is the contextual encoding of
Definition~\ref{def:encoding}, and \(\mathrm D_n^t\) is the contextual
decoding of Definition~\ref{def:decoding}.  Because \(\otimes\) is tensor on
qudits but direct sum on optics, both translations are context-sensitive and
serialise tensor products.  The proof identifies the behaviour of
\(\mathrm D_n^t\) on one Gray block, applies the same calculation to the lift
gadgets, and closes by structural induction on \(\CQC\).

\subsection{Decoding a fixed Gray-code block}

Fix \(a\ge 0\).  In reflected \(d\)-ary Gray order, the \(d^{a+1}\) modes split
into \(d\) consecutive blocks of length \(d^a\), indexed by the leading Gray
digit.  On an even block, decoding is just the \(a\)-qudit decoding with an
outer control; on an odd block, the same holds after mirroring the local
circuit.

\begin{lemma}\label{lem:dec-ctrl-even}
Let \(C\) be a \(\LOPP\) circuit acting on \(\ell\) consecutive modes, and fix
\(a \ge 0\) and \(s\) with \(s+\ell \le d^{a}\).  For every even
\(j \in \{0,\dots,d-1\}\) we have \(\mathrm{D}^{d^{a}j + s}_{a+1}(C)=\ctrl_{j}\bigl(\mathrm{D}^{s}_{a}(C)\bigr)\).
\end{lemma}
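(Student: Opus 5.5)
The plan is a structural induction on the raw \(\LOPP\) circuit \(C\), following the recursive clauses of Definition~\ref{def:decoding}. The two decodings \(\mathrm D^{d^aj+s}_{a+1}\) and \(\mathrm D^s_a\) have the same recursive shape. Composition and tensor in Definition~\ref{def:decoding} shift the offset \(t\) in the same way on both sides, and \(\ctrl_j\) is a functor. So the inductive cases follow at once from \(\ctrl_j(g\circ h)=\ctrl_j(g)\circ\ctrl_j(h)\). For the tensor clause the offsets become \(d^aj+s+\ell_1\) and \(s+\ell_1\), and these still satisfy the hypothesis \(s+\ell_1+\ell_2\le d^a\). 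The empty and identity clauses use \(\ctrl_j(\id_a)=\id_{a+1}\).

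The real content lies in the generator clauses. The first step is to record the Gray fact for even blocks: by Definition~\ref{def:gray-code}, with \(n=a+1\), \(q=j\) even and \(r=s+\iota\), we have
\[
G^d_{a+1}(d^aj+s+\iota)=j\cdot G^d_a(s+\iota)\qquad\text{for all }0\le\iota<\ell .
\]
For a one-mode phase, the left-hand side is therefore \(\ctrl_{j\,G^d_a(s)}(\text{phase})=\ctrl_j(\ctrl_{G^d_a(s)}(\text{phase}))\). This is exactly \(\ctrl_j(\mathrm D^s_a(\text{phase}))\), since \(\ctrl_u\) is defined as an iterated control.

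For a swap or beam splitter on modes \(t,t+1\) with \(t=d^aj+s\), both modes lie in the same even block, because \(s+1<d^a\). The Gray decomposition of Lemma~\ref{lem:gray-neighbours} for the pair \((G^d_a(s),G^d_a(s+1))=(u\,v\,w,\;u\,(v+\varepsilon)\,w)\) then lifts to \((ju)\,v\,w\) and \((ju)\,(v+\varepsilon)\,w\), with the same \(v\), \(\varepsilon\) and \(w\). The left-hand side is then
\[
\Lambda^{ju}_w(g)=\sigma_{|u|+1,|w|,1}\circ\ctrl_{j}(\ctrl_{uw}(g))\circ\sigma_{|u|+1,1,|w|}.
\]
This has to be matched with \(\ctrl_j(\sigma_{|u|,|w|,1}\circ\ctrl_{uw}(g)\circ\sigma_{|u|,1,|w|})\). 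Functoriality of \(\ctrl_j\) together with the permutation-naturality axiom of Definition~\ref{def:polycontrolled-PROP}, which gives \(\ctrl_j(\pi^{-1}\circ f\circ\pi)=(\id_1\otimes\pi^{-1})\circ\ctrl_j(f)\circ(\id_1\otimes\pi)\), identifies the two. The point to use is that \(\sigma_{|u|+1,\ldots}\) equals \(\id_1\otimes\sigma_{|u|,\ldots}\) by strict coherence.

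The main obstacle I expect is bookkeeping rather than new equations. One point is that the choice of the decomposition \(u,v,w\) must be literally the one forced by Lemma~\ref{lem:gray-neighbours}, so that prefixing \(j\) really produces the decomposition used on the left. Another is that the equality is meant to hold at the level of \(\RawCQC_d\) modulo \(\equiv\), or at least in \(\QCeq\); the argument above uses only the structural congruence. I will state that the changed digit position of the \((a+1)\)-digit words is one more than that of the \(a\)-digit words, and that the wire permutations line up up to this index shift. No control-algebra laws (Theorems~\ref{thm:compatibility}--\ref{thm:exhaustive}) are needed here, since no reflection occurs in even blocks.
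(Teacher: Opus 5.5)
Your proposal is correct and follows the paper's route: structural induction on the raw syntax of \(C\), the even-block identity \(G^d_{a+1}(d^aj+t)=j\cdot G^d_a(t)\) for the generator clauses, and functoriality of \(\ctrl_j\) for the composition and serialised-tensor clauses. Your treatment of the swap and beam-splitter clause rewrites \(\Lambda^{ju}_w(g)\) as \(\ctrl_j(\Lambda^u_w(g))\) using strict coherence and the target-permutation naturality law of the control functor, which spells out a step the paper states in one sentence.
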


\begin{proof}
We induct on the raw syntax of \(C\).  For even \(j\),
\(G^d_{a+1}(d^aj+t)=j\cdot G^d_a(t)\) throughout the block.  Hence the empty
diagram and identity decode to \(\id_{a+1}=\ctrl_j(\id_a)\), and a phase
shifter at local offset \(t\) decodes to
\(\ctrl_j\bigl(\mathrm D_a^t(\input{./figures/lov-phase.tikz})\bigr)\).  The same
factorisation holds for a beam splitter or swap on \(t,t+1\): the two Gray
labels keep the same leading digit \(j\) and differ only in the suffix part,
so decoding produces the \(a\)-qudit decoding with an extra outer
\(j\)-control.

Sequential composition is immediate because \(\mathrm D\) is compositional on
\(\circ\).  For \(C=C_1\otimes C_2\), with \(C_1\) on \(\ell_1\) modes, the
serialisation clause gives
\[
  \mathrm D_{a+1}^{d^aj+s}(C)
  =
  \mathrm D_{a+1}^{d^aj+s+\ell_1}(C_2)\circ
  \mathrm D_{a+1}^{d^aj+s}(C_1),
\]
and the induction hypotheses on \(C_1\) and \(C_2\) turn this into
\(\ctrl_j(\mathrm D_a^s(C))\) by functoriality of \(\ctrl_j\).
\end{proof}

\begin{lemma}\label{lem:dec-ctrl-odd}
Let \(C\) be a circuit of \(\LOPP\) acting on \(\ell\) consecutive modes, and fix
\(a \ge 0\) and \(s\) with \(s + \ell \le d^a\).  For every odd
\(j \in \{0,\dots,d-1\}\) we have \(\mathrm{D}^{d^{a}j+s}_{a+1}(C)=\ctrl_{j}\bigl(\mathrm{D}^{d^{a}-s-\ell}_{a}(\mathrm{Rev}_{\ell}(C))\bigr)\), where \(\mathrm{Rev}_{\ell}(C)\) denotes the circuit obtained from \(C\) by reversing the order of its \(\ell\) local modes.  In the full-block case \(\ell=d^a\), this is the mirror \(\Rev(C)\) of Definition~\ref{def:mode-reversal}.
\end{lemma}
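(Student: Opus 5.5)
We argue exactly as in Lemma~\ref{lem:dec-ctrl-even}, by induction on the raw syntax of \(C\); only the index bookkeeping changes. The key combinatorial input is Definition~\ref{def:gray-code}: for odd \(j\) and \(0\le t<d^a\) we have \(G^d_{a+1}(d^aj+t)=j\cdot G^d_a(d^a-1-t)\). Hence local mode \(s+t'\) of the block (for \(0\le t'<\ell\)) carries the Gray word \(j\cdot G^d_a(d^a-1-s-t')\). We then check that \(\mathrm{Rev}_\ell(C)\), read at offset \(d^a-s-\ell\), places the gate that \(C\) had on its local mode \(t'\) onto local mode \(\ell-1-t'\). That is global position \(d^a-s-\ell+\ell-1-t'=d^a-1-s-t'\), so both sides refer to the same suffix word.

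For the base cases, the empty diagram and the identity decode to \(\id_{a+1}=\ctrl_j(\id_a)\) by the control-functor law. A phase shifter at local offset \(t\) decodes to \(\ctrl_{j\cdot G^d_a(d^a-1-s-t)}(\input{./figures/phase.tikz})\). This is \(\ctrl_j\) applied to the \(a\)-qudit decoding of the same phase at the mirrored position, which is the definition of \(\ctrl_{ju}\).

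For a swap or beam splitter on local modes \(t,t+1\), the two Gray words are \(j\cdot G^d_a(r)\) and \(j\cdot G^d_a(r-1)\) with \(r=d^a-1-s-t\). The leading digit \(j\) is an extra outer control, so \(\Lambda^{ju'}_{w}(g)=\id\)-padded \(\ctrl_j(\Lambda^{u'}_{w}(g))\), using the control-functor laws for padding and target permutations. The subtle point is orientation. In \(\mathrm{Rev}_\ell(C)\) the mirrored generator sits on modes \(r-1,r\), so its decoding records the Gray edge in the order \(G^d_a(r-1)\to G^d_a(r)\), which is the reverse of the order on the left. For the swap this is harmless, since \(X^{(v,v+\varepsilon)}\) and \(X^{(v+\varepsilon,v)}\) coincide by Definition~\ref{def:derived-transpositions}. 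For the beam splitter we invoke \eqref{RxSym} under \(\ctrl_{ju'w}\), together with the closure of the theory under controls. This is the one place where the statement holds only up to \(\QCeq\) rather than syntactically, so the equality should be read in \(\QCeq\).

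The inductive steps follow. Sequential composition uses functoriality of \(\mathrm D\) and of \(\ctrl_j\), noting that \(\mathrm{Rev}_\ell\) commutes with \(\circ\). For \(C=C_1\otimes C_2\) with \(C_1\) on \(\ell_1\) modes, we use \(\mathrm{Rev}_\ell(C_1\otimes C_2)=\mathrm{Rev}_{\ell_2}(C_2)\otimes\mathrm{Rev}_{\ell_1}(C_1)\), which holds up to strict PROP coherence and is harmless by Lemma~\ref{lem:decoding-respects-equiv-maintext}. The serialisation clause then gives
\[
\mathrm D^{d^a-s-\ell+\ell_2}_a(\mathrm{Rev}_{\ell_1}C_1)\circ\mathrm D^{d^a-s-\ell}_a(\mathrm{Rev}_{\ell_2}C_2),
\]
whereas the induction hypotheses on the left produce the two factors in the opposite order. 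The main obstacle is reconciling this order reversal. The two decoded factors live on disjoint Gray-support intervals (Lemma~\ref{lem:decoding-locality}), and so do their \(j\)-controlled versions. We therefore commute them using Theorem~\ref{thm:commuting}, applied after separating into differently-guarded factors exactly as in the interchange case of Lemma~\ref{lem:decoding-respects-equiv-maintext}. This closes the induction. The full-block case \(\ell=d^a\), \(s=0\) is then immediate from Definition~\ref{def:mode-reversal}.
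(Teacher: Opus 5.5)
Your proof is correct and follows the same route as the paper: induction on the raw syntax of \(C\), reading the odd block at the mirrored offset \(d^a-s-\ell\) via \(G^d_{a+1}(d^aj+t)=j\cdot G^d_a(d^a-1-t)\), and peeling off the leading digit as an outer \(\ctrl_j\). It is more careful than the paper's sketch, which mentions only the swap orientation and says the tensor case is identical to the even-block case: you rightly note that the beam-splitter generator needs \eqref{RxSym}, so the equality holds in \(\QCeq\) rather than syntactically, and that the serialisation clause yields the two decoded tensor factors in reversed order, which must be commuted using their disjoint Gray supports as in Lemma~\ref{decodagefilsdisjoints}.
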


\begin{proof}
For odd \(j\), the block is traversed in reverse:
\(G^d_{a+1}(d^aj+t)=j\cdot G^d_a(d^a-1-t)\).  Thus a local interval of length
\(\ell\) starting at \(s\) is read as the mirrored interval starting at
\(d^a-s-\ell\).  Replacing \(C\) by \(\mathrm{Rev}_\ell(C)\) compensates for
that reversal at the syntactic level, so the generator cases reduce to
Lemma~\ref{lem:dec-ctrl-even} at the mirrored offset.  For swaps, the
reflected orientation is harmless because \(X^{(r,s)}=X^{(s,r)}\).  The
composition and tensor cases are then identical to the even-block proof.
\end{proof}

The two orientation cases can be packaged as one decoding rule.

\begin{corollary}\label{lem:dec-ctrl-block}
Let \(C\) be a \(\LOPP\) circuit acting on \(\ell\) consecutive modes, and fix
\(a\ge0\), \(j\in[d]\), and \(s\) with \(s+\ell\le d^a\).  Define
\(s_j(s,\ell)\) to be \(s\) if \(j\) is even and \(d^a-s-\ell\) if \(j\) is
odd; define \(R_j^\ell(C)\) to be \(C\) if \(j\) is even and
\(\mathrm{Rev}_{\ell}(C)\) if \(j\) is odd.  Then
\(\mathrm{D}^{d^a j+s}_{a+1}(C)=\ctrl_j\bigl(\mathrm{D}^{s_j(s,\ell)}_a(R_j^\ell(C))\bigr)\).
\end{corollary}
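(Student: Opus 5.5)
The corollary simply combines Lemma~\ref{lem:dec-ctrl-even} and Lemma~\ref{lem:dec-ctrl-odd}, so the plan is a case split on the parity of \(j\). Nothing new has to be derived in \(\QCeq\); the only checks are that the packaged notation \(s_j(s,\ell)\) and \(R_j^\ell(C)\) unfolds to exactly the data appearing in the two lemmas, and that their hypotheses are met.

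\emph{Case \(j\) even.} By definition \(s_j(s,\ell)=s\) and \(R_j^\ell(C)=C\), so the right-hand side reads \(\ctrl_j(\mathrm D^s_a(C))\). The hypothesis \(s+\ell\le d^a\) is precisely the hypothesis of Lemma~\ref{lem:dec-ctrl-even}, which then gives \(\mathrm D^{d^aj+s}_{a+1}(C)=\ctrl_j(\mathrm D^s_a(C))\) directly.

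\emph{Case \(j\) odd.} Here \(s_j(s,\ell)=d^a-s-\ell\) and \(R_j^\ell(C)=\mathrm{Rev}_\ell(C)\), so the claim is literally the equation of Lemma~\ref{lem:dec-ctrl-odd}. For the right-hand side to be well typed, note that \(\mathrm{Rev}_\ell(C)\) still acts on \(\ell\) consecutive modes. Its interval starts at \(d^a-s-\ell\), which is \(\ge0\) by \(s+\ell\le d^a\), and ends at \((d^a-s-\ell)+\ell=d^a-s\le d^a\). So the contextual decoding \(\mathrm D^{d^a-s-\ell}_a\) is defined.

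There is no substantive obstacle: the work is already in the two lemmas. The one point to state carefully is the convention. In the odd case the equality is read with the raw mirrored circuit \(\mathrm{Rev}_\ell(C)\), exactly as in Lemma~\ref{lem:dec-ctrl-odd}, so no appeal to structural congruence (Lemma~\ref{lem:decoding-respects-equiv-maintext}) is needed. In the full-block case \(\ell=d^a\), \(s=0\), the formula specialises to \(\ctrl_j(\mathrm D^0_a(\Rev^{j}(C)))\), matching the parity convention \(\Rev^q\) used in the lifts.
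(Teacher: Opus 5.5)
Your proposal is correct and is the paper's own proof: a case split on the parity of \(j\), with the even case given by Lemma~\ref{lem:dec-ctrl-even} and the odd case by Lemma~\ref{lem:dec-ctrl-odd}. Your check that the mirrored interval starting at \(d^a-s-\ell\) stays inside \(d^a\) modes is a small extra that the paper leaves implicit.
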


\begin{proof}
If \(j\) is even this is Lemma~\ref{lem:dec-ctrl-even}; if \(j\) is odd this
is Lemma~\ref{lem:dec-ctrl-odd}.
\end{proof}

\begin{lemma}\label{lem:dec-Hd-gadget}
For every \(d\ge2\) and \(0\le i\le d-2\),
\(\mathrm{D}^0_1\bigl(B_d^{(i,i+1)}\bigr)=\input{./figures/Hadiip.tikz}\).
\end{lemma}

\begin{proof}
For one qudit, the reflected Gray order is the ordinary order
\(0,1,\ldots,d-1\).  The identity factors in
\(B_d^{(i,i+1)}=\id_i\otimes H_{\mathrm{BS}}\otimes\id_{d-i-2}\)
therefore decode to identities, while the two phase shifters
\(P(-\pi/2)\) decode to the corresponding level phases on level \(i+1\) and
the beam splitter \(B(\pi/4)\) decodes to the adjacent
\(R_x^{(i,i+1)}(\pi/4)\).  By the definition of \(H_{\mathrm{BS}}\), the
decoded circuit is the adjacent instance of the Hadamard decomposition
proved in \eqref{hdec}; hence it is equal to \(\input{./figures/Hadiip.tikz}\) in
\(\QCeq\).
\end{proof}

\subsection{Decoding the lifts}

The lift gadgets add Gray digits one layer at a time.  Decoding an ordinary
lift should therefore produce identity padding on the new qudits, while
decoding a selected lift should produce the same padding together with one
extra value-control.

\begin{lemma}\label{lem:lift-arity}
Let \(C\) be a \(\LOPP\) circuit acting on \(d^n\) modes.  Then for every
\(k \ge 0\) the circuit \(\Lift_d^k(C)\) acts on \(d^{n+k}\) modes.
\end{lemma}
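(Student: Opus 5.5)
The plan is a simple induction on \(k\), using nothing beyond the recursive clauses of Definition~\ref{def:gray-lifts} and the arity conventions of \(\RawLOPP\). Since every hom-set of \(\RawLOPP\) is an endomorphism hom-set, stating that a circuit ``acts on \(N\) modes'' means that it is a morphism \(N\to N\). So it suffices to check that each clause of the lift definition produces a well-typed endomorphism of the claimed arity.

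\emph{Base case, \(k=0\).} Here \(\Lift_d^0(C)=C\), which acts on \(d^n=d^{n+0}\) modes by hypothesis.

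\emph{Inductive step.} Assume \(\Lift_d^k(C):d^{n+k}\to d^{n+k}\). I first record that the mirror preserves arity: by Definition~\ref{def:mode-reversal}, \(\rho_r\) is a permutation circuit on \(d^r\) modes. Hence, with \(r=n+k\), \(\Rev(\Lift_d^k(C))=\rho_r\circ\Lift_d^k(C)\circ\rho_r\) is again an endomorphism of \(d^{n+k}\). It follows that each \(\Rev^q(\Lift_d^k(C))\), for \(q\in[d]\), acts on \(d^{n+k}\) modes. By the definition of \(\LOPP\) as a PROP, the tensor of \(d\) such endomorphisms is an endomorphism of \(d\cdot d^{n+k}=d^{n+k+1}\). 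This closes the induction.

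The only point requiring care is the well-typedness of \(\Rev\), namely that \(\rho_r\) is taken at the exponent matching the current arity \(d^{n+k}\) rather than \(d^n\). I expect no genuine obstacle.

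For later use, the same argument applied to Definition~\ref{def:gray-lifts} gives the arity \(d^{n+1+k}\) for the selected lift \(\Lift_{d,j}^k(C)\). Its base case is a tensor of \(d\) blocks of size \(d^n\), each being either \(C\) or \(\id_{d^n}\), possibly mirrored.
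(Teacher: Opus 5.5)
Your proof is correct and follows the same route as the paper: induction on \(k\), with the base case \(\Lift_d^0(C)=C\) and a step in which each \(\Rev^q\) preserves arity, so tensoring \(d\) copies of a \(d^{n+k}\)-mode circuit gives \(d^{n+k+1}\) modes. Your remark that \(\rho_r\) must be taken at \(r=n+k\), and your extension to the selected lift (arity \(d^{n+1+k}\)), are accurate refinements of the same argument.
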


\begin{proof}
Induct on \(k\).  The case \(k=0\) is immediate.  For the step,
\(\Lift_d^{k+1}(C)=\bigotimes_{q=0}^{d-1}\Rev^q(\Lift_d^k(C))\); each
\(\Rev^q\) preserves arity, so tensoring \(d\) copies of an
\(d^{n+k}\)-mode circuit yields \(d^{n+k+1}\) modes.
\end{proof}

\begin{lemma}\label{lem:dec-lift}
Let \(C\) be a \(\LOPP\) circuit acting on \(d^n\) modes, and let \(m \ge 0\).
Then \(\mathrm{D}^{0}_{n+m}\bigl(\Lift_d^{m}(C)\bigr)=\id_m \otimes \mathrm{D}^{0}_{n}(C)\) as circuits in \(\CQC\).
\end{lemma}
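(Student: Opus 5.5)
We argue by induction on \(m\), the case \(m=0\) being \(\Lift_d^0(C)=C\). For the step, \(\Lift_d^{m+1}(C)=\bigotimes_{q=0}^{d-1}\Rev^q(\Lift_d^m(C))\) is a tensor of \(d\) blocks of \(d^{n+m}\) modes. Iterating the tensor clause of Definition~\ref{def:decoding} gives
\[
\mathrm D^0_{n+m+1}\bigl(\Lift_d^{m+1}(C)\bigr)
=\prod_{q=0}^{d-1}\mathrm D^{d^{n+m}q}_{n+m+1}\bigl(\Rev^q(\Lift_d^m(C))\bigr),
\]
with the product in increasing \(q\) (the \(q=0\) factor applied first). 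This is a syntactic identity of raw decodings. Each factor is a full Gray block (\(s=0\), \(\ell=d^{n+m}\)), so Corollary~\ref{lem:dec-ctrl-block} applies. For even \(q\) the factor is \(\ctrl_q(\mathrm D^0_{n+m}(\Lift_d^m(C)))\). For odd \(q\) it is \(\ctrl_q(\mathrm D^0_{n+m}(\Rev(\Rev(\Lift_d^m(C)))))\): the mirrored offset \(d^a-0-d^a=0\), and the local reversal undoes the mirror already built into \(\Rev^q\).

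The main point to check is that \(\Rev(\Rev(X))\) decodes to the same thing as \(X\). Here \(\Rev(X)=\rho\circ X\circ\rho\) with \(\rho\) a raw permutation circuit, and \(\mathrm{Rev}_\ell\) in Lemma~\ref{lem:dec-ctrl-odd} is the analogous reversal. First I would interpret \(\mathrm{Rev}_\ell\) so that \(\mathrm{Rev}_\ell(\Rev(X))\) is \(\rho\circ\rho\circ X\circ\rho\circ\rho\). Then \(\rho\circ\rho=\id\) holds by strict PROP coherence in \(\RawLOPP\), since \(\rho\) is a composite of swaps with trivial underlying permutation. Lemma~\ref{lem:decoding-respects-equiv-maintext} lets us replace this coherent circuit by \(X\) inside decoding, in \(\QCeq\). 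If the paper's \(\mathrm{Rev}_\ell\) is instead a syntactic reflection, the same conclusion follows because reflecting twice returns the raw term exactly. This is the step I would treat most carefully, since it depends on the precise convention for mirrors.

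By the induction hypothesis, every factor is then \(\ctrl_q(\id_m\otimes\mathrm D^0_n(C))\), and the full product is \(\prod_{q=0}^{d-1}\ctrl_q(\id_m\otimes\mathrm D^0_n(C))\) in increasing order of \(q\). Theorem~\ref{thm:exhaustive} turns this into \(\id_1\otimes\id_m\otimes\mathrm D^0_n(C)=\id_{m+1}\otimes\mathrm D^0_n(C)\). The factor order already matches the order used in Definition~\ref{def:control-algebra}; if a convention mismatch appeared, Theorem~\ref{thm:commuting} would reorder the branches.
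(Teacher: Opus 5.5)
Your proof is correct and follows the paper's own argument. Both go by induction on \(m\), serialise the \(d\) Gray blocks of \(\Lift_d^{m+1}(C)\), apply Corollary~\ref{lem:dec-ctrl-block} to each full block (\(s=0\), offset \(d^a-0-d^a=0\)), and collapse the branch product by commutativity and exhaustivity. Applying the induction hypothesis before exhaustivity rather than after is immaterial. Your odd-block cancellation, \(\mathrm{Rev}_\ell(\Rev(F))=\rho\circ\rho\circ F\circ\rho\circ\rho\) coherent to \(F\) and transported through decoding by Lemma~\ref{lem:decoding-respects-equiv-maintext}, makes explicit a step the paper leaves implicit. One small correction: with \(q=0\) applied first, your product matches the appendix convention \(f_{d-1}\circ\cdots\circ f_0\), not the displayed order of Definition~\ref{def:control-algebra}, so your commutativity fallback is actually needed.
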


\begin{proof}
Induct on \(m\).  The case \(m=0\) is immediate.  For the step, set
\(F:=\Lift_d^m(C)\).  Then
\(\Lift_d^{m+1}(C)=\bigotimes_{j=0}^{d-1}\Rev^j(F)\), so decoding serialises
to
\[
  \mathrm D_{n+m+1}^0(\Lift_d^{m+1}(C))
  =
  \prod_{j=0}^{d-1}
  \mathrm D_{n+m+1}^{d^{n+m}j}(\Rev^j(F)).
\]
Corollary~\ref{lem:dec-ctrl-block}, with \(s=0\) and full-block length
\(\ell=d^{n+m}\), gives
\(\mathrm D_{n+m+1}^{d^{n+m}j}(\Rev^j(F))
  =\ctrl_j(\mathrm D_{n+m}^0(F))\) for every \(j\).  By commutativity of
distinct controls and exhaustivity, the product collapses to
\(\id_1\otimes \mathrm D_{n+m}^0(F)\).  The induction hypothesis then yields
\[
  \mathrm D_{n+m+1}^0(\Lift_d^{m+1}(C))
  =
  \id_1\otimes(\id_m\otimes \mathrm D_n^0(C))
  =
  \id_{m+1}\otimes \mathrm D_n^0(C).
\]
\end{proof}

\begin{lemma}\label{lem:dec-selected-lift}
Let \(C\) be a \(\LOPP\) circuit acting on \(d^n\) modes, and let \(m \ge 0\).
Then \(\mathrm{D}^{0}_{n+1+m}\bigl(\Lift_{d,j}^{m}(C)\bigr)=\id_m \otimes \ctrl_{j}\bigl(\mathrm{D}^{0}_{n}(C)\bigr)\) as circuits in \(\CQC\).
\end{lemma}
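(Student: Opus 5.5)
We prove the statement by induction on \(m\), in the same pattern as Lemma~\ref{lem:dec-lift}, after first settling the base case \(m=0\) directly.

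\emph{Base case.} For \(m=0\), the definition gives \(\Lift_{d,j}^0(C)=\bigotimes_{q=0}^{d-1}\Rev^q(C^{(q)})\) on \(d^{n+1}\) modes. Here \(C^{(j)}=C\) and \(C^{(q)}=\id_{d^n}\) for \(q\neq j\). The tensor clause of \(\mathrm D\) serialises this into the product \(\prod_{q=0}^{d-1}\mathrm D^{d^nq}_{n+1}(\Rev^q(C^{(q)}))\). We then treat each block separately.

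For \(q\neq j\), the block is \(\Rev^q(\id_{d^n})\), which is \(\id\) on even blocks and \(\rho_n\circ\rho_n^{-1}\) on odd ones. In the odd case we reduce it to \(\id\) by strict PROP coherence and invoke Lemma~\ref{lem:decoding-respects-equiv-maintext}, so these blocks decode to \(\id_{n+1}\).

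For \(q=j\), Corollary~\ref{lem:dec-ctrl-block} with \(s=0\) and \(\ell=d^n\) gives \(\ctrl_j(\mathrm D^0_n(R_j(\Rev^j(C))))\). When \(j\) is odd, \(R_j(\Rev(C))=\Rev(\Rev(C))=\rho_n\rho_n C\rho_n\rho_n\). Since \(\rho_n^2=\id\) already holds in \(\RawLOPP\) up to PROP coherence, this equals \(C\), and Lemma~\ref{lem:decoding-respects-equiv-maintext} transports the equality through decoding. The product therefore reduces to \(\ctrl_j(\mathrm D^0_n(C))\), which proves the base case.

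\emph{Inductive step.} Set \(F:=\Lift_{d,j}^m(C)\) on \(d^{n+1+m}\) modes. By definition \(\Lift_{d,j}^{m+1}(C)=\bigotimes_{q}\Rev^q(F)\), and decoding serialises this to \(\prod_q \mathrm D^{d^{n+1+m}q}_{n+2+m}(\Rev^q(F))\). Corollary~\ref{lem:dec-ctrl-block} with full blocks, followed by the same \(\Rev\Rev=\id\) cancellation, turns each factor into \(\ctrl_q(\mathrm D^0_{n+1+m}(F))\). Theorem~\ref{thm:commuting} lets us order the factors, and Theorem~\ref{thm:exhaustive} then collapses the product to \(\id_1\otimes\mathrm D^0_{n+1+m}(F)\). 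The induction hypothesis rewrites this as \(\id_1\otimes\id_m\otimes\ctrl_j(\mathrm D^0_n(C))\), which is the claim. The product is already in increasing order of \(q\), so commutativity is not strictly needed at this step.

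\emph{Main obstacle.} The delicate point is the bookkeeping in Corollary~\ref{lem:dec-ctrl-block} when the block is an odd-mirrored full block. We must check that the local reversal \(\mathrm{Rev}_\ell\) used there coincides with \(\Rev\) of Definition~\ref{def:mode-reversal} for \(\ell=d^a\), as the lemma states, so that double mirroring cancels syntactically. We must also check that the raw representative of \(\rho_n\) is handled as a coherence isomorphism: its decoding is \(\mathrm D\) of structural swaps, and only the coherence equality \(\rho_n\rho_n=\id\) is ever used. Routing every such identification through Lemma~\ref{lem:decoding-respects-equiv-maintext}, rather than computing decoded swaps directly, avoids ever decoding \(\rho_n\) explicitly.
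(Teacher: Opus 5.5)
Your proposal is correct and follows the paper's proof: induction on \(m\), with the base case obtained by applying Corollary~\ref{lem:dec-ctrl-block} to the \(j\)-th block (the other blocks decode to identities), and the step obtained by decoding each full block to \(\ctrl_q(\mathrm D^0_{n+1+m}(F))\), collapsing the product by commutativity and exhaustivity, and then applying the induction hypothesis. The one addition is your explicit cancellation \(\Rev(\Rev(F))\equiv F\) via strict PROP coherence and Lemma~\ref{lem:decoding-respects-equiv-maintext}; this makes explicit a step the paper uses silently when it writes the decoded odd blocks as \(\ctrl_q(\mathrm D^0_{n+1+m}(F))\).
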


\begin{proof}
Induct on \(m\).  For \(m=0\), only the \(j\)-th block contains \(C\), and
Corollary~\ref{lem:dec-ctrl-block} turns that block into
\(\ctrl_j(\mathrm D_n^0(C))\), with identities on all other blocks.

For the step, set \(F:=\Lift_{d,j}^m(C)\).  Then
\(\Lift_{d,j}^{m+1}(C)=\bigotimes_{q=0}^{d-1}\Rev^q(F)\), so
\[
  \mathrm D_{n+1+m+1}^0(\Lift_{d,j}^{m+1}(C))
  =
  \prod_{q=0}^{d-1}
  \mathrm D_{n+1+m+1}^{d^{n+1+m}q}(\Rev^q(F)).
\]
Applying Corollary~\ref{lem:dec-ctrl-block} to each full block gives
\(\mathrm D_{n+1+m+1}^{d^{n+1+m}q}(\Rev^q(F))
  =\ctrl_q(\mathrm D_{n+1+m}^0(F))\).  Commutativity and exhaustivity then
collapse the product to \(\id_1\otimes \mathrm D_{n+1+m}^0(F)\), and the
induction hypothesis yields
\(\id_{m+1}\otimes \ctrl_j(\mathrm D_n^0(C))\).
\end{proof}

\subsection{Encoding respects the structural quotient}

This subsection proves Lemma~\ref{lem:encoding-respects-equiv}.  Since the
encoding is defined on raw circuits whereas \(\CQC\) is a quotient by strict
PROP coherence and the control-functor laws, we first record the raw semantic
invariant used in the case analysis.

\begin{lemma}\label{lem:raw-encoding-semantics}
Let \(C:n\to n\) be a raw qudit circuit, interpreted by the same matrix clauses
as in Section~\ref{sec:qudit-circuits} before quotienting by structural
congruence.  For all \(a,b\ge0\),
\(\interp{\mathrm E^a_b(C)}_{\mathrm{LOPP}}
  = I_{d^a}\otimes \interp{C}\otimes I_{d^b}\).
\end{lemma}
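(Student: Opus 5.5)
The proof is a structural induction on the raw qudit circuit \(C\), carried out for all context parameters \(a,b\) simultaneously. At each step we compute Gray-ordered single-photon matrices. Conjugation by the fixed permutation unitary \(\mathfrak G\) is multiplicative, so the Gray-ordered semantics \(\interp{-}_{\mathrm{LOPP}}\) is a monoid homomorphism on \(d^{a+n+b}\)-mode circuits: \(\interp{C_2\circ C_1}_{\mathrm{LOPP}}=\interp{C_2}_{\mathrm{LOPP}}\interp{C_1}_{\mathrm{LOPP}}\). This settles the composition and identity clauses at once. For the tensor clause, the induction hypotheses at contexts \((a,b+n_2)\) and \((a+n_1,b)\) give \(I_{d^a}\otimes\interp{C_1}\otimes I_{d^{n_2+b}}\) and \(I_{d^{a+n_1}}\otimes\interp{C_2}\otimes I_{d^b}\). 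Their product is \(I_{d^a}\otimes\interp{C_1}\otimes\interp{C_2}\otimes I_{d^b}\) by the interchange law for Kronecker products. The scalar phase clause is a \(d^{a+b}\)-fold direct sum of \((e^{i\theta})\), which is the scalar \(e^{i\theta}I\). Being scalar, it is unchanged by Gray conjugation.

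For the remaining generator clauses we need three auxiliary semantic facts about the gadgets.

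\emph{(i) Block permutations.} \(\interp{\sigma^d_{a,n,b}}_{\mathrm{LOPP}}=I_{d^a}\otimes\interp{\sigma_{n,b}}\), where \(\sigma_{n,b}\) denotes the qudit wire permutation. We take this from the construction in Appendix~\ref{app:swap-encoding}.

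\emph{(ii) Ordinary lift.} For any \(d^m\)-mode circuit \(C\), \(\interp{\Lift_d^p(C)}_{\mathrm{LOPP}}=I_{d^p}\otimes\interp{C}_{\mathrm{LOPP}}\).

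\emph{(iii) Selected lift.} Similarly, \(\interp{\Lift_{d,j}^p(C)}_{\mathrm{LOPP}}=I_{d^p}\otimes(\ket j\bra j\otimes\interp{C}_{\mathrm{LOPP}}+\sum_{\ell\ne j}\ket\ell\bra\ell\otimes I)\).

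Facts (ii) and (iii) are proved by induction on \(p\) using Lemma~\ref{lem:gray-blocks}. Under \(\mathfrak G_{m+p+1}\), the \(q\)-th consecutive block of \(d^{m+p}\) modes is exactly the set of words with leading digit \(q\). Inside that block the suffix is listed in \(G_{m+p}\)-order if \(q\) is even and in reversed order if \(q\) is odd. The factor \(\Rev^q\) conjugates by the mode reversal \(\rho\), and this exactly cancels the reversal on odd blocks. Hence each block contributes \(\ket q\bra q\otimes\interp{\Lift^p_d(C)}_{\mathrm{LOPP}}\). The direct sum over \(q\) then becomes \(I_d\otimes(-)\), or the projector sum in the selected case, where block \(j\) carries \(C\) and the other blocks carry identities. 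This is the one genuinely combinatorial step, and I expect it to be the main obstacle. The difficulty is index bookkeeping: we must check that conjugating \(\mathfrak G_{m+p+1}\) block by block decomposes as \(\bigoplus_q\) of \(\mathfrak G_{m+p}\) conjugated by \(\rho^{[q\text{ odd}]}\). This follows directly from the recursion in Definition~\ref{def:gray-code}, since \(d^{m+p}-1-r\) is precisely \(\rho\).

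With (i)–(iii) in hand, the generator clauses follow. For the adjacent Hadamard, Lemma~\ref{lem:Hd-singlequdit-network} gives \(\interp{B_d^{(i,i+1)}}_{\mathrm{LOPP}}=H_d^{(i,i+1)}\); here Gray order on one digit is the identity ordering. The lift then yields \(I_{d^{a+b}}\otimes H_d^{(i,i+1)}\). Finally the two block permutations conjugate the active qudit from the last position back to position \(a\), producing \(I_{d^a}\otimes H\otimes I_{d^b}\). The swap clause is a composite of three block permutations. By (i) and the PROP calculation \((\id_a\otimes\sigma_{b,2})\circ(\id_{a+b}\otimes\sigma_{1,1})\circ(\id_a\otimes\sigma_{2,b})=\id_a\otimes\sigma_{1,1}\otimes\id_b\), taken semantically, it has the required matrix. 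For \(\ctrl_j(C')\), the induction hypothesis at \((0,0)\) gives \(\interp{\mathrm E(C')}_{\mathrm{LOPP}}=\interp{C'}\). Fact (iii) then gives \(I_{d^{a+b}}\otimes\interp{\ctrl_j(C')}\), and the block permutations move the \(m+1\) active qudits into the middle position, matching the clause for \(\interp{\ctrl_j(-)}\).

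Because every case is a matrix identity on raw terms, no appeal to the structural quotient is needed. This matches how the raw statement is used in Lemma~\ref{lem:encoding-respects-equiv}.
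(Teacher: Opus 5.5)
Your proposal is correct and follows the paper's proof: both use structural induction on the raw syntax, uniformly in the context parameters \((a,b)\). The cases are handled the same way: composition, tensor (induction hypotheses at \((a,b+n_2)\) and \((a+n_1,b)\)), scalar phase, Hadamard via the ordinary lift, swap via the block permutations, and control via the induction hypothesis at \((0,0)\) plus the selected lift. The one difference is that you state the ordinary and selected lift semantics as explicit facts, proved by induction on \(p\) from the leading-digit recursion of Definition~\ref{def:gray-code}; the paper simply asserts that the lifts duplicate the action and mirror it on reversed blocks. Like the paper, you take the block-permutation semantics from the construction in Appendix~\ref{app:swap-encoding} without a separate proof.
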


\begin{proof}
We argue by induction on the raw syntax of \(C\).  The clauses for identities,
composition, and scalar phases follow immediately from the recursive definition
of \(\mathrm E^a_b\) and the corresponding semantic clauses.

For tensor products, write \(C_1:n_1\to n_1\) and \(C_2:n_2\to n_2\).  By
definition, \(\mathrm E^a_b(C_1\otimes C_2)
  = \mathrm E^{a+n_1}_b(C_2)\circ \mathrm E^a_{b+n_2}(C_1)\).
Using the induction hypotheses, the two factors have Gray-ordered semantics
\(I_{d^{a+n_1}}\otimes\interp{C_2}\otimes I_{d^b}\) and
\(I_{d^a}\otimes\interp{C_1}\otimes I_{d^{n_2+b}}\), respectively.  Their product is
\(I_{d^a}\otimes\interp{C_1}\otimes\interp{C_2}\otimes I_{d^b}\), which is the
required semantics of \(C_1\otimes C_2\) in the middle register.

For the adjacent Hadamard generator, Lemma~\ref{lem:Hd-singlequdit-network}
gives the \(d\)-mode Hadamard semantics of \(B_d^{(i,i+1)}\).  The lift
\(\Lift_d^{a+b}\) duplicates this same action on every branch of the \(a+b\)
context digits, mirroring on those Gray blocks whose reflected order is
reversed.  The two block permutations
\(\sigma^d_{a,b,1}\) and \(\sigma^d_{a,1,b}\) move that lifted action from the
right end of the context block to the middle qudit position.  Hence the total
semantics is \(I_{d^a}\otimes H^{(i,i+1)}\otimes I_{d^b}\).  The swap-generator
case is the same calculation with the block-permutation semantics of
\(\sigma^d_{a,b,c}\): the three permutations in Definition~\ref{def:encoding}
realise the qudit symmetry \(I_{d^a}\otimes\sigma_{1,1}\otimes I_{d^b}\).

For the control constructor, let \(C=\ctrl_j(C')\) with \(C':m\to m\).  By induction,
\(\mathrm E^0_0(C')\) has Gray-ordered semantics \(\interp{C'}\).  The selected
lift \(\Lift_{d,j}^{a+b}\) places this action on the context branches
whose distinguished digit has value \(j\), and places identities on the other
branches.  The surrounding block permutations put the distinguished digit in
front of the encoded target block.  The resulting block-diagonal matrix is
\[
  I_{d^a}\otimes
  \left(
    \ket j\!\bra j\otimes\interp{C'}+
    \sum_{\ell\ne j}\ket \ell\!\bra \ell\otimes I_{d^m}
  \right)
  \otimes I_{d^b},
\]
which is \(I_{d^a}\otimes\interp{\ctrl_j(C')}\otimes I_{d^b}\).
\end{proof}

\begin{lemma}\label{lem:encoding-respects-equiv-app}
For all \(a,b\ge0\), if raw qudit circuits \(C,C':n\to n\) satisfy
\(C\equiv C'\) in the structural congruence of
Definition~\ref{def:structural-congruence-CQC_d}, then
\(\mathrm{LOPP}\vdash \mathrm E^a_b(C)=\mathrm E^a_b(C')\).
\end{lemma}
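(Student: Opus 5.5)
The approach is induction on a derivation of \(C\equiv C'\). The relation \(\equiv\) is the least congruence closed under \(\circ\), \(\otimes\) and every \(\ctrl_k(-)\) that contains the strict PROP coherence laws and the control-functor equations. It therefore suffices to prove three things, simultaneously for all context parameters \(a,b\): (i) each generating equation \(L=R\) gives \(\mathrm{LOPP}\vdash\mathrm E^a_b(L)=\mathrm E^a_b(R)\); (ii) this property is preserved by reflexivity, symmetry and transitivity; and (iii) it is preserved by the three constructors. Point (ii) is immediate.

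For (iii), I will read the closure conditions off the recursive clauses of Definition~\ref{def:encoding}. Composition is compositional. Tensor serialises as \(\mathrm E^{a+n_1}_b(C_2)\circ\mathrm E^a_{b+n_2}(C_1)\), so the induction hypothesis must be quantified over all \(a,b\); that is why the statement is phrased for all context parameters. For control, \(\mathrm E^a_b(\ctrl_j(C))\) depends on \(C\) only through \(\mathrm E^0_0(C)\), placed inside a selected lift. I need that \(\Lift^{p}_{d,j}(-)\) respects \(\mathrm{LOPP}\)-equality. This holds because it is built from tensors with identities and conjugation by the mode reversal \(\rho_r\) of Definition~\ref{def:mode-reversal}, and \(\mathrm{LOPP}\vdash\) is a congruence.

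For (i), I will not check each law clause by clause. Instead I will use semantics. Lemma~\ref{lem:raw-encoding-semantics} gives \(\interp{\mathrm E^a_b(C)}_{\mathrm{LOPP}}=I_{d^a}\otimes\interp{C}\otimes I_{d^b}\) for raw circuits. Proposition~\ref{PROP:interp-well-defined} says that every generator of \(\equiv\) (unit, associativity, interchange, symmetry naturality, and the five control-functor laws) has equal qudit denotation on both sides. Conjugating by the fixed Gray permutation \(\mathfrak G_{a+n+b,d}\) then gives \(\interp{\mathrm E^a_b(L)}_{\mathrm{sp}}=\interp{\mathrm E^a_b(R)}_{\mathrm{sp}}\), and Theorem~\ref{thm:lopp-complete} yields \(\mathrm{LOPP}\vdash\mathrm E^a_b(L)=\mathrm E^a_b(R)\).

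This semantic argument in fact handles all of \(\equiv\) in one step, since \(C\equiv C'\) implies \(\interp C=\interp{C'}\) on raw terms. Points (ii) and (iii) then serve only as a sanity check. There is no circularity: Lemma~\ref{lem:raw-encoding-semantics} is proved by raw induction without using this lemma, and LOPP completeness is imported.

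The main obstacle is confirming that Lemma~\ref{lem:raw-encoding-semantics} really holds on raw terms, in particular for the block-permutation gadgets \(\sigma^d_{a,n,b}\) and the selected-lift clause, whose constructions sit in Appendices~\ref{app:encoding-details} and~\ref{app:swap-encoding}. The ordinary lift \(\Lift_d^{p}\) also needs care: on reversed Gray blocks, mirroring by \(\Rev\) must restore the same logical action. Lemma~\ref{lem:gray-blocks} provides exactly this, since each fixed prefix gives a contiguous block traversed forward or backward according to \(\varepsilon(u)\), and \(\Rev^q\) compensates. I would check this orientation bookkeeping first, by induction on \(p\), before invoking completeness.
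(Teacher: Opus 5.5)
Your proposal is correct and follows essentially the paper's route. The paper likewise uses Lemma~\ref{lem:raw-encoding-semantics} to reduce the claim to the fact that every generator of \(\equiv\) has equal raw qudit denotation, then conjugates by the fixed Gray permutation and invokes the imported completeness Theorem~\ref{thm:lopp-complete}. As you note yourself, your syntactic closure steps (ii)--(iii) are redundant once the semantic argument is in place, because the raw interpretation is compositional and therefore already respects the whole congruence.
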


\begin{proof}
The congruence \(\equiv\) is generated by two kinds of equations: strict
symmetric-monoidal coherence for circuits and the control-functor equations for
each value-control constructor.  By Lemma~\ref{lem:raw-encoding-semantics}, it
is enough to check that each generator has equal raw qudit semantics, because
equal Gray-ordered semantics implies equal single-photon semantics after
conjugating by the fixed Gray permutation; Theorem~\ref{thm:lopp-complete} then
derives the equality in the LOPP equational theory.  This use of LOPP
completeness is external to the qudit completeness theorem being proved.

Strict PROP coherence is sound for the standard qudit interpretation: units
and associativity of \(\circ\), units and associativity of \(\otimes\),
interchange, and the symmetry equations are the matrix equalities for
identity matrices, Kronecker products, and permutation matrices.  Hence the
encodings of both sides have the same optical semantics by
Lemma~\ref{lem:raw-encoding-semantics}.

It remains to list the control-functor generators.  The identity and
composition laws follow from the projector formula:
\(\interp{\ctrl_k(\id_n)}=I_{d^{1+n}}\) and
\(\interp{\ctrl_k(G\circ F)}
  = \interp{\ctrl_k(G)}\,\interp{\ctrl_k(F)}\).
The strength law
\(\ctrl_k(F\otimes\id_m)=\ctrl_k(F)\otimes\id_m\) says that the same
block-diagonal controlled action is tensored with an untouched \(m\)-wire
identity block.  Naturality with respect to a target permutation \(\pi\) is the
equality
\(\interp{\ctrl_k(\pi^{-1}F\pi)}
  = (I_d\otimes P_\pi^{-1})\,
    \interp{\ctrl_k(F)}\,
    (I_d\otimes P_\pi)\),
where \(P_\pi\) is the permutation matrix of \(\pi\).  The same-control
nested-swap law holds because both sides are block diagonal on the two control
digits and the only non-identity target block is the branch where both digits
are \(k\); swapping the two equal control values leaves that branch unchanged.
Thus every generator of \(\equiv\) has equal raw qudit semantics, and the
encoded circuits are derivably equal in LOPP.
\end{proof}

\subsection{Encoding and decoding are inverse}

The remaining step is a structural induction.  The previous lemmas identify
the lift gadgets with identity padding or genuine control, and the block
permutations with the corresponding qudit-wire symmetries.

\begin{proof}
We prove Theorem~\ref{thm:encoding-decoding} by structural induction on \(f\).

\emph{Typing convention.}
Fix \(k,\ell\ge 0\) and a circuit \(f:n\to n\) in \(\CQC\).
Then \(E^{k}_{\ell}(f)\) is a \(d^{k+n+\ell}\)-mode \(\LOPP\) circuit, and the decoding appearing
in the theorem is \(\mathrm{D}^{0}_{k+n+\ell}\).
For readability, throughout this proof we write
\(D(-):=\mathrm{D}^{0}_{k+n+\ell}(-)\) when the arity is clear from context.

\begin{description}
\item[\emph{Empty and identity.}]
If \(f = \input{./figures/empty_diagram.tikz}\), then by definition \(E^{k}_{\ell}(f) = \input{./figures/line.tikz}^{\otimes d^{k+\ell}}\), and hence \(D\bigl(E^{k}_{\ell}(f)\bigr)=D\left(\input{./figures/line.tikz}^{\otimes d^{k+\ell}}\right)=\id_{k+\ell}=\id_k \otimes \id_0 \otimes \id_\ell\).
The case \(f=\input{./figures/line0.tikz}\) is similar.

\item[\emph{Swap.}]
If \(f=\gswap\), the encoding uses the optical block-permutations
\(\sigma^d_{k,n,\ell}\): \(E^{k}_{\ell}(\gswap)=\sigma^d_{k,\ell,2} \circ \sigma^d_{k+\ell,1,1} \circ \sigma^d_{k,2,\ell}\).
These permutations implement the qudit-wire swap by reindexing Gray blocks of
modes.
By Lemma~\ref{lem:dec-swap} from Appendix~\ref{app:swap-encoding} we have \(D\bigl(\sigma^d_{k,\ell,2}\bigr) = \id_k \otimes \sigma_{\ell,2}\), \(D\bigl(\sigma^d_{k+\ell,1,1}\bigr) = \id_{k+\ell}\otimes \sigma_{1,1}\), and \(D\bigl(\sigma^d_{k,2,\ell}\bigr) = \id_k \otimes \sigma_{2,\ell}\).
Therefore, using strict PROP coherence, \(D\bigl(E^{k}_{\ell}(\gswap)\bigr)=\id_k \otimes \sigma_{1,1} \otimes \id_\ell=\id_k \otimes \gswap \otimes \id_\ell\).

\item[\emph{Global phase.}]
If \(f=\input{./figures/phase.tikz}\), then \(E^{k}_{\ell}(\input{./figures/phase.tikz})=\bigl(\input{./figures/lov-phase.tikz}\bigr)^{\otimes d^{k+\ell}}\),
so decoding produces one basis-controlled phase per Gray basis vector of the context register:
\(D\bigl(E^{k}_{\ell}(\input{./figures/phase.tikz})\bigr)=\prod_{u \in [d]^{k+\ell}} \ctrl_{u}\bigl(\input{./figures/phase.tikz}\bigr)\).
By Theorem~\ref{thm:commuting} we may reorder these distinct controlled
branches into the order used in Corollary~\ref{cor:iterated-exhaustivity}.
By iterated exhaustivity of control (Corollary~\ref{cor:iterated-exhaustivity}), the product over all words \(u\) yields \(D\bigl(E^{k}_{\ell}(\input{./figures/phase.tikz})\bigr)=\id_{k+\ell} \otimes \input{./figures/phase.tikz}=\id_k \otimes \input{./figures/phase.tikz} \otimes \id_\ell\).

\item[\emph{Two-level Hadamard.}]
Let \(f = \input{./figures/Hadiip.tikz}\). By Definition~\ref{def:encoding} we have \(E^{k}_{\ell}(\input{./figures/Hadiip.tikz})=\sigma^d_{k,\ell,1}\circ\Lift_d^{k+\ell}\bigl(B_d^{(i,i+1)}\bigr)\circ\sigma^d_{k,1,\ell}\).
Applying \(D\) and functoriality for sequential composition gives \(D\bigl(E^{k}_{\ell}(\input{./figures/Hadiip.tikz})\bigr)=D\bigl(\sigma^d_{k,\ell,1}\bigr)\circ D\Bigl(\Lift_d^{k+\ell}\bigl(B_d^{(i,i+1)}\bigr)\Bigr)\circ D\bigl(\sigma^d_{k,1,\ell}\bigr)\).
By Lemma~\ref{lem:dec-swap}, \(D\bigl(\sigma^d_{k,\ell,1}\bigr) = \id_k \otimes \sigma_{\ell,1}\) and \(D\bigl(\sigma^d_{k,1,\ell}\bigr) = \id_k \otimes \sigma_{1,\ell}\).
By Lemma~\ref{lem:dec-lift}, instantiated with \(C = B_d^{(i,i+1)}\) and \(m = k+\ell\), \(D\Bigl(\Lift_d^{k+\ell}\bigl(B_d^{(i,i+1)}\bigr)\Bigr)=\id_{k+\ell} \otimes \mathrm{D}^{0}_{1}\bigl(B_d^{(i,i+1)}\bigr)\).
Lemma~\ref{lem:dec-Hd-gadget} gives
\(\mathrm{D}^{0}_{1}\bigl(B_d^{(i,i+1)}\bigr) = \input{./figures/Hadiip.tikz}\).
Combining these equalities and simplifying using PROP coherence yields \(D\bigl(E^{k}_{\ell}(\input{./figures/Hadiip.tikz})\bigr) = \id_k \otimes \input{./figures/Hadiip.tikz} \otimes \id_\ell\).

\item[\emph{Sequential composition.}]
If \(f = g \circ h\), then by Definition~\ref{def:encoding}, \(E^{k}_{\ell}(g \circ h) = E^{k}_{\ell}(g) \circ E^{k}_{\ell}(h)\), so \(D\bigl(E^{k}_{\ell}(g \circ h)\bigr)=D\bigl(E^{k}_{\ell}(g)\bigr)\circ D\bigl(E^{k}_{\ell}(h)\bigr)\).
The induction hypothesis gives
\(D(E^{k}_{\ell}(g))=\id_k\otimes g\otimes\id_\ell\) and
\(D(E^{k}_{\ell}(h))=\id_k\otimes h\otimes\id_\ell\), hence \(D\bigl(E^{k}_{\ell}(g \circ h)\bigr) = \id_k \otimes (g\circ h) \otimes \id_\ell\).

\item[\emph{Parallel composition.}]
If \(f = g \otimes h\) with \(g:n_g\to n_g\) and \(h:n_h\to n_h\), then
Definition~\ref{def:encoding} serialises the tensor product:
\(E^{k}_{\ell}(g \otimes h) = E^{k+n_g}_{\ell}(h) \circ E^{k}_{\ell+n_h}(g)\).
Decoding preserves \(\circ\), so \(D\bigl(E^{k}_{\ell}(g \otimes h)\bigr)=D\bigl(E^{k+n_g}_{\ell}(h)\bigr)\circ D\bigl(E^{k}_{\ell+n_h}(g)\bigr)\).
Applying the induction hypotheses to \(h\) at context \((k+n_g,\ell)\) and to \(g\) at context \((k,\ell+n_h)\) yields \(D\bigl(E^{k}_{\ell}(g \otimes h)\bigr)=\id_k\otimes (g\otimes h)\otimes \id_\ell\).

\item[\emph{Control.}]
Let \(f=\ctrl_j(g)\) with \(g:m\to m\).
By Definition~\ref{def:encoding}, \(E^{k}_{\ell}\bigl(\ctrl_j(g)\bigr)=\sigma^d_{k,\ell,m+1}\circ\Lift_{d,j}^{k+\ell}\bigl(E^{0}_{0}(g)\bigr)\circ\sigma^d_{k,m+1,\ell}\).
Decoding and using Lemma~\ref{lem:dec-swap} reduces the outer permutations to the corresponding
qudit-wire symmetries.
For the middle term we use Lemma~\ref{lem:dec-selected-lift} to obtain
\(D\Bigl(\Lift_{d,j}^{k+\ell}\bigl(E^{0}_{0}(g)\bigr)\Bigr)=\id_{k+\ell} \otimes \ctrl_{j}\bigl(\mathrm{D}^{0}_{m}(E^{0}_{0}(g))\bigr)\).
By the induction hypothesis applied to \(g\) with \(k=\ell=0\),
\(\mathrm{D}^{0}_{m}(E^{0}_{0}(g))=g\), hence the middle factor is \(\id_{k+\ell}\otimes\ctrl_j(g)\),
and the surrounding symmetries cancel by PROP coherence.
Therefore \(D\bigl(E^{k}_{\ell}(\ctrl_j(g))\bigr)=\id_k \otimes \ctrl_j(g) \otimes \id_\ell\).
\end{description}

This exhausts the constructors of \(\CQC\), so
\(\QCeq \vdash \mathrm{D}^{0}_{k+n+\ell}(E^{k}_{\ell}(f))=\id_k\otimes f\otimes \id_\ell\)
for all \(f\), as required.
\end{proof}

%% file: appendix_mimicking.tex
\section{Mimicking rules}\label{app:mimicking-rules}

Theorem~\ref{thm:mimicking-rules} says that every LOPP derivation decodes to a
derivation in \(\QCeq\): if \(\mathrm{LOPP}\vdash C_1=C_2\) for \(d^n\)-mode
circuits \(C_1,C_2\), then \(\QCeq\vdash D(C_1)=D(C_2)\).

It is enough to check:
\begin{enumerate}
  \item decoding is compatible with the strict-PROP congruence on raw LOPP terms
        (i.e.\ the strict symmetric monoidal/PROP equations);
  \item decoding maps each non-structural axiom of the LOPP calculus
        (Figure~\ref{fig:axiom_lopp}) to a derivable equation in \(\QCeq\).
\end{enumerate}
We write \(\equiv\) for the strict-PROP congruence on raw LOPP circuits as in
Definition~\ref{def:raw-lopp}; Figure~\ref{fig:axiom-raw} reproduces its
generating equations.

\begin{figure*}[htb]
  \begin{adjustbox}{max width=\textwidth}
  \fbox{\begin{minipage}{1.18\textwidth}
  \newcommand{\nspazer}{0.5em}
  \begin{multicols}{2}%
  \noindent\(\begin{array}[t]{rcccl}
    id_k \circ C& \equiv & C& \equiv &C\circ id_k\\[0.5em]
    \input{./figures/Cidmultifils.tikz}&\equiv&\input{./figures/Cmultifils.tikz}&\equiv&\input{./figures/idCmultifils.tikz}
  \end{array}\)\par
  \vspace{\nspazer}
  \noindent\(\begin{array}[t]{rcl}
    (C_3\circ C_2)\circ C_1 &\equiv & C_3\circ (C_2\circ C_1)\\[0.5em]
    \input{./figures/C1puisC2C3.tikz}&\equiv&\input{./figures/C1C2puisC3.tikz}
  \end{array}\)\par
  \vspace{\nspazer}
  \noindent\(\begin{array}[t]{rcccl}
    \gempty \otimes C &\equiv &C &\equiv &C\otimes  \gempty\\[0.5em]
    \input{./figures/diagrammevidesurC.tikz}&\equiv&\input{./figures/Cmultifils.tikz}&\equiv&\input{./figures/Csurdiagrammevide.tikz}
  \end{array}\)\par
  \vspace{\nspazer}
  \noindent\(\begin{array}[t]{rcl}
    \sigma_{k}\circ (C\otimes \gI) &\equiv  &(\gI\otimes C) \circ     \sigma_{k}\\[0.5em]
    \input{./figures/Cswapn1.tikz}&\equiv&\input{./figures/swapn1C.tikz}
  \end{array}\)\par
  \vspace{\nspazer}
  \noindent\(\begin{array}[t]{rcl}
    (C_1\otimes C_2)\otimes C_3&\equiv &C_1\otimes(C_2\otimes C_3)\\[0.5em]
    \input{./figures/C1surC2puissurC3.tikz}&\equiv&\input{./figures/C1puissurC2surC3.tikz}
  \end{array}\)\par
  \vspace{\nspazer}
  \noindent\(\begin{array}[t]{rcl}
    (C_2\circ C_1)\otimes (C_4\circ C_3) &\equiv &(C_2\otimes C_4)\circ (C_1\otimes C_3)\\[0.5em]
    \input{./figures/C1C2surC3C4.tikz}&\equiv&\input{./figures/C1surC3puisC2surC4.tikz}
  \end{array}\)\par
  \vspace{\nspazer}
  \noindent\(\begin{array}[t]{rcl}
    \gswap \circ \gswap &\equiv &\gI\otimes \gI\\[0.5em]
    \input{./figures/swapswap.tikz}&\equiv&\input{./figures/two-lines.tikz}
  \end{array}\)\par
  \end{multicols}
  where \(id_0 = \gempty\) and \(id_{k+1} = id_k\otimes \gI\), and
  \(\sigma_{0} := \gI\),
  \(\sigma_{k+1} := (\gswap\otimes id_k )\circ
  (\gI\otimes \sigma_k)\).
  \end{minipage}}
  \end{adjustbox}
  \caption{Structural axioms generating the congruence \(\equiv\) on raw
  circuits (either raw quantum circuits or raw optical circuits).}
  \Description{Structural (strict symmetric monoidal/PROP) equations on raw circuits:
  unit and associativity for composition and tensor; interchange; naturality of swap; and involutivity of swap.
  Each equation is shown both algebraically and as a string diagram.}
  \label{fig:axiom-raw}
\end{figure*}

\subsection*{Commutation of decoded circuits on disjoint blocks}

The first check is a commuting property for decoded optical subcircuits on
disjoint mode intervals.

\begin{lemma}\label{lem:decoded-disjoint-generators-commute}
Decoded LOPP generators supported on disjoint mode intervals commute in
\(\QCeq\).
\end{lemma}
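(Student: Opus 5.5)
The plan is to reduce the statement to Theorem~\ref{thm:commuting} applied to suitable control words, after making the supports explicit. Fix two LOPP generators (phase shifter, beam splitter, or swap) on disjoint mode intervals $I_1,I_2\subseteq\{0,\dots,d^n-1\}$, with $|I_1|,|I_2|\in\{1,2\}$. By Definition~\ref{def:decoding}, each decoding is either $\ctrl_{G_n^d(t)}(\text{phase})$ or $\Lambda^u_w(g)=\sigma\circ\ctrl_{uw}(g)\circ\sigma^{-1}$ with $g$ an adjacent $X$ or $R_x$ on the changed digit. By Lemma~\ref{lem:decoding-locality}, the two decodings have disjoint Gray supports $S_1,S_2\subseteq[d]^n$.

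The key combinatorial step is to find a qudit coordinate $c$ and values $a\neq b$ such that every word of $S_1$ has digit $a$ at $c$ and every word of $S_2$ has digit $b$ at $c$, with $c$ \emph{not} the target digit of either factor. I would split into three cases.
\begin{itemize}
\item Both factors are phases. Then $S_1=\{x\}$ and $S_2=\{y\}$ with $x\neq y$, so any differing coordinate $c$ works, since both factors control every digit.
\item One phase on $\{x\}$ and one two-level gate on $\{uvw,\,u(v{+}\varepsilon)w\}$. Since $x$ is neither of these two words, either $x$ differs from $uw$ at some control position, and that position serves as $c$, or $x$ agrees with $uw$ there. In the latter case $x=u\,v'\,w$ with $v'\notin\{v,v+\varepsilon\}$, so I fall back on an $S$-rule-type commutation: the target gate commutes with the phase by the disjoint-active-level rules \eqref{Hphasecomm} and \eqref{Hcnot}, applied under the common controls $\ctrl_{uw}$ by functoriality.
\item Two two-level gates. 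This case is handled analogously: either some position in the common control set separates the two control words, or the gates share all controls and target the same digit with disjoint level pairs, which is handled by \eqref{HHcomm} under common controls. The remaining possibility is that the two gates target different digits. Then each factor controls the other's target digit, and the disjointness of supports must come from a control position. Here I would use a control digit of one factor that lies on the other's target: its value falls outside the other's level pair, so the other factor acts trivially on that branch. This sub-case needs the derived rules \eqref{Cx-Hc}, \eqref{Ch-Hc} and \eqref{Cx-Xc}, or a support argument via \eqref{h-ccpt}.
\end{itemize}

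When a separating control coordinate $c$ exists, I conjugate both factors by a common wire permutation that brings $c$ to the top. Control naturality (Definition~\ref{def:polycontrolled-PROP}), together with compatibility (Theorem~\ref{thm:compatibility}) to reorder nested controls of different values, rewrites them as $\ctrl_a(F_1)$ and $\ctrl_b(F_2)$. Theorem~\ref{thm:commuting} then gives the commutation directly.

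I expect the mixed-target case to be the main obstacle, because $S_1\cap S_2=\varnothing$ there does not yield a common separating control. I would first try to expand $R_x$ and $X$ via Definition~\ref{def:derived-two-level-gates} into $H$ and level phases, and then commute factor by factor using the controlled-factor checks of Lemma~\ref{lem:finite-factor-laws} combined with the support rules.
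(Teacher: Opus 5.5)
Your proof is correct, but it is organised differently from the paper's.

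\textbf{The paper's route.} The paper applies the factor normalisation of Lemma~\ref{lemma-Greducible} to both decoded generators. It then asserts that every pair of resulting $\mathcal G$-factors either has incompatible control values on a common qudit, so a distinct-control commutation applies, or has disjoint target windows, so interchange applies.

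\textbf{Your route.} You never normalise. You keep each decoded generator as one controlled gate and find a coordinate on which both are controlled with different values. You bring it outermost by control naturality and Theorem~\ref{thm:compatibility}, then invoke Theorem~\ref{thm:commuting} once on the whole bodies. Two configurations have no such coordinate: the same changed digit with disjoint adjacent level pairs, and different changed digits whose controls sit on each other's targets. You handle these separately, by the $S$-rules under the common controls and by the crossed-control rules of Appendix~\ref{app:derivations-p2}.

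\textbf{What each buys.} The paper's argument is shorter and uniform, but its dichotomy is asserted rather than checked, and it is hard to justify after normalisation. Normalisation rewrites controlled Hadamards and multiply controlled phases through \eqref{axiom-controlled-hadamard-decompose}, \eqref{axiom-ccp}, \eqref{axiom-cccp} into products that contain uncontrolled adjacent Hadamards. Such a factor typically neither carries an incompatible control against the other generator's factors nor sits in a disjoint window. Your case split shows exactly where the $S$-rules and a crossed-control rule are needed.

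\textbf{Sharpening your mixed case.} Expand $X^{(v,v\pm1)}$ and $R_x^{(v,v\pm1)}$ into the adjacent Hadamard and level phases. Then every pair of factors commutes either by Theorem~\ref{thm:commuting}, because they take different values on one wire, or by \eqref{Hphasecomm} under a common control. The one exception is the crossed pair of controlled adjacent Hadamards, which is precisely \eqref{Ch-Hc}. So you need no separate appeal to \eqref{h-ccpt} or Lemma~\ref{lem:finite-factor-laws}.

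\textbf{One correction.} In the one-wire reductions you also need \eqref{cp-cp} for two level phases on distinct levels. By contrast, \eqref{Hcnot} plays no role once the common controls are stripped.
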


\begin{proof}
By Lemma~\ref{lem:decoding-locality}, such a decoded generator is supported on
one Gray basis word or on one Gray-neighbouring pair.  Disjoint mode intervals
therefore give disjoint Gray supports.  After applying the \Separate{} normal
form of Lemma~\ref{lemma-Greducible}, every resulting contextual
\(\mathcal G\)-factor has one of two forms relative to a factor from the other decoded generator: either
the surrounding control words require incompatible basis values on some qudit,
or the active target windows are disjoint.  In the first case the pairwise
swap is one of the distinct-control commutations used in
Theorem~\ref{thm:commuting}; in the second case it is strict
symmetric-monoidal interchange and naturality.  Composing these finite
pairwise swaps commutes the two decoded generators.
\end{proof}

\begin{lemma}\label{decodagefilsdisjoints}
Let \(C_i:\ell_i\to\ell_i\) be raw optical circuits for \(i=1,2\).  Let
\(n\ge0\), and let \(t_1,t_2\) be nonnegative integers satisfying
\(t_i+\ell_i\le d^n\) for \(i=1,2\).  If the intervals
\([t_1,t_1+\ell_1)\) and \([t_2,t_2+\ell_2)\) are disjoint, then
\(\mathrm{QC}_{d}\vdash
\mathrm{D}^{t_2}_{n}(C_2)\circ\mathrm{D}^{t_1}_{n}(C_1)
=
\mathrm{D}^{t_1}_{n}(C_1)\circ\mathrm{D}^{t_2}_{n}(C_2)\).
\end{lemma}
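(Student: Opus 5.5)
The plan is to reduce the statement to Lemma~\ref{lem:decoded-disjoint-generators-commute}, using a double induction on the raw syntax of \(C_1\) and \(C_2\).

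\textbf{Step 1: a generator-level sequence.} First unfold each decoding into a finite sequential product of decoded generators. By the clauses of Definition~\ref{def:decoding}, \(\mathrm D^{t}_n\) preserves \(\circ\), and it serialises \(\otimes\) as \(\mathrm D^{t+\ell_1}_n(C'')\circ\mathrm D^{t}_n(C')\). Empty diagrams and identity wires decode to \(\id_n\). An induction on raw syntax therefore shows that
\[
\mathrm D^{t_i}_n(C_i)=g^{(i)}_{r_i}\circ\cdots\circ g^{(i)}_1,
\]
where each \(g^{(i)}_s\) is \(\mathrm D^{\tau}_n(G)\) for a phase, swap or beam splitter \(G\) sitting at some absolute offset \(\tau\). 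In the same induction I will record the invariant that \(\tau\), together with \(\tau+1\) for two-mode generators, lies in \([t_i,t_i+\ell_i)\). This holds because the tensor clause only shifts the offset to \(t+\ell_1\), and that still lies inside the parent interval. Identity factors are dropped by the unit laws.

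\textbf{Step 2: pairwise commutation.} For each pair \(g^{(1)}_s\), \(g^{(2)}_{s'}\), the occupied mode sets lie in the disjoint intervals \([t_1,t_1+\ell_1)\) and \([t_2,t_2+\ell_2)\). Lemma~\ref{lem:decoded-disjoint-generators-commute} then gives \(g^{(2)}_{s'}\circ g^{(1)}_s=g^{(1)}_s\circ g^{(2)}_{s'}\) in \(\QCeq\).

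\textbf{Step 3: moving the products past each other.} I will move the \(C_2\)-product across the \(C_1\)-product one factor at a time, by induction on \(r_1+r_2\). Each step is a single adjacent transposition justified by Step~2, and the equations are closed under composition with contexts.

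\textbf{Main obstacle.} I expect the difficulty to be the bookkeeping in Step~1, specifically the invariant that each generator's decoded offset stays inside its parent interval. I would handle this by stating Step~1 as a separate claim, proved by induction on the raw syntax of \(C_i\), using only the two recursive clauses. Reversal issues from Gray orientation do not arise here, because \(\mathrm D^t_n\) always reads absolute global modes.
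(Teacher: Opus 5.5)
Your proposal is correct and follows essentially the same route as the paper. The paper also uses only the compositional clause of \(\mathrm D^t_n\) and its serialising tensor clause, whose shifted offsets stay inside the parent interval. With these it reduces, by structural induction on the pair \((C_1,C_2)\), to pairs of decoded generators on disjoint modes, and commutes those by Lemma~\ref{lem:decoded-disjoint-generators-commute}. Your approach of flattening each decoding into a generator sequence and then moving factors past each other is an unrolled form of that induction.
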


\begin{proof}
The disjointness assumption gives either \(t_1+\ell_1\le t_2\) or
\(t_2+\ell_2\le t_1\).  Since the conclusion is symmetric in the two pairs up
to reversing the displayed equality, it suffices to treat the first case.
Assume therefore that \(t_1+\ell_1\le t_2\).

We argue by structural induction on the pair \((C_1,C_2)\).

If either circuit is a sequential composite, the statement follows immediately
from the induction hypothesis and the defining clause
\(\mathrm{D}^{t}_{n}(C_2\circ C_1)=\mathrm{D}^{t}_{n}(C_2)\circ \mathrm{D}^{t}_{n}(C_1)\).

If either circuit is a tensor composite, we use the serialisation clause
\(\mathrm{D}^{t}_{n}(C_1\otimes C_2)=\mathrm{D}^{t+\ell_1}_{n}(C_2)\circ \mathrm{D}^{t}_{n}(C_1)\)
(Definition~\ref{def:decoding}) together with associativity of \(\circ\) to reduce
to commuting decoded subcircuits whose acted-on mode intervals remain disjoint;
the required commutations are then provided by the induction hypothesis.

The induction therefore reduces to the case where both \(C_1\) and \(C_2\) are
LOPP generators (phase shifter, beam splitter, or swap).
This proves Lemma~\ref{lem:decoded-disjoint-generators-commute}.
\end{proof}

\subsection*{Contexts and compatibility with structural congruence}

\begin{definition}
A (one-hole) raw context \(\C[\cdot]_i\) is a term over the LOPP signature built
from \(\circ\) and \(\otimes\) with a distinguished placeholder of arity \(i\).
If its outer arity is \(N\), we call it an \(N\)-mode context.  For any
\(i\)-mode raw circuit \(C\), write \(\C[C]\) for the circuit obtained by
filling the hole with \(C\).
\end{definition}

\begin{lemma}\label{decodingtoporules}
Let \(\C[\cdot]_i\) be an \(N\)-mode raw context with a hole of arity \(i\), let
\(C_1,C_2:i\to i\) be raw optical circuits with \(C_1 \equiv C_2\), and let
\(t,n\) satisfy \(t+N\le d^n\). Then
\(\mathrm{QC}_{d}\vdash
\mathrm{D}^{t}_{n}(\C[C_1])=\mathrm{D}^{t}_{n}(\C[C_2])\).
In particular, if \(C_1,C_2\) are full \(d^n\)-mode circuits and
\(C_1\equiv C_2\), then \(\mathrm{QC}_d\vdash D(C_1)=D(C_2)\).
\end{lemma}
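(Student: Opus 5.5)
The plan is an induction on the structure of the context \(\C[\cdot]_i\), with the base case being an induction on the derivation of \(C_1\equiv C_2\) from the generating equations of Figure~\ref{fig:axiom-raw}. The key observation is that \(\mathrm D^t_n\) is defined clause by clause. For a context \(\C[\cdot]=P\circ\C'[\cdot]\), we have \(\mathrm D^t_n(\C[C])=\mathrm D^t_n(P)\circ\mathrm D^t_n(\C'[C])\). For a context \(\C'[\cdot]\otimes P\), the decoding serialises as \(\mathrm D^{t+i'}_n(P)\circ\mathrm D^t_n(\C'[C])\); for \(P\otimes\C'[\cdot]\), it serialises as \(\mathrm D^{t+p}_n(\C'[C])\circ\mathrm D^t_n(P)\). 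In each case the hole's subterm is decoded at a shifted offset that still satisfies the bound \(t'+N'\le d^n\). Since \(\QCeq\) is closed under \(\circ\), the induction hypothesis at the shifted offset transports equality through every layer. This reduces the lemma to the empty context with an arbitrary offset \(t\). There we must check \(\QCeq\vdash\mathrm D^t_n(L)=\mathrm D^t_n(R)\) for each generating equation \(L\equiv R\) with arbitrary sub-circuits, and then close under symmetry and transitivity, which is immediate.

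Next I go through the generators of Figure~\ref{fig:axiom-raw}.
\begin{itemize}
\item Units of \(\circ\): these hold because \(\mathrm D^t_n(\id_k)\) unfolds through the tensor clause into a composite of copies of \(\id_n\).
\item Units of \(\otimes\) and associativity of \(\circ\): these are syntactic identities of the decoded terms in \(\CQC\).
\item Associativity of \(\otimes\): both sides unfold to \(\mathrm D^{t+\ell_1+\ell_2}_n(C_3)\circ\mathrm D^{t+\ell_1}_n(C_2)\circ\mathrm D^t_n(C_1)\).
\item Interchange: \((C_2\circ C_1)\otimes(C_4\circ C_3)\) decodes to \(\mathrm D^{t+\ell}(C_4)\mathrm D^{t+\ell}(C_3)\mathrm D^t(C_2)\mathrm D^t(C_1)\), while the right-hand side gives \(\mathrm D^{t+\ell}(C_4)\mathrm D^t(C_2)\mathrm D^{t+\ell}(C_3)\mathrm D^t(C_1)\). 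The two differ by a single swap of \(\mathrm D^{t}(C_2)\) with \(\mathrm D^{t+\ell}(C_3)\). These act on disjoint intervals, so Lemma~\ref{decodagefilsdisjoints} provides the swap.
\end{itemize}

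The swap axioms are the substantive part. For \(\gswap\circ\gswap\equiv\gI\otimes\gI\), both swaps decode to the same \(\Lambda^u_w(X^{(v,v+\varepsilon)})\). Functoriality of \(\ctrl_{uw}\), conjugation by the structural wire permutations, and \eqref{Xij-Xij} (via \eqref{Xiplusun-Xiplusun}) cancel the pair to \(\id_n\).

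Naturality \(\sigma_k\circ(C\otimes\gI)\equiv(\gI\otimes C)\circ\sigma_k\) is the main obstacle. I would first reduce \(C\) to a generator, by induction on \(C\) using the already-established cases for \(\circ\), \(\otimes\) and interchange. For a general \(C\) this requires a one-step naturality for each factor, and a composite \(C\) is handled by inserting \(\sigma_k\sigma_k^{-1}\), which decodes to the identity by the involutivity case.

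The generator cases then reduce to a local conjugation fact: a decoded swap on modes \(s,s+1\) conjugates a decoded phase at \(s\) into the decoded phase at \(s+1\), and similarly for beam splitters and swaps that overlap in one mode. For phases, this is \eqref{XCtrl}/\eqref{XCtrlbis} lifted under the control word \(\ctrl_{uw}\). In the mixed-Gray-corner cases, where the two edges change different digits, I would first move the controls with compatibility (Theorem~\ref{thm:compatibility}) and \eqref{Cx-Xc}. For beam splitters and swaps meeting a swap in three consecutive modes, I would use \eqref{RxXX} and \eqref{Xij-Xjk}. The remaining factors of \(\sigma_k\), those that do not overlap the moved generator, pass by Lemma~\ref{decodagefilsdisjoints}. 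I expect the three-mode corner configurations of the Gray order to require the most care, where the two consecutive edges involve different digits.
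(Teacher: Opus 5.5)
Your proposal is correct and follows essentially the same route as the paper's proof. Induction on the context reduces to the generating equations of Figure~\ref{fig:axiom-raw}; interchange goes through Lemma~\ref{decodagefilsdisjoints}; involutivity is the controlled instance of \eqref{Xij-Xij}; and naturality comes from factoring \(\sigma_k\) into decoded adjacent swaps and conjugating each decoded generator through them one swap at a time. The paper cites \eqref{cxp} for the phase step where you use \eqref{XCtrl} under the surrounding controls. Your treatment of the two-digit Gray corners via \eqref{RxXX} is sharper than the paper's sketch. For a decoded swap meeting a decoded swap at such a corner, the direct tool is the two-qudit braid \eqref{CX-XC-CX} (up to relabelling levels), since \eqref{Xij-Xjk} is one-qudit and \eqref{Cx-Xc} only covers disjoint supports.
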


\begin{proof}
By induction on the context structure it suffices to check preservation of the
generating axioms of Figure~\ref{fig:axiom-raw}. Identities and tensor units are
sent to \(\id_n\) by Definition~\ref{def:decoding}. Associativity of \(\circ\) is
preserved because \(\mathrm{D}^{t}_{n}\) is defined compositionally on \(\circ\).
Associativity of \(\otimes\) is preserved because \(\mathrm{D}^{t}_{n}\) serialises \(\otimes\)
into a fixed sequential composite, and the two bracketings decode to the same
composite up to associativity of \(\circ\).
For the mixed-product/interchange axiom in Figure~\ref{fig:axiom-raw},
expanding both sides
via the serialisation clause yields composites of decoded subcircuits acting on
disjoint mode blocks; Lemma~\ref{decodagefilsdisjoints} provides the required
commutations.
It remains to check the two structural equations involving raw mode swaps.  An
adjacent optical swap on modes \(r,r+1\) decodes, by
Definition~\ref{def:decoding} and Lemma~\ref{lem:gray-neighbours}, to
\(\Lambda^u_w(X^{(v,v+\varepsilon)})\), i.e.\ to the controlled adjacent
two-level transposition of the two Gray basis words labelling those modes.
Its involutivity is therefore the controlled instance of \eqref{Xij-Xij},
tensored with identities and conjugated by the placement permutation
\(\sigma_{|u|,|w|,1}\).

For naturality, first factor the raw structural permutation \(\sigma_k\) into
adjacent optical swaps, as in Figure~\ref{fig:axiom-raw}.  Decoding this
factorisation gives the corresponding product of adjacent controlled
two-level transpositions on Gray basis words.  The relations
\eqref{Xij-Xij}, \eqref{Xij-Xjkbis}, and \eqref{Xij-Xjk} give the involutive
and braid moves needed to make this decoded permutation independent of the
chosen adjacent-swap factorisation.  Conjugating a decoded generator by one
adjacent decoded swap relabels its Gray support: for phases this is the
phase-through-\(X\) calculation \eqref{cxp}; for decoded swaps it is the same
adjacent-transposition calculus; and for beam splitters it is the identical
calculation with \(X\) replaced by the adjacent \(R_x\)-gate, using
\eqref{RxSym} for orientation reversal.  The statement for an arbitrary raw
circuit \(C\) then follows by induction on \(C\), using the already proved
composition and tensor clauses.  Hence the decoded forms of raw swap naturality
and swap involutivity are derivable in \(\QCeq\).
\end{proof}

\subsection*{Mimicking the LOPP equational theory}

Recall the non-structural axioms of LOPP from~\cite{LOPP-min}
(Figure~\ref{fig:axiom_lopp}).

\begin{figure}[htb]
  \centering
  \fbox{\begin{minipage}{0.975\textwidth}
    \centering
    \begin{minipage}[c]{0.28\textwidth}
      \centering
\schemarule{A}{lopp-eq:sum}{
\input{./figures/lov-two-phase.tikz} = \input{./figures/lov-phase-add.tikz}
}
    \end{minipage}\hfill
    \begin{minipage}[c]{0.18\textwidth}
      \centering
\schemarule{\ensuremath{2\pi}}{lopp-eq:2pi}{
\input{./figures/lov-phase-2pi.tikz} = \input{./figures/line.tikz}
}
    \end{minipage}
    \begin{minipage}[c]{0.32\textwidth}
      \centering
\schemarule{SW}{lopp-eq:swap}{
\input{./figures/swap.tikz} = \input{./figures/swap-equiv.tikz}
}
    \end{minipage}
    \par\smallskip
    \begin{minipage}[c]{0.72\textwidth}
      \centering
\schemarule{E}{lopp-eq:euler}{
\input{./figures/3beamsplitter-left.tikz} = \input{./figures/3beamsplitter-right.tikz}
}
    \end{minipage}
    \par\smallskip
    \begin{minipage}[c]{0.88\textwidth}
      \centering
\schemarule{3BS}{lopp-eq:3bs}{
\input{./figures/3xrot-left.tikz} = \input{./figures/3xrot-right.tikz}
}
    \end{minipage}
  \end{minipage}}
  \caption{Non-structural axioms of the LOPP calculus:
  phase addition, \(2\pi\)-periodicity, swap decomposition, and
  Euler/three-beamsplitter identities.}
  \Description{Five non-structural LOPP rewrite axioms: phase addition on one mode;
  \(2\pi\)-periodicity; a swap decomposition into phases and beam splitters; and two
  standard identities involving three beam splitters (Euler form and a 3-beam-splitter relation).}
  \label{fig:axiom_lopp}
\end{figure}

\begin{lemma}\label{lem:mimicking-criterion}
Let a source equational theory be the congruence generated by a structural
congruence \(\equiv\) and a set of non-structural axiom schemata.  Suppose a
decoding into a target theory respects \(\equiv\) in every raw context and
that every contextual instance of each non-structural source axiom decodes to
a derivable target equation.  Then every source derivation decodes to a target
derivation.
\end{lemma}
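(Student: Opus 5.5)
The plan is to induct on the length and structure of a source derivation, presented as a finite chain of elementary rewrite steps. First I fix the shape of derivations. The source congruence is the least equivalence relation containing \(\equiv\) and all contextual instances of the non-structural axiom schemata. So \(C_1 = C_2\) is derivable iff there is a finite chain
\[
C_1 = T_0 \sim T_1 \sim \cdots \sim T_r = C_2 .
\]
In this chain each link \(T_{i-1}\sim T_i\) is of one of two kinds. Either \(T_{i-1}\equiv T_i\), or \(T_{i-1}=\C[L]\) and \(T_i=\C[R]\) (or the reverse) for some raw one-hole context \(\C[\cdot]\) and some instance \(L=R\) of an axiom schema. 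I would justify this normal form directly. The set of pairs joined by such chains contains both generating families, is reflexive, symmetric and transitive, and is closed under \(\circ\) and \(\otimes\), because plugging a chain into a larger context yields a chain. It is therefore the generated congruence.

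Next I handle each link separately. For a structural link, the hypothesis that decoding respects \(\equiv\) in every raw context gives a target derivation of \(\mathrm D(T_{i-1})=\mathrm D(T_i)\). In the optical instance this is Lemma~\ref{decodingtoporules}, where \(\C\) is taken to be the trivial context. For an axiom link, the hypothesis that every contextual instance decodes to a derivable target equation gives \(\mathrm D(\C[L])=\mathrm D(\C[R])\) directly. Symmetry of the target congruence covers links used right to left.

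Finally I glue the links together by transitivity of the target congruence, obtaining \(\mathrm D(C_1)=\mathrm D(C_2)\).

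The only delicate point is the normal-form step. A derivation might apply congruence closure \emph{after} equivalence closure, for example by tensoring an already composite equality. I must check that this can always be pushed inside, so that every step is either structural or a single axiom in a context. This is exactly why the hypothesis is phrased for \emph{contextual} instances rather than bare ones. It also requires that contexts compose: substituting one context into another yields a context. That holds syntactically for raw one-hole terms. No further obstacle is expected; the lemma is essentially bookkeeping.
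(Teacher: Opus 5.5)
Your argument is correct, but it takes a different route from the paper.

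\textbf{The paper's route.} The paper argues by induction on the generation of the least congruence. The two hypotheses dispose of the generators. Closure under contexts, composition, tensor, symmetry and transitivity is then discharged by the matching congruence rules of the target theory.

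\textbf{Your route.} You first prove a normal form for the generated congruence: a finite chain of whole-term links, each either a structural equivalence or a single axiom instance in a one-hole context. You then apply the hypotheses only at top level and glue the links by symmetry and transitivity in the target. Your justification of this normal form is sound. The chain relation is a congruence containing the generators, because plugging a link into a context gives a link (\(\equiv\) is a congruence and contexts compose). Every chain lies in the generated congruence.

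\textbf{What your route buys.} You never need decoding to commute with the source constructors. This matters here, because decoding is not monoidal: it serialises tensor as \(\mathrm D^{t}_n(C_1\otimes C_2)=\mathrm D^{t+\ell_1}_n(C_2)\circ\mathrm D^{t}_n(C_1)\). So the paper's ``closure under tensor by the target congruence rules'' tacitly requires the inductive claim to be stated for every offset \(t\) and every sub-arity interval. This is why the proof of Theorem~\ref{thm:mimicking-rules} checks the axioms ``at an arbitrary offset \(t\)''. In your version all of that context bookkeeping is absorbed into the ``every raw context'' and ``contextual instance'' hypotheses. This makes explicit why those hypotheses must be contextual rather than bare, at the cost of the extra normal-form step.
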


\begin{proof}
The source derivability relation is the least congruence containing the
structural congruence and the listed non-structural axioms.  The two
hypotheses give the result on those generators, and closure under contexts,
composition, tensor, symmetry, and transitivity follows by the congruence
rules of the target theory.
\end{proof}

\begin{proof}[Proof of Theorem~\ref{thm:mimicking-rules}]
By Lemma~\ref{lem:mimicking-criterion} and
Lemma~\ref{decodingtoporules}, the remaining check is that each non-structural
axiom of Figure~\ref{fig:axiom_lopp} is preserved by decoding
at an arbitrary offset \(t\).

Let \(t,n\) be such that the relevant local LOPP circuit sits inside \(d^n\) modes.
When \(G_n^d(t)=uiv\) and \(G_n^d(t+1)=ujv\) with \(|i-j|=1\), write
\(\Lambda^u_v(G^{(i,j)})\) for the qudit circuit obtained by placing the
one-qudit gate \(G^{(i,j)}\) on the changed coordinate and controlling on the
surrounding word \(uv\).  The notation abbreviates the decoding clause in
Definition~\ref{def:decoding}; if the Gray edge is traversed in the opposite
orientation, it is read with the two levels exchanged, using \eqref{RxSym} and
\(X^{(i,j)}=X^{(j,i)}\).

\begin{itemize}
  \item \emph{Phase addition}~\eqref{lopp-eq:sum}.
    Decoding a phase shifter on mode \(t\) yields
    \(\ctrl_{G_n^d(t)}(\input{./figures/phase.tikz})\) (Definition~\ref{def:decoding}).
    The decoded equation is therefore an instance of the \(\QCeq\) phase
    arithmetic axiom~\eqref{sum}, closed under iterated control.

  \item \emph{\(2\pi\)-periodicity}~\eqref{lopp-eq:2pi}.
    Similarly, decoding \(\input{./figures/lov-phase-2pi.tikz}\) yields a multi-controlled
    \(2\pi\)-phase, which is equal to \(\id_n\) by the \(\QCeq\) axiom~\eqref{2pi}
    (again closed under iterated control). The right-hand side decodes to \(\id_n\).

  \item \emph{Swap decomposition}~\eqref{lopp-eq:swap}.
    The swap decomposition is local to the two modes.  If
    \(G_n^d(t)=uiv\) and \(G_n^d(t+1)=ujv\), then the left-hand side decodes to
    \(\Lambda^u_v(X^{(i,j)})\).  The right-hand side is the optical
    phase--beam-splitter--phase decomposition of the same two-mode swap; after
    decoding, the beam splitter is \(\Lambda^u_v(R_x^{(i,j)}(\pi/2))\) and the
    phases are the corresponding controlled level phases.  Expanding
    \(R_x^{(i,j)}\) and \(X^{(i,j)}\) by Appendix~\ref{app:derived-notation},
    and using \eqref{sum}, \eqref{2pi}, \eqref{Htotal}, and
    \eqref{Hij-Hij} under the surrounding controls, gives the same controlled
    swap \(\Lambda^u_v(X^{(i,j)})\).  Thus the decoded swap axiom is a
    controlled instance of the two-level swap decomposition already available
    in \(\QCeq\); the global block-permutation version used for contexts is
    proved separately in Appendix~\ref{app:swap-encoding}.

  \item \emph{Euler identity}~\eqref{lopp-eq:euler}.
    Again let the two modes have Gray labels \(uiv\) and \(ujv\).
    Phase shifters on either optical mode decode to level phases on \(i\) or
    \(j\), and beam splitters decode to \(R_x^{(i,j)}\) on the same
    two-level subspace, all under the common surrounding control \(uv\).
    After this replacement, the decoded optical Euler axiom is the
    surrounding-control instance \(\Lambda^u_v(\eqref{eulerH})\), with the
    deterministic parameter convention fixed in
    Appendix~\ref{appendix:relations_angles}.  The orientation choice is
    whether the edge is read as \(i\to j\) or \(j\to i\);
    \eqref{RxSym} and the symmetry of the derived \(H^{(i,j)}\) notation reduce
    the reflected case to the displayed one.

  \item \emph{Three-beamsplitter identity}~\eqref{lopp-eq:3bs}.
    Put \(w_r=G_n^d(t+r)\) for \(r=0,1,2\).  By
    Lemma~\ref{lem:gray-neighbours}, \(w_0\) and \(w_1\) differ in one digit,
    and \(w_1\) and \(w_2\) differ in one digit.  Two cases remain.
    If the same digit changes twice, then, after fixing the surrounding word,
    the three labels are \(uiv,u(i+\varepsilon)v,u(i+2\varepsilon)v\) for
    \(\varepsilon=\pm1\).  The decoded equation is the controlled instance of
    the one-qudit three-rotation axiom \eqref{3Rx}, with the angle convention
    of Appendix~\ref{appendix:relations_angles} and with \eqref{RxSym} used
    when the reflected block gives the decreasing orientation.

    If the two changed digits are different, the three labels form a corner of
    a two-coordinate square: after a wire permutation and renaming of adjacent
    levels they have the shape
    \(u\,i\,a\,v,\ u\,j\,a\,v,\ u\,j\,b\,v\), with \(|i-j|=|a-b|=1\).
    The first and third rotations act on different qudit coordinates, while
    the middle one carries the fixed value of the other coordinate as a
    value-control.  The resulting equation is the surrounding-control instance
    of the derived mixed rule \eqref{3CRx}, whose parameters are the same as
    for \eqref{3Rx}.
\end{itemize}

Decoding preserves each generator of the LOPP equational theory and is stable
under contexts, so \(\mathrm{LOPP}\vdash C_1=C_2\) implies
\(\QCeq\vdash D(C_1)=D(C_2)\).
\end{proof}

%% file: appendix_dec_swap.tex
\section{Explicit encoding and decoding of swaps}\label{app:swap-encoding}

This appendix shows that the optical block-swaps used by the encoding decode
to the intended qudit-level swaps in \(\CQC\).

\subsection{Gray-code indices for words}

A mode in the \(d\)-ary Gray code for \(n\) qudits is labelled by a word
\(w \in [d]^n\).  The inverse map below returns its position in the Gray
ordering.

\begin{definition}
For a base-\(d\) word \(w \in [d]^n\) we write \(G^{-1}_{d}(w)\) for its index
in the \(d\)-ary Gray ordering.  It is defined recursively as follows.
\begin{itemize}
  \item For the empty word \(\epsilon\) we set \(G^{-1}_{d}(\epsilon) = 0\).
  \item For a non-empty word \(w = h \cdot t\) of length \(n+1\), with
        \(h \in \{0,\dots,d-1\}\) the first digit and \(t \in [d]^n\) the suffix,
        we put
        \[
          G^{-1}_{d}(h\cdot t) =
          \begin{cases}
            h \, d^{n} + G^{-1}_{d}(t) & \text{if }h\text{ is even},\\[0.4ex]
            (h+1) \, d^{n} - 1 - G^{-1}_{d}(t) & \text{if }h\text{ is odd}.
          \end{cases}
        \]
\end{itemize}
\end{definition}

\subsection{Elementary LOPP swap gadgets}

The elementary gadget swaps two basis states.

\begin{definition}
Let \(w_1,w_2\in[d]^n\) be two distinct words of the same length.  The notation
\(C_{d,w_1,w_2}\) denotes the raw \(\LOPP\) word-swap constructed in
Definition~\ref{def:adjacent-word-swaps}.  Its intended action is the mode
transposition that swaps the two modes labelled by \(w_1\) and \(w_2\), and
acts as the identity on all other modes.
\end{definition}

We also use this notation in structured instances such as
\(C_{d,\beta \cdot q_1 \cdot \alpha,\ \beta \cdot p_1 \cdot \alpha}\), where
\(\beta\) and \(\alpha\) are fixed prefixes and suffixes.  Decoding then
produces the expected two-level transposition with surrounding controls
determined by \(\beta\) and \(\alpha\).  The saturated version below ranges
over all such prefixes and suffixes so that commutativity and exhaustivity can
later remove those extra controls.

\begin{definition}
For integers \(k,\ell \ge 0\) and digits \(q_1,q_2,p_1,p_2 \in \{0,\dots,d-1\}\)
we define
\[
  C^{\star}_{d,k,\ell,[q_1,q_2],[p_1,p_2]}
  =
  \prod_{\beta\in[d]^k}
  \prod_{\alpha\in[d]^{\ell}}
  C_{d,\ \beta \cdot q_1 \cdot q_2 \cdot \alpha,\ \beta \cdot p_1 \cdot p_2 \cdot \alpha}.
\]
Here \(\beta\) and \(\alpha\) range over words of lengths \(k\) and \(\ell\),
respectively, both in lexicographic increasing order with \(\beta\) as the
outer index; for \(k=0\) or \(\ell=0\) the corresponding word is empty.
\end{definition}

\(C^{\star}\) applies the same three-CNOT pattern for every choice of prefix
\(\beta\) and suffix \(\alpha\).  The decoding calculation below first
identifies one two-position factor and then uses commutativity and exhaustivity
of controls to remove the saturated prefix and suffix controls.

\subsection{Adjacent construction of word-swaps}

For \(n\ge 1\) and \(0\le r<d^n-1\), write
\(A^n_r:=id_r\otimes\input{./figures/swap.tikz}\otimes id_{d^n-r-2}\) for the adjacent
optical swap of modes \(r\) and \(r+1\).

\begin{definition}\label{def:adjacent-word-swaps}
Let \(w_1,w_2\in[d]^n\) be distinct.  Put \(i=G^{-1}_{d}(w_1)\) and
\(j=G^{-1}_{d}(w_2)\).  If \(i>j\), define
\(C_{d,w_1,w_2}:=C_{d,w_2,w_1}\).  If \(i<j\), let
\(r_1,\ldots,r_{2(j-i)-1}\) be the concatenation of
\(i,i+1,\ldots,j-1\) and \(j-2,j-3,\ldots,i\), with the second list empty
when \(j=i+1\), and set
\(C_{d,w_1,w_2}:=A^n_{r_{2(j-i)-1}}\circ\cdots\circ A^n_{r_1}\).
\end{definition}

This finite product is well-typed at arity \(d^n\).  As a permutation of
optical modes, the first \(j-i\) adjacent swaps move mode \(i\) to mode \(j\),
and the remaining adjacent swaps return every intermediate mode to its original
position.  Thus \(C_{d,w_1,w_2}\) implements the transposition of the two modes
labelled by \(w_1\) and \(w_2\).

\subsection{Decoding of the adjacent swap construction}

The next lemmas record the effect of decoding these gadgets.

\begin{lemma}\label{lemma-dec-C}
For any distinct digits \(p_1,q_1\) and words \(\alpha,\beta\), put
\(w_q=\beta\cdot q_1\cdot\alpha\) and
\(w_p=\beta\cdot p_1\cdot\alpha\).  Then
\(D(C_{d,w_q,w_p})=\input{./figures/oneCNOTab.tikz}\).  It is a single two-level
\(X^{(p_1,q_1)}\) on the appropriate qudit, with the surrounding controls
determined by \(\beta\) and \(\alpha\).
\end{lemma}

\begin{proof}
By Definition~\ref{def:adjacent-word-swaps}, \(C_{d,w_q,w_p}\) is a
factorisation of the optical transposition between the two Gray modes \(w_q\)
and \(w_p\) into adjacent optical swaps.  Definition~\ref{def:decoding} and
Lemma~\ref{lem:gray-neighbours} decode each adjacent optical swap as the
controlled adjacent two-level transposition on the corresponding neighbouring
Gray words.  The relations \eqref{Xij-Xij}, \eqref{Xij-Xjkbis}, and
\eqref{Xij-Xjk} give the usual adjacent-transposition calculus, so the decoded
palindromic factorisation reduces to the transposition of \(w_q\) and \(w_p\)
and fixes all other Gray basis words.  Since these two words differ only in the
displayed digit, the resulting qudit circuit is the controlled
\(X^{(p_1,q_1)}\) shown as \(\input{./figures/oneCNOTab.tikz}\).  The calculation is
internal to \(\QCeq\), using the decoded adjacent swaps and the derived
transposition rules; it does not invoke the final completeness theorem.
\end{proof}

\begin{lemma}\label{lem:dec-C-two-digits}
For digits \(p_1,p_2,q_1,q_2\) with \(p_1\ne q_1\) and \(p_2\ne q_2\), and for
words \(\alpha,\beta\), put
\(w_q=\beta\cdot q_1\cdot q_2\cdot\alpha\) and
\(w_p=\beta\cdot p_1\cdot p_2\cdot\alpha\).  Then
\(D(C_{d,w_q,w_p})=\input{./figures/threeCNOTab.tikz}\).
\end{lemma}

\begin{proof}
By Definition~\ref{def:adjacent-word-swaps} and the same decoded
adjacent-transposition calculation used in Lemma~\ref{lemma-dec-C},
\(D(C_{d,w_q,w_p})\) is the basis-state transposition exchanging \(w_q\) and
\(w_p\).  Let \(w_m:=\beta\cdot q_1\cdot p_2\cdot\alpha\).  The three
single-position transpositions \((w_q\,w_m)\), \((w_m\,w_p)\), and
\((w_q\,w_m)\) compose to \((w_q\,w_p)\); in \(\QCeq\) this is the controlled
instance of the transposition rules \eqref{Xij-Xjk} and \eqref{Xij-Xij}.
Lemma~\ref{lemma-dec-C} identifies these three single-position transpositions
with the three factors in \(\input{./figures/threeCNOTab.tikz}\), giving the claimed
decoding.
\end{proof}

\begin{lemma}
\label{lem:dec-Cstar-concrete}
For all \(k,\ell\ge 0\) and digits \(p_1,p_2,q_1,q_2\) with
\(p_1\ne q_1\) and \(p_2\ne q_2\), we have
\(D\bigl(C^{\star}_{d,k,\ell,[q_1,q_2],[p_1,p_2]}\bigr)
=id_k \otimes \input{./figures/threeCNOT.tikz} \otimes id_\ell\).
\end{lemma}

\begin{proof}
By definition, \(D(C^{\star}_{d,k,\ell,[q_1,q_2],[p_1,p_2]})\) is the product,
over all prefixes \(\beta\) and suffixes \(\alpha\), of the decoded factors
\(D(C_{d,\beta \cdot q_1 \cdot q_2 \cdot \alpha,\,
\beta \cdot p_1 \cdot p_2 \cdot \alpha})\).  Lemma~\ref{lem:dec-C-two-digits}
identifies each factor as \(\input{./figures/threeCNOTab.tikz}\).  These factors act on the
same pair of target levels and differ only by their surrounding control words; by
commutation of distinct controls (Thm.~\ref{thm:commuting}) we may group the
first CNOT-style component of every factor, then the middle component of every
factor, and finally the last component of every factor.  For each of these three
groups, the product ranges over every prefix \(\beta\in[d]^k\) and suffix
\(\alpha\in[d]^\ell\), so iterated exhaustivity
(Corollary~\ref{cor:iterated-exhaustivity}) removes all surrounding controls.
The three collapsed groups form the displayed three-CNOT pattern, tensored with
identities on the untouched wires.
This yields the claimed form.
\end{proof}

\subsection{Decoding block-swaps}

Block-swaps are assembled from these saturated gadgets.

\begin{definition}
For integers \(k,\ell \ge 0\) we define the LOPP block-swap
\(\sigma^{d}_{k,\ell,1}\) that exchanges a block of \(\ell\) qudit wires with a
single qudit wire to its right by
\[
  \sigma^{d}_{k,\ell,1}
  =
  \prod_{j=0}^{\ell-1}
    \prod_{x=0}^{d-2}
      \prod_{y=x+1}^{d-1}
        C^{\star}_{d,k+j,\ell-j-1,\,[x,y],[y,x]}.
\]
The products are read in lexicographic increasing order of the displayed
indices \(j,x,y\), with \(j\) outermost.
\end{definition}

\begin{lemma}\label{lem:dec-swap-1}
We have
\(
  D\left(\sigma^{d}_{k,\ell,1}\right)
  =
  id_k \otimes \sigma_{\ell,1},
\)
where \(\sigma_{\ell,1}\) is the usual PROP symmetry swapping an \(\ell\)-wire
block with a single wire.
\end{lemma}

\begin{proof}
Using Lemma~\ref{lem:dec-Cstar-concrete} we obtain
\[
\begin{aligned}
  D\left(\sigma^{d}_{k,\ell,1}\right)
  &=
  \prod_{j=0}^{\ell-1}\ \prod_{x=0}^{d-2}\ \prod_{y=x+1}^{d-1}
  D\left(
    C^{\star}_{d,k+j,\ell-j-1,[x,y],[y,x]}
  \right)\\[1ex]
  &=
  \prod_{j=0}^{\ell-1}\ \prod_{x=0}^{d-2}\ \prod_{y=x+1}^{d-1}
  \left(
    id_{k+j} \otimes \input{./figures/threeCNOTi.tikz} \otimes id_{\ell-j-1}
  \right)\\[1ex]
  &=
  \prod_{j=0}^{\ell-1}
  \Bigl(
    id_{k+j}
    \otimes
    \underbrace{\Bigl(
      \prod_{x<y} \input{./figures/threeCNOTi.tikz}
    \Bigr)}_{\overset{\eqref{swap-decomp}}{=} \sigma_{1,1}}
    \otimes
    id_{\ell-j-1}
  \Bigr).
\end{aligned}
\]
The product over \(x<y\) is the adjacent swap \(\sigma_{1,1}\) by the swap
decomposition rule~\eqref{swap-decomp}.  Factoring out the leftmost \(k\)
wires, we obtain
\[
\begin{aligned}
  D\left(\sigma^{d}_{k,\ell,1}\right)
  &= id_k \otimes
     \Bigl(
       \prod_{j=0}^{\ell-1}
         \bigl(id_j \otimes \sigma_{1,1} \otimes id_{\ell-j-1}\bigr)
     \Bigr)\\
  &= id_k \otimes \sigma_{\ell,1},
\end{aligned}
\]
since the product on the right is the standard factorisation of a
block-swap into \(\ell\) adjacent transpositions.
\end{proof}

The next lemma passes from swapping one wire with an \(\ell\)-block to swapping
two arbitrary blocks.

\begin{lemma}\label{lem:dec-swap}
For \(c=1\), take \(\sigma^{d}_{a,b,1}\) as defined above.  For \(c>1\), define
inductively
\(
  \sigma^{d}_{a,b,c}
  :=
  \sigma^{d}_{a,b+c-1,1} \circ \sigma^{d}_{a,b+1,c-1}.
\)
Then
\(
  D\left(\sigma^{d}_{a,b,c}\right)
  =
  id_a \otimes \sigma_{b,c}.
\)
\end{lemma}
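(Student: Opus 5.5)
The plan is an induction on \(c\), with the base case \(c=1\) supplied by Lemma~\ref{lem:dec-swap-1}, which already gives \(D(\sigma^d_{a,b,1})=\id_a\otimes\sigma_{b,1}\) for every \(a,b\ge0\). For the step I assume the statement for \(c-1\), for all \(a\) and \(b\). Then I unfold the defining clause \(\sigma^d_{a,b,c}=\sigma^d_{a,b+c-1,1}\circ\sigma^d_{a,b+1,c-1}\). Decoding is compositional on \(\circ\) (Definition~\ref{def:decoding}), so
\[
D(\sigma^d_{a,b,c})=D(\sigma^d_{a,b+c-1,1})\circ D(\sigma^d_{a,b+1,c-1}).
\]
Both factors have the same ambient arity \(a+b+c\), so \(D=\mathrm D^0_{a+b+c}\) is the same decoding on each side and no offset bookkeeping is needed.

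Next I apply Lemma~\ref{lem:dec-swap-1} to the first factor and the induction hypothesis to the second. This gives \((\id_a\otimes\sigma_{b+c-1,1})\circ(\id_a\otimes\sigma_{b+1,c-1})\). The problem is now purely structural: I must show that this composite equals \(\id_a\otimes\sigma_{b,c}\) in the strict PROP. Since \(\CQC\) is a strict symmetric monoidal category, any two composites of symmetries that induce the same permutation of wires are equal by coherence. It therefore suffices to compare the underlying permutations of the \(b+c\) wires. To do this I label the wires of the right-hand region as a \(b\)-block \(B\), then one wire \(x\), then a \((c-1)\)-block \(Y\).

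I track these blocks under the paper's reading convention, where the symmetry \(\sigma_{m,n}\) moves an \(m\)-block past an \(n\)-block. The first map applied is \(\sigma_{b+1,c-1}\). If the recursion is read so that this step moves the last \(c-1\) wires of the \(c\)-block to the front, giving \(Y\,B\,x\), then \(\sigma_{b+c-1,1}\) moves \(x\) to the front, giving \(x\,Y\,B\), which is the image of \(B\,(xY)\) under \(\sigma_{b,c}\).

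The main obstacle is this orientation bookkeeping. I need to confirm that the recursive definition, read with the convention \(\prod f=f_m\circ\cdots\circ f_0\) and the block conventions of Lemma~\ref{lem:dec-swap-1}, really produces \(\sigma_{b,c}\) and not some other block rotation. If the naive reading instead gives a rotation in the wrong direction, I would recheck the typing. The arities \(b+c-1\) and \(b+1\) force the block sizes, so the composite permutation is determined once the convention of \(\sigma_{\ell,1}\) from Lemma~\ref{lem:dec-swap-1} is fixed. The only genuine mathematical input beyond Lemma~\ref{lem:dec-swap-1} is the coherence theorem for strict symmetric monoidal categories, which is built into the structural congruence defining \(\CQC\).
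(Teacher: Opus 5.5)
Your proposal is correct and follows the paper's proof: induction on \(c\) with \(a,b\) universally quantified, compositionality of \(\mathrm D\) on \(\circ\), Lemma~\ref{lem:dec-swap-1} for the first factor, the induction hypothesis for the second, and then the structural identity \(\sigma_{b+c-1,1}\circ\sigma_{b+1,c-1}=\sigma_{b,c}\). Your wire tracking \(B\,x\,Y\mapsto Y\,B\,x\mapsto x\,Y\,B\) already settles the orientation concern, which the paper simply asserts, and justifying the last equality by strict symmetric monoidal coherence is exactly what the structural congruence of \(\CQC\) provides.
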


\begin{proof}
By definition,
\[
\begin{aligned}
  D\left(\sigma^{d}_{a,b,c}\right)
  &= D\left(\sigma^{d}_{a,b+c-1,1} \circ \sigma^{d}_{a,b+1,c-1}\right)\\
  &= D\left(\sigma^{d}_{a,b+c-1,1}\right)
     \circ
     D\left(\sigma^{d}_{a,b+1,c-1}\right)\\
  &\overset{\cref{lem:dec-swap-1}}{=}
     \bigl(id_a \otimes \sigma_{b+c-1,1}\bigr)
     \circ
     D\left(\sigma^{d}_{a,b+1,c-1}\right).
\end{aligned}
\]
By the induction hypothesis on \(c\) we have
\(D(\sigma^{d}_{a,b+1,c-1}) = id_a \otimes \sigma_{b+1,c-1}\), so
\[
\begin{aligned}
  D\left(\sigma^{d}_{a,b,c}\right)
  &= id_a \otimes
     \bigl(
       \sigma_{b+c-1,1} \circ \sigma_{b+1,c-1}
     \bigr)\\
  &= id_a \otimes \sigma_{b,c},
\end{aligned}
\]
since \(\sigma_{b,c}\) is given by first moving the rightmost wire across the
block of size \(b+c{-}1\) and then swapping the resulting \((b+1)\)-block with
the remaining \(c{-}1\) wires.  This gives the composition
\(\sigma_{b+c-1,1} \circ \sigma_{b+1,c-1}\).
\end{proof}

%% file: appendix_soundness.tex
\section{Axiom soundness checks}\label{app:axiom-soundness}

This appendix gives the semantic verification used in the soundness direction
of Theorem~\ref{thm:main-completeness}.  Tables~\ref{tab:core-axiom-soundness}
and~\ref{tab:control-axiom-soundness} list the case-by-case checks: the first
treats the core gate-algebra schemata of Figure~\ref{fig:axioms_QC_part1}, and
the second treats the control-support schemata of
Figure~\ref{fig:axioms_QC_part2}.
Closure of \(\QCeq\) under sequential composition, tensor product, structural
symmetries, and value-control then preserves soundness for every contextual
instance of the schemata.

\begin{table}[htbp]
\small
\renewcommand{\arraystretch}{1.18}
\begin{tabularx}{\textwidth}{@{}p{0.25\textwidth}X@{}}
\hline
\textbf{Schema} & \textbf{Essential soundness check} \\
\hline
\eqref{sum}, \eqref{2pi} &
Immediate from \(e^{i\alpha}e^{i\beta}=e^{i(\alpha+\beta)}\) and
\(e^{2\pi i}=1\); any value-control only adds orthogonal branch projectors. \\
\eqref{XH} &
Since \(X^{(i,j)}\) is the transposition of levels \(i\) and \(j\), conjugating
\(H^{(j,k)}\) by \(X^{(i,j)}\) relabels its active pair to \(H^{(i,k)}\).
Hence both sides agree on \(\mathrm{span}\{\ket i,\ket j,\ket k\}\) and are
identity elsewhere. \\
\eqref{Hiip-Hiip} &
The adjacent Hadamard squares to the identity on
\(\mathrm{span}\{\ket i,\ket{i+1}\}\) and is identity on the orthogonal
complement. \\
\eqref{eulerH} &
Lemma~\ref{lem:EH-angle-soundness} proves the required \(2\times2\) matrix
identity for the admissible angle parameters; both sides are identity on the
orthogonal complement. \\
\eqref{3Rx} &
Lemma~\ref{lem:3Rx-angle-soundness} proves the corresponding \(2\times2\)
identity, including the degenerate angle cases; outside the selected two-level
subspace both sides are identity. \\
\eqref{CX-XC-CX} &
Both sides induce the same permutation of computational basis states. \\
\eqref{swap-decomp} &
On \(\ket p\ket q\), every factor is identity when \(p=q\), while for
\(p\neq q\) the factor indexed by
\((\min\{p,q\},\max\{p,q\})\) acts and sends \(\ket p\ket q\) to
\(\ket q\ket p\). \\
\hline
\end{tabularx}
\caption{Semantic verification of the core gate-algebra schemata of
Figure~\ref{fig:axioms_QC_part1}.}
\label{tab:core-axiom-soundness}
\end{table}

\begin{table}[htbp]
\small
\renewcommand{\arraystretch}{1.18}
\begin{tabularx}{\textwidth}{@{}p{0.25\textwidth}X@{}}
\hline
\textbf{Schema} & \textbf{Essential soundness check} \\
\hline
\eqref{axiom-total-phase}, \eqref{axiom-total-hadamard} &
If \(U\) denotes the scalar phase or the adjacent Hadamard, the left-hand side
is \(\bigoplus_{k\in[d]} U = I_d\otimes U\), the denotation of the
right-hand side. \\
\eqref{HHcomm}, \eqref{Hphasecomm}, \eqref{Hcnot} &
The side conditions in Figure~\ref{fig:axioms_QC_part2} ensure that the
nontrivial basis states moved by the two gates are disjoint.
Proposition~\ref{prop:qudit-support-sensitive} then implies that the
corresponding block-diagonal operators commute. \\
\eqref{cp-cp}, \eqref{cp-cnot}, \eqref{cnot-cnot-diff},
\eqref{ch-ccnot}, \eqref{ch-ch}, \eqref{ch-ccp}, \eqref{cnot-cnot} &
Writing the two denotations with control projectors \(P_k\) and \(P_\ell\) for
\(k\neq\ell\), their non-identity blocks are supported on orthogonal direct
summands because \(P_kP_\ell=0\).  Therefore the operators commute. \\
\eqref{axiom-compat}, \eqref{axiom-compat-pi} &
Decompose with respect to the basis \(\ket a\ket b\) of the two control wires.
On every block except the designated branch pair, both sides are identity; on
that block, after the swap of the control wires, both sides apply the same
target unitary.  Hence the block-diagonal matrices coincide. \\
\hline
\end{tabularx}
\caption{Semantic verification of the control-support schemata of
Figure~\ref{fig:axioms_QC_part2}.}
\label{tab:control-axiom-soundness}
\end{table}